\let\@authorsaddresses\@empty
\newcommand\fs@nobottomruled{\def\@fs@cfont{\bfseries}\let\@fs@capt\floatc@ruled
	\def\@fs@pre{\hrule height.8pt depth0pt \kern2pt}%
	\def\@fs@post{}
	\def\@fs@mid{\kern2pt\hrule\kern2pt}%
	\let\@fs@iftopcapt\iftrue}
\newcommand{\CommentVars}[1]{\leavevmode\hfill\makebox[.5\linewidth][l]{\textcolor{seaborngreen}{//~#1}}}
\algnewcommand{\IfThen}[2]{
	\State \algorithmicif\ #1\ \algorithmicthen\ #2\ }
\algnewcommand{\StateComment}[1]{\State \textit{\textcolor{darkblue}{#1}}}
\algnewcommand{\StateCommentVars}[1]{\State \textit{\textcolor{midred}{#1}}}
\algnewcommand{\Arguments}{\item[\textbf{Arguments:}]}
\algnewcommand{\Returns}{\item[\textbf{Returns:}]}
\begin{document}

\title{Conditional independence by typing}  



\author{Maria I. Gorinova}
\affiliation{
	\institution{University of Edinburgh}      
}
\author{Andrew D. Gordon}
\affiliation{
	\institution{Microsoft Research}            
}
\affiliation{     
	\institution{University of Edinburgh}       
}
\author{Charles Sutton}
\affiliation{     
	\institution{University of Edinburgh}       
}
\author{Matthijs V\'ak\'ar}
\affiliation{
	\institution{Utrecht University}                     
}

\begin{abstract}

A central goal of probabilistic programming languages (PPLs) is to separate modelling from inference. However, this goal is hard to achieve in practice. Users are often forced to re-write their models in order to improve efficiency of inference or meet restrictions imposed by the PPL. Conditional independence (CI) relationships among parameters are a crucial aspect of probabilistic models that capture a qualitative summary of the specified model and can facilitate more efficient inference. 

We present an information flow type system for probabilistic programming that captures conditional independence (CI) relationships, and show that, for a well-typed program in our system, the distribution it implements is guaranteed to have certain CI-relationships. Further, by using type inference, we can statically \emph{deduce} which CI-properties are present in a specified model.

As a practical application, we consider the problem of how to perform inference on models with mixed discrete and continuous parameters. Inference on such models is challenging in many existing PPLs,  but can be improved through a workaround, where the discrete parameters are used \textit{implicitly}, at the expense of manual model re-writing. We present a source-to-source semantics-preserving transformation, which uses our CI-type system to automate this workaround by eliminating the discrete parameters from a probabilistic program. The resulting program can be seen as a hybrid inference algorithm on the original program, where continuous parameters can be drawn using efficient gradient-based inference methods, while the discrete parameters are inferred using variable elimination.

We implement our CI-type system and its example application in SlicStan: a compositional variant of Stan.\footnotemark

\footnotetext{The implementation is available at 
\url{https://github.com/mgorinova/SlicStan}.}


\end{abstract}

\begin{CCSXML}
	<ccs2012>
	   <concept>
		   <concept_id>10003752.10010124.10010125.10010130</concept_id>
		   <concept_desc>Theory of computation~Type structures</concept_desc>
		   <concept_significance>500</concept_significance>
		   </concept>
	   <concept>
		   <concept_id>10003752.10010124.10010131.10010134</concept_id>
		   <concept_desc>Theory of computation~Operational semantics</concept_desc>
		   <concept_significance>300</concept_significance>
		   </concept>
	   <concept>
		   <concept_id>10002950.10003705.10003708</concept_id>
		   <concept_desc>Mathematics of computing~Statistical software</concept_desc>
		   <concept_significance>500</concept_significance>
		   </concept>
	   <concept>
		   <concept_id>10010147.10010257.10010293.10010300</concept_id>
		   <concept_desc>Computing methodologies~Learning in probabilistic graphical models</concept_desc>
		   <concept_significance>500</concept_significance>
		   </concept>
	 </ccs2012>
	\end{CCSXML}
	
	\ccsdesc[300]{Theory of computation~Random walks and Markov chains}
	\ccsdesc[500]{Theory of computation~Type structures}
	\ccsdesc[300]{Theory of computation~Operational semantics}
	\ccsdesc[500]{Mathematics of computing~Statistical software}
	\ccsdesc[500]{Computing methodologies~Learning in probabilistic graphical models}

\keywords{probabilistic programming, information flow types, static analysis, conditional independence, compiler correctness}  

\maketitle

\section{Introduction}

The number of probabilistic programming languages (PPLs) has grown far and wide, and so has the range of inference techniques they support.
Some focus on problems that can be solved analytically, and provide a symbolic solution \cite{PSI}, others are very flexible in the models they can express and use general-purpose inference algorithms \cite{Anglican}. Some use gradient-based methods \cite{StanJSS}, or message-passing methods \cite{InferNET} to provide an efficient solution at the cost of restricting the range of expressible programs.
Each option presents its own challenges, whether in terms of speed, accuracy or inference constraints, which is why PPL users often are required to learn a set of model re-writing techniques: to be able to change the program until it can be feasibly used within the backend inference algorithm.

Take for example Stan \cite{StanJSS}, which is used by
practitioners in a wide range of sciences and industries to analyse their data
using Bayesian inference.
While efficient inference algorithms exist for continuous-only and for some discrete-only models,  
it is much less clear what algorithm to use for arbitrary models with large numbers of both discrete
and continuous (latent, i.e., unobserved) parameters.
Stan has made a conscious choice \emph{not} to
support probabilistic models with discrete parameters, so as to
perform inference using (dynamic) Hamiltonian Monte Carlo 
(HMC) \cite{HMCfirst,HMChierarchical,NUTS}),
which provides efficient, gradient-based inference for differentiable models.
As a result, Stan has often been criticised \cite{StanOld} for its lack of support for discrete parameters.
What is usually overlooked is that many models with discrete parameters
can, in fact, be accommodated in Stan, by manually marginalising (summing) out
the discrete parameters and drawing them conditionally on the continuous parameters \cite[Chapter 7]{StanUserGuide}.
One of the core model rewriting
techniques is marginalisation: summing over all possible values that 
a random variable can take to obtain a marginal density function that 
does not involve that variable.
Marginalising efficiently is not always an obvious procedure, as it requires 
exploiting conditional independence relationships among the variables in the model.
For probabilistic graphical models, there are well-known
algorithms for enumerating all of the conditional independence
assumptions implied by a model. 
But probabilistic programs are much more general,
including control flow and assignment. For this more
general case, it is much less clear how to determine
conditional independence relationships automatically,
and doing so requires combining ideas from traditional
program analysis and from probabilistic graphical modelling.

In this paper, we introduce an information flow type system that can deduce conditional independence relationships between parameters in a probabilistic program. Finding such relationships can be useful in many scenarios.
As an example application, we implement a semantics-preserving source-to-source transformation that automatically marginalises discrete parameters.
We work in  SlicStan \cite{SlicStanPOPL}, a form of Stan with a more compositional syntax than the original language.
Our system extends SlicStan to support discrete parameters in the case when the discrete parameter space is bounded.
This transform corresponds to the \emph{variable elimination} algorithm \cite{zhang1994simple, koller2009probabilistic}: an exact inference algorithm, efficient in models with sparse structure. 
Combining this transformation with an efficient algorithm for continuous parameters, like HMC, gives us a model-specific, automatically derived inference strategy, which is a composition of variable elimination and the algorithm of choice. 
While we only focus on one application in this paper, our type system for conditional independence is applicable to program transformations of probabilistic programs more generally, and we believe it can enable other composed-inference strategies.

In short, we make the following contributions:
\begin{enumerate}
\item \emph{Factorised semantics for SlicStan}:
As a basis for proving correctness of our transformation, 
we extend SlicStan's type system, so that shredding (which slices a SlicStan program into Stan for execution) correctly separates well-typed programs into data preprocessing, main model, and purely generative code (\autoref{th:shred_gen}).
\item \emph{Main theoretical result}:
We show how a very simple, relatively standard information flow type system can be used to capture a conditional independence in probabilistic programs (\autoref{sec:theory}) and establish a correspondence between well-typed programs and conditional independence 
properties of the probability distribution it implements (\autoref{th:shred_discrete}, \autoref{th:ci}).
\item \emph{Main practical result}:  We describe and implement (in SlicStan) a source-to-source transformation that repeatedly uses the result from (2) to efficiently marginalise out the discrete parameters of the program, and we give a generative procedure for drawing these parameters (\autoref{sec:application}), thus automating inference 
for mixed discrete-continuous models. We prove that our transformation is semantics-preserving (\autoref{th:sempreservation}).
\end{enumerate}


\section{SlicStan: Extended syntax and semantics} \label{sec:background}
SlicStan \cite{SlicStanPOPL} is a Stan-like probabilistic programming language.
Compared to Stan, it provides extra compositionality by dropping the requirement that programs be block-structured.
SlicStan uses type inference in an information-flow type system
\cite{DBLP:journals/jcs/VolpanoIS96, 
      DBLP:conf/popl/AbadiBHR99, 
      DBLP:conf/esop/GordonRSBRGT15} 
to automatically rearrange the program into parts roughly corresponding to the block structure of Stan: pre-processing (data), model, and post-processing (generated quantities).
Originally, this \textit{shredding} was developed to compile SlicStan to Stan.
In this paper, we show that it can be used, more generally, to automatically compile to an efficient program-specific inference scheme.

Like Stan, SlicStan is imperative and allows for deterministic assignment, for-loops, if-statements, probabilistic assignment, and factor-statements.
One contribution of this work is that we present an updated version of SlicStan.

A key difference to the original version of SlicStan is the treatment of sampling ($\sim$) statements. 
In the original SlicStan paper \cite{SlicStanPOPL}, a statement such as $x \sim \normal(0, 1)$ was understood simply as a syntactic sugar for $\kw{factor}(\normal(x \mid 0, 1))$: adding a factor to the underlying density of the model, rather than performing actual sampling. 
In our updated version of SlicStan, sampling statements are part of the core syntax. The semantics of $x \sim \normal(0, 1)$ remains \emph{equivalent} to that of $\kw{factor}(\normal(x \mid 0, 1))$ in terms of density semantics, however it could be \emph{implemented} differently depending on the context. In particular, $x \sim \normal(0, 1)$ could be implemented as a simple call to a random number generator in Stan, $x = \normal_{rng}(0, 1)$, like in the example in \autoref{fig:first_example}. 

This way of treating $\sim$ statements differently is useful, as it allows for an increase of the functionality of the SlicStan's information-flow analysis. Consider, for example the SlicStan program on the right of \autoref{fig:first_example}. Using the original type system, both $\mu$ and $x_{\mathrm{pred}}$ will be of level $\lev{model}$, as they are both involved in a $\sim$ statement. Thus, when translated to Stan, both $\mu$ and $x_{\mathrm{pred}}$ must be inferred with HMC (or another similar algorithm), which is expensive.
However, the updated type system of this paper allows for $x_{\mathrm{pred}}$ to be of level $\lev{genquant}$, which is preferable: in the context of Stan, this means only $\mu$ needs to be inferred with HMC, while $x_{\mathrm{pred}}$ can be simply drawn using a random number generator.
More generally, the updated SlicStan type system allows for factorising the density defined by the program: for data $\data$, parameters $\params$ and generated quantities $Q$, a program defining a density $p(\data, \params, Q)$ can be sliced into two programs with densities $p(\data, \params)$ and $p(Q \mid \data, \params)$ respectively (\autoref{th:shred_gen}). The parameters $\params$ are inferred using HMC (or another general-purpose inference algorithm) according to $p(\data, \params)$, while the quantities $Q$ are directly generated according to $p(Q \mid \data, \params)$. 

\begin{figure*}
\vspace{-10pt}
\begin{multicols}{3} 
\begin{lstlisting}[basicstyle=\small]
// Stan target program
data{ real x; }
parameters{ real $\mu$; }
model{ x $\sim$ normal($\mu$, 1); }
generated quantities{ 
 real x_pred = normal_rng($\mu$, 1); }
\end{lstlisting}	
	
\begin{lstlisting}[basicstyle=\small]
// SlicStan from POPL'19

real $\mu$;
data real x $\sim$ normal($\mu$, 1);
real x_pred = normal_rng($\mu$, 1);

\end{lstlisting}	
\vspace{8pt}

\begin{lstlisting}[basicstyle=\small]
// Extended SlicStan

real $\mu$;
data real x $\sim$ normal($\mu$, 1);
real x_pred $\sim$ normal($\mu$, 1);

\end{lstlisting}
\end{multicols} \vspace{-10pt}
\caption{Example of difference to previous version of SlicStan}
\label{fig:first_example}
\end{figure*}

Treating $\sim$ statements differently based on context is very similar in spirit to existing effect-handling based PPLs \cite{ProbProg18} like Edward2 and Pyro, where $\sim$ can be handled in different ways. However, in our case, this difference in treatment is determined statically, automatically, and only in the translation to Stan or another backend. 

Another difference between \citet{SlicStanPOPL}'s SlicStan and our updated version is the $\kw{target}(S)$ expression, which we use to capture locally the density defined by statements.

These changes are a small but useful contribution of the current work: they are key to allowing us to decompose the program and compose different inference strategies for efficiency.

In the rest of this section, we give the updated formal syntax, typing and semantics of SlicStan and describe shredding --- the procedure key to the translation of Stan / inference composition. 

\subsection{Syntax} \label{ssec:syntax}
SlicStan has the following types, programs, L-values,
statements, and expressions. We highlight the difference with \cite{SlicStanPOPL} with boxes.

\vspace{10pt}

\noindent\begin{minipage}[c]{0.5\linewidth}
	\begin{display}[.50]{SlicStan Types:}
		\clause{\ell ::= \lev{data} \mid \lev{model} \mid \lev{genquant}}{level type}\\
		\clause{n\in\mathbb{N}}{size} \\
		\clause{\tau ::= \kw{real} \mid \kw{int} \mid \fbox{$\kw{int}\langle n \rangle$} \mid \tau []}{base type} \\
		\clause{T ::= (\tau,\ell)}{type}
	\end{display}
\end{minipage}
\begin{minipage}{0.49\linewidth}
	\begin{display}[.35]{SlicStan Program:}
		\clause{P ::= \Gamma,S}{program}
	\end{display}
	\begin{display}[.35]{SlicStan L-Values:}
		\clause{L ::= x[E_1]\cdots[E_n]}{L-value}
	\end{display}
\end{minipage}
\begin{display}{SlicStan Typing Environments:}
	\clause{\Gamma ::= \{x_1\mapsto T_1,\ldots, x_n\mapsto T_n\}}{typing environment}
\end{display}
\vspace{-10pt}

\vspace{-5pt}
\begin{multicols}{2}
	\begin{display}[.18]{SlicStan Statements:}
		\Category{S}{statement}\\
		\entry{L = E}{assignment}\\
		\entry{S_1; S_2}{sequence}\\
		\entry{\kw{for}(x\;\kw{in}\;E_1:E_2)\;S}{for loop} \\
		\entry{\kw{if}(E)\;S_1\,\kw{else}\;S_2}{if statement} \\
		\entry{\kw{skip}}{skip}\\
		\entry{\fbox{\kw{factor}(E)}}{{factor statement}} \\
		\entry{\fbox{$L \sim d(E_1, ..., E_n)$}}{sample statement}
	\end{display}
	\begin{display}[.18]{SlicStan Expressions:}
		\Category{E}{expression}\\
		\entry{x}{variable}\\
		\entry{c}{constant}\\
		\entry{[E_1,...,E_n]}{array}\\
		\entry{E_1[E_2]}{array element}\\ 
		\entry{f(E_1,\dots,E_n)}{function call}\\
		\entry{\fbox{$[E \mid x~\kw{in}~E_1:E_2]$}}{{array comprehension}}\\
		\entry{\fbox{\kw{target}(S)}}{{evaluating a density}}
	\end{display}
\end{multicols}
\vspace{-5pt}

SlicStan programs consist of a pair $\Gamma,S$ of a \emph{typing environment} $\Gamma$
(a finite map that assigns global variables $x$ to their types $T$)
and a \emph{statement} $S$.
Following the usual style of declaring variables in C-like languages,
we informally present programs $\Gamma,S$ in examples
by sprinkling the type declarations of $\Gamma$
throughout the statement $S$.
For example, we write $\kw{data}~\kw{real}~x \sim~ \kw{normal}(0, 1)$ for the program 
$\{x\mapsto (\kw{real},\lev{data})\}, x \sim \kw{normal}(0, 1)$.
Sometimes, we will leave out types or write incomplete types in our examples.
In this case, we intend for the missing types to be determined using 
type inference. 

As we discuss in detail in \autoref{ssec:stanop}, a $\kw{factor}(E)$ statement can be read as
multiplying the current \emph{weight} (contribution to the model's joint density) of the program trace by the value of $E$.
Conversely, a $\kw{target}(S)$ expression initialises the weight to $1$
and returns the weight that is accumulated after evaluating $S$.
For example, if: \vspace{-4pt}
\begin{align*}
S &= \kw{x ~ normal(0,1); y = 2 * x; z ~ normal(y,1);} \\
&=  \kw{factor(normal_pdf(x|0,1)); y = 2 * x; factor(normal_pdf(z|y,1));}
\end{align*}
\noindent Then $\kw{target}(S)$ is semantically equivalent to \kw{normal_pdf(x|0,1) * normal_pdf(z|2 * x,1)}.

We extend the base types of the language of \cite{SlicStanPOPL} with $\kw{int}\langle n \rangle$, which denotes a positive integer constrained from above by an integer $n$. For example if $x$ is of type $\kw{int}\langle 2 \rangle$, then $x$ can only be 1 or 2. These types allow us to specify the support of discrete variables, and they can easily be extended to include both upper and lower bounds. 
For the purpose of our typing rules, we treat $\kw{int}\langle n \rangle$ identically to $\kw{int}$. We only differentiate between these types in $\autoref{sec:application}$, where our transformation uses the size annotation to eliminate a discrete variable.

\subsection{Typing} \label{ssec:typing}

Types $T$ in SlicStan range over pairs $(\tau, \ell)$ of a base type $\tau$, and a level type $\ell$. The level types $\ell$ form a lattice $\left(\{\lev{data}, \lev{model}, \lev{genquant}\}, \leq \right)$, where $\lev{data} \leq \lev{model} \leq \lev{genquant}$.
We write $\bigsqcup_{i=1}^n \ell_i$ for the least upper bound of the levels $\ell_1,\ldots,\ell_n$.
We call variables of level $\lev{data}$ \emph{data (variables)}, 
of level $\lev{model}$ \emph{model parameters},
and of level $\lev{genquant}$ \emph{generated quantities}.
We refer to variables that are either of level $\lev{model}$ or $\lev{genquant}$ 
simply as \emph{parameters}.
Given a typing environment $\Gamma$,
we can consider the well-typedness of expressions and statements,
given the types assigned to variables by $\Gamma$.
The judgment $\Gamma \vdash E : (\tau, \ell)$ means that expression $E$ has type $\tau$ and reads only level $\ell$ and below.
The judgment $\Gamma \vdash S : \ell$ means that statement $S$ assigns only to level $\ell$ and above.
We write $\Gamma\vdash S$ as a shorthand for $\Gamma\vdash S:\lev{data}$.

The typing rules for expressions are those of \cite{SlicStanPOPL} with added rules 
for the two constructs of array comprehensions and $\kw{target}(S)$-expressions. 
The typing rules for statements are as in \cite{SlicStanPOPL}, with three differences (highlighted in boxes). \ref{Factor} and \ref{Sample} add typing rules for the now new language constructs $\kw{factor}(E)$ and $L \sim d(E_1, ..., E_n)$.
The language supports a finite number of 
built-in functions $f$ with type $\tau_1,\ldots,\tau_n\to \tau$ 
and (conditional) distributions $d\in\mathrm{Dist}(\tau_1,\ldots,\tau_n;\tau)$ over $\tau$ given values 
of types $\tau_1,\ldots,\tau_n$.

\begin{display}{Typing Rules for Expressions:}
	\squad
	\staterule{ESub}
	{ \Gamma \vdash E : (\tau,\ell) \quad \ell \leq \ell'}
	{ \Gamma \vdash E : (\tau,\ell') }
	\quad\,
	\staterule{Var}
	{ }
	{ \Gamma, x:T \vdash x:T}  \quad\,
	\staterule{Const}
	{ \kw{ty}(c) = \tau }
	{ \Gamma \vdash c : (\tau,\lev{data}) }\quad\,
	
	\staterule[($f: \tau_1,\dots,\tau_n \to \tau$)]
	{PrimCall}
	{ \Gamma \vdash E_i : (\tau_i,\ell_i) \quad \forall i \in 1..n}
	{ \Gamma \vdash f(E_1,\dots,E_n) : (\tau,\bigsqcup_{i=1}^n \ell_i ) } \quad
	
	\\[\GAP]\squad
	\staterule{ArrEl}
	{\Gamma \vdash E_1 : (\tau[], \ell) \quad \Gamma \vdash E_2 : (\kw{int}, \ell)}
	{\Gamma \vdash E_1[E_2] : (\tau,\ell)}\quad
	
	\fbox{\staterule{Target}
		{\Gamma\vdash S :\ell''\quad \forall \ell' > \ell. \readset_{\Gamma \vdash \ell'}(S)=\emptyset \footnotemark}
		{\Gamma\vdash \kw{target}(S) : (\kw{real},\ell)}}
	
	\\[\GAP]\squad

	\staterule{Arr}
	{\Gamma \vdash E_i : (\tau,\ell) \quad \forall i \in 1..n}
	{\Gamma \vdash [E_1,...,E_n] : (\tau [],\ell)}\squad
	
	\fbox{\staterule{ArrComp}
		{\Gamma\vdash E_1 : (\kw{int},\ell)\quad \Gamma\vdash E_2 : (\kw{int},\ell)\quad \Gamma, x:(\kw{int},\ell)\vdash E : (\tau,\ell)\quad x \notin \dom(\Gamma)}
		{\Gamma \vdash [E \mid x~\kw{in}~E_1:E_2] : (\tau[],\ell)}}
	\\[\GAP]\squad

\end{display}

\footnotetext{We use $\ell' > \ell$ as a shorthand for $\ell \leq \ell' \wedge \neg \ell' \leq \ell $ }

\begin{display}{Typing Rules for Statements:}
	\squad
	\staterule{SSub}
	{ \Gamma \vdash S : \ell' \quad \ell \leq \ell'}
	{ \Gamma \vdash S : \ell }\quad
		
	\staterule[\footnotemark]{Assign}
	{ \Gamma(L) = (\tau, \ell) \quad \Gamma \vdash E : (\tau,\ell)}
	{ \Gamma \vdash (L = E) : \ell }\qquad 
	
	\staterule{If}
	{ \Gamma \vdash E : (\kw{real},\ell) \quad \!\Gamma \vdash S_1\! :\! \ell \quad\! \Gamma \vdash S_2\! :\! \ell}
	{ \Gamma \vdash \kw{if}(E)\;S_1 \;\kw{else}\; S_2: \ell }\qquad
	
	\\[\GAP]\squad
	\fbox{
		\staterule{Seq}
		{ \Gamma \vdash S_1 : \ell \quad \Gamma \vdash S_2 : \ell \quad \shreddable(S_1, S_2)\,\wedge\, \generative(S_1, S_2)}
		{ \Gamma \vdash (S_1; S_2) : \ell }} \qquad
		
	\fbox{
		\staterule{Factor}
		{ \Gamma \vdash E : (\kw{real}, \lev{model})}
		{ \Gamma \vdash \kw{factor}(E) : \lev{model} }}\qquad
	
	\staterule{Skip}
	{ }
	{ \Gamma \vdash \kw{skip} : \ell } \qquad
		
	\\[\GAP]\squad
	\fbox{
		\staterule[\textsuperscript{\ref{footnote:lookup}}($d\in \mathrm{Dist}(\tau_1,\dots,\tau_n ;\tau)$)]{Sample}
		{
		\Gamma(L) = (\tau,\ell') \quad\! \Gamma \vdash E_i : (\tau_i, \ell), \,\, \forall i \in 1..n \quad\! \ell = \ell' \sqcup \lev{model}}
		{ \Gamma \vdash L \sim d(E_1, \dots, E_n) : \ell }} \quad\!

	\\[\GAP]\squad
	\staterule{For}
	{ \Gamma \vdash E_1 : (\kw{int},\ell) \quad \Gamma \vdash E_2 : (\kw{int},\ell) \quad \Gamma, x:(\kw{int}, \ell) \vdash S : \ell \quad x \notin \dom(\Gamma) \quad x \notin \assset(S)}
	{ \Gamma \vdash \kw{for}(x\;\kw{in}\;E_1:E_2)\;S : \ell } \qquad
\end{display}
\footnotetext{\label{footnote:lookup} Here we use $\Gamma(L)$ to look up the type of the L-value $L$ in $\Gamma$. Sometimes we will use an overloaded meaning of this notation (\autoref{def:gammaE}) to look-up the level type of a general expression. Which $\Gamma(.)$ we refer to will be clear from context.}


In these rules, we make use of the following notation (see \autoref{ap:proofs} for precise definitions).
\begin{itemize}
	\item $\assset(S)$: the set of variables $x$ that have been assigned to in $S$.
	\item $\readset_{\Gamma \vdash \ell}(S)$: the set of variables $x$ that are read at level $\ell$ in $S$. 
	\item $\assset_{\Gamma \vdash \ell}(S)$: the set of variables $x$ of level $\ell$ that have been assigned to in $S$.
	\item $\tildeset_{\Gamma\vdash \ell}(S)$: the set of variables $x$ of level $\ell$ that have been $\sim$-ed in $S$.
	\item $\asstildeset_{\Gamma\vdash \ell}(S)= \assset_{\Gamma\vdash \ell}(S)\cup \tildeset_{\Gamma\vdash \ell}(S)$ 
\end{itemize}

The intention in SlicStan is that statements of level $\ell$ are executed before those of 
$\ell'$ if $\ell<\ell'$.
In order to follow that implementation strategy without reordering possibly non-commutative pairs of statements, we impose the condition $\shreddable(S_1, S_2)$ when we sequence $S_1$ and $S_2$ in \ref{Seq}.

\begin{definition}[Shreddable seq]~ $\shreddable(S_1, S_2) \deq  \forall \ell_1,\ell_2. (\ell_2 < \ell_1) \implies \readset_{\Gamma \vdash \ell_1}(S_1) \cap \assset_{\Gamma \vdash \ell_2}(S_2) = \emptyset$.
\end{definition}

For example, this excludes the following problematic program:
\begin{lstlisting}
	data real sigma = 1; 
	model real mu ~ normal(0, sigma);
	sigma = 2;
\end{lstlisting}	

Above, \kw{sigma} and the statements \lstinline{sigma=1} and \lstinline{sigma=2} are of level \lev{data}, which means they should be executed before the statement \lstinline{mu ~ normal(0,sigma)}, which is of level \lev{model}. However, this would change the intended semantics of the program, giving \kw{mu} a $\normal(0, 2)$ prior instead of the intended $\normal(0, 1)$ prior. This problematic program fails to typecheck in SlicStan, as it is not shreddable:  $\lnot \shreddable(\kw{mu ~ normal(0,sigma)},\,\; \kw{sigma = 2})$.

\begin{definition}[Generative seq] ~ $\generative(S_1, S_2) \deq  \forall \ell\neq \lev{model}.\, \tildeset_{\Gamma\vdash \ell}(S_1)\cap \asstildeset_{\Gamma\vdash \ell}(S_2)=\emptyset \wedge \asstildeset_{\Gamma\vdash \ell}(S_1)\cap \tildeset_{\Gamma\vdash \ell}(S_2)=\emptyset$
\end{definition}
To be able to read $x\sim \normal(0,1)$ at level $\lev{genquant}$, depending on the context, either as a probabilistic assignment to $x$
or as a density contribution, we impose the condition $\generative(S_1, S_2)$
when we sequence $S_1$ and $S_2$.
This excludes problematic programs like the following, 
in which the multiple assignments to \kw{y} create a discrepancy between the density semantics of the program $p(y) = \normal(y \mid 0, 1)\normal(y \mid 0, 1)$ and the sampling-based semantics of the program \kw{y = 5}.
\begin{lstlisting}
	genquant real y ~ normal(0, 1); 
	y ~ normal(0, 1); 
	y = 5;
\end{lstlisting}
This problematic program fails to typecheck in SlicStan owing to the $\generative$ constraint: \newline 
$\lnot \generative(\kw{y ~ normal(0,1)},\,\; \kw{y ~ normal(0,1)})$, and also $\lnot \generative(\kw{y ~ normal(0,1)},\,\; \kw{y = 5})$.

\subsection{Operational Semantics of SlicStan Statements} \label{ssec:stanop}
In this paper, we use a modified version of the semantics given in \citet{SlicStanPOPL}.
We extend the call-by-value operational semantics given in that paper, and derive a more equational form that also includes the generated quantities. 

We define a standard big-step operational semantics for SlicStan expressions and statements:
\begin{display}[.2]{Big-step Relation}
	\clause{ (s, E) \Downarrow V }{expression evaluation} \\
	\clause{ (s, S) \Downarrow (s',w)}{statement evaluation} 
\end{display}
Here, $s$ and $s'$ are \emph{states}, $V$ is a \emph{value} and $w\in \weights$ is a \emph{weight}.
Our statements can read and write the state with arbitrary destructive updates.
The weight can be thought of as an element of state that stores a positive real value which only gets accessed by multiplying it with the value of an expression $E$, through the use of  $\kw{factor}(E)$-statements.
It can only be read through a $\kw{target}(S)$-statement which initialises the weight to $1$, evaluates the statement $S$ and returns the final weight.

Formally, states and values are defined as follows.
\begin{display}[0.4]{Values and States:}
	\Category{V}{value}\\
	\entry{c}{constant}\\
	\entry{[V_1,\dots,V_n]}{array}\\
	\clause{s ::= x_1 \mapsto V_1, \dots, x_n \mapsto V_n \quad x_i\textrm{ distinct}}{state (finite map from variables to values)}
\end{display}

In the rest of the paper, we use the notation for states $s = x_1 \mapsto V_1, \dots, x_n \mapsto V_n$:
\begin{itemize}
	\item $s[x \mapsto V]$ is the state $s$, but where the value of $x$ is updated to $V$ if $x \in \dom(s)$, or the element $x \mapsto V$ is added to $s$ if $x \notin \dom(s)$.
	\item $s[-x]$ is the state s, but where $x$ is removed from the domain of $s$ (if it were present).
\end{itemize}


We also define lookup and update operations on values:
\begin{itemize}
	\item If $U$ is an $n$-dimensional array value for $n \geq 0$
	and $c_1$, \dots, $c_n$ are suitable indexes into $U$,
	then the \emph{lookup} $U[c_1]\dots[c_n]$ is the value in $U$ indexed by $c_1$, \dots, $c_n$.
	\item If $U$ is an $n$-dimensional array value for $n \geq 0$
	and $c_1$, \dots, $c_n$ are suitable indexes into $U$,
	then the (functional) \emph{update} $U[c_1]\dots[c_n] := V$ is the array that is the same as $U$ except that the
	value indexed by $c_1$, \dots, $c_n$ is $V$.
\end{itemize}

The relation $\Downarrow$ is deterministic but partial, as we do not explicitly handle error states.
The purpose of the operational semantics is to define a density function in \autoref{ssec:standen}, and any errors lead to the density being undefined.
The big-step semantics is defined as follows.

\vspace{3pt}
\begin{display}{Operational Semantics of Expressions:}
	\squad	
	\staterule{Eval Const}
	{ }
	{ (s, c) \Downarrow c }\qquad
	
	\staterule{Eval Var}
	{ V = s(x)  \quad x \in \dom(s)}
	{ (s, x) \Downarrow V  }\qquad
	
	\staterule{Eval Arr}
	{ (s, E_i) \Downarrow V_i \quad \forall i \in 1.. n }
	{ (s, [E_1, \dots, E_n]) \Downarrow [V_1, \dots, V_n]}\qquad	
	
	\\[1.3ex]\squad
	\staterule{Eval ArrEl}
	{ (s, E_1 \Downarrow V) \quad (s, E_2 \Downarrow c)}
	{ (s, E_1[E_2]) \Downarrow V[c]}\qquad
	
	\staterule[\footnotemark]{Eval PrimCall}
	{ (s, E_i) \Downarrow V_i \quad \forall i \in 1 \dots n \quad V = f(V_1, \dots, V_n)}
	{ (s, f(E_1, \dots, E_n)) \Downarrow V}
	
	\\[1.3ex]\squad
	\staterule[\footnotemark]{Eval ArrComp}
	{ (s, E_1) \Downarrow n \quad (s, E_2) \Downarrow m \quad (s, E[i/x]) \Downarrow V_i, \forall n\leq i\leq m }
	{ (s, [E \mid x~\kw{in}~E_1:E_2]) \Downarrow [V_{n},\ldots, V_{m}] }\qquad
	
	\staterule{Eval Target}
	{ (s, S) \Downarrow (s', w)}
	{ (s, \kw{target}(S)) \Downarrow w}
\end{display}

\footnotetext[4]{$f(V_1, \dots, V_n)$ means applying the built-in function $f$ on the values $V_1, \dots, V_n$.}
\footnotetext{Here, we write $E[E'/x]$ for the usual capture avoiding substitution of $E'$ for $x$ in $E$.}
%
%
\begin{display}{Operational Semantics of Statements:}
	
	\staterule[~(where $L=x[E_1{]}\dots[E_n{]}$)]{Eval Assign}
	{ (s,E_i) \Downarrow V_i \quad \forall i \in 1..n \quad (s,E) \Downarrow V \quad
		U = s(x) \quad
		U' = (U[V_1]\dots[V_n] := V) }
	{ (s, L=E) \Downarrow (s[x \mapsto U'], 1)}
	
	\\[1.3ex]
	\staterule{Eval Skip}
	{ }
	{ (s, \kw{skip}) \Downarrow (s, 1) }\qquad
	
	\staterule{Eval Seq}
	{ (s, S_1) \Downarrow (s', w) \quad (s', S_2) \Downarrow (s'', w')}
	{ (s, S_1;S_2) \Downarrow (s'', w * w')}\qquad
	
	\staterule{Eval ForFalse}
	{ (s, E_1) \Downarrow c_1 \quad (s, E_2) \Downarrow c_2 \quad c_1 > c_2}
	{ (s, \kw{for}(x\;\kw{in}\;E_1:E_2)\;S) \Downarrow (s, 1) } \qquad
	
	\\[1.3ex]\squad
	\staterule{Eval ForTrue}
	{ \{(s, E_i) \Downarrow c_i\}_{i=1,2} \quad c_1 \leq c_2 \quad (s[x \mapsto c_1], S) \Downarrow (s', w) \quad (s'[-x], \kw{for}(x\;\kw{in}\;(c_1+1):c_2)\;S) \Downarrow (s'', w')}
	{ (s, \kw{for}(x\;\kw{in}\;E_1:E_2)\;S) \Downarrow (s'', w * w') } \qquad
	
	\\[1.3ex]\squad

	\staterule{Eval IfTrue}
	{ (s, E) \Downarrow c\neq 0.0 \quad (s, S_1) \Downarrow (s', w)}
	{ (s, \kw{if}(E)\; S_1\; \kw{else}\; S_2) \Downarrow (s', w)}\qquad
	
	\staterule{Eval IfFalse}
	{ (s, E) \Downarrow 0.0 \quad (s, S_2) \Downarrow (s', w)}
	{ (s, \kw{if}(E)\; S_1\; \kw{else}\; S_2) \Downarrow (s', w)}\qquad

	\\[1.3ex]\squad
	\staterule{Eval Factor}
	{ (s, E) \Downarrow V }
	{ (s, \kw{factor}(E)) \Downarrow (s, V)}\qquad
	\staterule[\footnotemark]{Eval Sample}
	{ (s, L) \Downarrow V \quad (s, E_i) \Downarrow V_i, \forall 1\leq i \leq n\quad V' =d(V| V_1,\ldots, V_n)  }
	{ (s, L \sim d(E_1, \ldots, E_n)) \Downarrow (s, V') }
\end{display}

\footnotetext{By $d(V|V_1,\ldots,V_n)$, we mean the result of evaluating the intended built-in conditional distribution $d$ on $V,V_1,\ldots,V_n$.}

Most rules of the big-step operational semantics are standard, with the exception of \ref{Eval Factor} and \ref{Eval Sample}, which correspond to the PPL-specific language constructs \kw{factor} and $L \sim d(E_1, \dots, E_n)$. While we refer to the latter construct as \textit{probabilistic assignment}, its formal semantics is not that of an assignment statement: both the left and the right hand-side of the ``assignment'' are evaluated to a value, in order for the density contribution $d(V \mid V_1, \dots, V_n)$ to be evaluated and factored into the weight of the current execution trace.
Contrary to \ref{Eval Assign}, there is no binding of a result to a variable in \ref{Eval Sample}.  
Of course, as is common in probabilistic programming, it might, at times\footnote{For example, in our Stan backend for SlicStan, if such a statement is of level \lev{model}, it will be executed as density contribution, while if it is  of level \lev{genquant}, it will be executed as a probabilistic assignment. }, 
be beneficial to execute these statements as actual probabilistic assignments.
Our treatment of these statements is agnostic of such implementation details, however.

The design of the type system ensures that information 
can flow from a level $\ell$ to a higher one $\ell'\geq \ell$,
but not a lower one $\ell'<\ell$: a noninterference result.
To state this formally, we introduce the notions of
\emph{conformance between a state $s$ and a typing environment $\Gamma$} and
\emph{$\ell$-equality} of states.

We define a conformance relation on states $s$ and typing environments $\Gamma$. A state $s$ \emph{conforms} to an environment $\Gamma$, whenever $s$ provides values of the correct types for the variables used in $\Gamma$:

\begin{display}[-0.03]{Conformance Relation:}
	\clause{ s \models \Gamma }{state $s$ conforms to environment $\Gamma$}
\end{display}
\begin{display}[.2]{Rule for the Conformance Relation:}
	\quad
	\staterule{Stan State}
	{ V_i \models \tau_i \quad \forall i \in I}
	{(x_i \mapsto V_i)^{i \in I} \models (x_i : \tau_i)^{i \in I}}
\end{display}

Here, $V \models \tau$ denotes that the value $V$ is of type $\tau$, and it has the following definition:
\begin{itemize}
	\item $c \models \kw{int}$, if $c \in \mathbb{Z}$, and $c \models \kw{real}$, if $c \in \mathbb{R}$. 
	\item $[V_1,\dots,V_n] \models \tau[n]$, if $\forall i \in 1\dots n. V_i \models \tau$. 
\end{itemize}

\begin{definition}[$\ell$-equal states]~\\ Given a typing environment $\Gamma$, states $s_1 \models \Gamma$ and $s_2 \models \Gamma$ are $\ell$-equal for level $\ell$ (written $s_1 \approx_{\ell} s_2$), if they differ only for variables of a level strictly higher than $\ell$:
	$$s_1 \approx_{\ell} s_2 \deq \forall x:(\tau, \ell') \in \Gamma. \left( \ell' \leq \ell \implies s_1(x) = s_2(x) \right)$$
\end{definition}

\begin{lemma}[Noninterference of $\vdash$]~\\ \label{lem:noninterf} Suppose $s_1 \models \Gamma$, $s_2 \models \Gamma$, and $s_1 \approx_{\ell} s_2$ for some $\ell$. Then for SlicStan statement $S$ and expression $E$:
	\begin{enumerate}
		\item If $~\Gamma \vdash E:(\tau,\ell)$ and $(s_1, E) \Downarrow V_1$ and $(s_2, E) \Downarrow V_2$ then $V_1 = V_2$. 
		\item If $~\Gamma \vdash S:\ell$ and $(s_1, S) \Downarrow s_1', w_1$ and $(s_2, S) \Downarrow s_2', w_2$ then $s_1' \approx_{\ell} s_2'$.
	\end{enumerate}
\end{lemma}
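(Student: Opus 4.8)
The plan is to prove both parts \emph{simultaneously} by induction on the typing derivations $\Gamma \vdash E : (\tau,\ell)$ and $\Gamma \vdash S : \ell$, since the two statements are mutually dependent: a $\kw{target}(S)$ expression in part~(1) contains a statement, and every statement in part~(2) contains subexpressions. A useful preliminary observation is that expression evaluation is \emph{pure} --- the rules for $(s,E)\Downarrow V$ never alter the state --- so the only state changes come from statements. I will also repeatedly use the monotonicity of $\ell$-equality in $\ell$: because the level types form a chain $\lev{data}\leq\lev{model}\leq\lev{genquant}$, if $s_1 \approx_\ell s_2$ and $\ell_0 \leq \ell$ then $s_1 \approx_{\ell_0} s_2$ (agreeing on all variables of level $\leq \ell$ implies agreeing on the smaller set of level $\leq \ell_0$).

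For part~(1), \textsc{Const} is immediate (constants ignore the state) and \textsc{Var} is the heart of the matter: the read variable $x$ has some level $\ell_x \leq \ell$, so $s_1 \approx_\ell s_2$ forces $s_1(x) = s_2(x)$ and hence $V_1 = V_2$. The compound cases \textsc{Arr}, \textsc{ArrEl}, \textsc{PrimCall} and \textsc{ArrComp} follow by applying the induction hypothesis to each subexpression --- each is typed at a level $\leq \ell$ (for \textsc{PrimCall} the argument levels satisfy $\ell_i \leq \bigsqcup_j \ell_j = \ell$), so by monotonicity the subexpressions evaluate equally in both states, and determinism of the built-in operations yields equal results. \textsc{ESub} is absorbed by monotonicity.

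For part~(2), the state-effecting cases go through by threading $\ell$-equality. \textsc{Skip} is trivial; \textsc{Factor} and \textsc{Sample} leave the state unchanged (crucially, by \textsc{Eval Sample} there is \emph{no} binding of a value to $L$), so $s_1'=s_1$ and $s_2'=s_2$ and $\ell$-equality is preserved verbatim. For \textsc{Assign} ($L=E$ at level $\ell$), both $E$ and the index expressions of $L$ are typed at $\ell$, so part~(1) gives equal right-hand value and equal indices; the written variable has level $\ell$, variables of level $<\ell$ are untouched, and we conclude $s_1'\approx_\ell s_2'$. \textsc{Seq} composes the two induction hypotheses; \textsc{If} uses part~(1) on the guard (typed at $\ell$) to force both runs into the \emph{same} branch, then applies the induction hypothesis; \textsc{For} uses part~(1) to get identical loop bounds and then runs a secondary induction over the iterations, threading $\ell$-equality through the body (whose loop variable sits at level $\ell$). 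To discharge \textsc{SSub} cleanly I would first prove an auxiliary \emph{confinement} lemma: if $\Gamma\vdash S:\ell^\ast$ and $(s,S)\Downarrow(s',w)$, then $s$ and $s'$ agree on every variable of level $<\ell^\ast$. Then for a judgment $\Gamma\vdash S:\ell$ obtained by subsumption from a syntax-directed level $\ell^\ast\geq\ell$, the case $\ell=\ell^\ast$ is the direct argument above, while for $\ell<\ell^\ast$ confinement shows the statement never touches any variable of level $\leq\ell$, so $\ell$-equality is preserved trivially.

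The main obstacle is the \textsc{Target} case of part~(1), which is where the two parts genuinely interact and where part~(2) alone is insufficient: part~(2) says nothing about the weight, and indeed weight-equality fails in general (e.g.\ a $\lev{model}$-level $\kw{factor}$ reads $\lev{model}$ variables on which $\ell$-equal states may disagree). What saves the $\kw{target}(S)$ case is its typing side-condition $\readset_{\Gamma\vdash\ell'}(S)=\emptyset$ for all $\ell'>\ell$, i.e.\ $S$ reads no variable above level $\ell$. I therefore plan to strengthen the statement half of the induction with a second clause: if $S$ reads only variables of level $\leq\ell$ and $s_1\approx_\ell s_2$, then the two runs of $S$ produce equal weights $w_1=w_2$ (and agree on all readable variables throughout). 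This read-set noninterference is proved by the same induction --- \textsc{Eval Factor} and \textsc{Eval Sample} contribute weights ($E$, respectively $d(V\mid V_1,\dots,V_n)$) that depend only on values read at level $\leq\ell$, \textsc{Eval Seq} multiplies equal weights, and the control-flow rules branch identically because their guards and bounds are also read at level $\leq\ell$. Feeding this clause into the \textsc{Target} case gives $V_1=w_1=w_2=V_2$, completing the mutual induction. Verifying that the read-set bookkeeping $\readset_{\Gamma\vdash\ell'}$ behaves correctly under the subexpression and sub-statement decompositions is the one place where genuine care is needed.
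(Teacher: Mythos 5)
Your proof is correct and takes essentially the same route as the paper's: rule induction on the typing derivations, with the fact that a level-$\ell$ expression reads only variables of level $\leq \ell$ driving part (1), part (2) following from part (1), and the $\kw{target}(S)$ case discharged exactly as in the paper --- via the side condition $\forall \ell' > \ell.\, \readset_{\Gamma \vdash \ell'}(S) = \emptyset$ combined with a read-set-based weight-equality lemma. The differences are purely organizational: the paper states that weight lemma as a standalone, typing-free fact (if two initial states agree on $\readset(S)$, the two runs produce equal weights), proved by structural induction on $S$, which lets it sequence the argument as lemma $\to$ part (1) $\to$ part (2) rather than run your mutual induction with a strengthened statement clause, and it leaves implicit the \textsc{SSub}/confinement bookkeeping that you spell out explicitly.
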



\begin{proof} (1)~follows by rule induction on the derivation $\Gamma \vdash E:(\tau, \ell)$, and using that if $\Gamma \vdash E:(\tau, \ell)$, $E$ reads $x$ and $\Gamma(x) = (\tau', \ell')$, then $\ell' \leq \ell$. (2)~follows by rule induction on the derivation $\Gamma \vdash S:\ell$ and using (1). We present more details of the proof in \autoref{ap:proofs}.
\end{proof}

\subsection{Density Semantics} \label{ssec:standen}
The semantic aspect of a SlicStan 
program $\Gamma, S$ that we are the most interested in is the final weight $w$
obtained after evaluating the program $S$.
This is the value the program computes for the
unnormalised joint density $p^*(\mathbf{x}) = p^*(\data,\params, \quants)$ over the 
data $\data$,
the model parameters $\params$, and generated quantities $\quants$
of the program (see \autoref{ssec:factorisation}).
Given a program $\Gamma, S$, we separate the typing environment 
$\Gamma$ into disjoint parts: $\Gamma_{\sigma}$ and $\Gamma_{\mathbf{x}}$,
such that $\Gamma_{\sigma}$ contains precisely the variables that are deterministically assigned 
in $S$ and $\Gamma_{\mathbf{x}}$ contains those which never get deterministically assigned; that is the variables $\mathbf{x}$ with respect to which we define the target unnormalised density $p^*(\mathbf{x})$:
\begin{align*}
	\Gamma_{\sigma}=\{(x:T)\in \Gamma\mid x\in W(S)\}\qquad
	\Gamma_{\mathbf{x}}=\Gamma\setminus \Gamma_{\sigma}.
\end{align*}
Similarly, any conforming state $s\models \Gamma$ separates as $\sigma \uplus \mathbf{x}$
with 
\begin{align*}
	\sigma =\{(x\mapsto V)\in s\mid x\in W(S)\}\quad \mathbf{x}=s\setminus \sigma.
\end{align*}
Then, $\sigma\models \Gamma_{\sigma}$ and $\mathbf{x}\models \Gamma_{\mathbf{x}}$.

The semantics of a SlicStan program $\Gamma_{\sigma}, \Gamma_{\mathbf{x}}, S$ is a function $\sem{S}$ on states $\sigma \models \Gamma_{\sigma}$ and $\mathbf{x} \models \Gamma_{\mathbf{x}}$ that yields a pair of a state $\sigma'$ and a weight $w$, such that:\vspace{-6pt}
$$\sem{S}(\sigma)(\mathbf{x}) = \sigma', w,
\quad\text{ where } \sigma \uplus \mathbf{x}, S \Downarrow \sigma'\uplus \mathbf{x}, w.$$

We will sometimes refer only to one of the two elements of the pair $\sigma, w$. In those cases we use the notation:
$\sems{S}(\sigma)(\mathbf{x}),\semp{S}(\sigma)(\mathbf{x})=\sem{S}(\sigma)(\mathbf{x})$.
We call $\sems{S}$ the \emph{state semantics} and $\semp{S}$ the \emph{density semantics} of $\Gamma, S$.
We will be particularly interested in the density semantics.

The function $\semp{S}(\sigma)$ is some positive function $\phi(\mathbf{x})$ of  $\mathbf{x}$.
If $\mathbf{x}_1, \mathbf{x}_2$ is a partitioning of $\mathbf{x}$ and $\int \phi(\mathbf{x}) \dif \mathbf{x}_1$ is finite, we say $\phi(\mathbf{x})$ is an unnormalised density corresponding to the normalised density $p(\mathbf{x}_1 \mid \mathbf{x}_2) = \phi(\mathbf{x}) /  \int \phi(\mathbf{x}) \dif \mathbf{x}_1$ over $\mathbf{x}_1$ and we write $\semp{S}(\sigma){(\mathbf{x})} \propto p(\mathbf{x}_1 \mid \mathbf{x}_2)$.
%
Sometimes, when $\sigma$ is clear from context, we will leave it implicit and 
simply write $p(\mathbf{x})$ for $p(\mathbf{x}; \sigma)$.

Next, we observe how the state and density semantics compose.

\begin{lemma}[Semantics composes] \label{lem:sem_properties}~
	The state and density semantics compose as follows:
$$\sems{S_1; S_2}(\sigma)(\mathbf{x}) = \sems{S_2}(\sems{S_2}(\sigma)(\mathbf{x}))(\mathbf{x}) \qquad
\semp{S_1; S_2}(\sigma)(\mathbf{x}) = \semp{S_1}(\sigma)(\mathbf{x}) \times \semp{S_2}(\sems{S_1}(\sigma)(\mathbf{x}))(\mathbf{x})$$

\end{lemma}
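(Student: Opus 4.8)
The plan is to reduce both equations to the single operational rule \ref{Eval Seq} together with the definitions of $\sems{\cdot}$ and $\semp{\cdot}$. First I would unfold the left-hand sides: by definition of $\sem{\cdot}$, computing $\sem{S_1;S_2}(\sigma)(\mathbf{x})$ amounts to finding the $\sigma''$ and $w''$ with $\sigma\uplus\mathbf{x}, (S_1;S_2)\Downarrow \sigma''\uplus\mathbf{x}, w''$, which are unique since $\Downarrow$ is deterministic. Inverting \ref{Eval Seq} on this derivation then produces an intermediate state $s'$ and weights $w,w'$ satisfying $\sigma\uplus\mathbf{x}, S_1\Downarrow s', w$, and $s', S_2 \Downarrow \sigma''\uplus\mathbf{x}, w'$, and $w''=w\ast w'$.

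The crux of the argument, and the step I expect to be the only real work, is to show that the intermediate state $s'$ decomposes as $\sigma'\uplus\mathbf{x}$ for the \emph{same} $\mathbf{x}$, so that $S_2$ is being evaluated on a state whose $\Gamma_{\mathbf{x}}$-part still equals $\mathbf{x}$. For this I would establish a write-set preservation lemma: if $(s, S)\Downarrow(s',w)$ and $x\notin W(S)$, then $s$ and $s'$ agree on $x$ (and on whether $x$ lies in the domain). Since $\mathbf{x}=s\setminus\sigma$ collects exactly the variables outside $W(S_1;S_2)$, and $W(S_1;S_2)\supseteq W(S_1),W(S_2)$, none of the $\mathbf{x}$-variables lies in $W(S_1)$; hence the $\mathbf{x}$-component is untouched by $S_1$, giving $s'=\sigma'\uplus\mathbf{x}$ with $\sigma'=s'\setminus\mathbf{x}$. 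The preservation lemma itself is a routine rule induction on $\Downarrow$; the cases deserving attention are \ref{Eval ForTrue}, where the loop variable is spliced in and then removed again via $s'[-x]$, and \ref{Eval Sample} and \ref{Eval Factor}, which leave the state unchanged and so are immediate.

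With this decomposition in hand, I would read off $\sigma'=\sems{S_1}(\sigma)(\mathbf{x})$ and $w=\semp{S_1}(\sigma)(\mathbf{x})$ from the first premise, and $\sigma''=\sems{S_2}(\sigma')(\mathbf{x})$ and $w'=\semp{S_2}(\sigma')(\mathbf{x})$ from the second. Substituting $\sigma'=\sems{S_1}(\sigma)(\mathbf{x})$ yields
\[
\sems{S_1;S_2}(\sigma)(\mathbf{x})=\sems{S_2}(\sems{S_1}(\sigma)(\mathbf{x}))(\mathbf{x}),\qquad
\semp{S_1;S_2}(\sigma)(\mathbf{x})=\semp{S_1}(\sigma)(\mathbf{x})\times\semp{S_2}(\sems{S_1}(\sigma)(\mathbf{x}))(\mathbf{x}),
\]
which are the two claimed identities; I read the inner $\sems{S_2}$ in the stated state-composition equation as a typo for $\sems{S_1}$, matching both the derivation above and the density equation. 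The determinism of $\Downarrow$ ensures all the objects named are genuine functions of $(\sigma,\mathbf{x})$, so the equalities hold as stated wherever either side is defined.
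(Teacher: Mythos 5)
Your proof is correct. There is, in fact, no proof of \autoref{lem:sem_properties} in the paper to compare yours against: the lemma is stated as an observation, implicitly justified by \ref{Eval Seq} and the definition of $\sem{S}$, and no argument appears in the appendix. What your write-up supplies is exactly the step the paper glosses over. The paper's definition $\sem{S}(\sigma)(\mathbf{x}) = \sigma', w$ where $\sigma \uplus \mathbf{x}, S \Downarrow \sigma'\uplus \mathbf{x}, w$ silently presupposes that evaluation never disturbs the $\Gamma_{\mathbf{x}}$-part of the state, and your write-set preservation lemma (if $(s,S)\Downarrow(s',w)$ and $x\notin \assset(S)$ then $s$ and $s'$ agree on $x$, including membership in the domain) is precisely what makes that presupposition --- and hence the decomposition of the intermediate state as $\sigma'\uplus\mathbf{x}$ --- rigorous. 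Since $\assset(S_1;S_2)=\assset(S_1)\cup\assset(S_2)$, the variables in $\mathbf{x}$ avoid $\assset(S_1)$, so your appeal to that lemma goes through, and \ref{Eval ForTrue} (where the loop variable is spliced in and then removed by $s'[-x]$) is indeed the one case where domain preservation genuinely needs checking. Your reading of the inner $\sems{S_2}$ in the first displayed equation as a typo for $\sems{S_1}$ is also right: as printed, the state equation is false in general (take $S_1$ an assignment and $S_2 = \kw{skip}$), and only the corrected form matches both the density equation and the later uses of the lemma (e.g.\ in the proofs of \autoref{th:shred_gen} and \autoref{th:ci}).
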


Throughout the paper we use the following notation to separate the store in a concise way. 
\begin{definition}[$\Gamma_{\ell}(s)$ or $s_{\ell}$] ~\\
For a typing environment $\Gamma$ and a store $s \models \Gamma$, let $\Gamma_{\ell}(s) = \{(x \mapsto V) \in s \mid \Gamma(x) = (\_, \ell)\}$. When it is clear which typing environment the notation refers to, we write simply $s_{\ell}$ instead of $\Gamma_{\ell}(s)$. 
\end{definition}

Using this definition, we re-state the noninterference result in the following  convenient form.
\begin{lemma}[Noninterference of $\vdash$ reformulated] \label{lem:noninterf-reform}~
Let $~\Gamma_{\sigma}, \Gamma_{\mathbf{x}}\vdash S$ be a well-typed SlicStan program.
For all levels $\ell \in \{\lev{data}, \lev{model}, \lev{genquant}\}$, there exist unique functions $f_{\ell}$, such that for all 
$\sigma \models \Gamma_{\sigma}$, $\mathbf{x} \models \Gamma_{\mathbf{x}}$ and $\sigma'$ such that $\sems{S}(\sigma)(\mathbf{x}) = \sigma'$,
\quad $ \sigma_{\ell}' = f_{\ell}(\{ \sigma_{\ell'}, \mathbf{x}_{\ell'} \mid \ell' \leq \ell \})$.
%
\end{lemma}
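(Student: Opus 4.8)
The plan is to read this lemma as \autoref{lem:noninterf} (noninterference) re-expressed as a functional dependency, and to prove it in two moves: first reduce the existence and uniqueness of each $f_{\ell}$ to a single noninterference claim about the level-$\ell$ coordinates of the output state, then discharge that claim by rule induction on the typing derivation. Since the variables in $\mathbf{x}$ are never deterministically assigned, $\sigma'$ ranges only over the $W(S)$-variables and $(\sigma')_{\ell}$ collects exactly the level-$\ell$ ones, so the whole content is about how those are computed.

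First I would isolate the reduction. A function $f_{\ell}$ with the stated property exists and is unique precisely when $(\sigma')_{\ell}$ is determined by the level-$\leq\ell$ part of the input, i.e. when: for any $\sigma_1 \uplus \mathbf{x}_1 \models \Gamma$ and $\sigma_2 \uplus \mathbf{x}_2 \models \Gamma$ with $\sigma_1 \uplus \mathbf{x}_1 \approx_{\ell} \sigma_2 \uplus \mathbf{x}_2$, writing $\sigma_i' = \sems{S}(\sigma_i)(\mathbf{x}_i)$, we have $(\sigma_1')_{\ell} = (\sigma_2')_{\ell}$. Given this implication, $f_{\ell}$ is defined on each realizable tuple of level-$\leq\ell$ values by running $S$ from any conforming state extending that tuple and reading off $(\sigma')_{\ell}$; well-definedness is exactly the implication, and uniqueness is immediate because the graph of $f_{\ell}$ is thereby fixed on its entire domain. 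The case $\ell=\lev{genquant}$ is then vacuous, since the hypothesis already constrains every variable, leaving $\ell\in\{\lev{data},\lev{model}\}$ as the substantive cases.

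Next I would establish the implication. It is the conclusion ``$s_1\approx_{\ell} s_2 \implies s_1'\approx_{\ell} s_2'$'' of \autoref{lem:noninterf}(2), read off at the level-$\ell$ coordinates. The subtlety is that \autoref{lem:noninterf}(2) carries the typing hypothesis $\Gamma\vdash S:\ell$, whereas a whole program is only guaranteed $\Gamma\vdash S:\lev{data}$ and in general writes below $\ell$ (so it is \emph{not} typeable at $\ell$ when $\ell=\lev{model}$). I would therefore prove the form that decouples the two levels: for every well-typed $S$ and every $\ell$, $s_1\approx_{\ell}s_2$ implies $s_1'\approx_{\ell}s_2'$. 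This goes by rule induction on the typing derivation, mirroring the proof of \autoref{lem:noninterf}. The \textsc{Assign} case uses \autoref{lem:noninterf}(1) via \textsc{ESub}: if the target has level $\ell_0\leq\ell$ its right-hand side evaluates equally, while if $\ell_0>\ell$ the update touches no level-$\leq\ell$ variable; \textsc{Skip}, \textsc{Factor} and, crucially, \textsc{Sample} leave the state unchanged (by \textsc{Eval Sample} only the weight is affected); and \textsc{Seq} composes through \autoref{lem:sem_properties}.

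The main obstacle is the \textsc{If} and \textsc{For} cases when the guard has level $\ell_0>\ell$: here the guard may read level-$>\ell$ variables on which $s_1$ and $s_2$ disagree, so the two runs may select different branches or iterate differently, and a direct induction fails. The resolution is supplied by the type system itself: the \textsc{If} and \textsc{For} rules force the bodies to be typed at the guard level $\ell_0$, so the bodies assign only to levels $\geq\ell_0>\ell$ and hence leave every level-$\leq\ell$ variable untouched regardless of which branch is taken or how many iterations run; thus $s_1'\approx_{\ell}s_2'$ is preserved. When instead $\ell_0\leq\ell$, the guard and loop bounds evaluate equally by \autoref{lem:noninterf}(1), the control flow agrees, and the induction hypothesis applies to the branches or loop body. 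This ``high guard controls only high writes'' argument is exactly where the information-flow discipline does the work, and making it precise for nested loops is the step that needs the most care.
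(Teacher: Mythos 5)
Your proof is correct, and it follows the same basic route the paper intends --- \autoref{lem:noninterf-reform} is presented there as a mere restatement of noninterference, with no separate proof (its $\typingspecial$ counterpart, \autoref{lem:noninterf2-reform}, is proved by the single line ``follows from noninterference''). The difference is that you supply a step the paper glosses over, and that step is genuinely needed: \autoref{lem:noninterf}(2) as literally stated couples the typing level to the observation level ($\Gamma \vdash S : \ell$ together with $s_1 \approx_\ell s_2$), while a whole program is only guaranteed to typecheck at $\lev{data}$ --- by \ref{SSub} typability propagates downward, not upward --- so that lemma cannot be invoked directly at $\ell = \lev{model}$ or $\ell = \lev{genquant}$. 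Your decoupled statement (for every well-typed $S$ and every observation level $\ell$, the relation $\approx_\ell$ is preserved) is the right generalization, and your treatment of the two problematic control-flow cases is the standard and correct one: low guards give equal control flow via \autoref{lem:noninterf}(1) and \ref{ESub}, while high guards ($\ell_0 > \ell$) give confinement --- bodies typed at $\ell_0$ assign only to variables of level $\geq \ell_0$, hence leave the level-$\leq\ell$ fragment of the store untouched whichever branch or iteration count is realized. Two points deserve explicit care in a full write-up: the confinement claim is a semantic property of typed statements and needs its own small induction on typing derivations (it is not true by definition of the judgment), and the domain of $f_\ell$ should be restricted to tuples admitting at least one converging extension, since $\Downarrow$ is partial; both are routine. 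What your approach buys is a proof that actually closes the gap between \autoref{lem:noninterf} and \autoref{lem:noninterf-reform}; what the paper's terseness buys is brevity, at the cost of leaving that typing-level/observation-level mismatch unaddressed.
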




\subsection{Shredding and Translation to Stan} \label{ssec:shred}
A key aim of SlicStan is to rearrange the input program into three phases of execution,
corresponding to the levels of the type system: $\lev{data}$ preprocessing, core $\lev{model}$ code to run MCMC or another inference algorithm on, and $\lev{genquant}$, or generated quantities,
which amount to sample post-processing after inference is performed.
The motivation for these phases is that they all naturally appear in the workflow of probabilistic programming.
The blocks of the Stan are built around this phase distinction, and compilation 
of SlicStan to Stan and comparable back-ends requires it.

The phases impose different restrictions on the code and make it incur differing computational costs.
The model phase is by far the most expensive to evaluate: code in this phase tends to be executed repeatedly within the inner loop of an inference algorithm like an MCMC method.
Further, it tends to be automatically differentiated \cite{griewank2008evaluating} in case gradient-based inference algorithms are used, which restricts the available programming features and increases the space and time complexity of evaluation.
Type inference in SlicStan combined with shredding allows the user to write their code 
without worrying about the performance of different phases, as 
code will be shredded into its optimal phase of execution.

The shredding relation is in the core of this rearrangement.
Shredding takes a SlicStan statement $S$ and splits it into three \textit{single-level statements} (\autoref{def:singlelev}).
That is, $S \shred S_D, S_M, S_Q$ means we split $S$ into sub-statements $S_D, S_M, S_Q$, were $S_D$ mentions only \lev{data} variables, $S_M$ mentions \lev{data} and \lev{model} variables, and $S_Q$ is the rest of the program, and such that the composition $S_D; S_M; S_Q$ behaves the same as the original program $S$. 
When combined with type inference, shredding automatically determines optimal statement placement, such that only necessary work is executed in the `heavy-weight' \lev{model} part of inference.

We adapt the shredding from \cite{SlicStanPOPL}, so that the following holds for the three sub-statements of a shredded well-typed SlicStan program $\Gamma\vdash S$:
\begin{itemize}
\item $S_D$ implements \textit{deterministic data preprocessing}: no 
contributions to the density are allowed. 
\item $S_M$ is the \textit{inference core}: it is the least restrictive of the three slices --- either or both of $S_D$ and $S_Q$ can be merged into $S_M$. It can involve 
contributions to the density which require advanced inference for sampling. Therefore, this is the part of the program which requires the most computation during inference (in Stan, what is run inside HMC);
\item $S_Q$ represents \textit{sample post-processing}: any contributions to the  density are generative. That is, they can immediately be implemented 
using draws from random number generators.
\end{itemize}
In terms of inference, we can run $S_D$ once as a pre-processing step. Then use a suitable inference algorithm for $S_M$ (in the case of Stan, that's HMC, but we can use other MCMC or VI algorithms), and, finally, we use ancestral sampling for $S_Q$. \footnote{Ancestral (or forward) sampling refers to the method of sampling from a joint distribution by individually sampling variables from the factors constituting the joint distribution. For example, we can sample from $p(x, y) = p(x)p(y \mid x)$ by randomly generating $\hat{x}$ according to $p(x)$, and then randomly generating $\hat{y}$ according to $p(y \mid x = \hat{x})$.}
\begin{display}[.3]{Shredding Relation}
	\clause{S \shred \shredded}{statement shredding} 
\end{display}
\begin{display}{Shredding Rules for Statements:}
	\squad	
	\staterule{Shred Assign}
	{\Gamma(L) = (\_,\lev{data}) \rightarrow S_D = L = E, S_M = S_Q = \kw{skip} \\
		\Gamma(L) = (\_,\lev{model}) \rightarrow S_M = L = E, S_D = S_Q = \kw{skip} \\
		\Gamma(L) = (\_,\lev{genquant}) \rightarrow S_Q = L = E, S_D = S_M = \kw{skip}	}
	{ L = E \shred (S_D, S_M, S_Q)}\quad
	
	\staterule{Shred Seq}
	{ S_1 \shred S_{D_1}, S_{M_1}, S_{Q_1} \quad 
		S_2 \shred S_{D_2}, S_{M_2}, S_{Q_2}}
	{ S_1; S_2 \shred (S_{D_1};S_{D_2}), (S_{M_1};S_{M_2}), (S_{Q_1};S_{Q_2})  } 
	
	\\[\GAP]\squad
	\staterule{Shred Factor}
	{\Gamma(E) = \lev{data} \rightarrow S_D = \kw{factor}(E), S_M = S_Q = \kw{skip} \\
		\Gamma(E) = \lev{model} \rightarrow S_M = \kw{factor}(E), S_D = S_Q = \kw{skip} \\
		\Gamma(E) = \lev{genquant} \rightarrow S_Q = \kw{factor}(E), S_D = S_M = \kw{skip}	}
	{ \kw{factor}(E) \shred (S_D, S_M, S_Q)}\quad
	
	\staterule{Shred Skip}
	{}
	{\kw{skip} \shred (\kw{skip}, \kw{skip}, \kw{skip})}\qquad
	
	\\[\GAP]\squad
	\staterule{Shred Sample}
	{\Gamma(L, E_1, \dots, E_n) = \lev{data} \rightarrow S_D = L \sim d(E_1, \dots, E_n), S_M = S_Q = \kw{skip}) \\
		\Gamma(L, E_1, \dots, E_n) = \lev{model} \rightarrow S_M = L \sim d(E_1, \dots, E_n), S_D = S_Q = \kw{skip}) \\
		\Gamma(L, E_1, \dots, E_n) = \lev{genquant} \rightarrow S_Q = L \sim d(E_1, \dots, E_n), S_D = S_M = \kw{skip})	}
	{ L \sim d(E_1, \dots, E_n) \shred (S_D, S_M, S_Q)}\quad\hquad
	
	\\[\GAP]\squad	
	\staterule{Shred If}
	{   S_1 \shred \shredded[1] \quad 
		S_2 \shred \shredded[2] \quad}
	{ \kw{if}(g)\; S_1\; \kw{else}\; S_2 \shred  
		(\kw{if}(g)\; S_{D_1}\; \kw{else}\; S_{D_2}),  
		(\kw{if}(g)\; S_{M_1}\; \kw{else}\; S_{M_2}), 
		(\kw{if}(g)\; S_{Q_1}\; \kw{else}\; S_{Q_2})} \qquad	
	
	\\[\GAP]\squad
	\staterule{Shred For}
	{   S \shred \shredded  }
	{ \kw{for}(x\;\kw{in}\;g_1:g_2)\;S \shred  
		(\kw{for}(x\;\kw{in}\;g_1:g_2)\;S_D),  
		(\kw{for}(x\;\kw{in}\;g_1:g_2)\;S_M), 
		(\kw{for}(x\;\kw{in}\;g_1:g_2)\;S_Q)} \qquad
\end{display}



Here, $\Gamma(E)$ (\autoref{def:gammaE}) gives the principal type of an expression $E$, while $\Gamma(E_1, \dots, E_n)$ (\autoref{def:gammaEs}) gives the least upper bound of the principal types of $E_1, \dots, E_n$.  

The \ref{Shred If} and \ref{Shred For} rules make sure to shred if and for statements so that they are separated into parts which can be computed independently at each of the three levels. Note that the usage of $\kw{if}$ and $\kw{for}$ guards is simplified, to avoid stating rules for when the guard(s) are of different levels. For example, if we have a statement 
$\kw{if}(E)~S_1~\kw{else}~S_2$, where $E$ is of level \lev{model}, we cannot access $E$ at level $\lev{data}$, thus the actual shredding rule we would use is:

\vspace{6pt}
\staterule{Shred If Model Level}
{   S_1 \shred (S_{D_1}, S_{M_1}, S_{Q_1}) \quad 
	S_2 \shred (S_{D_2}, S_{M_2}, S_{Q_2}) \quad}
{ \kw{if}(g)\; S_1\; \kw{else}\; S_2 \shred  
	\kw{skip},  
	(\kw{if}(g)\; S_{D_1}; S_{M_1}\; \kw{else}\; S_{D_2}; S_{M_2}), 
	(\kw{if}(g)\; S_{Q_1}\; \kw{else}\; S_{Q_2})
}
\vspace{6pt}


These shredding rules follow very closely those given by \citet{SlicStanPOPL}. The main difference is that sample statements ($L \sim d(E_1, \dots, E_n)$) are allowed to be of \lev{genquant} level and can be included in the last, generative slice of the program (see rule \ref{Shred Sample}). In other words, such \lev{genquant} sample statements are those statements that can be interpreted as probabilistic assignment (using random number generator functions) to directly sample from the posterior distribution according to ancestral sampling.

We provide proofs for the following key results in \autoref{ap:proofs}: shredding produces \textit{single-level statements} (\autoref{def:singlelev} 
and \autoref{lem:shredisleveled}) and shredding is semantics preserving (Lemma~\ref{lem:shred}).

Intuitively, a single-level statement of level $\ell$ is one that updates only variables of level $\ell$. 

\begin{definition}[Single-level Statement $\Gamma \vdash \ell(S)$] \label{def:singlelev}
	We define single-level statements  $S$ of level $\ell$ with respect to $\Gamma$ (written $\Gamma \vdash \ell(S)$), by induction:
	\begin{display}{Single Level Statements:}
		\squad
		\staterule{Assign Single}
		{\Gamma(x)=(\_,\ell)}
		{\Gamma\vdash \ell(x[E_1]\cdots[E_n]=E)}\quad
		
		\staterule{Seq Single}
		{ \Gamma\vdash \ell(S)\quad \Gamma\vdash\ell(S') }
		{ \Gamma\vdash \ell(S;S') }\quad
		
		\staterule{For Single}
		{\Gamma,x:(\kw{int},\ell)\vdash \ell(S)}
		{\Gamma\vdash \ell(\kw{for}(x\; \kw{in}\; E_1: E_2) S)}\quad

		\staterule{If Single}
		{\Gamma\vdash \ell(S_1)\quad \Gamma\vdash \ell(S_2)}
		{\Gamma\vdash \ell(\kw{if}(E)\;S_1\;\kw{else}\; S_2)}
		\quad  
		
		\\[\GAP]\squad		
		\staterule{Skip Single}
		{~}
		{\Gamma\vdash \ell(\kw{skip})}\quad
		
		\staterule{Factor Single}
		{\Gamma \vdash E:\ell \quad \forall \ell'<\ell. \Gamma\not\vdash E:\ell'}
		{\Gamma\vdash \ell(\kw{factor}(E))}\\[\GAP]\squad
		
		\staterule{Sample Single}
		{\Gamma\vdash L\sim d(E_1,\ldots, E_n) : \ell \quad 
			\forall \ell'<\ell.\Gamma\not\vdash L \sim d(E_1,\ldots, E_n):\ell'}
		{\Gamma\vdash \ell(L\sim d(E_1,\ldots, E_n))}		
	\end{display}
\end{definition}

\begin{lemma}[Shredding produces single-level statements] \label{lem:shredisleveled}
	$$\Gamma\vdash S \;\wedge\;
	S \shred[\Gamma] \shredded \implies 
	\singlelevelS{\lev{data}}{S_D} \wedge \singlelevelS{\lev{model}}{S_M} \wedge \singlelevelS{\lev{genquant}}{S_Q}$$
\end{lemma}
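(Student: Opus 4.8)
The plan is to prove the statement by rule induction on the shredding derivation $S \shred \shredded$, whose cases follow the syntactic structure of $S$. In each case I additionally invert the assumed typing derivation $\Gamma \vdash S$ (modulo uses of \textsc{SSub}) to recover the typings of the immediate sub-statements and, for atomic statements, to fix their levels. The three conclusions $\Gamma \vdash \lev{data}(S_D)$, $\Gamma \vdash \lev{model}(S_M)$, and $\Gamma \vdash \lev{genquant}(S_Q)$ are proved simultaneously, since for a compound statement the induction hypothesis delivers all three single-level facts about each sub-statement at once.

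For the base cases (\textsc{Shred Skip}, \textsc{Shred Assign}, \textsc{Shred Factor}, \textsc{Shred Sample}) the key observation is that shredding deposits the atomic statement into exactly one slice --- the one named by its level --- and fills the other two with $\kw{skip}$; those two are single-level at every level by \textsc{Skip Single}, so only the non-skip slice needs checking. For \textsc{Shred Assign} the surviving slice sits at the level $\ell$ with $\Gamma(L) = (\_, \ell)$, which is exactly the premise of \textsc{Assign Single}. For \textsc{Shred Factor} and \textsc{Shred Sample} the surviving slice sits at $\Gamma(E)$, respectively $\Gamma(L, E_1, \dots, E_n)$; here I must check that this placement level is the principal (least) level, which is precisely the side condition $\forall \ell' < \ell.\ \Gamma \not\vdash E : \ell'$ required by \textsc{Factor Single} and \textsc{Sample Single}, and this holds by definition of $\Gamma(\cdot)$ as a principal type.

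The \textsc{Shred Seq} case is immediate: inverting \textsc{Seq} gives $\Gamma \vdash S_1$ and $\Gamma \vdash S_2$, the induction hypothesis supplies the single-level facts for all six sub-slices, and \textsc{Seq Single} recombines like levels, so $S_{D_1}; S_{D_2}$ is data-single, $S_{M_1}; S_{M_2}$ model-single, and $S_{Q_1}; S_{Q_2}$ genquant-single. For \textsc{Shred If} and \textsc{Shred For} with the simplified rules as displayed, the argument is equally direct: apply the induction hypothesis to the branches, respectively the body, and conclude with \textsc{If Single} / \textsc{For Single}, noting that the guard is only read, never assigned.

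The hard part will be the guard-level complication flagged just after the shredding rules. When a guard $g$ has a level $\ell_g$ strictly above $\lev{data}$, shredding cannot reproduce the conditional at levels below $\ell_g$ and instead merges the lower slices of the branches upward (as in the \textsc{Shred If Model Level} rule, where $S_{D_i}; S_{M_i}$ is pushed into the model slice). To keep the merged slice single-level I must show the slices merged in from strictly below $\ell_g$ are in fact $\kw{skip}$-equivalent. For assignments this follows by inverting \textsc{If} / \textsc{For}, which forces $\Gamma \vdash S_i : \ell$ with $\ell \geq \ell_g$ and hence $\assset_{\Gamma \vdash \ell'}(S_i) = \emptyset$ for every $\ell' < \ell_g$, so the strictly-lower slices carry no assignments. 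The genuinely delicate point is that a density contribution carried by a branch must likewise never land below $\lev{model}$: for sample statements this is guaranteed by \textsc{Sample}, whose conclusion has level $\ell' \sqcup \lev{model} \geq \lev{model}$, and for factor statements I would rely on \textsc{Factor} keeping the statement at level $\lev{model}$. Establishing that no density-contributing statement occupies a slice below $\lev{model}$ is the crux that makes the merged slice single-level, after which \textsc{Seq Single} together with \textsc{If Single} / \textsc{For Single} closes the case.
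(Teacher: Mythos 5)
Your overall strategy --- rule induction on the shredding derivation, inverting the typing derivation as needed, discharging the \kw{skip} slices by \textsc{Skip Single}, matching each atomic statement's placement level to the premise of the corresponding single-level rule, and recombining compound cases with \textsc{Seq Single}/\textsc{If Single}/\textsc{For Single} --- is exactly the paper's proof, which the paper states in one line (``by rule induction on the derivation of $S \shred S_D, S_M, S_Q$'') and which is implicitly scoped to the displayed, guard-simplified shredding rules. On that scope your argument is correct and fills in the details the paper omits.

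The part you single out as the hard part, however, does not go through as you argue it. Your crux --- that no density-contributing statement can occupy a slice below $\lev{model}$ --- is justified by appeal to the \emph{typing} levels of \textsc{Factor} and \textsc{Sample} (always $\geq \lev{model}$), but shredding does not place these statements by their typing level: \textsc{Shred Factor} places $\kw{factor}(E)$ in the slice named by the \emph{principal expression level} $\Gamma(E)$, and \textsc{Shred Sample} by $\Gamma(L, E_1, \dots, E_n)$. A statement $\kw{factor}(E)$ whose expression is data-only or constant, e.g.\ $\kw{factor}(1.0)$, is well-typed at $\lev{model}$ (by subsuming $E$ upward), yet lands in the $\lev{data}$ slice. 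Under the simplified rules this is harmless, since by \textsc{Factor Single} it is single-level at $\lev{data}$, which is all the lemma asks. But under the merged rule \textsc{Shred If Model Level} that same factor is pushed into the $\lev{model}$ slice of the conditional, where it is \emph{not} single-level at $\lev{model}$: \textsc{Factor Single} at $\lev{model}$ demands $\Gamma \not\vdash E : \lev{data}$, which fails for such $E$. Concretely, $\kw{if}(g)\;\kw{factor}(1.0)\;\kw{else}\;\kw{skip}$ with $g$ of level $\lev{model}$ is well-typed, and the merged rule produces a $\lev{model}$ slice containing $\kw{factor}(1.0)$, so the conclusion of the lemma (with the paper's definitions read literally) cannot be derived there at all. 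In other words, the crux you flag is not merely unproven by your appeal to \textsc{Factor}/\textsc{Sample}; it is false, and this is presumably why the paper confines both its displayed rules and its one-line induction to the simplified case. The assignment half of your argument for the merged rule (branches under a $\lev{model}$ guard cannot assign below $\lev{model}$, by inversion of \textsc{If}/\textsc{For}) is sound; the flaw is only in conflating the typing level of factor and sample statements with their placement level.
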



We prove a result about the effect of single-level statements on the store and weight of well-typed programs (\autoref{lem:single-lev-prop}). Intuitively, this result shows that a single-level statement of level $\ell$ acts on the state and weight in a way that is independent of levels greater than $\ell$.
\begin{lemma}[Property of single-level statements] \label{lem:single-lev-prop} ~\\
	Let $~\Gamma_{\sigma}, \Gamma_{\mathbf{x}}, S$ be a SlicStan program, such that $S$ is a single-level statement of level $\ell$, $\Gamma \vdash \ell(S)$. Then there exist unique functions $f$ and $\phi$, such that for any $\sigma, \mathbf{x} \models \Gamma_{\sigma}, \Gamma_{\mathbf{x}}$: 
	$$  \sem{S}(\sigma)(x) = f(\sigma_{\leq \ell}, \mathbf{x}_{\leq \ell})\cup \sigma_{> \ell} , \hquad \phi(\sigma_{\leq \ell})(\mathbf{x}_{\leq \ell}), $$
	where we write $\sigma_{\leq \ell}=\{(x\mapsto V)\in \sigma\mid \Gamma_{\sigma}(x)=(\_,\ell)\}$ and $\sigma_{>\ell}=\sigma\setminus \sigma_{\leq \ell}$.
\end{lemma}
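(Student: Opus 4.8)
The plan is to prove the statement by rule induction on the single-level derivation $\Gamma \vdash \ell(S)$, establishing simultaneously three facts about the outcome $\sem{S}(\sigma)(\mathbf{x}) = (\sigma', w)$: (i) the output store agrees with the input on every variable of level $\neq \ell$, so in particular $\sigma'_{>\ell} = \sigma_{>\ell}$; (ii) the fragment $\sigma'_{\leq\ell}$ is determined by $\sigma_{\leq\ell}$ and $\mathbf{x}_{\leq\ell}$ alone; and (iii) the weight $w$ is likewise determined by $\sigma_{\leq\ell}$ and $\mathbf{x}_{\leq\ell}$ alone. Taking $f$ to be the function furnished by (i)--(ii) and $\phi$ the one furnished by (iii) then yields exactly the claimed decomposition $\sem{S}(\sigma)(\mathbf{x}) = \big(f(\sigma_{\leq\ell},\mathbf{x}_{\leq\ell})\cup\sigma_{>\ell},\;\phi(\sigma_{\leq\ell})(\mathbf{x}_{\leq\ell})\big)$. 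Throughout I would assume, as is guaranteed for the single-level statements produced by shredding a well-typed program (\autoref{lem:shredisleveled}), that $S$ is well-typed; well-typedness is what forces the subexpressions of $S$ to read only levels $\leq \ell$.

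The base cases rest on noninterference of expression evaluation, \autoref{lem:noninterf}(1). In \ref{Assign Single}, well-typedness gives $\Gamma\vdash E:(\tau,\ell)$, so the assigned value is a function of the $\leq\ell$ variables only, the single updated variable is of level $\ell$, and the weight is $1$. In \ref{Factor Single} and \ref{Sample Single}, the defining premise $\Gamma\vdash E:\ell$ (resp. $\Gamma\vdash L\sim d(\dots):\ell$) forces every evaluated subexpression, and hence the factored weight $\phi$, to depend only on $\sigma_{\leq\ell},\mathbf{x}_{\leq\ell}$, while the store is left unchanged. The case \ref{Skip Single} is immediate. I note that fact (ii) is essentially the conclusion of \autoref{lem:noninterf-reform} specialised to a single level, so the genuinely new content concentrates in (i) and (iii).

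For the inductive cases I would use compositionality of the semantics, \autoref{lem:sem_properties}. For \ref{Seq Single}, I feed the intermediate store produced by $S_1$---whose $\leq\ell$ part depends only on $\leq\ell$ inputs by the IH and whose $>\ell$ part is unchanged---into the IH for $S_2$, and multiply the two weights. For \ref{If Single} the guard has level $\ell$, so which branch fires is a function of the $\leq\ell$ inputs, and each branch is covered by the IH. For \ref{For Single} the bounds $E_1,E_2$ read only $\leq\ell$ variables and the loop index is a fresh level-$\ell$ variable, so an inner induction on the number of iterations, threading the IH for the body through the $s[x\mapsto c_1]$/$s[-x]$ bookkeeping of \ref{Eval ForTrue}, gives the claim. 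Uniqueness of $f$ and $\phi$ then follows from determinism of $\Downarrow$: for any admissible argument pair $(a,b)$ one may pick any conforming $\sigma,\mathbf{x}$ with $\sigma_{\leq\ell}=a$, $\mathbf{x}_{\leq\ell}=b$ (the $>\ell$ parts being irrelevant by (ii)--(iii)), whereupon $f(a,b)$ and $\phi(a)(b)$ are pinned down by the single value $\sem{S}(\sigma)(\mathbf{x})$; any two functions satisfying the equation must therefore coincide.

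The step I expect to be the main obstacle is fact (iii), the independence of the weight from levels $>\ell$. Noninterference as stated, \autoref{lem:noninterf}(2), only controls the output store up to $\approx_\ell$ and says nothing about the accumulated weight, so this part cannot be quoted as a black box and must be carried explicitly through the induction; the delicate points are verifying in \ref{Sample Single} that both the sampled L-value and every distribution argument evaluate from $\leq\ell$ data, and handling \ref{For Single}, where the weight is a product over a data-dependent number of iterations each of which must be shown to depend only on the running $\leq\ell$ store. A secondary care point is that \ref{Assign Single} on its own places no constraint on the level of the right-hand side, so the argument genuinely relies on the ambient well-typedness hypothesis rather than on the single-level judgment alone.
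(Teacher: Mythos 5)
Your proof is correct, but it takes a genuinely different route from the paper's. The paper does not perform a rule induction on $\Gamma \vdash \ell(S)$ at all: it reduces the lemma to noninterference by an encoding. Every $\kw{factor}(E)$ in $S$ is replaced by the assignment $w_\ell = w_\ell * E$, and every $L \sim d(E_1, \dots, E_n)$ by $w_\ell = w_\ell * d_{\mathrm{pdf}}(L \mid E_1, \dots, E_n)$, where $w_{\lev{data}}, w_{\lev{model}}, w_{\lev{genquant}}$ are fresh reserved variables of the corresponding levels and $\ell$ is the principal level of the factored expression. The weight of $S$ is then read off as the product of the $w_\ell$ in the output store of the transformed statement $S'$, so the store-only noninterference result (\autoref{lem:noninterf}) applies to $S'$ as a black box and delivers your facts (i)--(iii) simultaneously: the weight has become part of the store, so the very obstacle you identified --- that \autoref{lem:noninterf}(2) controls the store but says nothing about the accumulated weight --- simply disappears. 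In other words, you correctly spotted the crux and handled it the laborious way (carrying the weight invariant through an explicit induction, with expression-level noninterference at the leaves, \autoref{lem:sem_properties} for sequencing, an inner induction for loops, and determinism of $\Downarrow$ for uniqueness), while the paper dissolves it with the weight-as-store-variable trick. Each approach has something to recommend it: the paper's argument is a few lines long and reuses existing machinery wholesale, and it also makes uniqueness and the $\sigma_{>\ell}$-preservation claim fall out of the same lemma; yours is self-contained and avoids a point the paper quietly assumes, namely that the encoded program $S'$ is still well-typed (this can fail, e.g., for a $\lev{data}$-level $\kw{factor}$ under a $\lev{model}$-level guard, where the encoded assignment to $w_{\lev{data}}$ would not typecheck even though the original $\kw{factor}$ does). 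Both arguments ultimately lean on the same assumption about guards --- that the guard of an $\kw{if}$ or $\kw{for}$ occurring in a single-level-$\ell$ statement is readable at level $\leq \ell$; this is exactly the ``simplified'' guard treatment the paper acknowledges in its shredding rules, and your write-up at least makes explicit where that assumption is used.
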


\begin{lemma}[Semantic Preservation of $\shred$] \label{lem:shred} ~ \\
	If $~\Gamma \vdash S:\lev{data} $ and $ S \shred[\Gamma] \shredded $ then $\sem{S} = \sem{S_D; S_M; S_Q}$.
\end{lemma}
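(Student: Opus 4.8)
The plan is to proceed by structural induction on $S$, following the syntax-directed shredding relation $S \shred[\Gamma] \shredded$ (equivalently, by induction on the derivation of $\Gamma \vdash S : \lev{data}$). At each step I would establish the equality of functions $\sem{S} = \sem{S_D; S_M; S_Q}$ by tracking its two projections together, using \autoref{lem:sem_properties} to split the semantics of a sequence into the state-threaded composition of $\sems{\cdot}$ and the product of the weights $\semp{\cdot}$. The base cases --- \ref{Shred Assign}, \ref{Shred Skip}, \ref{Shred Factor} and \ref{Shred Sample} --- are immediate: each places the atomic statement in exactly one slice and fills the other two with $\kw{skip}$, and since \ref{Eval Skip} leaves the state unchanged and contributes the multiplicative unit of $\weights$, padding with skips changes neither the resulting state nor the accumulated weight.

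The inductive case for \ref{Shred Seq} is the heart of the argument. The induction hypotheses give $\sem{S_1} = \sem{S_{D_1}; S_{M_1}; S_{Q_1}}$ and $\sem{S_2} = \sem{S_{D_2}; S_{M_2}; S_{Q_2}}$, so that $\sem{S_1; S_2}$ equals the semantics of $S_{D_1}; S_{M_1}; S_{Q_1}; S_{D_2}; S_{M_2}; S_{Q_2}$, whereas the shredded program is $S_{D_1}; S_{D_2}; S_{M_1}; S_{M_2}; S_{Q_1}; S_{Q_2}$. Passing from the former to the latter requires exactly three adjacent transpositions: commuting $S_{D_2}$ leftward past $S_{Q_1}$ and then past $S_{M_1}$, and commuting $S_{M_2}$ leftward past $S_{Q_1}$. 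I would therefore isolate a \emph{commutation lemma}: if $S_a$ and $S_b$ are single-level statements (\autoref{def:singlelev}) of levels $\ell_a$ and $\ell_b$ with $\ell_b < \ell_a$, and $S_a$ reads no variable that $S_b$ assigns, then $\sem{S_a; S_b} = \sem{S_b; S_a}$. By \autoref{lem:shredisleveled} the shreds are single-level at $\lev{data}$, $\lev{model}$ and $\lev{genquant}$ respectively, so each transposition is of this higher-before-lower shape; the hypothesis that the higher statement does not read the lower one's writes is supplied precisely by the $\shreddable$ side-condition of \ref{Seq}, namely $\readset_{\Gamma \vdash \ell_a}(S_1) \cap \assset_{\Gamma \vdash \ell_b}(S_2) = \emptyset$. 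In the other direction, a single-level statement of level $\ell_b$ reads only at levels $\leq \ell_b$ and assigns only level-$\ell_b$ variables, so it can neither read nor clobber the strictly higher variables that $S_a$ touches. With read/write disjointness in both directions, both executions see identical values for every variable each statement consults, hence perform identical state updates and contribute identical weight factors; the factors then multiply to the same weight because $\weights$ is commutative under multiplication, and \autoref{lem:single-lev-prop} provides the bookkeeping that each factor depends only on its own level and below.

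For \ref{Shred If} and \ref{Shred For} the reconstructed program re-evaluates the guard $g$ once in each level-copy of the construct, so the key extra obligation is \emph{guard stability}: that $g$ evaluates to the same value at every occurrence. I would argue this from noninterference (\autoref{lem:noninterf} and its reformulation \autoref{lem:noninterf-reform}) together with the level constraints of \ref{If} and \ref{For} on the guard and the branch assignments --- the lower-level shreds executed before a given re-evaluation cannot disturb the part of the store on which $g$ depends. (The non-simplified rules are designed with this in mind: when the guard is not available at a lower phase, the corresponding branch code is kept inside the higher-level conditional rather than hoisted, so that $g$ is never evaluated against a store altered behind its back.) Once guard stability is in hand, conditioning on the common truth value of $g$ reduces each conditional to the sequence case already treated, and the $\kw{for}$ loop requires in addition a routine inner induction on the number of iterations to push the one-copy-per-level shredding through each unrolling. \textbf{I expect the sequence case to be the main obstacle}: it is where statement reordering must be justified, and getting the commutation lemma right --- showing that neither the threaded state nor the accumulated density is disturbed, using $\shreddable$ and \autoref{lem:single-lev-prop} in concert --- is the crux of the whole proof; the guard-stability argument for conditionals and loops is the secondary point that most repays careful checking.
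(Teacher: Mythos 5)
Your overall architecture---structural induction on the shredding derivation, with \ref{Shred Seq} discharged by a commutation lemma---is the right one, and it matches what the paper implicitly relies on: the paper's own ``proof'' is a one-line deferral to the POPL'19 development, so your sequence case is in fact the most detailed part of either argument. That case is correct: the three adjacent transpositions are exactly the ones needed, the $\shreddable(S_1,S_2)$ condition of \ref{Seq} supplies read/write disjointness in the higher-reads-lower direction, \autoref{lem:shredisleveled} together with level-disjointness of single-level writes supplies the other direction, and commutativity of weight multiplication finishes it.

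The genuine gap is guard stability---precisely the step you flagged as needing care but then asserted from noninterference. Noninterference does not deliver it. Rule \ref{If} requires the guard and both branches to typecheck at a common level $\ell$, but this still permits a branch to \emph{assign} to a variable of level exactly $\ell$ that the guard \emph{reads}, and the slices run between two of the duplicated guard evaluations are of that very level. Concretely, take $\Gamma = \{x : (\kw{real},\lev{data}),\ \mu : (\kw{real},\lev{model})\}$ and $S = \kw{if}(x>0)\;\{x = -1;\ \mu \sim \kw{normal}(0,1)\}\;\kw{else}\;\kw{skip}$. This typechecks at $\lev{data}$ (the $\shreddable$ and $\generative$ side-conditions hold because sample statements have empty assigns-to sets), and \ref{Shred If} yields, up to $\kw{skip}$s, $S_D = \kw{if}(x>0)\;\{x=-1\}$, $S_M = \kw{if}(x>0)\;\{\mu \sim \kw{normal}(0,1)\}$, $S_Q$ trivial. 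Starting from $x=5$, the original program produces weight $\mathcal{N}(\mu \mid 0,1)$, whereas in $S_D;S_M;S_Q$ the data slice first sets $x=-1$, so $S_M$'s re-evaluated guard is false and the weight is $1$; hence $\sem{S} \neq \sem{S_D;S_M;S_Q}$. So your claim that ``the lower-level shreds executed before a given re-evaluation cannot disturb the part of the store on which $g$ depends'' is false as stated, and neither \autoref{lem:noninterf} nor the non-simplified rules (which handle guards of \emph{higher} level than some branch code, not branches mutating guard-read variables at the \emph{same} level) can close it; the analogous problem arises for the range expressions under \ref{Shred For}. Closing the gap requires an extra hypothesis (e.g., that branches never assign to variables read by their own guard) or a shredding that snapshots the guard into a fresh variable. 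To be fair, this is a soft spot of the lemma as literally stated, which the paper's citation-style proof does not surface either---but within your proof it is the one step that would fail rather than merely need polish.
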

\subsection{Density Factorisation} \label{ssec:factorisation}
As an extension of \cite{SlicStanPOPL},
we show that shredding induces a natural factorization of the density 
implemented by the program: $p(\data, \params, Q) = p(\params,\data)p(Q \mid \params,\data)$. \footnote{Here, $p(Q \mid \params,\data)$ denotes the \textit{conditional probability density} of $Q$, given the values of $\params$ and $\data$.}
This means that we can separate the program into a deterministic preprocessing part, a part that uses a `heavy-weight' inference algorithm, such as HMC, and a part that uses simple ancestral sampling. 

\begin{theorem}[Shredding induces a factorisation of the density]\label{th:shred_gen}~ \\
	Suppose $\Gamma \vdash S : \lev{data}$ and $~S \shred[\Gamma] S_D, S_M, S_Q$
	and $\Gamma = \Gamma_{\sigma}, \Gamma_{\data}, \Gamma_{\params}, \Gamma_{\quants}$.
	%
	For all $\sigma$, $\data$, $\params$, and $\quants$:
	if $\sigma, \data, \params, \quants \models \Gamma_{\sigma}, \Gamma_{\data}, \Gamma_{\params}, \Gamma_{\quants}$,
	and $\semp{S}(\sigma)(\data, \params, \quants) \propto p(\data, \params, Q)$
	and  $\tildeset(S_Q)=\dom(\Gamma_Q)$ then:
	\begin{enumerate}
		\item $\semp{S_M}(\sigma_D)(\data, \params, \quants) \propto p(\params, \data)$
		\item $\semp{S_Q}(\sigma_M)(\data, \params, \quants) = p(Q \mid \params, \data)$
	\end{enumerate}
	where $\sigma_D = \sems{S_D}(\sigma)(\data, \params, \quants)$,
	$\sigma_M = \sems{S_M}(\sigma_D)(\data, \params, \quants)$,
	and $p(\data, \params, Q) = p(\data, \params)p(Q \mid \data, \params)$.
\end{theorem}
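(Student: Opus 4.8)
The plan is to reduce the statement about $S$ to one about the shredded composition $S_D; S_M; S_Q$ and then read off a three-way factorisation of the density. Write $\mathbf{x} = \data, \params, \quants$ for brevity. By \autoref{lem:shred} we have $\sem{S} = \sem{S_D; S_M; S_Q}$, so in particular the density semantics agree, and applying \autoref{lem:sem_properties} twice gives
\[
\semp{S}(\sigma)(\mathbf{x}) = \semp{S_D}(\sigma)(\mathbf{x}) \times \semp{S_M}(\sigma_D)(\mathbf{x}) \times \semp{S_Q}(\sigma_M)(\mathbf{x}),
\]
where $\sigma_D = \sems{S_D}(\sigma)(\mathbf{x})$ and $\sigma_M = \sems{S_M}(\sigma_D)(\mathbf{x})$ are exactly the intermediate states named in the theorem. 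It then suffices to identify each of the three factors.

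First I would dispose of the two easy factors. The typing rules \ref{Factor} and \ref{Sample} force every density-contributing statement to have level at least $\lev{model}$; since $S_D$ is $\lev{data}$-single-level by \autoref{lem:shredisleveled}, it contains no $\kw{factor}$ or $\sim$ statements, and so $\semp{S_D}(\sigma)(\mathbf{x}) = 1$. Next, $S_M$ is $\lev{model}$-single-level, so by \autoref{lem:single-lev-prop} its density contribution depends only on the $\lev{data}$ and $\lev{model}$ portions of the state and of $\mathbf{x}$; that is, it is a function $\phi_M(\data, \params)$ of $\data$ and $\params$ alone, independent of $\quants$. Thus the first two factors collapse to $\phi_M(\data, \params)$.

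The substantive step is to show that the third factor is an honest \emph{normalised} conditional, $\semp{S_Q}(\sigma_M)(\data, \params, \quants) = p(Q \mid \data, \params)$. Here I would note that $S_Q$, being $\lev{genquant}$-single-level, again contains no $\kw{factor}$ statements, so its weight is a product of distribution factors $d(V \mid \cdots)$, one per $\sim$ statement. The hypothesis $\tildeset(S_Q) = \dom(\Gamma_Q)$ together with the $\generative$ side-condition on \ref{Seq} guarantees that each $Q$-variable is $\sim$-ed exactly once and never deterministically reassigned, so these factors form a forward- (ancestral-) sampling chain: running $S_Q$ amounts to drawing each $q \in Q$ from a built-in, already-normalised conditional distribution given the values of variables computed earlier. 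The main obstacle is making this telescoping precise by an induction on the single-level grammar of $S_Q$ (\autoref{def:singlelev}), threading through the interleaved deterministic assignments to helper $\sigma$-variables and through the $\kw{for}$/$\kw{if}$ constructs, so as to conclude that the product of per-variable normalised conditionals integrates to one over $Q$, i.e. $\int \semp{S_Q}(\sigma_M)(\data, \params, Q)\, \dif Q = 1$, and is therefore \emph{exactly} $p(Q \mid \data, \params)$ rather than merely proportional to it.

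Finally I would combine the pieces. Substituting the three identifications into the displayed factorisation yields
\[
\semp{S}(\sigma)(\mathbf{x}) = \phi_M(\data, \params) \times p(Q \mid \data, \params).
\]
Since $\semp{S}(\sigma)(\mathbf{x}) \propto p(\data, \params, Q)$, there is a constant $Z$ with $\phi_M(\data, \params)\, p(Q \mid \data, \params) = Z\, p(\data, \params, Q)$. Integrating both sides over $Q$ and using that $p(Q \mid \data, \params)$ integrates to one gives $\phi_M(\data, \params) = Z\, p(\data, \params)$, whence $\semp{S_M}(\sigma_D)(\mathbf{x}) = \phi_M(\data, \params) \propto p(\data, \params)$, which is part (1); part (2) is the identity established in the previous paragraph, and together they deliver the stated factorisation $p(\data, \params, Q) = p(\data, \params)\, p(Q \mid \data, \params)$. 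I expect the genuinely hard part to be the ancestral-sampling induction establishing the normalisation of $\semp{S_Q}$; the remainder is bookkeeping layered on top of the composition, preservation, and single-level lemmas.
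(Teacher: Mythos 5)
Your proof skeleton is organised genuinely differently from the paper's, and the final bookkeeping (factor the density via \autoref{lem:shred} and \autoref{lem:sem_properties}, then integrate over $Q$ to pin down the $S_M$ factor) is valid as you set it up. The paper does not treat the three factors separately: it strengthens the statement to ``$\semp{S_D}=1$, $\semp{S_M}\propto p(\params,\data)$, and $\semp{S_Q}=p(A\mid\params,\data,B)$ where $A=\tildeset(S_Q)$ and \emph{some} $B\subseteq Q\setminus A$'', and proves all three claims simultaneously by rule induction on the derivation of $S\shred S_D,S_M,S_Q$, with the \ref{Shred Seq} case carrying the probability algebra. Your plan instead extracts the $S_D$ and $S_M$ factors from \autoref{lem:shredisleveled} and \autoref{lem:single-lev-prop} and concentrates all the work in a normalisation lemma for $S_Q$. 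Note, however, that as soon as you try to carry out that deferred induction, your route collapses back onto the paper's: a sub-statement of $S_Q$ samples only a subset $A$ of $Q$ while \emph{reading} other $Q$-variables, so the inductive invariant cannot be ``integrates to one over $Q$'' but must be ``is a normalised conditional of $A$ given some $B\subseteq Q\setminus A$'' --- exactly the paper's strengthened claim, restricted to level $\lev{genquant}$. So you have correctly located the hard part, but the idea needed to finish it is the one you have not written down.

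There are also two concrete gaps. First, your disposal of $S_D$ is wrongly justified: being $\lev{data}$-single-level does \emph{not} exclude $\kw{factor}$ or $\sim$ statements. \autoref{def:singlelev} contains the rules \ref{Factor Single} and \ref{Sample Single} at \emph{every} level, and \ref{Shred Factor}/\ref{Shred Sample} dispatch on the \emph{principal} level of the expressions involved, so a statement such as $y\sim\normal(0,1)$ with $y$ of level $\lev{data}$ (well-typed, since \ref{Sample} types it at $\lev{model}$ and \ref{SSub} lowers it) is shredded into $S_D$, giving $\semp{S_D}=\normal(y\mid 0,1)\neq 1$; the same happens for $\kw{factor}(E)$ when $E$ reads only data. (The paper's generalised claim also asserts $\semp{S_D}=1$ and never exhibits these base cases, so you are in the paper's company, but your stated reason is not valid.) Second, the ``ancestral chain integrates to one'' step does not follow from the hypotheses you cite ($\generative$, $\tildeset(S_Q)=\dom(\Gamma_Q)$, absence of $\kw{factor}$ in $S_Q$): nothing there forces the read/sample dependencies to be acyclic. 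Concretely, $q_1\sim\normal(q_2,1);\ q_2\sim\normal(-q_1,1)$ with both variables at level $\lev{genquant}$ satisfies $\shreddable$ and $\generative$ (sample statements assign nothing) and $\sim$-es each variable exactly once, yet its weight is $\frac{1}{2\pi}e^{-(q_1^2+q_2^2)}$, which integrates to $\tfrac12$, not $1$. Any complete version of your induction must obtain an ordering property from somewhere; in fairness, the paper's own $Z=1$ computation in the \ref{Shred Seq} case pulls $p_1(A_1\mid\data,\params,B_1)$ outside the integral over $A_2$, which is only legitimate when $B_1\cap A_2=\emptyset$, i.e.\ it presupposes the same acyclicity silently. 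Until the $(A,B)$ invariant is formulated and this ordering issue is confronted, your proposal is a correct outline with its central lemma unproven.
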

\begin{proof}
This follows by proving a more general result using induction on the structure of $S$, \autoref{lem:shred}, \autoref{lem:sem_properties} and \autoref{lem:shredisleveled}. See \autoref{ap:proofs} for full proof.
\end{proof}

The given SlicStan program $S$ defines a joint density $p(\data, \params, Q)$.
By shredding we obtain a \lev{model} block $S_M$ that defines $p(\params, \data)$
and a \lev{genquant} block $S_Q$ that defines $p(Q \mid \params, \data)$.
Hence, inference in Stan using these blocks recovers the semantics $p(\data, \params, Q)$ of the SlicStan program.


\section{Theory: Conditional independence by typing} \label{sec:theory}

This section presents the main theoretical contribution of the paper: an information flow type system for conditional independence.
We present a type system and show that a well-typed program in that system is guaranteed to have certain conditional independencies in its density semantics.
As a reminder, determining the conditional independence relationships between variables is important, as such relationships capture a qualitative summary of the specified model and can facilitate more efficient inference. For example, in \autoref{sec:application} we present an application that uses our type system: a semantic-preserving transformation that allows for discrete parameters to be introduced in SlicStan, which was previously not possible due to efficiency constraints.  

Our aim is to optimise probabilistic programs by transforming abstract syntax trees or intermediate representations (as in the Stan compiler) that are close to abstract syntax.
Hence, we seek a way to compute conditional dependencies by a type-based source analysis, rather than by explicitly constructing a separate graphical representation of the probabilsitic model.

Given three disjoint sets of random variables (RVs) $A$, $B$ and $C$, we say that $A$ is \emph{conditionally independent} of $B$ given $C$, written $A \bigCI B \mid C$, if and only if
their densities factorise as $p(A, B \mid C) = p(A \mid C)p(B \mid C)$. (An alternative formulation states that $A \bigCI B \mid C$ if and only if $p(A, B, C) = \phi_1(A, C) \phi_2(B, C)$ for some functions $\phi_1$ and $\phi_2$.)
Deriving conditional independencies in the presence of a graphical model (such as a factor graph%
\footnote{A factor graph is a bipartite graph that shows the factorisation of a multivariable function. Variables are circular nodes, and each factor of the function is a square node. An edge exists between a variable node $x$ and a factor node $\phi$ if and only if $\phi$ is a function of $x$. See Program A and its corresponding factor graph as an example, or   \cite{koller2009probabilistic} for details.}%
) is straightforward, which is why some PPLs focus on building and performing inference on graphs (for example, Infer.NET \cite{InferNET}). However, building and manipulating a factor graph in generative PPLs (e.g. Gen \cite{Gen}, Pyro \cite{Pyro}, Edward2 \cite{Edward2}, PyMC3 \cite{PyMC3}) or imperative density-based PPLs (SlicStan, Stan) is not straightforward. Dependencies between modelled variables might be separated by various deterministic transformations, making it harder to track the information flow, and -- more importantly -- more difficult to isolate parts of the model needed for transformations such as variable elimination.
In the case of SlicStan, each program can still be thought of as specifying a factor graph \emph{implicitly}.
In this paper, we focus on the problem of how to work with conditional independence information implicitly encoded in a probabilistic program, without having access to an explicit factor graph.
For example, consider Program A: 

\vspace{-4pt}
\begin{multicols}{2}
\textbf{A.\,Simple Hidden Markov Model (HMM)}
\begin{lstlisting}
		int<2> z1 ~ bern($\theta_0$);
		real $\theta_1$ = foo($\theta_0$, z1);
		int<2> z2 ~ bern($\theta_1$);
		real $\phi_1$ = foo($1$, z1);
		real $\phi_2$ = foo($1$, z2);
		int<2> y1 ~ bern($\phi_1$);
		int<2> y2 ~ bern($\phi_2$);
\end{lstlisting}
\vspace{12pt}
\begin{tikzpicture}
	\node[latent]   (d1)   { $z_1$}; %
	\factor[left=of d1, xshift=-0.5cm]   {p1} {\scriptsize  $z_1 \sim b(\theta_0)$} {} {};
	\factor[below=of d1]  {yp1} {left:{\scriptsize $y_1 \sim b(\mathrm{foo}(1, z_1))$}} {} {};
	\node[latent, below=of yp1] (y1) {$y_1$};
	\factor[right=of d1, xshift=0.5cm] {p2} {\scriptsize $z_2 \sim b(\mathrm{foo}(\theta_0, z_1))$} {} {}; %
	\node[latent, right=of p2]   (d2)   { $z_2$}; %
	\factor[below=of d2]  {yp2} {left:{\scriptsize $y_2 \sim b(\mathrm{foo}(1, z_2))$}} {} {};
	\node[latent, below=of yp2] (y2) {$y_2$};
	
	\factoredge [] {} {p1} {d1};
	\factoredge [] {d1} {p2} {d2};
	\factoredge [] {d1} {yp1} {y1};
	\factoredge [] {d2} {yp2} {y2};
\end{tikzpicture}
\end{multicols}
\vspace{-4pt}

The factor graph above represents the factorisation of the joint density function over the parameters of the program: 
$p(z_1, z_2, y_1, y_2) = b(z_1 \mid \theta_0) b(y_1 \mid \mathrm{foo}(1, z_1))b(z_2 \mid \mathrm{foo}(\theta_0, z_1)) b(y_2 \mid \mathrm{foo}(1, z_2))$. Each of the four factors is represented by a square node in the graph, and it connects to the variables (circle nodes) that the factor depends on. 
This representation is useful for thinking about conditional independencies. 
For example, it is immediately evident from the graph that variables which connect to the same square node cannot be conditionally independent as they share a factor. More generally, if there is an (uninterrupted by observed variables) direct path between two variables, then these two variables are \textit{not} conditionally independent  \cite{FactorGraphs}. 

When looking at the factor graph, it is straightforward to see that $z_1$ and $z_2$ are \emph{not} conditionally independent, and neither are $z_1$ and $y_1$ nor $z_2$ and $y_2$, as there is a direct path between each of these pairs. When looking at the program, however, we need to reason about the information flow through the deterministic variables $\theta_1, \phi_1$ and $\phi_2$ to reach the same conclusion. 

Moreover, manipulation of the program based on conditional dependencies can also be more difficult without a factor graph. As an example, consider the problem of variable elimination (which we discuss in more details in \autoref{ssec:ve}). If we are to eliminate $z_1$ in the factor graph, using variable elimination, we would simply merge the factors directly connected to $z_1$, sum over $z_1$, and attach the new factors to all former neighbours of $z_1$ (in this case $y_1$ and $z_2$, but not $y_2$).
However, in the case of an imperative program, we need to isolate all the statements that depend on $z_1$, and group them together without changing the meaning of the program beyond the elimination:

\vspace{-4pt}
\begin{multicols}{2}
\textbf{B.\,HMM with $z_1$ marginalised out} \vspace{-2pt}
\begin{lstlisting}
	factor(sum([target(
		  z1 ~ bern($\theta_0$); real $\theta_1$ = foo($\theta_0$, z1);
		  z2 ~ bern($\theta_1$); real $\phi_1$ = foo($1$, z1);
		  y1 ~ bern($\phi_1$); ) | z1 in 1 : 2 ]));		  
	real $\phi_2$ = foo($1$, z2);
	int<2> y2 ~ bern($\phi_2$);
\end{lstlisting}
\vspace{12pt}

\textcolor{white}{.}

\hspace{-7pt}
\resizebox{!}{70pt}{%
\begin{tikzpicture}
\factor  {f} {} {} {};
\node[left=of f, align=right, xshift=1cm, yshift=-0.3cm] (hidden) {
	\scriptsize $	\sum_{z_1} \left[b(z_1 \mid \theta_0)\right.$ \\
	\scriptsize $\times b(y_1 \mid \mathrm{foo}(1, z_1))$ \\
	\scriptsize $\left.\times b(z_2 \mid \mathrm{foo}(\theta_0, z_1))\right]$
};
\node[latent, right=of f, xshift=7.5pt]   (d2)   { $z_2$}; %
\factor[below=of d2]  {yp2} {right:{\scriptsize $y_2 \sim b(\mathrm{foo}(1, z_2))$}} {} {};
\node[latent, below=of yp2, yshift=0.6cm] (y2) {$y_2$};
\node[latent, below=of f, left=of y2] (y1) {$y_1$};

\factoredge [] {y1, d2} {f} {};
\factoredge [] {d2} {yp2} {y2};
\end{tikzpicture}
}
\end{multicols}

We need a way to analyse the information flow to determine conditional independencies between variables. In the example above, we can leave $y_2$ out of the elimination of $z_1$, because $z_1$ and $y_2$ are conditionally independent given $z_2$, written $z_1 \bigCI y_2 \mid z_2$. 

To analyse the information flow, we introduce a novel type system, which we refer to via the relation $\typingspecial$. It works with a lower semi-lattice $(\{\lev{l1}, \lev{l2}, \lev{l3}\}, \leq)$ of levels, where $\lev{l1} \leq \lev{l2}$ and $\lev{l1} \leq \lev{l3}$ and $\lev{l2}$ and $\lev{l3}$ are unrelated.
(Recall that a lower semi-lattice is a partial order in which any two elements $\ell_1$, $\ell_2$ have a greatest lower bound $\ell_1 \sqcap \ell_2$ 
but do not always have an upper bound.)
A well-typed program induces a conditional independence relationship for the (random) variables (RVs) in the program: $\lev{l2}\text{-RVs} \bigCI \lev{l3}\text{-RVs} \mid \lev{l1}\text{-RVs}$. 

In the example above, this result allows us to eliminate $\lev{l2}$-variables ($z_1$), while only considering $\lev{l1}$-variables ($y_1$ and $z_2$) and knowing $\lev{l3}$-variables ($y_2$) are unaffected by the elimination.
We can use a shredding relation almost identical to that of \autoref{ssec:shred} to slice the program in a semantics-preserving way, and isolate the sub-statements needed for elimination. Here, $\theta_1$ and $\phi_1$ must be of level $\lev{l2}$ for the program to be well-typed. Thus, all statements involving $z_1, \theta_1$ or $\phi_1$ are of level $\lev{l2}$, and the shredding relation groups them together inside of the elimination loop for $z_1$. 

\autoref{fig:intuition} shows the relationship between the levels $\lev{l1}, \lev{l2}, \lev{l3}$ and the shredding relation. Information flows from $\lev{l1}$ to $\lev{l2}$ and $\lev{l3}$, but there is no flow of information between $\lev{l2}$ and $\lev{l3}$ (\autoref{fig:intuition_infoflow}). A $\typingspecial$-well-typed program $S$ is shredded by $\shred$ into $S_1$, $S_2$ and $S_3$, where $S_1$ only mentions $\lev{l1}$ variables, $S_2$ only mentions $\lev{l1}$ and $\lev{l2}$ variables, and $S_3$ only mentions $\lev{l1}$ and $\lev{l3}$ variables. This can be understood as a new factor graph formulation of the original program $S$, where each of the substatements $S_1, S_2, S_3$ defines a factor connected to any involved variables (\autoref{fig:intuition_factorgrpah}). 

\begin{figure*}
	\centering
	\begin{subfigure}[b]{0.45\textwidth}
		\centering
		\begin{tikzpicture}
		\node[latent] (l1) {$\mathbf{x}_{\lev{l1}}$};
		\factor[below=of l1] {s1} {right:{$\semp{S_1}$}} {} {};
		
		\coordinate[above=of l1, yshift=-15pt] (mid);
		
		\factor[right=of mid, xshift=26pt] {s2} {315:{$\semp{S_2}$}} {} {};
		\node[latent, above=of s2, yshift=-16pt] (lz) {$\mathbf{x}_{\lev{l2}}$};
		
		\factor[left=of mid, xshift=-26pt]  {s3} {225:{$\semp{S_3}$}} {} {};
		\node[latent, above=of s3, yshift=-16pt] (l3) {$\mathbf{x}_{\lev{l3}}$};
		
		\factoredge [] {s1, s2, s3} {l1} {};
		\factoredge [] {s2} {lz} {};
		\factoredge [] {s3} {l3} {};
		\end{tikzpicture}
		\caption{Factor graph of variables of different levels.}
		\label{fig:intuition_factorgrpah}
	\end{subfigure}
	\begin{subfigure}[b]{0.45\textwidth}
		\centering
		\begin{tikzpicture}
		\node[latent] (l1) {$\mathbf{x}_{\lev{l1}}$};
		\node[below=of l1, yshift=12pt] (blank) {$ $};
		\coordinate[above=of l1] (mid);
		\node[latent, right=of mid] (lz) {$\mathbf{x}_{\lev{l2}}$};
		\node[latent, left=of mid] (l3) {$\mathbf{x}_{\lev{l3}}$};
		
		\draw [->, seabornred] (l1) -- (l3) node[midway, above] {};
		\draw [->, seabornred] (l1) -- (lz) node[midway, above] {};
		\end{tikzpicture}
		\caption{Information flow between levels.}
		\label{fig:intuition_infoflow}
	\end{subfigure}
	\caption{Intuition for the semi-lattice case $\lev{l1} < \lev{l2}$ and $\lev{l1} < \lev{l3}$, where $\mathbf{x}_{\ell}$ is of level $\ell$. We get $\mathbf{x}_{\lev{l2}} \cicaption \mathbf{x}_{\lev{l3}} \mid \mathbf{x}_{\lev{l1}}$.}
	\label{fig:intuition}
	\vspace{-2pt}
\end{figure*}

Our approach relies on determining the $\lev{l1}, \lev{l2}, \lev{l3}$ level types by type inference, as they are \emph{not} intrinsic to the variables or program in any way, but are designed solely to determine conditional independence relationships. These types are not accessible by the probabilistic programming user. 
Our type system makes it possible to answer various questions about conditional independence in a program. Assuming a program defining a joint density $p(\mathbf{x})$, we can use the type system to: 
\begin{enumerate} 
	\item Check if $\mathbf{x}_2 \bigCI \mathbf{x}_3 \mid \mathbf{x}_1$ for some partitioning $\mathbf{x} = \mathbf{x}_1, \mathbf{x}_2, \mathbf{x}_3$.
	\item Find an optimal variable partitioning. Given a variable $x \in \mathbf{x}$, find a partitioning $\mathbf{x} = \mathbf{x}_1, \mathbf{x}_2, \mathbf{x}_3$, such that $x \in \mathbf{x}_2$, $\mathbf{x}_2 \bigCI \mathbf{x}_3 \mid \mathbf{x}_1$, and $\mathbf{x}_1$ and $\mathbf{x}_2$ are as small as possible.
	\item Ask questions about the Markov boundary of a variable. Given two variables $x$ and $x'$, find the partitioning $\mathbf{x} = x, \mathbf{x}_1, \mathbf{x}_2$, such that $x \bigCI \mathbf{x}_1 \mid \mathbf{x}_2$ and $\mathbf{x}_2$ is as small as possible. Is $x'$ in $\mathbf{x}_2$? In other words, is $x'$ in the Markov boundary of $x$?
\end{enumerate} 

In the rest of \autoref{sec:theory}, we give the $\typingspecial$ type system (\autoref{ssec:typingspecial}), state a noninterference result (\autoref{lem:noninterf2}, \autoref{lem:noninterf2-reform}) and show that semantics is preserved when shredding $\typingspecial$-well-typed programs (\autoref{lem:shred2}). We present the type system and transformation rules in a declarative style. The implementation relies on type inference, which we discuss in \autoref{ssec:mb}. 
We derive a result about the way shredding factorises the density defined by the program (\autoref{th:shred_discrete}). 
We prove a conditional independence result (\autoref{ssec:ci}, \autoref{th:ci}) and discuss the scope of our approach with examples (\autoref{ssec:scope}).


\vspace{-3pt}
\subsection{The $\typingspecial$ Type System} \label{ssec:typingspecial}
\vspace{-1pt}

We introduce a modified version of SlicStan's type system. 
Once again, types $T$ range over pairs $(\tau, \ell)$ of a base type $\tau$, and a level type $\ell$, but levels $\ell$ are one of $\lev{l1}, \lev{l2},$ or $\lev{l3}$, which form a lower semi-lattice $\left(\{\lev{l1}, \lev{l2}, \lev{l3}\}, \leq \right)$, where $\lev{l1} \leq \lev{l2}$ and $\lev{l1} \leq \lev{l3}$. This means, for example, that an \lev{l2} variable can depend on an \lev{l1} variable, but an \lev{l3} variable cannot depend on an \lev{l2} variable, as level types \lev{l2} and \lev{l3} are incomparable.

The type system is a standard information flow type system, very similar to the $\vdash$ system introduced in \autoref{ssec:typing}. We mark the only non-standard rules,
\ref{Sample2}, \ref{Factor2}, and \ref{Seq2}, which also differ from those of $\vdash$.
\ref{Sample2} and \ref{Factor2} both have the same effect as an assignment to an implicit weight variable that can be of any of the three levels. 
\ref{Seq2} is a less restrictive version of \ref{Seq} and exactly as in \cite{SlicStanPOPL}, and it makes sure the program can be sliced later. 

Note also that the non-interference between $\lev{l2}$ and $\lev{l3}$ relies on the \ref{PrimCall2} rule not being derivable when the least upper bound $\bigsqcup_{i=1}^n \ell_i$ does not exist. 

\vspace{1pt}
\begin{display}{Typing Rules for Expressions:}
	\squad
	\staterule{ESub2}
	{ \Gamma \typingspecial E : (\tau,\ell) \quad \ell \leq \ell'}
	{ \Gamma \typingspecial E : (\tau,\ell') }
	\quad\,
	\staterule{Var2}
	{ }
	{ \Gamma, x:T \typingspecial x:T}  \quad\,
	\staterule{Const2}
	{ \kw{ty}(c) = \tau }
	{ \Gamma \typingspecial c : (\tau,\lev{l1}) }\quad\,
	
	\staterule{Arr2}
	{\Gamma \typingspecial E_i : (\tau,\ell) \quad \forall i \in 1..n}
	{\Gamma \typingspecial [E_1,...,E_n] : (\tau [n],\ell)}
	
	\\[\GAP]\squad
	\staterule{ArrEl2}
	{\Gamma \typingspecial E_1 : (\tau[n], \ell) \quad \Gamma \vdash E_2 : (\kw{int}, \ell)}
	{\Gamma \typingspecial E_1[E_2] : (\tau,\ell)} \hquad
	
	\staterule[($f: \tau_1,\dots,\tau_n \to \tau$)]
	{PrimCall2}
	{ \Gamma \typingspecial E_i : (\tau_i,\ell_i) \quad \forall i \in 1..n }
	{ \Gamma \typingspecial f(E_1,\dots,E_n) : (\tau,\bigsqcup_{i=1}^n \ell_i ) } 
	
	\\[\GAP]\squad
	\staterule{ArrComp2}
	{\forall i=1,2.\Gamma\typingspecial E_i : (\kw{int},\ell)\quad \Gamma, x:(\kw{int},\ell)\vdash E : (\tau,\ell)\quad\! x \notin \dom(\Gamma)}
	{\Gamma \typingspecial [E \mid x~\kw{in}~E_1:E_2] : (\tau[n],\ell)}\hquad
	
	\staterule{Target2}
	{\Gamma\typingspecial S :\ell''\quad\! \forall \ell' > \ell. \readset_{\Gamma \vdash \ell'}(S)=\emptyset}
	{\Gamma\typingspecial \kw{target}(S) : (\kw{real},\ell)}
\end{display}

\vspace{-5pt}
\begin{display}{Typing Rules for Statements:}
	\squad
	\staterule{SSub2}
	{ \Gamma \typingspecial S : \ell' \quad \ell \leq \ell'}
	{ \Gamma \typingspecial S : \ell }\qquad
	
	\staterule{Assign2}
	{ \Gamma(L) = (\tau, \ell) \quad \Gamma \typingspecial E : (\tau,\ell)}
	{ \Gamma \typingspecial (L = E) : \ell }\qquad 
	
	\\[\GAP]\squad\!\!
	\fbox{
		\staterule{Sample2}
		{ \Gamma \typingspecial \kw{factor}(\mathrm{D}(L \mid E_1, \dots, E_n)) : \ell}
		{ \Gamma \typingspecial L \sim \mathrm{D_{dist}}(E_1, \dots E_n) : \ell }} \quad\;
	
	\fbox{
		\staterule{Factor2}
		{ \Gamma \typingspecial E : (\kw{real}, \ell)}
		{ \Gamma \typingspecial \kw{factor}(E) : \ell }}\quad \;
	
	\fbox{
		\staterule{Seq2}
		{ \Gamma \typingspecial S_1 : \ell \quad \Gamma \typingspecial S_2 : \ell \quad \shreddable(S_1, S_2)}
		{ \Gamma \typingspecial (S_1; S_2) : \ell }} \qquad
	
	\\[\GAP]\squad	
	\staterule{If2}
	{ \Gamma \typingspecial E : (\kw{bool},\ell) \quad \Gamma \typingspecial S_1 : \ell \quad \Gamma \typingspecial S_2 : \ell}
	{ \Gamma \typingspecial \kw{if}(E)\;S_1 \;\kw{else}\; S_2: \ell }\qquad
	
	\staterule{Skip2}
	{ }
	{ \Gamma \typingspecial \kw{skip} : \ell } \qquad
	
	\\[\GAP]\squad
	\staterule{For2}
	{ \Gamma \typingspecial E_1 : (\kw{int},\ell) \quad \Gamma \typingspecial E_2 : (\kw{int},\ell) \quad \Gamma, x:(\kw{int}, \ell) \typingspecial S : \ell \quad x \notin \dom(\Gamma) \quad x \notin \assset(S)}
	{ \Gamma \typingspecial \kw{for}(x\;\kw{in}\;E_1:E_2)\;S : \ell } \qquad
\end{display}

We state and prove a noninterference result for $\typingspecial$, which follows similarly to the result for $\vdash$.
\begin{lemma}[Noninterference of $\typingspecial$] \label{lem:noninterf2} Suppose $s_1 \models \Gamma$, $s_2 \models \Gamma$, and $s_1 \approx_{\ell} s_2$ for some $\ell$. Then for a SlicStan statement $S$ and expression $E$:
	\begin{enumerate}
		\item If $~\Gamma \typingspecial E: (\tau,\ell)$ and $(s_1, E) \Downarrow V_1$ and $(s_2, E) \Downarrow V_2$ then $V_1 = V_2$. 
		\item If $~\Gamma \typingspecial S: \ell$ and $(s_1, S) \Downarrow s_1', w_1$ and $(s_2, S) \Downarrow s_2', w_2$ then $s_1' \approx_{\ell} s_2'$.
	\end{enumerate}
\end{lemma}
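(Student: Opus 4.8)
The plan is to prove both parts by a single rule induction on the typing derivations, mirroring the proof of \autoref{lem:noninterf}, since $\typingspecial$ differs from $\vdash$ only in its level structure (a lower semi-lattice rather than a lattice) and in the boxed rules \ref{Sample2}, \ref{Factor2}, and \ref{Seq2}. First I would establish part~(1) for expressions, and then use it throughout part~(2) for statements.

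For part~(1), I would first prove by induction on $\Gamma \typingspecial E : (\tau,\ell)$ the auxiliary \emph{confinement} property that $E$ reads only variables of level at most $\ell$: if $E$ reads $x$ with $\Gamma(x) = (\tau',\ell')$, then $\ell' \leq \ell$. Given this, part~(1) is immediate, since $s_1 \approx_\ell s_2$ forces $s_1(x) = s_2(x)$ for every $x$ of level $\leq \ell$, evaluation of $E$ depends only on the variables it reads, and $\Downarrow$ is deterministic, so $V_1 = V_2$. The cases \ref{Const2}, \ref{Var2}, \ref{Arr2}, \ref{ArrEl2}, \ref{ArrComp2}, and \ref{Target2} are routine, and \ref{ESub2} is handled by observing that $s_1 \approx_{\ell'} s_2$ implies $s_1 \approx_{\ell} s_2$ whenever $\ell \leq \ell'$, so the induction hypothesis at the lower level applies.

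The one genuinely new point, and the step I expect to require the most care, is \ref{PrimCall2} in the presence of the semi-lattice. Here the derivation assigns $f(E_1,\dots,E_n)$ the level $\bigsqcup_{i=1}^n \ell_i$, and the rule fires only when that least upper bound exists. By the induction hypothesis each $E_i$ reads only variables of level $\leq \ell_i \leq \bigsqcup_{j} \ell_j$, so the whole call reads only variables below its own level and confinement is preserved. The crucial observation is that if two arguments had incomparable levels $\lev{l2}$ and $\lev{l3}$, no upper bound would exist, the rule would not be derivable, and correspondingly there is no well-typed expression whose value depends on both an $\lev{l2}$ and an $\lev{l3}$ variable; this is exactly what underwrites the non-interference between $\lev{l2}$ and $\lev{l3}$.

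For part~(2) I would induct on $\Gamma \typingspecial S : \ell$ and show $s_1' \approx_\ell s_2'$. The new rules are easy here because the conclusion concerns only the state, not the weight: by \ref{Eval Factor} and \ref{Eval Sample} both $\kw{factor}(E)$ and $L \sim d(E_1,\dots,E_n)$ leave the state unchanged (so $s_i' = s_i$), whence $s_1' \approx_\ell s_2'$ holds by assumption. \ref{Seq2} follows by composing the two induction hypotheses through the intermediate states, and \ref{If2}, \ref{For2}, \ref{Skip2} go as for $\vdash$, with the guards handled by part~(1). The substantive case is \ref{Assign2}: an assignment $L = E$ with $\Gamma(L) = (\tau,\ell)$ writes to a level-$\ell$ variable, and by part~(1) both the right-hand side and the index expressions evaluate equally under $s_1$ and $s_2$, so the updated variable agrees while all other level-$\leq\ell$ variables are untouched. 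Finally, \ref{SSub2} lowers the typing level from $\ell'$ to $\ell \leq \ell'$; I would handle it by the companion confinement fact for statements (that a statement derivable at level $\ell'$ writes only to variables of level $\geq \ell'$), so when $\ell < \ell'$ the level-$\leq\ell$ portion of the store is left unchanged and $s_1' \approx_\ell s_2'$ reduces to the assumed $s_1 \approx_\ell s_2$.
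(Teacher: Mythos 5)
Your proposal is correct and follows essentially the same route as the paper: part (1) by rule induction via the confinement property that a well-typed expression reads only variables of level at most its own, and part (2) by rule induction using part (1), with \ref{Factor2} and \ref{Sample2} immediate because they leave the state unchanged, and \ref{SSub2} discharged by the companion write-confinement fact.

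The one step you underestimate is \ref{Target2}, which you list among the routine cases; the paper's detailed proof of the corresponding lemma for $\vdash$ (\autoref{lem:noninterf}) singles out precisely this case as the only exception. The difficulty is that $\kw{target}(S)$ returns a \emph{weight}, while part (2) of the lemma --- the only induction hypothesis available for statements --- speaks only about states, so it cannot yield $w_1 = w_2$. Your blanket principle that ``evaluation of an expression depends only on the variables it reads'' does cover this case, but for target expressions that principle is exactly the nontrivial content: it must be proved as a separate lemma, by structural induction on $S$, that if two states agree on every variable in $\readset(S)$ then the weights produced by running $S$ coincide. One then combines this lemma with the premise $\forall \ell' > \ell.\ \readset_{\Gamma \vdash \ell'}(S) = \emptyset$ of \ref{Target2} and the covering property $\readset(S) = \bigcup_{\ell' \leq \ell} \readset_{\Gamma \vdash \ell'}(S)$ to conclude that every variable read by $S$ has level at most $\ell$, so the two states agree on all of them. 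With that auxiliary lemma stated and proved, your argument goes through unchanged; without it, the induction as you have organised it (states only in part (2)) does not close the \ref{Target2} case.
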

%
%
\begin{proof} (1)~follows by rule induction on the derivation $\Gamma \typingspecial E:(\tau, \ell)$, and using that if $\Gamma \typingspecial E:(\tau, \ell)$, $x \in \readset(E)$ and $\Gamma(x) = (\tau', \ell')$, then $\ell' \leq \ell$. (2)~follows by rule induction on the derivation $\Gamma \typingspecial S:\ell$ and using (1).
\end{proof}

Once again we derive a more convenient form of the noninterference result. Because the level types $\lev{l2}$ and $\lev{l3}$ are not comparable in the order $\leq$, changes in the store at $\lev{l2}$ do not affect the store at $\lev{l3}$ and vice versa.
\begin{lemma}[Noninterference of $\typingspecial$-well-typed programs] \label{lem:noninterf2-reform} ~\\
	Let $~\Gamma_{\sigma}, \Gamma_{\mathbf{x}}, S$ be a SlicStan program, and
	$\Gamma \typingspecial S : \lev{l1}$.
	There exist unique functions $f, g$ and $h$, such that for all 
	$\sigma \models \Gamma_{\sigma}$, $\mathbf{x} \models \Gamma_{\mathbf{x}}$ and $\sigma'$ such that $\sems{S}(\sigma)(\mathbf{x}) = \sigma'$: 
	$$\sigma_{\lev{l1}}' = f(\sigma_{\lev{l1}}, \mathbf{x}_{\lev{l1}}), \hquad
	\sigma_{\lev{l2}}' = g(\sigma_{\lev{l1}},\sigma_{\lev{l2}}, \mathbf{x}_{\lev{l1}},  \mathbf{x}_{\lev{l2}}), \hquad
	\sigma_{\lev{l3}}' = h(\sigma_{\lev{l1}},\sigma_{\lev{l3}}, \mathbf{x}_{\lev{l1}}, \mathbf{x}_{\lev{l3}})
	$$
\end{lemma}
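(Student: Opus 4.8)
The plan is to prove all three functions $f,g,h$ exist simultaneously, by rule induction on the typing derivation $\Gamma \typingspecial S : \ell_0$. Since the statement to be proved never mentions the level $\ell_0$ at which $S$ happens to be typed, the subsumption rule \ref{SSub2} becomes a trivial case (the induction hypothesis for the premise already supplies the functions), and the remaining cases can use the specific levels appearing in the premises of the last rule. The $\lev{l1}$-component is immediate: applying \autoref{lem:noninterf2}(2) at level $\lev{l1}$ to two inputs that agree on their $\lev{l1}$-parts shows $\sigma_{\lev{l1}}'$ is determined by $(\sigma_{\lev{l1}},\mathbf{x}_{\lev{l1}})$, which gives $f$. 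By the symmetry $\lev{l2}\leftrightarrow\lev{l3}$ it then suffices to construct $g$. It is worth noting up front that $g$ cannot be obtained by a single appeal to \autoref{lem:noninterf2} at level $\lev{l2}$: that would require $\Gamma \typingspecial S : \lev{l2}$, which is strictly stronger than the hypothesis $\Gamma \typingspecial S : \lev{l1}$ and generally fails, so the inductive argument is genuinely needed. Uniqueness of each function is automatic, since conformance is checked per variable, so every conforming tuple $(\sigma_{\lev{l1}},\sigma_{\lev{l2}},\mathbf{x}_{\lev{l1}},\mathbf{x}_{\lev{l2}})$ extends to a full state and the semantics pins the function value down on all of its domain.

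The base cases are direct. For \ref{Assign2} $L=E$ at level $\ell$, only the variable $L$ (of level $\ell$) is written, and the reads-only property — if $\Gamma \typingspecial E : (\tau,\ell)$ and $x\in\readset(E)$ with $\Gamma(x)=(\_,\ell')$ then $\ell'\le\ell$ (established inside the proof of \autoref{lem:noninterf2}) — forces the new value to depend only on inputs at levels $\le\ell$. The key point is that because $\lev{l2}$ and $\lev{l3}$ are incomparable (so $\lev{l3}\not\le\lev{l2}$, and \ref{PrimCall2} can never fuse an $\lev{l2}$-reading with an $\lev{l3}$-reading subexpression, as their least upper bound does not exist), an assignment to an $\lev{l2}$-variable reads no $\lev{l3}$-variable; hence $\sigma_{\lev{l2}}'$ depends only on $(\sigma_{\lev{l1}},\sigma_{\lev{l2}},\mathbf{x}_{\lev{l1}},\mathbf{x}_{\lev{l2}})$ while $\sigma_{\lev{l3}}'$ is left unchanged. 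The statements \ref{Factor2}, \ref{Sample2}, and \ref{Skip2} leave the state untouched, so there all three functions are mere projections. For \ref{Seq2} $S_1;S_2$ I compose the two induction hypotheses through \autoref{lem:sem_properties}: the $\lev{l2}$-output of $S_2$ depends on the $\lev{l1}$- and $\lev{l2}$-parts of the intermediate state $\sems{S_1}(\sigma)(\mathbf{x})$, and the hypothesis for $S_1$ shows those parts depend only on $(\sigma_{\lev{l1}},\sigma_{\lev{l2}},\mathbf{x}_{\lev{l1}},\mathbf{x}_{\lev{l2}})$, so the composite does too.

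The delicate cases — and the step I expect to be the main obstacle — are the control-flow rules \ref{If2} and \ref{For2}, where the guard may read at a level incomparable with the output component in question. For \ref{If2} typed at $\ell=\lev{l3}$, the branch taken depends on a guard that reads $\lev{l3}$-variables, so a naive argument would let $\sigma_{\lev{l2}}'$ depend on $\lev{l3}$-inputs through the branch selection. To rule this out I will use an auxiliary \emph{confinement} property, provable by a routine rule induction: if $\Gamma \typingspecial S:\ell$ then $\sems{S}$ modifies only variables of level $\ge\ell$ (the \ref{SSub2} case is sound because any variable of level $\ge\ell'$ is also of level $\ge\ell$ whenever $\ell\le\ell'$). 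Since the \ref{If2} premises type both branches at $\ell=\lev{l3}$, neither branch writes any $\lev{l1}$- or $\lev{l2}$-variable, so $\sigma_{\lev{l1}}'$ and $\sigma_{\lev{l2}}'$ equal their input projections irrespective of which branch executes; the symmetric argument covers $\ell=\lev{l2}$, and for $\ell=\lev{l1}$ the guard reads only $\lev{l1}$ so the two branch hypotheses combine directly. The same confinement-plus-reads-only pattern, together with the bound expressions being read at the loop's level, dispatches \ref{For2}, completing the induction.
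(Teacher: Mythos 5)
Your proof is correct, but it is not the paper's proof: the paper disposes of this lemma in a single line, ``Follows from noninterference (\autoref{lem:noninterf2})'', treating it as a black-box corollary. Your central observation --- that a corollary-style derivation cannot produce $g$ and $h$, because instantiating \autoref{lem:noninterf2}(2) at $\ell=\lev{l2}$ would require the judgement $\Gamma \typingspecial S : \lev{l2}$, which does not follow from $\Gamma \typingspecial S : \lev{l1}$ since \ref{SSub2} only lowers levels --- is accurate, and it identifies exactly what the paper's one-liner glosses over: what is really needed is the strengthened statement ``inputs agreeing on any downward-closed set of levels yield outputs agreeing on that set, for a program typed at \emph{any} level'', and that statement requires its own induction. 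Your proof supplies precisely this: decoupling the observation level from the typing level $\ell_0$ (so \ref{SSub2} becomes trivial), plus the auxiliary confinement lemma (a statement typed at $\ell$ writes only variables of level $\geq\ell$), which is what stops branch selection on an $\lev{l2}$- or $\lev{l3}$-guard from leaking into the incomparable component in the \ref{If2} and \ref{For2} cases. What the paper's terser route would buy, read charitably, is reuse of machinery that appears \emph{later} in its development: one can shred $S$ into single-level slices (\autoref{lem:shredisleveled2}), invoke semantic preservation (\autoref{lem:shred2}), and apply \autoref{lem:noninterf2} or \autoref{lem:single-lev-prop2} to each slice at its own level --- legitimate precisely because a single-level statement of level $\ell$ does type at $\ell$ --- and then compose; but that derivation is unavailable at the point where the lemma is stated, so your self-contained induction is the honest repair. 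The only loose end in your write-up is \ref{For2}: rule induction on the typing derivation alone does not account for the dynamically unrolled iterations (rule \ref{Eval ForTrue} re-invokes the loop), so you need a nested induction on the evaluation derivation there; this is a standard patch that your confinement-plus-reads-only argument accommodates without change.
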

%
%
\begin{proof}
	Follows from noninterference (\autoref{lem:noninterf2}).
\end{proof}

Next, we extend the shredding relation from $\autoref{ssec:shred}$, and the concept of single-level statements, to SlicStan programs that are well-typed with respect to $\typingspecial$. This is done by simply treating $\lev{l1}$ as $\lev{data}$, $\lev{l2}$ as $\lev{model}$, and $\lev{l3}$ as $\lev{genquant}$ for the purpose of shredding. We include the full definition of shredding with respect to $\typingspecial$ for completeness below. We use the same notation $\shred$, and we generally treat the \textit{standard shredding relation} from \ref{ssec:shred} and the \textit{conditional independence shredding relation} presented here, as the same relation, as there is no difference between the two, other than the naming of levels.     

\begin{display}{Shredding Rules for Statements:}
	\squad	
	\staterule{Shred2 Assign}
	{\Gamma(L) = \lev{l1} \rightarrow S_1 = L = E, S_2 = S_3 = \kw{skip} \\
		\Gamma(L) = \lev{l2} \rightarrow S_2 = L = E, S_1 = S_3 = \kw{skip} \\
		\Gamma(L) = \lev{l3} \rightarrow S_3 = L = E, S_1 = S_2 = \kw{skip}	}
	{ L = E \shred (S_1, S_2, S_3)}\quad
	
	\staterule{Shred2 Seq}
	{ S_1 \shred S_{1}^{(1)}, S_{2}^{(1)}, S_{3}^{(1)} \quad 
		S_2 \shred S_{1}^{(2)}, S_{2}^{(2)}, S_{3}^{(2)}}
	{ S_1; S_2 \shred (S_{1}^{(1)};S_{1}^{(2)}), (S_{2}^{(1)};S_{2}^{(2)}), (S_{3}^{(1)};S_{3}^{(2)})  } 
	
	\\[\GAP]\squad
	\staterule{Shred2 Factor}
	{\Gamma(E) = \lev{l1} \rightarrow S_1 = \kw{factor}(E), S_2 = S_3 = \kw{skip} \\
		\Gamma(E) = \lev{l2} \rightarrow S_2 = \kw{factor}(E), S_1 = S_3 = \kw{skip} \\
		\Gamma(E) = \lev{l3} \rightarrow S_3 = \kw{factor}(E), S_1 = S_2 = \kw{skip}	}
	{ \kw{factor}(E) \shred (S_1, S_2, S_3)}\quad
	
	\staterule{Shred2 Skip}
	{}
	{\kw{skip} \shred (\kw{skip}, \kw{skip}, \kw{skip})}\qquad
	
	\\[\GAP]\squad
	\staterule{Shred2 Sample}
	{\Gamma(L, E_1, \dots, E_n) = \lev{l1} \rightarrow S_1 = L \sim d(E_1, \dots, E_n), S_2 = S_3 = \kw{skip} \\
		\Gamma(L, E_1, \dots, E_n) = \lev{l2} \rightarrow S_2 = L \sim d(E_1, \dots, E_n), S_1 = S_3 = \kw{skip} \\
		\Gamma(L, E_1, \dots, E_n) = \lev{l3} \rightarrow S_3 = L \sim d(E_1, \dots, E_n), S_1 = S_2 = \kw{skip}	}
	{ L \sim d(E_1, \dots, E_n) \shred (S_1, S_2, S_3)}\quad\hquad
	
	\\[\GAP]\squad	
	\staterule{Shred2 If}
	{   S_1 \shred S_{1}^{(1)}, S_{2}^{(1)}, S_{3}^{(1)} \quad
		S_2 \shred S_{1}^{(2)}, S_{2}^{(2)}, S_{3}^{(2)} \quad}
	{ \kw{if}(g)\; S_1\; \kw{else}\; S_2 \shred  
		(\kw{if}(g)\; S_{1}^{(1)}\; \kw{else}\; S_{1}^{(2)}),  
		(\kw{if}(g)\; S_{2}^{(1)}\; \kw{else}\; S_{2}^{(2)}), 
		(\kw{if}(g)\; S_{3}^{(1)}\; \kw{else}\; S_{3}^{(2)})} \qquad	
	
	\\[\GAP]\squad
	\staterule{Shred2 For}
	{   S \shred S_{1}, S_{2}, S_{3}  }
	{ \kw{for}(x\;\kw{in}\;g_1:g_2)\;S \shred  
		(\kw{for}(x\;\kw{in}\;g_1:g_2)\;S_1),  
		(\kw{for}(x\;\kw{in}\;g_1:g_2)\;S_2), 
		(\kw{for}(x\;\kw{in}\;g_1:g_2)\;S_3)} \qquad
\end{display}

As before, shredding produces single-level statements, and shredding preserves semantics with respect to $\typingspecial$-well-typed programs. 

\begin{lemma}[Shredding produces single-level statements, $\typingspecial$] \label{lem:shredisleveled2} ~\\ b
	If $~S \shred[\Gamma] S_1, S_2, S_3$ then $ \singlelevelS{\lev{l1}}{S_1}$, $\singlelevelS{\lev{l2}}{S_2}$, and $\singlelevelS{\lev{l3}}{S_3}$.
\end{lemma}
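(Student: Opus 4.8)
The plan is to prove this by structural induction on $S$ (equivalently, rule induction on the derivation of $S \shred[\Gamma] S_1, S_2, S_3$), exactly mirroring the proof of \autoref{lem:shredisleveled} for the $\vdash$ system. Since the conditional-independence shredding relation and the single-level judgment differ from their $\vdash$-counterparts only in the renaming of levels ($\lev{l1}$ for $\lev{data}$, $\lev{l2}$ for $\lev{model}$, $\lev{l3}$ for $\lev{genquant}$), essentially the same argument applies, and the only genuine new concern is that the lower semi-lattice structure does not disturb the cases that rely on principal types.

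For the base cases the key observation is that $\kw{skip}$ is single-level at every level (by \ref{Skip Single}), so in each of \ref{Shred2 Assign}, \ref{Shred2 Factor} and \ref{Shred2 Sample} it suffices to check the one nontrivial slice; the other two are $\kw{skip}$. For \ref{Shred2 Assign}, the active slice is $L = E$, placed at the level $\ell$ with $\Gamma(L) = (\_, \ell)$, and \ref{Assign Single} gives $\singlelevelS{\ell}{L = E}$ directly. The \ref{Shred2 Skip} case is immediate. For \ref{Shred2 Factor} and \ref{Shred2 Sample}, the active slice is placed at the level $\Gamma(E)$ (resp.\ $\Gamma(L, E_1, \ldots, E_n)$), and I would discharge the single-level judgment using \ref{Factor Single} (resp.\ \ref{Sample Single}).

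For the inductive cases I would apply the induction hypothesis to the premises of \ref{Shred2 Seq}, \ref{Shred2 If} and \ref{Shred2 For} and then close under the corresponding single-level formation rules. For \ref{Shred2 Seq}, the IH gives that each slice of $S_1$ and $S_2$ is single-level at its level, and \ref{Seq Single} lifts this to the sequenced slices $S_i^{(1)}; S_i^{(2)}$; the \ref{Shred2 If} case is analogous, using \ref{If Single}. The \ref{Shred2 For} case requires threading the extended environment $\Gamma, x:(\kw{int}, \ell)$ through the induction, so that the IH supplies single-levelness of each $S_i$ under $\Gamma, x:(\kw{int}, \ell)$ and \ref{For Single} then yields $\singlelevelS{\ell}{\kw{for}(x~\kw{in}~g_1:g_2)~S_i}$.

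I expect the main obstacle to be the factor and sample cases, which rely on the minimality side-condition of \ref{Factor Single} and \ref{Sample Single} --- namely that $\ell$ is the \emph{least} level at which the relevant expression(s) can be typed. This is exactly the content of the claim that $\Gamma(E)$ (resp.\ $\Gamma(L, E_1, \ldots, E_n)$) denotes the \emph{principal} type (\autoref{def:gammaE}, \autoref{def:gammaEs}); since the shredding rules place each statement at precisely this principal level, the minimality condition holds by construction. The one subtlety absent from the $\vdash$ proof is that, in the lower semi-lattice, the least upper bound used to form $\Gamma(L, E_1, \ldots, E_n)$ may fail to exist when a statement mixes $\lev{l2}$ and $\lev{l3}$ information. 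However, the very existence of the shredding derivation $S \shred S_1, S_2, S_3$ presupposes that all principal types invoked by the shredding rules are defined, so this degenerate situation never arises within the induction and the argument goes through unchanged.
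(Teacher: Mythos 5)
Your proof is correct and takes essentially the same approach as the paper: the paper's entire proof of this lemma is ``by rule induction on the derivation of $S \shred S_1, S_2, S_3$'', which is precisely the induction you carry out case by case. Your extra observations --- that the minimality side-conditions of \ref{Factor Single} and \ref{Sample Single} are met because the shredding rules place statements at their principal level, and that the existence of the shredding derivation already guarantees the relevant least upper bounds exist in the semi-lattice --- are accurate fillings-in of details the paper leaves implicit.
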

\begin{lemma}[Semantic preservation of $\shred$, $\typingspecial$] \label{lem:shred2} ~\\
	If $~\Gamma \typingspecial S:\lev{l1} $ and $ S \shred[\Gamma] S_1, S_2, S_3 $ then $\sem{S} = \sem{S_1; S_2; S_3}$.
\end{lemma}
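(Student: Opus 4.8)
The plan is to prove the statement by structural induction on $S$, closely following the proof of \autoref{lem:shred}. Since the shredding rules for $\typingspecial$ are identical to the standard ones up to the renaming of levels, the skeleton of the argument transfers essentially verbatim, and only the justification of one reordering step in the sequencing case genuinely uses the semi-lattice structure. Throughout, I would appeal to \autoref{lem:shredisleveled2}, which guarantees that $S_1$, $S_2$, $S_3$ are single-level of levels $\lev{l1}$, $\lev{l2}$, $\lev{l3}$ respectively; to the semi-lattice analogue of the single-level property \autoref{lem:single-lev-prop}, which pins down exactly which part of the store each slice reads and writes and on which variables its weight contribution depends; and to \autoref{lem:sem_properties}, which lets me compose the state and density semantics of sequenced statements and which makes $\sem{\cdot}$ a congruence with respect to sequencing.

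The base cases ($L=E$, $\kw{factor}(E)$, $L\sim d(E_1,\dots,E_n)$, $\kw{skip}$) are immediate: each shredding rule places the single statement into exactly one slice and fills the other two with $\kw{skip}$, so $S_1;S_2;S_3$ is just $S$ padded with skips, which leaves both the state and the weight unchanged (the weight being multiplied by $1$) and hence, by \autoref{lem:sem_properties}, has the same semantics as $S$. The $\kw{if}$ and $\kw{for}$ cases follow from the induction hypothesis applied to the branches/body together with the single-level property: pushing the shredding inside (rules \ref{Shred2 If}, \ref{Shred2 For}) and re-collecting the three guarded slices changes neither the store nor the accumulated weight, because each slice touches only variables of its own level. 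These are routine and mirror the corresponding cases of \autoref{lem:shred}.

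The crux is the sequencing case $S=S';S''$. The derivation of $\Gamma\typingspecial S:\lev{l1}$ must end in \ref{Seq2} (possibly after \ref{SSub2}), so I may assume $\shreddable(S',S'')$. Writing $S'\shred S_1',S_2',S_3'$ and $S''\shred S_1'',S_2'',S_3''$, rule \ref{Shred2 Seq} gives $S\shred(S_1';S_1''),(S_2';S_2''),(S_3';S_3'')$. By the induction hypothesis and \autoref{lem:sem_properties}, $\sem{S}=\sem{S_1';S_2';S_3';S_1'';S_2'';S_3''}$, and the goal is to show this equals $\sem{S_1';S_1'';S_2';S_2'';S_3';S_3''}$. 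I would obtain the latter by adjacent-statement commutations of two kinds: (a) moving the $\lev{l1}$-slice $S_1''$ leftwards past $S_3'$ and then $S_2'$; and (b) moving the $\lev{l2}$-slice $S_2''$ leftwards past $S_3'$. For each swap I must check that neither statement reads a variable the other writes and that their write sets are disjoint; then, since states are updated pointwise and weights are combined by (commutative) multiplication, the swap preserves $\sem{\cdot}$. The commutations in (a) are exactly as in the standard proof: $S_1''$ reads and writes only $\lev{l1}$-variables while $S_2'$, $S_3'$ write only $\lev{l2}$-, resp.\ $\lev{l3}$-variables, so the only possible conflict is $S_2'$ or $S_3'$ reading an $\lev{l1}$-variable that $S_1''$ overwrites, and this is ruled out by $\shreddable(S',S'')$ (instantiating the definition with $\ell_2=\lev{l1}$ and $\ell_1\in\{\lev{l2},\lev{l3}\}$, for which $\ell_2<\ell_1$), after noting that reads and assignments are preserved by shredding so that the relevant read/write sets of $S'$, $S''$ are those of their slices.

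The genuinely new step, and the main obstacle, is commutation (b): swapping the $\lev{l3}$-slice $S_3'$ with the $\lev{l2}$-slice $S_2''$. Here $\shreddable$ is \emph{not} available, because $\lev{l2}$ and $\lev{l3}$ are incomparable, so neither $\lev{l2}<\lev{l3}$ nor $\lev{l3}<\lev{l2}$ holds and the defining implication of $\shreddable$ is vacuous for this pair. Instead I would appeal to the semi-lattice structure directly: because $\lev{l2}\sqcup\lev{l3}$ does not exist, the typing rules (in particular the non-derivability of \ref{PrimCall2} when the least upper bound is undefined) forbid any expression from mixing $\lev{l2}$ and $\lev{l3}$ information, so by the single-level property $S_3'$ reads only $\{\lev{l1},\lev{l3}\}$-variables and writes only $\lev{l3}$, while $S_2''$ reads only $\{\lev{l1},\lev{l2}\}$-variables and writes only $\lev{l2}$; their read and write sets are therefore disjoint and neither writes an $\lev{l1}$-variable, so they commute. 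This is precisely the content of the reformulated noninterference result \autoref{lem:noninterf2-reform}, which states that the $\lev{l2}$- and $\lev{l3}$-parts of the final store evolve independently. Chaining the swaps in (a) and (b) rearranges the six slices into $S_1';S_1'';S_2';S_2'';S_3';S_3''$, which by \autoref{lem:sem_properties} equals $\sem{(S_1';S_1'');(S_2';S_2'');(S_3';S_3'')}=\sem{S_1;S_2;S_3}$, completing the induction.
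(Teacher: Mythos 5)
Your overall strategy---structural induction with adjacent-slice commutations, where $\shreddable$ justifies moving the $\lev{l1}$-slice and the incomparability of $\lev{l2}$ and $\lev{l3}$ (via \autoref{lem:single-lev-prop2} / \autoref{lem:noninterf2-reform}) justifies the $S_2''$--$S_3'$ swap---is the right adaptation, and it is essentially what the paper intends: the paper's own proof of \autoref{lem:shred2} is a one-line deferral to adapting the proof of \cite{SlicStanPOPL}. Your sequencing case is correct, and you rightly identify both that level-based reasoning alone cannot justify moving $S_1''$ past $S_2'$ (an $\lev{l2}$-slice legitimately reads $\lev{l1}$-variables, so $\shreddable(S',S'')$ is genuinely needed) and that $\shreddable$ is vacuous for the $\lev{l2}$/$\lev{l3}$ swap, which must instead come from the semi-lattice structure.

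The gap is in the cases you declare routine. The justification offered for $\kw{if}$ and $\kw{for}$---``each slice touches only variables of its own level''---is false in the relevant sense: a slice of level $\ell$ \emph{writes} only at $\ell$ but \emph{reads} below $\ell$. For loops this matters, because the shredded program runs \emph{all} iterations of the $\lev{l1}$-slice before \emph{any} iteration of the $\lev{l2}$-slice, so you must commute iteration $j$'s $\lev{l1}$-slice past iteration $i$'s $\lev{l2}$-slice for $i<j$. That commutation needs the body's $\lev{l2}$-level reads to be disjoint from the body's own $\lev{l1}$-level writes---a $\shreddable$-style condition of the body against \emph{itself}---and neither \ref{For2} nor the induction hypothesis supplies it: the premise of \ref{Seq2} relates the two statements of a sequence, never a statement to itself. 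Concretely, take $\Gamma=\{y\mapsto(\kw{real},\lev{l1}),\,w\mapsto(\kw{real},\lev{l2})\}$ and $S=\kw{for}(x\;\kw{in}\;1:2)\,\{\,y=y+1;\;\kw{factor}(\mathcal{N}(w\mid y,1))\,\}$. This typechecks at $\lev{l1}$: the factor statement gets level $\lev{l2}$ by \ref{Factor2} and is lowered to $\lev{l1}$ by \ref{SSub2}, and the $\shreddable$ premise of \ref{Seq2} inside the body holds vacuously because $\assset(\kw{factor}(E))=\emptyset$. Shredding yields (up to $\kw{skip}$s) $S_1=\kw{for}(x\;\kw{in}\;1:2)\;y=y+1$ and $S_2=\kw{for}(x\;\kw{in}\;1:2)\;\kw{factor}(\mathcal{N}(w\mid y,1))$. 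Starting from a store with $y=0$, the statement $S$ accumulates weight $\mathcal{N}(w\mid 1,1)\cdot\mathcal{N}(w\mid 2,1)$, whereas $S_1;S_2;S_3$ accumulates $\mathcal{N}(w\mid 2,1)^2$; these are not even proportional as functions of $w$, so $\sem{S}\neq\sem{S_1;S_2;S_3}$. The $\kw{if}$ case has the same defect through guard duplication: if a branch assigns an $\lev{l1}$-variable that the guard reads, the three shredded copies of the guard need not evaluate to the same value.

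So the step ``the $\kw{for}$/$\kw{if}$ cases mirror \autoref{lem:shred}'' would fail. Note that the same program, reading $\lev{l1},\lev{l2}$ as $\lev{data},\lev{model}$, equally defeats \autoref{lem:shred}, so this problem is inherited from the standard system rather than created by your semi-lattice adaptation, and the paper's citation-only proof offers no way around it. A complete proof must either impose and propagate an additional side condition---for instance $\shreddable(S,S)$ on loop bodies, and disjointness of guard reads from branch writes---or restrict attention to programs in which $\kw{factor}$/$\sim$ statements never read variables assigned within the same loop body or guarded block. As written, your argument is sound for sequences of assignments, $\kw{factor}$, $\sim$ and $\kw{skip}$, but it has a real hole exactly at loops and conditionals.
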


In addition, we derive a result about the effect of single-level statements on the store and weight of $\typingspecial$-well-typed programs.
\begin{lemma}[Property of $\typingspecial$ single-level statements] \label{lem:single-lev-prop2} ~\\
	Let $~\Gamma_{\sigma}, \Gamma_{\mathbf{x}}, S$ be a SlicStan program, and $\Gamma \typingspecial S : \lev{l1}$, and $S$ be single-level statement of level $\ell$, $\Gamma \typingspecial \ell(S)$. Then there exist unique functions $f$ and $\phi$, such that for any $\sigma, \mathbf{x} \models \Gamma_{\sigma}, \Gamma_{\mathbf{x}}$: 
	\begin{enumerate}
	\item If $\ell = \lev{l1}$, then $\sem{S}(\sigma)(x) \hquad = \hquad \left(f(\sigma_{\lev{l1}}, \mathbf{x}_{\lev{l1}}), \sigma_{\lev{l2}}, \sigma_{\lev{l3}} \right), \qquad \qquad \; \phi(\sigma_{\lev{l1}})(\mathbf{x}_{\lev{l1}})$
	\item If $\ell = \lev{l2}$, then $\sem{S}(\sigma)(x) \hquad = \hquad \left(\sigma_{\lev{l1}}, f(\sigma_{\lev{l1}}, \sigma_{\lev{l2}}, \mathbf{x}_{\lev{l1}}, \mathbf{x}_{\lev{l2}}), \sigma_{\lev{l3}} \right), \hquad \phi(\sigma_{\lev{l1}}, \sigma_{\lev{l2}})(\mathbf{x}_{\lev{l1}}, \mathbf{x}_{\lev{l2}})$
	\item If $\ell = \lev{l3}$, then $\sem{S}(\sigma)(x) \hquad = \hquad \left(\sigma_{\lev{l1}}, \sigma_{\lev{l2}}, f(\sigma_{\lev{l1}}, \sigma_{\lev{l3}}, \mathbf{x}_{\lev{l1}}, \mathbf{x}_{\lev{l3}}) \right), \hquad \phi(\sigma_{\lev{l1}}, \sigma_{\lev{l3}})(\mathbf{x}_{\lev{l1}}, \mathbf{x}_{\lev{l3}})$ 
	\end{enumerate}
\end{lemma}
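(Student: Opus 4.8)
The plan is to read the three claims as \emph{well-definedness} statements. Existence of $f$ and $\phi$ with the displayed argument lists is exactly the assertion that the corresponding output component and the weight are invariant under changes to the parts of $\sigma$ and $\mathbf{x}$ that do not appear as arguments; given this, one reads $f$ and $\phi$ off $\sem{S}(\sigma)(\mathbf{x})$, and uniqueness is immediate, since any two functions satisfying the displayed equation for all $\sigma,\mathbf{x}$ must agree on every realisable argument. I would treat $\ell=\lev{l2}$ as representative: the case $\ell=\lev{l3}$ is symmetric (exchange $\lev{l2}$ and $\lev{l3}$), and $\ell=\lev{l1}$ is the degenerate case in which the relevant down-set of levels is the singleton $\{\lev{l1}\}$.

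First I would dispose of the untouched state components. By a routine induction on the derivation of $\Gamma\typingspecial\ell(S)$ one shows $\assset(S)\subseteq\{x\mid\Gamma(x)=(\_,\ell)\}$: assignments are to $\ell$-variables, while $\kw{skip}$, $\kw{factor}$ and $\sim$ assign to nothing, and the composite cases combine. Hence the operational semantics leaves every variable of level ${\neq}\,\ell$ at its input value, which is precisely why the two output components other than $\sigma_{\ell}'$ are returned verbatim in each case. For the remaining $\ell$-component I would cite the reformulated noninterference result (\autoref{lem:noninterf2-reform}) directly: its functions already have exactly the argument lists demanded here --- $(\sigma_{\lev{l1}},\sigma_{\lev{l2}},\mathbf{x}_{\lev{l1}},\mathbf{x}_{\lev{l2}})$ for the $\lev{l2}$-component, $(\sigma_{\lev{l1}},\sigma_{\lev{l3}},\mathbf{x}_{\lev{l1}},\mathbf{x}_{\lev{l3}})$ for the $\lev{l3}$-component, and $(\sigma_{\lev{l1}},\mathbf{x}_{\lev{l1}})$ for the $\lev{l1}$-component. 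This settles the state half of all three cases.

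The substance of the lemma is the weight, and the claim I would isolate is a weight-sensitive strengthening of noninterference for single-level statements: if $S$ is single-level of level $\ell$ and $s_1\approx_{\ell}s_2$, with $(s_i,S)\Downarrow(s_i',w_i)$, then $s_1'\approx_{\ell}s_2'$ \emph{and} $w_1=w_2$. (Recall $\approx_{\ell}$ is agreement on all variables of level $\leq\ell$, which for $\ell=\lev{l2}$ is agreement on the $\lev{l1}$- and $\lev{l2}$-parts.) The weight equality is genuinely special to single-level statements --- in a statement merely typed at $\lev{l1}$ the weight accumulates $\kw{factor}$ and $\sim$ contributions from all levels at once --- which is why it is absent from \autoref{lem:noninterf2}. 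I would prove it by the same induction on $\Gamma\typingspecial\ell(S)$ that underlies \autoref{lem:noninterf2}, additionally tracking the weight. For assignment and $\kw{skip}$ the weight is $1$; for $\kw{factor}$ and $\sim$ it is the value of an expression, respectively the built-in density applied to the values of $L$ and the arguments, which the single-level rules force to be typable at level $\ell$, so these values coincide under $s_1\approx_{\ell}s_2$ by expression noninterference (\autoref{lem:noninterf2}(1)); for a sequence the weight is a product, where the induction hypothesis on the first statement supplies equal first factors together with $\approx_{\ell}$-equal intermediate states, to which the induction hypothesis on the second statement then applies.

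The one delicate point --- and the step I expect to be the real obstacle --- is the $\kw{if}$ and $\kw{for}$ cases, where $w_1=w_2$ requires both runs to follow the same control-flow path. This needs the guard of an $\kw{if}$ and the bounds of a $\kw{for}$ to be typable at a level $\leq\ell$, which the single-level rules do not state. I would recover it by inversion on the assumed well-typedness derivation $\Gamma\typingspecial S:\lev{l1}$: the typing rule for $\kw{if}$ types guard and both branches at one common level $\ell_0$, and whenever a branch actually contributes to the weight it is typable only at levels $\ell_0\leq\ell$, forcing the guard to level $\ell_0\leq\ell$ (in the complementary case both branches are effect-free, so the weight is $1$ on either path and equality is trivial); the argument for $\kw{for}$ is identical, with the loop counter entering the body at the same level $\ell_0\leq\ell$ and receiving equal values in both runs. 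Since $\approx_{\ell}$ implies $\approx_{\ell_0}$, \autoref{lem:noninterf2}(1) then guarantees identical guards and iteration ranges, after which the induction hypothesis closes the case. With weight-invariance established, $\phi$ is well defined with the stated arguments, and combining this with the state analysis of the first two paragraphs yields the three displayed equations.
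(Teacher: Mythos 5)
Your proof takes a genuinely different route from the paper's. The paper proves the lemma by a reduction to ordinary noninterference: it materialises the weight as reserved, write-only, level-indexed state variables---replacing $\kw{factor}(E)$ by $w_\ell = w_\ell * E$ with $\ell$ the principal level of $E$, and likewise for $\sim$-statements---and then applies \autoref{lem:noninterf2} once to the rewritten program, reading $f$ and $\phi$ off the resulting state functions (this is spelled out in the proof of \autoref{lem:single-lev-prop} and invoked by analogy). You instead prove a weight-tracking strengthening of noninterference for single-level statements by direct induction, using expression noninterference in the $\kw{factor}$ and $\sim$ cases and typing inversion for control flow. Your route is more explicit and self-contained; the paper's is shorter, but it buries exactly the control-flow difficulty you isolate inside an unstated assumption that the rewritten program is still well-typed.

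That difficulty is where your proof has a genuine gap: your inversion argument fails at $\ell = \lev{l1}$, the case you set aside as degenerate. The claim that a weight-contributing branch ``is typable only at levels $\ell_0 \leq \ell$'' is false at the bottom level, because a $\kw{factor}$ statement, unlike an assignment, has no intrinsic level: \ref{ESub2} may promote its expression before \ref{Factor2} is applied. Concretely, let $S = \kw{if}(E)\;\kw{factor}(e)\;\kw{else}\;\kw{skip}$, where $e$ reads only $\lev{l1}$ variables and $E$ reads an $\lev{l3}$ parameter. Then $S$ is single-level of level $\lev{l1}$ (the side condition of \ref{Factor Single} is vacuous at the bottom level, and \ref{If Single} does not constrain the guard), and $\Gamma \typingspecial S : \lev{l1}$ holds: type $e$ at $\lev{l3}$ via \ref{ESub2}, apply \ref{Factor2} and \ref{If2} at level $\lev{l3}$, then lower with \ref{SSub2}. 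Both hypotheses of the lemma are met, yet the weight is $e$'s value on one branch and $1$ on the other, so it depends on the $\lev{l3}$ guard and is not of the form $\phi(\sigma_{\lev{l1}})(\mathbf{x}_{\lev{l1}})$; the same happens with $\kw{for}$ bounds. No refinement of the inversion can close this case, because it is a counterexample to the lemma as literally stated. (Only $\lev{l1}$-expressions can be promoted in the semi-lattice, which is exactly why your argument does correctly close the $\lev{l2}$ and $\lev{l3}$ cases.)

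You are in good company: the paper's own proof hits the same wall silently, since on this example its rewriting yields an assignment to $w_{\lev{l1}}$ under an $\lev{l3}$ guard, which is ill-typed, so \autoref{lem:noninterf2} cannot be applied to the rewritten program. The lemma is sound for the statements shredding is actually intended to produce: under the non-simplified guard treatment sketched in \autoref{ssec:shred}, an $\kw{if}$ or $\kw{for}$ whose guard sits at level $\ell_g$ is placed wholesale into the slice of level $\ell_g$ or above, so every guard occurring in an $\ell$-level single-level statement is typable at some level $\leq \ell$. That is precisely the missing hypothesis your induction needs; if you add it to the statement (or build it into \autoref{def:singlelev}), your argument goes through at all three levels and is arguably a more honest proof than the paper's one-line reduction.
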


We give proofs for Lemma~\ref{lem:shredisleveled2}, \ref{lem:shred2}, and \ref{lem:single-lev-prop2} in \autoref{ap:proofs}.
These results allows us to derive the second key theorem of this paper, \autoref{th:shred_discrete}, which, similarly to \autoref{th:shred_gen}, gives us a result on the way shredding factorises the density defined by the program. 

Here, and throughout the paper, we use subscripts to refer to specific subsets of $\Gamma$. For example, $\Gamma_{\lev{l1}}$ stands for the subset of the parameters $\Gamma_{\mathbf{x}}$, such that $x:(\tau, \ell) \in \Gamma_{\lev{l1}}$ if and only if $x:(\tau, \ell) \in \Gamma_{\mathbf{x}}$ and $\ell = \lev{l1}$.

\begin{theorem}[Shredding induces a factorisation of the density (2)] \label{th:shred_discrete} ~ \\	
	Suppose $\Gamma \typingspecial S : \lev{l1}$ with 
	$\Gamma = \Gamma_{\sigma}, \Gamma_{\lev{l1}}, \Gamma_{\lev{l2}}, \Gamma_{\lev{l3}}$, 
    $~S \shred[\Gamma] S_1, S_2, S_3$.    
	Then for $\sigma, \params_1, \params_2, \params_3 \models \Gamma_{\sigma}, \Gamma_{1}$, $\Gamma_{2}$, $\Gamma_{3}$,
	and $~\sigma', \sigma''$ such that $\sem{S_1}(\sigma)(\params_1, \params_2, \params_3) = \sigma'$, and $~\sem{S_2}(\sigma')(\params_1, \params_2, \params_3) = \sigma''$ we have:
	\begin{enumerate}
		\item $\semp{S_1}(\sigma)(\params_1, \params_2, \params_3) = \phi_1(\params_1)$
		\item $\semp{S_2}(\sigma')(\params_1, \params_2, \params_3) = \phi_2(\params_1, \params_2)$
		\item $\semp{S_3}(\sigma'')(\params_1, \params_2, \params_3) = \phi_3(\params_1, \params_3)$
	\end{enumerate}
\end{theorem}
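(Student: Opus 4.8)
The plan is to reduce the theorem to the single-level statement property established in \autoref{lem:single-lev-prop2}. First I would observe that since $\Gamma \typingspecial S : \lev{l1}$ and $S \shred[\Gamma] S_1, S_2, S_3$, \autoref{lem:shredisleveled2} tells us that each $S_i$ is a single-level statement of its respective level: $\singlelevelS{\lev{l1}}{S_1}$, $\singlelevelS{\lev{l2}}{S_2}$, and $\singlelevelS{\lev{l3}}{S_3}$. This is exactly the hypothesis needed to apply \autoref{lem:single-lev-prop2} to each of the three slices individually. Each conclusion of the present theorem is then an immediate instance of the density-semantics half ($\phi$) of the corresponding case of that lemma.

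Concretely, I would handle the three claims in turn. For (1), applying case ($\lev{l1}$) of \autoref{lem:single-lev-prop2} to $S_1$ gives $\semp{S_1}(\sigma)(\params_1, \params_2, \params_3) = \phi(\sigma_{\lev{l1}})(\params_{\lev{l1}})$; since $\sigma_{\lev{l1}}$ is fixed, this is a function $\phi_1$ of $\params_1$ alone, as required. For (2), I apply case ($\lev{l2}$) to $S_2$ evaluated at the intermediate store $\sigma'$, yielding a density depending only on $\sigma'_{\lev{l1}}, \sigma'_{\lev{l2}}$ and $\params_1, \params_2$; with the $\sigma'$-components held fixed this is $\phi_2(\params_1, \params_2)$. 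For (3), case ($\lev{l3}$) applied to $S_3$ at $\sigma''$ gives a density depending only on the $\lev{l1}$ and $\lev{l3}$ components, i.e.\ $\phi_3(\params_1, \params_3)$. The key point is that the semi-lattice structure forces the $\lev{l3}$ slice's weight to be independent of $\params_2$ and vice versa, which is precisely what \autoref{lem:single-lev-prop2} encodes.

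The main subtlety — and the step I would treat most carefully — is the bookkeeping around the \emph{intermediate stores} $\sigma'$ and $\sigma''$. The theorem statement threads $\sigma'$ and $\sigma''$ through as the state outputs of the earlier slices, so I must verify that when I invoke \autoref{lem:single-lev-prop2} on $S_2$, the $\sigma'_{\lev{l1}}, \sigma'_{\lev{l2}}$ components it depends on are themselves determined only by $\params_1, \params_2$ (and the fixed initial $\sigma$), so that no hidden dependence on $\params_3$ leaks in. Here I would appeal to the state-semantics half of \autoref{lem:single-lev-prop2} (or equivalently to \autoref{lem:noninterf2-reform}): running $S_1$ leaves the $\lev{l2}$ and $\lev{l3}$ parts of $\sigma$ untouched and updates $\sigma_{\lev{l1}}$ using only $\lev{l1}$ data, and running $S_2$ updates only the $\lev{l2}$ part using only $\lev{l1}, \lev{l2}$ data. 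Tracking these invariants confirms that the arguments to each $\phi_i$ really do collapse to the advertised subsets of the parameters, completing the proof.
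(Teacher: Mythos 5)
Your proposal is correct and follows essentially the same route as the paper, whose proof is exactly the combination of \autoref{lem:shredisleveled2} and \autoref{lem:single-lev-prop2} applied to each slice. Your additional bookkeeping on the intermediate stores $\sigma'$ and $\sigma''$ (via the state-semantics half of \autoref{lem:single-lev-prop2}) is precisely the detail the paper's one-line proof leaves implicit, and it is handled correctly.
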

%
%
\begin{proof}
By applying \autoref{lem:single-lev-prop2} to each of $S_1, S_2, S_3$, which are single-level statements (\autoref{lem:shredisleveled2}).
\end{proof}

\subsection{Conditional Independence Result for $\typingspecial$-Well-Typed Programs} \label{ssec:ci}

\autoref{th:ci} states the key theoretical result of this paper: the typing in programs well-typed with respect to $\typingspecial$ corresponds to a conditional independence relationship. In our proofs, we use the factorisation characterisation of conditional independence stated by
\autoref{def:ci}. This is a well-known result in the literature (e.g. \cite[Theorem 2.2.1.]{Murphy}). 

\begin{definition}[Characterisation of conditional independence as factorisation] \label{def:ci} ~\\
	For variables $x, y, z$ and a density $p(x, y, z)$, $x$ is conditionally independent of $y$ given $z$ with respect to $p$, written $x \bigCI_p y \mid z$, if and only if
	$~ \exists \phi_1, \phi_2 \text{ such that } p(x, y, z) = \phi_1(x, z) \phi_2(y, z)$.
	
	An equivalent formulation is $p(x, y \mid z) = p(x \mid z)p(y \mid z)$.
	
	We extend the notion of conditional independence to apply to a general function $\phi(x, y, z)$, using the notation  $x \perp_{\phi} y \mid z$ to mean
	$\exists \phi_1, \phi_2 \text{ such that } \phi(x, y, z) = \phi_1(x, z) \phi_2(y, z)$.
\end{definition}

\begin{theorem}[$\typingspecial$-well-typed programs induce a conditional independence relationship] \label{th:ci} ~ \\
	For a SlicStan program $\Gamma, S$ such that 
	$\Gamma \typingspecial S : \lev{l1}$, 
	$\Gamma = \Gamma_\sigma,\Gamma_{\lev{l1}},\Gamma_{\lev{l2}},\Gamma_{\lev{l3}}$,
	and for $\sigma, \params_1, \params_2, \params_3 \models \Gamma_{\sigma}, \Gamma_{\lev{l1}},\Gamma_{\lev{l2}},\Gamma_{\lev{l3}}$,
	we have 
	$\params_2 \perp_{\phi} \params_3 \mid \params_1$.
	
	When $\semp{S}(\sigma)(\params_1, \params_2, \params_3) \propto p(\params_1, \params_2, \params_3)$, we have $\params_2 \bigCI_p \params_3 \mid \params_1$.
\end{theorem}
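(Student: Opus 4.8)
The plan is to obtain this statement as a fairly direct corollary of the factorisation result \autoref{th:shred_discrete}, combined with the composition of density semantics (\autoref{lem:sem_properties}), semantic preservation of shredding (\autoref{lem:shred2}), and the factorisation characterisation of conditional independence (\autoref{def:ci}). The heavy lifting has already been done in establishing the three single-level factorisations, so almost no new machinery is needed.

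First I would shred $S$ into its three single-level slices, $S \shred[\Gamma] S_1, S_2, S_3$. By semantic preservation of shredding for $\typingspecial$-well-typed programs (\autoref{lem:shred2}), $\sem{S} = \sem{S_1; S_2; S_3}$, so in particular the density semantics agree. Unfolding the sequence by applying the composition of density semantics (\autoref{lem:sem_properties}) twice, I would write
$$\semp{S}(\sigma)(\params_1, \params_2, \params_3) = \semp{S_1}(\sigma)(\params_1, \params_2, \params_3) \times \semp{S_2}(\sigma')(\params_1, \params_2, \params_3) \times \semp{S_3}(\sigma'')(\params_1, \params_2, \params_3),$$
where $\sigma' = \sems{S_1}(\sigma)(\params_1, \params_2, \params_3)$ and $\sigma'' = \sems{S_2}(\sigma')(\params_1, \params_2, \params_3)$. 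These intermediate states are chosen to match exactly the state threading appearing in the hypotheses of \autoref{th:shred_discrete}.

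Next I would invoke \autoref{th:shred_discrete} on each of the three factors, which tells me $\semp{S_1}(\sigma)(\cdots) = \phi_1(\params_1)$, $\semp{S_2}(\sigma')(\cdots) = \phi_2(\params_1, \params_2)$, and $\semp{S_3}(\sigma'')(\cdots) = \phi_3(\params_1, \params_3)$. Writing $\phi(\params_1, \params_2, \params_3)$ for $\semp{S}(\sigma)(\params_1, \params_2, \params_3)$ and substituting, the density factors as $\phi(\params_1, \params_2, \params_3) = \phi_1(\params_1)\,\phi_2(\params_1, \params_2)\,\phi_3(\params_1, \params_3)$. Grouping the first two factors together and keeping the third separate, I set $\Phi_1(\params_2, \params_1) = \phi_1(\params_1)\phi_2(\params_1, \params_2)$ and $\Phi_2(\params_3, \params_1) = \phi_3(\params_1, \params_3)$, so that $\phi(\params_1, \params_2, \params_3) = \Phi_1(\params_2, \params_1)\,\Phi_2(\params_3, \params_1)$. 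This is precisely the factorised form required by \autoref{def:ci}, establishing $\params_2 \perp_{\phi} \params_3 \mid \params_1$. For the second claim, when $\semp{S}(\sigma)(\cdots) \propto p(\params_1, \params_2, \params_3)$, the normalised density is $p = \phi / Z$ with $Z$ the (finite) normalising integral; since $Z$ is a constant it can be absorbed into either factor, so $p$ inherits the same product form and hence $\params_2 \bigCI_p \params_3 \mid \params_1$.

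There is no deep obstacle here: the difficulty is entirely concentrated in \autoref{th:shred_discrete}, which has already been proved. The only points demanding care are bookkeeping ones --- ensuring that the intermediate states $\sigma'$ and $\sigma''$ produced when unfolding the sequence via \autoref{lem:sem_properties} coincide exactly with those in the hypotheses of \autoref{th:shred_discrete} (so that the three factorisation statements can legitimately be chained), and checking that the passage from the unnormalised $\phi$ to the normalised $p$ through the constant $Z$ does not disturb the factorisation.
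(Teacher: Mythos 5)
Your proposal is correct and follows essentially the same route as the paper's own proof: shred via $\shred$, use \autoref{lem:shred2} and \autoref{lem:sem_properties} to write $\semp{S}$ as the product of the three slice densities at the threaded intermediate states, apply \autoref{th:shred_discrete} to get the forms $\phi_1(\params_1)$, $\phi_2(\params_1,\params_2)$, $\phi_3(\params_1,\params_3)$, group the first two factors, and absorb the normalising constant into one factor for the $\bigCI_p$ claim. There are no gaps; the bookkeeping points you flag (matching $\sigma'$, $\sigma''$ to the hypotheses of \autoref{th:shred_discrete}) are handled identically in the paper.
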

%
%
\begin{proof}
	Let $\params = \params_1, \params_2, \params_3$, $S \shred S_1, S_2, S_3$, and let $\sigma'$ and $\sigma''$ be such that $\sigma' = \sems{S_1}(\sigma)(\params)$, and $\sigma'' = \sems{S_2}(\sigma')(\params)$.
	Then, by semantic preservation of shredding (\autoref{lem:shred2}), we have 
	\begin{align*}
	\semp{S}(\sigma)(\params) &= \semp{S_1; S_2; S_3}(\sigma)(\params) && \text{by \autoref{lem:shred2}}\\
	&= \semp{S_1}(\sigma)(\params) \times  \semp{S_2}(\sigma')(\params) \times \semp{S_3}(\sigma'')(\params) && \text{by \autoref{lem:sem_properties}} \\
	&= \phi_1(\params_1) \times \phi_2(\params_1, \params_2) \times \phi_3(\params_1, \params_3) && \text{by \autoref{th:shred_discrete}} \\
	&= \phi'(\params_1, \params_2) \times \phi_3(\params_1, \params_3)
	\end{align*}
	for some $\phi_1, \phi_2,$ and $\phi_3$, $\phi'(\params_1, \params_2) = \phi_1(\params_1) \times \phi_2(\params_1, \params_2)$.
	Thus $\params_2 \perp_{\phi} \params_3 \mid \params_1$ by definition of $\perp_{\phi}$.
	
	Suppose $\phi(\params_1, \params_2, \params_3) \propto p(\params_1, \params_2, \params_3)$. Then $p(\params_1, \params_2, \params_3) = \phi(\params_1, \params_2, \params_3) \times Z = \phi'(\params_1, \params_2) \times \phi_3(\params_1, \params_3) \times Z = \phi'(\params_1, \params_2) \times \phi''(\params_1, \params_3)$, where $Z$ is a constant and $\phi''(\params_1, \params_3) = \phi_3(\params_1, \params_3) \times Z$. Therefore, $\params_2 \bigCI_p \params_3 \mid \params_1$.
\end{proof}


\subsection{Scope of the Conditional Independence Result} \label{ssec:scope}

We have shown that $\typingspecial$-well-typed programs exhibit a conditional independence relationship
in their density semantics. However, it is \emph{not} the case that every conditional independence relationship can be derived from the type system. In particular, we can only derive results of the form $\params_2 \bigCI \params_3 \mid \params_1$, where $\params_1, \params_2, \params_3$ is a partitioning of $\params \models \Gamma_{\mathbf{x}}$ for a SlicStan program $\Gamma_{\sigma}, \Gamma_{\mathbf{x}}, S$. That is, the relationship includes \emph{all} parameters in the program.

We discuss the scope of our approach using an example and show a situation where trying to derive a conditional independence result that \emph{does not} hold results in a failure to type check.

\subsubsection{Example of $\typingspecial$-well-typed program $\rightarrow$ conditional independence} ~ \\
Consider the Cross Model in \autoref{fig:cross-model}, its SlicStan program (a), its directed graphical model (b) and the conditional independence (CI) relationships that hold for that model (c). 
\begin{figure}
\textbf{C.\,Cross Model}\vspace{10pt}\\
\begin{subfigure}{0.48\linewidth}
\begin{lstlisting}
 real x1 ~ normal(0,1)
 real x2 ~ normal(0,1)
 real x3 ~ normal(x1+x2,1)
 real x4 ~ normal(x3,1)
 real x5 ~ normal(x3,1)
\end{lstlisting}
\caption{A simple `cross' model.}
\end{subfigure}
\begin{subfigure}{0.48\linewidth}
\scalebox{0.9}{
\begin{tikzpicture}
\node[latent] (x3) {$x_3$};
\coordinate[above=of x3, yshift=-10pt] (midup);
\node[latent, left=of midup] (x1) {$x_1$};
\node[latent, right=of midup] (x2) {$x_2$};
\coordinate[below=of x3, yshift=10pt] (middown);
\node[latent, left=of middown] (x4) {$x_4$};
\node[latent, right=of middown] (x5) {$x_5$};
\draw [->] (x1) -- (x3) node[midway, above] {};
\draw [->] (x2) -- (x3) node[midway, above] {};
\draw [->] (x3) -- (x4) node[midway, above] {};
\draw [->] (x3) -- (x5) node[midway, above] {};
\end{tikzpicture}}
\caption{Graphical model.}
\end{subfigure}

\begin{subfigure}{\linewidth}
\small
\begin{align*}
&x_1 \bigCI x_2 
&x_1 \bigCI x_4 \mid \{x_3\} \cup A, \forall A \subseteq \{x_2, x_5\} \\
&x_1 \bigCI x_5 \mid \{x_3\} \cup A, \forall A \subseteq \{x_2, x_4\} 
&x_2 \bigCI x_4 \mid \{x_3\} \cup A, \forall A \subseteq \{x_1, x_5\} \\
&x_2 \bigCI x_5 \mid \{x_3\} \cup A, \forall A \subseteq \{x_1, x_4\} 
&x_4 \bigCI x_5 \mid \{x_3\} \cup A, \forall A \subseteq \{x_1, x_2\} 
\end{align*}
\caption{CI relationships.}
\end{subfigure}
\caption{\label{fig:cross-model} The cross model, as written in SlicStan (a) with its DAG (b) and CI relationships (c). }
\end{figure}

Out of the many relationships above, we can derive \emph{all} relationships that involve \emph{all} the variables. That is, we can use our type system to derive all conditional independence relationships that hold and are of the form $A \bigCI B \mid C$, where $A, B, C$ is some partitioning of $\{x_1, \dots, x_5\}$. 
However, note the following properties of conditional independence:
$$A \bigCI B \mid C \iff B \bigCI A \mid C \quad \text{ and } \quad A \bigCI B_1, B_2 \mid C \iff A \bigCI B_1 \mid C \text{ and } A \bigCI B_2 \mid C$$

Some of the relationships above can be combined and written in other ways, e.g. $x_1 \bigCI x_4 \mid x_2, x_3$ and $x_1 \bigCI x_5 \mid x_2, x_3$ can be written as a single relationship $x_1 \bigCI x_4, x_5 \mid x_2, x_3$, thus expressing them as a single relationship that includes all variables in the program. 

Exploring different mappings between the parameters $x_1, \dots, x_5$ and the type levels $\lev{l1}, \lev{l2}, \lev{l3}$, for which the above program typechecks, we can derive \emph{all} CI relationships that hold for this model, except for one: $x_1 \bigCI x_2$, which we cannot derive with our approach.  


\subsubsection{Conditional independence relationship does not hold $\rightarrow$ type error}  ~ \\
Suppose that we try to derive the result $x_1 \bigCI x_2 \mid x_3, x_4, x_5$. This \emph{does not hold} for Program C. By \autoref{th:ci}, we have that a program being $\typingspecial$-well-typed implies that $\lev{l2} \bigCI \lev{l3} \mid \lev{l1}$. So, we can derive $x_1 \bigCI x_2 \mid x_3, x_4, x_5$ using \autoref{th:ci} if we show that $\Gamma \typingspecial S: \lev{l1}$, for $\Gamma = \{ x_1 : \lev{l2}, x_2 : \lev{l3}, x_3 : \lev{l1}, x_4 : \lev{l1}, x_5 : \lev{l1} \}$ and $S$ being Program C. 

To typecheck $\Gamma\typingspecial S:\lev{l1}$, we need to typecheck $x_3 \sim \kw{normal}(x_1+x_2,1)$ at some level $\ell$. Thus, by \ref{Sample2} and \ref{PrimCall2}, $x_1$, $x_2$ and $x_3$ need to typecheck at $\ell$. The types of $x_1$, $x_2$ and $x_3$ are $\lev{l2}, \lev{l3}$ and $\lev{l1}$, respectively. So, using \ref{ESub2}, it must be the case that
$ \lev{l2} \leq \ell$, \emph{and} $\lev{l3} \leq \ell$, \emph{and} $\lev{l1} \leq \ell$.
However, no such level exists in our lower semi-lattice, as $\lev{l2}$ and $\lev{l3}$ have no upper bound. Therefore, typechecking fails and  we cannot derive $x_1 \bigCI x_2 \mid x_3, x_4, x_5$.







\section{Application: Discrete parameters support through a semantics-preserving transformation} \label{sec:application}

This section presents the main practical contribution of our work: a semantics-preserving procedure for transforming a probabilistic program to enable combined inference of discrete and continuous model parameters, which we have implemented for SlicStan.
The procedure corresponds to variable elimination (VE) for discrete parameters implemented in the probabilistic program itself, which can be combined with gradient-based methods, such as HMC, to perform inference on all parameters. 

PPLs that have gradient-based methods in the core of their inference strategy do not, in general, support directly working with discrete parameters. Stan disallows discrete model parameters altogether, while Pyro \cite{Pyro} and Edward2 \cite{Edward2} throw a runtime error whenever discrete parameters are used within a gradient-based method. 
%
However, working with discrete parameters in these languages is still possible, albeit in an implicit way. In many cases, discrete parameters can be marginalised out manually, and then drawn conditionally on the continuous parameters. Stan's user guide shows many examples of this approach \cite[Chapter 7]{StanManual}.
Pyro provides an on-request marginalisation functionality, which automates this implicit treatment for plated factor graphs \cite{PyroDiscrete}.

The key idea of the workaround is to marginalise out the discrete parameters by hand, so that the resulting program corresponds to a density function that does not depend on any discrete parameters. That is,
the user writes a program that computes $\sum_{\params_d}p(\params_d, \params_c) = p(\params_c)$,
where the density semantics of the original program was $p(\params_d, \params_c)$ for
 discrete parameters $\params_d$ and continuous parameters $\params_c$.
This allows for continuous parameters of the program to be sampled with HMC, or other gradient-based inference algorithms, whereas that would have not been possible for the program with both discrete and continuous latent variables.

Because a SlicStan program computes a density directly, it is easy to modify it to marginalise a variable. For a SlicStan program $\Gamma, S$, with parameters $\mathbf{x} \models \Gamma_{\mathbf{x}}$, and a discrete parameter $z$ of type $\kw{int}\langle K \rangle$, the program $\kw{elim}(\kw{int}\langle K \rangle z)~S \deq \kw{factor}(\mathrm{sum}([\kw{target}(S) \mid z~\kw{in}~ 1 : K])$\footnote{Here, we assume the function \textit{sum} is available in the language.} marginalises $z$:
\begin{align*}
\semp{\kw{factor}(\mathrm{sum}([\kw{target}(S) \mid z~\kw{in}~ 1 : K]))}(\sigma)(\mathbf{x}) = 
\sum_{z = 1}^{K} \semp{S}(\sigma)(\mathbf{x})  
\propto \sum_{z = 1}^{K} p(\mathbf{x}) = p(\mathbf{x} \setminus \{z\})
\end{align*}


In other words, we can easily marginalise out all discrete variables in a probabilistic program, by encapsulating the entire program in nested loops (nested array comprehension expressions in our examples). However, this approach becomes infeasible for more than a few variables. Variable elimination \cite{zhang1994simple, koller2009probabilistic} exploits the structure of a model to do as little work as possible. 
Consider the HMM snippet (Program D) with three discrete (binary) hidden variables $z_1$, $z_2$ and $z_3$, and observed outcomes $y_1$, $y_2$ and $y_3$.
Naively marginalising out the hidden variables results in nested loops around the original program (Program E). In the general case of $N$ hidden variables, the resulting program is of complexity $O(2^N)$. 

\begin{figure*}
\begin{multicols}{2} \label{hmm}
\centering
\textbf{D.\,A Hidden Markov Model (HMM)}\vspace{-12pt}
\begin{lstlisting}
		...
		int<2> z1 ~ bernoulli(theta[1]);
		int<2> z2 ~ bernoulli(theta[z1]);
		int<2> z3 ~ bernoulli(theta[z2]);
		data real y1 ~ normal(phi[z1], 1);
		data real y2 ~ normal(phi[z2], 1);
		data real y3 ~ normal(phi[z3], 1);		
\end{lstlisting}

\textbf{E.\,Inefficient marginalisation}\vspace{-12pt}
\begin{lstlisting}
		...
		factor(sum [target(
		  factor(sum [target(
		    factor(sum [target(
			      z1 ~ bernoulli(theta[1]);
			      z2 ~ bernoulli(theta[z1]);
			      z3 ~ bernoulli(theta[z2]);
			      y1 ~ normal(phi[z1], 1);
			      y2 ~ normal(phi[z2], 1);
			      y3 ~ normal(phi[z3], 1);)
		    | z1 in 1:2]);
		  | z2 in 1:2]);
		| z3 in 1:2]);
\end{lstlisting}
\vspace{3cm}

\textbf{F.\,Efficient marginalisation}\vspace{-12pt}
\begin{lstlisting}
		...
		real[2] f1 = // new factor on z2
		  [sum([target(
			        z1 ~ bernoulli(theta[1]);
			        z2 ~ bernoulli(theta[z1]);
			        y1 ~ normal(phi[z1], 1); ) 
			     | z1 in 1:2])
		  | z2 in 1:2] 	
		
		real[2] f2 = // new factor on z3
		  [sum([target(
			        factor(f1[z2]);
			        y2 ~ normal(phi[z2], 1);
			        z3 ~ bernoulli(theta[z2]); )
			     | z2 in 1:2])
		  | z3 in 1:2] 	
			     
		factor(sum [target(
		    factor(f2[z3]);
		    y3 ~ normal(phi[z3], 1); )
		| z3 in 1:2]);
\end{lstlisting}
\end{multicols}
\vspace{-20pt}
\end{figure*}

However, this is wasteful: expressions like $z_3 \sim \mathrm{bernoulli}(\theta[z_2])$ do not depend on $z_1$, and so do not need to be inside of the $z_1$-elimination loop. Variable elimination (VE) avoids this problem by pre-computing some of the work.
Program F implements VE for this model: when eliminating a variable, say $z_1$, we pre-compute 
statements that involve $z_1$ for each possible value of $z_1$ and store the resulting density contributions in a new factor, $f_1$. This new factor depends on the variables involved in those statements --- the neighbours of $z_1$ --- in this case that is solely $z_2$. We then repeat the procedure for the other variables, re-using the already computed factors where possible.

In the special case of an HMM, and given a suitable elimination order, variable elimination recovers the celebrated forward algorithm \cite{RabinerTutorial}, which has time complexity $O(N)$.
Our goal is to automatically translate the source code of Program D to Program F, exploiting statically detectable independence properties in the model.

\subsection{Goal}


Our ultimate goal is to transform a program $S$ with continuous parameters $\params_c$, discrete parameters $\params_d$, data $\data$ and density semantics $\semp{S}(\sigma)(\params_d, \params_c, \data) \propto p(\params_d, \params_c \mid \data)$, into two subprograms: $S_{\textsc{hmc}}$ and $S_{\textsc{gen}}$, such that: 
\begin{itemize}
	\item The density defined by $~S_{\textsc{hmc}}$ is the marginal $p(\params_c \mid \data)$, with the discrete parameters $\params_d$ marginalised out. This first statement, $S_{\textsc{hmc}}$, represents the marginalisation part of the program (see \autoref{ssec:ve}) and allows for \textbf{Hamiltonian Monte Carlo} (HMC) sampling of $\params_c$, as it does not involve any discrete parameters. 
	
	\item The density defined by $S_{\textsc{gen}}$ is the conditional $p(\params_d \mid \params_c, \data)$. This second statement, $S_{\textsc{gen}}$, represents the generative part of the program (\autoref{ssec:bs}) and it encodes a way to draw $\params_d$ generatively, without using HMC or another heavy-weight inference algorithm.
\end{itemize}

Similarly to the extended SlicStan slicing based on information-flow type inference, here we also want to transform and slice into sub-programs, each focusing on a subset of the parameters, and preserving the overall meaning:
$$\semp{S} \propto p(\params_d, \params_c \mid \data) = p(\params_c \mid \data) \times p(\params_d \mid \params_c, \data) \propto \semp{S_{\textsc{hmc}}}\times\semp{S_{\textsc{gen}}} = \semp{S_{\textsc{hmc}}; S_{\textsc{gen}}}\footnotemark$$
\footnotetext{This expression is simplified for readability.
}

Our approach performs a semantics-preserving transformation, guided by information-flow and type inference, which creates an efficient program-specific inference algorithm  automatically, combining HMC with variable elimination.

\subsection{Key Insight}

The key practical insight of this work is to use the adaptation of SlicStan's level types of \autoref{sec:theory} and its information flow type system to rearrange the program in a semantics-preserving way, so that discrete parameters can be forward-sampled, instead of sampled using a heavy-weight inference algorithm. We achieve this by a program
transformation for each of the discrete variables. Assuming  that we are applying the transformation with respect to a variable $z$, we use: 
\begin{itemize}
	\item The \emph{top-level information flow type system} $\Gamma \vdash S : \lev{data}$ from \autoref{ssec:typing}, which involves the level types $\lev{data} \leq \lev{model} \leq \lev{genquant}$. This partitions the modelled variables $\mathbf{x}$ into data $\data$, model parameters $\params$ and generated quantities $Q$. When we use type inference for $\vdash$ in conjunction with shredding $S \shred S_D, S_M, S_Q$ (\autoref{ssec:shred}), we slice the statement $S$ into a data part $S_D$ (involving only variables in $\data$), a non-generative part $S_M$ (involving $\data$ and $\params$) and a generative part $S_Q$ (involving $\data$, $\params$ and $Q$). 
	\item The \emph{conditional independence information flow type system}, $\Gamma \typingspecial S : \lev{l1}$ from \autoref{sec:theory}, which uses a lower semi-lattice of level types $\lev{l1} \leq \lev{l2}$, $\lev{l1} \leq \lev{l3}$. A $\typingspecial$-well-typed program induces a conditional independence relationship: $\lev{l2}$-variables are conditionally independent of $\lev{l3}$-variables given $\lev{l1}$-variables: $\mathbf{x}_{\lev{l2}} \bigCI \mathbf{x}_{\lev{l3}} \mid \mathbf{x}_{\lev{l1}}$, where $\mathbf{x} = \mathbf{x}_{\lev{l1}}, \mathbf{x}_{\lev{l2}}, \mathbf{x}_{\lev{l3}} = \params, \data$. When we use type inference for $\typingspecial$ in conjunction with shredding $S \shred S_1, S_2, S_3$ (\autoref{ssec:shred}), we isolate $S_2$: a part of the program that does not interfere with $S_3$. We can marginalise out $\lev{l2}$-variables in that sub-statement only, keeping the rest of the program unchanged.
	\item The \emph{discrete variable transformation relation} $\Gamma, S \xrightarrow{z} \Gamma', S'$ (defined in \autoref{sssec:elim-gen}), which takes a SlicStan program $\Gamma, S$ that has discrete model parameter $z$, and transforms it to a SlicStan program $\Gamma', S'$, where $z$ is no longer a $\lev{model}$-level parameter but instead one of level $\lev{genquant}$. We define the relation in terms of $\vdash$ and $\typingspecial$ as per the \ref{Elim Gen} rule.
\end{itemize}

\subsection{Variable Elimination} \label{ssec:ve}

Variable elimination (VE) \cite{zhang1994simple, koller2009probabilistic} is an exact inference algorithm 
often phrased in terms of factor graphs.
It can be used to compute prior or posterior marginal distributions by eliminating, one by one, variables that are irrelevant to the distribution of interest.
VE uses  dynamic programming combined with a clever use of the distributive law of multiplication over addition to efficiently compute 
a nested sum of a product of expressions.

We already saw an example of variable elimination in $\autoref{sec:theory}$ (Programs A and B). The idea is to eliminate (marginalise out) variables one by one. To eliminate a variable $z$, we multiply all of the factors connected to $z$ to form a single expression, then sum over all possible values for $z$ to create a new factor, remove $z$ from the graph, and finally connect the new factor to all former neighbours\footnotemark of $z$. 
\footnotetext{`Neighbours' refers to the variables which are connected to a factor
which connects to $z$.}
Recall Program D, with latent variables $z_1, z_2, z_3$ and observed data $\mathbf{y} = y_1, y_2, y_3$.  
\autoref{fig:varelim} shows the VE algorithm step-by-step applied to this program. We eliminate $z_1$ to get the marginal on $z_2$ and $z_3$ (\ref{fig:varelim_a} and \ref{fig:varelim_b}), then eliminate $z_2$ to get the marginal on $z_3$ (\ref{fig:varelim_c} and \ref{fig:varelim_d}).

\begin{figure*}
	\centering 
	\begin{subfigure}{0.45\textwidth} 
	\hspace{-7pt} 
	\begin{tikzpicture}
	\node[latent, fill=seabornred!20]   (d1)   {$z_1$}; %
	\factor[left=of d1, xshift=-0.4cm, fill=seaborndarkred]   {p1} {\color{seaborndarkred}\scriptsize $b(z_1 \mid \theta_1)$} {} {};
	\factor[below=of d1, yshift=0.2cm, fill=seaborndarkred]  {yp1} {left:{\color{seaborndarkred}\scriptsize $b(y_1 \mid \phi_{z_1})$}} {} {};
	\node[obs, below=of yp1, yshift=0.6cm,minimum size=12pt] (y1) {\scriptsize$y_1$};
	\factor[right=of d1, xshift=0.1cm, fill=seaborndarkred] {p2} {\color{seaborndarkred}\scriptsize $b(z_2 \mid \theta_{z_1})$} {} {}; %
	\node[latent, right=of p2, xshift=-0.2cm]   (d2)   {$z_2$}; %
	\factor[below=of d2, yshift=0.2cm]  {yp2} {left:{\scriptsize $b(y_2 \mid \phi_{z_2})$}} {} {};
	\node[obs, below=of yp2, yshift=0.6cm,minimum size=12pt] (y2) {\scriptsize$y_2$};
	\factor[right=of d2, xshift=0.1cm] {p3} {\scriptsize $b(z_3 \mid \theta_{z_2})$} {} {}; %
	\node[latent, right=of p3, xshift=-0.2cm]   (d3)   {$z_3$}; %
	\factor[below=of d3, yshift=0.2cm]  {yp3} {left:{\scriptsize $b(y_3 \mid \phi_{z_3})$}} {} {};
	\node[obs, below=of yp3, yshift=0.6cm,minimum size=12pt] (y3) {\scriptsize$y_3$};
	
	\factoredge [] {} {p1} {d1};
	\factoredge [] {d1} {p2} {d2};
	\factoredge [] {d2} {p3} {d3};
	\factoredge [] {d1} {yp1} {y1};
	\factoredge [] {d2} {yp2} {y2};
	\factoredge [] {d3} {yp3} {y3};	
	\end{tikzpicture}
	\caption{To eliminate $z_1$, we remove $z_1$ and all its neighbouring factors (in red). Create a new factor $f_1$, by summing out $z_1$ from the product of these factors.}
	\label{fig:varelim_a}
	\end{subfigure}
	\hspace{0.08\textwidth} 
	\begin{subfigure}{0.45\textwidth} 		
		\begin{tikzpicture}		
		\factor[fill=seaborndarkgreen] {d1} { } {} {};
		
		\node[left=of d1, align=right, xshift=1cm, yshift=-0.3cm] (hidden) {
			\color{seaborndarkgreen}\scriptsize $f_1(z_2, y_1) = $ \\
			\color{seaborndarkgreen}\scriptsize $\sum_{z_1} \left[b(z_1 \mid \theta_1)\right.$ \\
			\color{seaborndarkgreen}\scriptsize $\times b(y_1 \mid \phi_{z_1}) $ \\
			\color{seaborndarkgreen}\scriptsize $\left.\times b(z_2 \mid \theta_{z_1}) \right]$
		};
			
		\node[latent, right=of d1]   (d2)   {$z_2$}; %
		\factor[below=of d2, yshift=0.2cm]  {yp2} {left:{\scriptsize $b(y_2 \mid \phi_{z_2})$}} {} {};
		\node[obs, below=of yp2, yshift=0.6cm,minimum size=12pt] (y2) {\scriptsize$y_2$};
		\node[obs, below=of d1, left=of y2, minimum size=12pt] (y1) {\scriptsize$y_1$};
		\factor[right=of d2, xshift=0.1cm] {p3} {\scriptsize $b(z_3 \mid \theta_{z_2})$} {} {}; %
		\node[latent, right=of p3, xshift=-0.2cm]   (d3)   {$z_3$}; %
		\factor[below=of d3, yshift=0.2cm]  {yp3} {left:{\scriptsize $b(y_3 \mid \phi_{z_3})$}} {} {};
		\node[obs, below=of yp3, yshift=0.6cm,minimum size=12pt] (y3) {\scriptsize$y_3$};
		
		\factoredge [] {d2,y1} {d1} {};
		\factoredge [] {d2} {p3} {d3};
		\factoredge [] {d2} {yp2} {y2};
		\factoredge [] {d3} {yp3} {y3};	
		\end{tikzpicture}
		\caption{Connect $f_1$ (in green) to the former neighbours of $z_1$. The remaining factor graph defines the marginal $p(z_2, z_3 \mid \mathbf{y})$.}
		\label{fig:varelim_b}
	\end{subfigure}

	\begin{subfigure}{0.45\textwidth}
		\hspace{29pt} 
		\begin{tikzpicture}
		\factor[fill=seaborndarkred] {d1} {left:{\color{seaborndarkred}\scriptsize $f_1(z_2, y_1)$}} {} {};		
		\node[latent, right=of d1, fill=seabornred!20]   (d2)   {$z_2$}; %
		\factor[below=of d2, yshift=0.2cm, fill=seaborndarkred]  {yp2} {left:{\color{seaborndarkred}\scriptsize $b(y_2 \mid \phi_{z_2})$}} {} {};
		\node[obs, below=of yp2, yshift=0.6cm,minimum size=12pt] (y2) {\scriptsize$y_2$};
		\node[obs, below=of d1, left=of y2, minimum size=12pt] (y1) {\scriptsize$y_1$};
		\factor[right=of d2, xshift=0.1cm, fill=seaborndarkred] {p3} {\color{seaborndarkred}\scriptsize $b(z_3 \mid \theta_{z_2})$} {} {}; %
		\node[latent, right=of p3, xshift=-0.2cm]   (d3)   {$z_3$}; %
		\factor[below=of d3, yshift=0.2cm]  {yp3} {left:{\scriptsize $b(y_3 \mid \phi_{z_3})$}} {} {};
		\node[obs, below=of yp3, yshift=0.6cm,minimum size=12pt] (y3) {\scriptsize$y_3$};
		
		\factoredge [] {d2,y1} {d1} {};
		\factoredge [] {d2} {p3} {d3};
		\factoredge [] {d2} {yp2} {y2};
		\factoredge [] {d3} {yp3} {y3};	
		\end{tikzpicture}
		\caption{To eliminate $z_2$, we remove $z_2$ and all its neighbouring factors (in red). Create a new factor $f_2$, by summing out $z_2$ from the product of these factors.}
		\label{fig:varelim_c}
	\end{subfigure}	
	\hspace{0.08\textwidth}
	\raggedright 
	\begin{subfigure}{0.45\textwidth} 
		\begin{tikzpicture}
		\factor[fill=seaborndarkgreen]  {d2} {} {} {};
		
		\node[left=of d2, align=right, xshift=1cm, yshift=-0.3cm] (hidden) {
			\color{seaborndarkgreen}\scriptsize $f_2(z_3, y_2, y_1) = \sum_{z_2} \left[ f_1(z_2, y_1)\right.$ \\
			\color{seaborndarkgreen}\scriptsize $\left.\times b(y_2 \mid \phi_{z_2}) \times b(z_3 \mid \theta_{z_2}) \right]$
		};
		
		\node[latent, right=of d2, xshift=0.7cm]   (d3)   {$z_3$}; %
		\factor[below=of d3, yshift=0.2cm]  {yp3} {left:{\scriptsize $b(y_3 \mid \phi_{z_3})$}} {} {};
		\node[obs, below=of yp3, yshift=0.6cm,minimum size=12pt] (y3) {\scriptsize$y_3$};
		\node[obs, left=of y3 ,minimum size=12pt] (y2) {\scriptsize$y_2$};
		\node[obs, left=of y2, minimum size=12pt] (y1) {\scriptsize$y_1$};
		
		\factoredge [] {d3,y2,y1} {d2} {};
		\factoredge [] {d3} {yp3} {y3};	
		\end{tikzpicture}
		\caption{Connect $f_2$ (in green) to the former neighbours of $z_2$. The remaining factor graph defines the marginal $p(z_3 \mid \mathbf{y})$.}
		\label{fig:varelim_d}
	\end{subfigure}
	\caption{Step by step example of variable elimination.}
	\label{fig:varelim}
\end{figure*}

\subsection{Conditional Independence Relationships and Inferring the Markov Blanket} \label{ssec:mb}

The key property we are looking for, in order to be able to marginalise out a variable independently of another, is conditional independence given neighbouring variables. 
If we shred a $\typingspecial$-well-typed program into $S_1, S_2$ and $S_3$, and think of $\semp{S_1}, \semp{S_2}$ and $\semp{S_3}$ as factors, it is easy to visualise the factor graph corresponding to the program: it is as in \autoref{fig:varelim_levels_a}. Eliminating all $\mathbf{x}_{\lev{l2}}$ variables, ends up only modifying the $\semp{S_2}$ factor (\autoref{fig:varelim_levels_b}).

When using VE to marginalise out a parameter $z$, we want to find the smallest set of other parameters $A$, such that $z \bigCI B \mid A$, where $B$ is the rest of the parameters. The set $A$ is also called $z$'s minimal \emph{Markov blanket} or \emph{Markov boundary}. 
Once we know this set, we can ensure that we involve the smallest possible number of variables in $z$'s elimination, which is important to achieve a performant algorithm. 


For example, when we eliminate $z_1$ in Program D, both $z_2$ and $y_1$ need to be involved, as $z_1$ shares a factor with them. By contrast, there is no need to include $y_2, z_3, y_3$ and the statements associated with them, as they are unaffected by $z_1$, given $z_2$. The variables $y_1$ and $z_2$ form $z_1$'s Markov blanket: given these variables, $z_1$ is conditionally independent of all other variables. That is, $z_1 \bigCI z_3, y_2, y_3 \mid z_2, y_1$.

The type system we present in $\autoref{sec:theory}$ can tell us if the conditional independence relationship $\mathbf{x}_{\lev{l2}} \bigCI \mathbf{x}_{\lev{l3}} \mid \mathbf{x}_{\lev{l1}}$ holds for a concrete partitioning of the modelled variables $\mathbf{x} = \mathbf{x}_{\lev{l1}}, \mathbf{x}_{\lev{l2}}, \mathbf{x}_{\lev{l3}}$.
But to find the Markov blanket of a variable $z$ we want to eliminate, we rely on type inference. We define a performance ordering between the level types $\lev{l3} \prec \lev{l1} \prec \lev{l2}$, where our first preference is for variables to be of level $\lev{l3}$,  level $\lev{l1}$ is our second preference, and $\lev{l2}$ is our last resort. 
In our implementation, we use bidirectional type-checking \cite{pierce2000local} to synthesise hard constraints imposed by the type system, and resolve them, while optimising for the soft constraints given by the $\prec$ ordering. This maximises the number of variables that are conditionally independent of $z$ given its blanket ($\lev{l3}$) and minimises the number of variables forming the blanket ($\lev{l1}$). 
Fixing $z$ to be of $\lev{l2}$ level, and $\lev{l2}$ being the least preferred option, ensures that only $z$ and variables dependent on $z$ through deterministic assignment are of that level.  

\begin{figure}
	\begin{subfigure}{0.49\textwidth}
		\centering
		\begin{tikzpicture}
		\node[latent] (l1) {$\mathbf{x}_{\lev{\small l1}}$};
		\factor[above=of l1] {s1} {right:{$\semp{S_1}$}} {} {};
		
		\factor[left=of l1, fill=seaborndarkred] {s2} {below:{\color{seaborndarkred}$\semp{S_2}$}} {} {};
		\node[latent, left=of s2, fill=seabornred!20] (lz) {$\mathbf{x}_{\small \lev{l2}}$};
		
		\factor[right=of l1]  {s3} {below:{$\semp{S_3}$}} {} {};
		\node[latent, right=of s3] (l3) {$\mathbf{x}_{\small \lev{l3}}$};
		
		\factoredge [] {} {s1} {l1};
		\factoredge [] {l1} {s2} {lz};
		\factoredge [] {l1} {s3} {l3};
		\end{tikzpicture}
		\caption{A $\typingspecial$-well-typed program with parameters $\mathbf{x}$.}
		\label{fig:varelim_levels_a}
	\end{subfigure}
	\begin{subfigure}{0.49\textwidth}
		\centering
		\begin{tikzpicture}
		\node[latent] (l1) {$\mathbf{x}_{\small \lev{l1}}$};
		\factor[above=of l1] {s1} {right:{$\semp{S_1}$}} {} {};
		
		\factor[left=of l1, fill=seaborndarkgreen] {s2} {below:{\color{seaborndarkgreen}$
				\sum_{\mathbf{x}_{\lev{\tiny l2}}} \semp{S_2}$}} {} {};
		
		\factor[right=of l1]  {s3} {below:{$\semp{S_3}$}} {} {};
		\node[latent, right=of s3] (l3) {$\mathbf{x}_{\small \lev{l3}}$};
		
		\factoredge [] {} {s1} {l1};
		\factoredge [] {l1} {s2} {};
		\factoredge [] {l1} {s3} {l3};
		\end{tikzpicture}
		\caption{Eliminating $\mathbf{x}_{\lev{l2}}$ consists of modifying only $\semp{S_2}$.}
		\label{fig:varelim_levels_b}
	\end{subfigure}
	\caption{The factor graph and VE induced by the shedding $S\shred S_1,S_2,S_3$ according to the semi-lattice $\lev{l1}\leq \lev{l2},\lev{l3}$.}
	\label{fig:varelim_levels}
\end{figure}

\subsection{Sampling the Discrete Parameters} \label{ssec:bs}

Variable elimination gives a way to efficiently marginalise out a variable $z$ from a model defining density $p(\mathbf{x})$, to obtain a new density $p(\mathbf{x} \setminus \{z\})$. In the context of SlicStan, this means we have the tools to eliminate all discrete parameters $\params_d$, from a density $p(\data, \params_c, \params_d)$ on data $\data$, continuous parameters $\params_c$ and discrete parameters $\params_d$. The resulting marginal $\sum_{\params_d}p(\data, \params_c, \params_d) = p(\data, \params_c)$ does not involve discrete parameters, and therefore we can use gradient-based methods to infer $\params_c$. However, the method so far does not give us a way to infer the discrete parameters $\params_d$.

To infer these, we observe that 
$p(\mathbf{x}) = p(\mathbf{x} \setminus \{z\} ) p(z \mid \mathbf{x} \setminus \{z\})$, which means that we can preserve the semantics of the original model (which defines $p(\mathbf{x})$), by finding an expression for the conditional $p(z \mid \mathbf{x} \setminus \{z\})$. If $\mathbf{x}_1, \mathbf{x}_2$ is a partitioning of $\mathbf{x} \setminus \{z\}$ such that $z \bigCI \mathbf{x}_2 \mid \mathbf{x}_1$, then (from \autoref{def:ci}) $p(\mathbf{x}) = \phi_1(z, \mathbf{x}_1)\phi_2(\mathbf{x}_1, \mathbf{x}_2)$ for some functions $\phi_1$ and $\phi_2$. Thus,
$p(z \mid \mathbf{x} \setminus \{z\}) =
\phi_1(z, \mathbf{x}_1) \cdot 
\left(\phi_2(\mathbf{x}_1,\mathbf{x}_2)/p(\mathbf{x} \setminus \{z\})\right) \propto \phi_1(z, \mathbf{x}_1)$.

In the case when $z$ is a discrete variable of finite support, we can calculate the conditional probability exactly: $p(z \mid \mathbf{x} \setminus \{z\}) = \frac{\phi_1(z, \mathbf{x}_1)}{\sum_z \phi_1(z, \mathbf{x}_1)}$.
We can apply this calculation to the factorisation of a program $\Gamma\typingspecial S$ that is induced 
by shredding (\autoref{th:shred_discrete}).
In that case,  $\mathbf{x}_{\lev{l2}}$, $\mathbf{x}_{\lev{l1}}$, $\semp{S_2}$ play the roles of $z$, $\mathbf{x}_1$, and
  $\phi_1$, respectively. Consequently, we obtain a formula for 
 drawing $\mathbf{x}_{\lev{l2}}$ conditional on the other parameters:
 $\mathbf{x}_{\lev{l2}} \sim \mathrm{categorical}\left(\left[ \frac{\semp{S_2}(\mathbf{x}_{\tiny\lev{l2}},  \mathbf{x}_{\tiny\lev{l1}} )}{\sum_{\mathbf{x}_{\tiny\lev{l2}}} \semp{S_2}(\mathbf{x}_{\tiny\lev{l2}},  \mathbf{x}_{\tiny\lev{l1}} )} \mid \mathbf{x}_{\tiny\lev{l2}} \in \mathrm{supp}(\mathbf{x}_{\lev{l2}}) \right]\right)$.


\subsection{A Semantics-Preserving Transformation Rule}

In this section we define a source-to-source transformation that implements a single step of variable elimination. 
The transformation re-writes a SlicStan program $\Gamma, S$ with a discrete $\lev{model}$-level parameter $z$, to a SlicStan program, where $z$ is a $\lev{genquant}$-level parameter. Combining the rule with the shredding presented in $\autoref{sec:background}$ results in support for efficient inference (see \autoref{ssec:ve_relation} for discussion of  limitations) of both discrete and continuous random variables, where continuous variables can be inferred using gradient-based methods, such as HMC or variational inference, while discrete variables are generated using ancestral sampling. The transformation allows for SlicStan programs with explicit use of discrete parameters to be translated to Stan. We show a step-by-step example of our discrete parameter transformation in \autoref{sssec:elimgen_example}.

\subsubsection{The $\phi$, \kw{elim} and \kw{gen} derived forms} ~ \\
We introduce three derived forms that allow us to state the rule concisely.  
\begin{display}[.59]{Variable Elimination Derived Forms}
	\clause{\kw{elim}(\kw{int}\langle K \rangle~z)~S \deq \kw{factor}(\kw{sum}(\,[\kw{target}(S) \mid z~\kw{in}\,1\!:\!K]\,))}{
		} \\ 
	\clause{\phi(\kw{int}\langle K_1 \rangle~z_1, \dots, \kw{int}\langle K_N \rangle~z_N)~S}{
		} \\ 
	\clause{\qquad \deq  [\dots[\kw{target}(S) \mid z_1~\kw{in}~1:K_1] \mid \dots \mid z_N~\kw{in}~1 : K_N ]}{ } \\ 
	\clause{\kw{gen}(\kw{int}\langle K \rangle\,z)~S \deq z \sim \kw{categorical}(\,[\kw{target}(S) \mid z~\kw{in}~1:K]\,)}{
		}
\end{display} 


The elimination expression $\kw{elim}(\kw{int}\langle K\rangle z)~S$ adds a new factor that is equivalent to marginalising $z$ in $S$. In other words, $\semp{\kw{elim}(\kw{int}\langle K\rangle z)~S}(\sigma)(\mathbf{x}) = \sum_{z = 1}^{K}\semp{S}(\sigma)(\mathbf{x})$ (see \autoref{lem:s2}).
A $\phi$-expression $\phi(\kw{int}\langle K_1 \rangle~z_1, \dots, \kw{int}\langle K_N \rangle~z_N)~S$ simply computes the density of the statement $S$ in a multidimensional array for all possible values of the variables $z_1, \dots z_N$. In other words, $\semp{\left(f = \phi(\kw{int}\langle K_1 \rangle~z_1, \dots, \kw{int}\langle K_N \rangle~z_N)~S\right); \kw{factor}(f[z_1] \dots [z_N])}(\sigma)(\mathbf{x}) = \semp{S}(\sigma)(\mathbf{x})$ (\autoref{lem:s2}). The $\phi$-expression allows us to pre-compute all the work that we may need to do when marginalising other discrete variables, which results in efficient nesting. 
%
%
Finally, the generation expression computes the conditional of a variable $z$ given the rest of the parameters, as in \autoref{ssec:bs} (see \autoref{lem:sgen}).

\subsubsection{Eliminating a single variable $z$}\label{sssec:elim-gen}
The \ref{Elim Gen} rule below specifies a semantics-preserving transformation that takes a SlicStan program with a discrete $\lev{model}$-level parameter $z$, and transforms it to one where $z$ is $\lev{genquant}$-level parameter. 
In practice, we apply this rule once per discrete $\lev{model}$-level parameter, which eliminates those parameters one-by-one, similarly to the variable elimination algorithm. And like in VE, the ordering in which we eliminate those variables can impact performance. 

The  \ref{Elim Gen} rule makes use of two auxiliary definitions that we define next.
Firstly, the neighbours of $z$, $\Gamma_{\mathrm{ne}}$, are defined by the relation $\mathrm{ne}(\Gamma, \Gamma', z)$ (\autoref{def:ne}), which looks for non-data and non-continuous $\lev{l1}$-variables in $\Gamma'$.
\begin{definition}[Neighbours of $z$, $\mathrm{ne}(\Gamma, \Gamma', z)$] \label{def:ne}
	For a $\vdash$ typing environment $\Gamma$, a $\typingspecial$ typing environment $\Gamma' = \Gamma_{\sigma}', \Gamma_{\mathbf{x}}'$ and a variable $z \in \dom(\Gamma_{\mathbf{x}}')$, the neighbours of $z$ are defined as: 
	$$\mathrm{ne}(\Gamma, \Gamma', z) \deq \{x : (\tau, \ell) \in \Gamma_{\mathbf{x}}' \mid \ell = \lev{l1} \text{ and } \Gamma(x) = (\kw{int}\langle K\rangle, \lev{model}) \text{ for some } K \}$$
\end{definition}
Secondly, $\store(S_2)$ (\autoref{def:store}) is a statement that has the same store semantics as $S_2$, but density semantics of $1$: $\sems{\store(S_2)} = \sems{S_2}$, but $\semp{\store(S_2)} = 1$. This ensures that the transformation preserves both the density semantics and the store semantics of $S$ and is needed because $\kw{gen}(z)S_2$ discards any store computed by $S_2$, thus only contributing to the weight.  

\begin{definition} \label{def:store}
	Given a statement $S$, we define the statement $\store(S)$ 
	by replacing all $\kw{factor}(E)$- and $L\sim d(E_1,\ldots,E_n)$-substatements 
	in $S$ by $\kw{skip}$ (see \autoref{ap:proofs} for the precise definition).
\end{definition}

\begin{display}{The elim-gen rule:}
	\hquad 
	\staterule{Elim Gen}
	{	
		\Gamma(z) = (\kw{int}\langle K \rangle, \lev{model}) \quad
		\Gamma_{\mathrm{ne}} = \mathrm{ne}(\Gamma, \Gamma_M, z) \quad
		S' = S_D; S_M'; S_Q
		\\
		S_M' = 
		S_1; 
		f = \phi(\Gamma_{\mathrm{ne}}) \{\kw{elim}(\kw{int}\langle K \rangle z)~S_2 \}; 
		\kw{factor}(f[\dom(\Gamma_{\mathrm{ne}})]); 
		S_3; \kw{gen}(z) S_2; \store(S_2)  \\
		\Gamma \vdash S : \lev{data} \quad
		S \shred S_{D}, S_{M}, S_{Q} \quad
		\Gamma \xrightarrow{z} \Gamma_M \quad 
		\Gamma_M \typingspecial S_M : \lev{l1} \quad 
		S_{M} \shred[\Gamma_M] S_1, S_2, S_3 \quad 
		\Gamma' \vdash S' : \lev{data}
	}
	{\Gamma, S \xrightarrow{z} \Gamma', S' }
\end{display}

We can use the \ref{Elim Gen} rule to transforms a SlicStan program, with respect to a parameter $z$, as described by \autoref{alg:elimgen}. This involves three main steps:
\begin{enumerate}
	\item Separate out $S_M$ --- the \lev{model}-level sub-part of $S$ --- using the top-level type system $\vdash$ (line 1 of \autoref{alg:elimgen}). 
	\item Separate out $S_2$ --- the part of $S_M$ that involves the discrete parameter $z$ --- using the conditional independence type system $\typingspecial$ (lines 2--8). 
	\item Perform a single VE step by marginalising out $z$ in $S_2$ and sample $z$ from the conditional probability specified by $S_2$ (lines 10--11).
\end{enumerate}

\begin{algorithm}[!]         
	\caption{Single step of applying \ref{Elim Gen}}
	\label{alg:elimgen}
	\begin{flushleft} 
		\hspace*{\algorithmicindent} \textbf{Arguments:} $(\Gamma, S), z$ \CommentVars {A program $(\Gamma, S)$; the variable $z$ to eliminate}\\
		\hspace*{\algorithmicindent} \textbf{Requires:} $\Gamma \vdash S : \lev{data}$ \CommentVars{$(\Gamma, S)$ is well-typed} \\
		\hspace*{\algorithmicindent} \textbf{Returns:} $(\Gamma', S')$ \CommentVars{The transformed program}\\ 
	\end{flushleft}    
	\vspace{6pt}
	\begin{algorithmic}[1]			
		\State Slice $(\Gamma, S)$ into $S_D, S_M, S_Q$ according to $S \shred S_D, S_M, S_Q$.
		\State Derive incomplete $\Gamma_M$ from $\Gamma$ based on $\Gamma \xrightarrow{z} \Gamma_M$. 
		\CommentVars{\lev{data} of $\Gamma$ is of level \lev{l1} in $\Gamma_M$.}
		\State 
		\CommentVars{Continuous \lev{model} var. of $\Gamma$ are \lev{l1} in $\Gamma_M$.}
		\State 
		\CommentVars{$z$ is of level \lev{l2} in $\Gamma_M$.}
		\State 
		\CommentVars{All other \lev{model} variables are given}
		\State
		\CommentVars{a type level placeholder in $\Gamma_M$.}
		\State Infer missing types of $\Gamma_M$ according to $\Gamma_M \typingspecial S_M : \lev{l1}$.
		\State Slice $(\Gamma_M, S_M)$ into $S_1, S_2, S_3$ according to $S_M \shred[\Gamma_M] S_1, S_2, S_3$.
		\\
		\State $\Gamma_{\mathrm{ne}} = \mathrm{ne}(\Gamma, \Gamma_M, z)$   
		\CommentVars{Determine the discrete neighbours of $z$.}
		\State \vspace{-16pt}  \CommentVars{Eliminate $z$ and re-generate $z$.}
		\begin{flalign*}
		\hspace*{\algorithmicindent}\hspace{2pt} S_M' = (&S_1; &\\
		&f = \phi(\Gamma_{\mathrm{ne}}) \{\kw{elim}(\kw{int}\langle K \rangle z)~S_2 \}; &\\ 
		&\kw{factor}(f[\dom(\Gamma_{\mathrm{ne}})]); &\\ 
		&S_3; &\\
		&\kw{gen}(z)\,S_2; &\\
		&\store(S_2)) &
		\end{flalign*}
		\State $S' = S_D; S_M'; S_Q$
		\State Infer an optimal $\Gamma'$ according to $\Gamma' \vdash S': \lev{data}$
		%
		%
		\\
		\Return $(\Gamma', S')$
	\end{algorithmic}
	\rule{\linewidth}{0.4pt}
\end{algorithm}

All other sub-statements of the program, $S_D, S_1, S_3$ and $S_Q$, stay the same during the transformation.
By isolating $S_2$ and transforming only this part of the program, we make sure we do not introduce more work than necessary when performing variable elimination. 

To efficiently marginalise out $z$, we want to find the \textit{Markov boundary} of $z$ given all $\lev{data}$ and continuous $\lev{model}$ parameters: the data is given, and marginalisation happens inside the continuous parameters inference loop, so we can see continuous parameters as given for the purpose of discrete parameters marginalisation. Thus we are looking for the relationship: $ z \bigCI \params_{d2} \mid \data, \params_c, \params_{d1}$,
where $\data$ is the data, $\params_{c}$ are the continuous $\lev{model}$-level parameters, $\params_{d1}$ is a subset of the discrete $\lev{model}$-level parameters that is as small as possible (the Markov blanket), and $\params_{d2}$ is the rest of the discrete $\lev{model}$-level parameters.
We can find an optimal partitioning of the discrete parameters $\params_{d1}, \params_{d2}$ that respects this relationship of interest using the type system from \autoref{sec:theory} together with type inference.

The judgement $\Gamma_M \typingspecial S_M : \lev{l1}$ induces a conditional independence relationship of the form $\mathbf{x}_{\lev{l2}} \bigCI \mathbf{x}_{\lev{l3}} \mid \mathbf{x}_{\lev{l1}}$, where $\mathbf{x} \models \Gamma_{\mathbf{x}}$ (\autoref{th:ci}).
The relation $\Gamma \xrightarrow{z} \Gamma_M$ (\autoref{def:gammas}) constrains the form of $\Gamma_M$ based on $\Gamma$. 
This is needed to make sure we are working with a relationship of the form we are interested in --- $z \bigCI \params_{d2} \mid \data, \params_c, \params_{d1}$ --- and that base types $\tau$ are the same between $\Gamma$ and $\Gamma_M$. 
In particular, $\Gamma \xrightarrow{z} \Gamma_M$ constrains $z$ to be the only $\lev{l2}$ parameter in $\Gamma_M$ and all $\lev{data}$ and continuous $\lev{model}$-level parameters of $\Gamma$ are $\lev{l1}$ in $\Gamma_M$.
Note, $\mathrm{dom}(\Gamma_M) \subseteq \mathrm{dom}(\Gamma)$ and $\Gamma_M$ only contains variables that are of level \lev{model} and below in $\Gamma$. Variables that are of level \lev{genquant} in $\Gamma$ are not in $\Gamma_M$.

\begin{definition}[$\Gamma \xrightarrow{z} \Gamma'$]\label{def:gammas} ~ \\
	For a $\vdash$ typing environment $\Gamma$ and a $\typingspecial$ typing environment $\Gamma'$, a variable $z$ and a statement $S$, we have:
	$$\Gamma \xrightarrow{z} \Gamma' = 
	\begin{cases}
	\Gamma(z) = (\tau, \lev{model}) \text{ and } \Gamma_{\mathbf{x}, \lev{l2}}' = \{z : \tau, \lev{l2}\} \text{ for some } \tau \\
	x : (\tau, \ell) \in \Gamma \text{ such that } \ell \leq \lev{model} \iff x : (\tau, \ell') \in \Gamma' \text{ for some } \ell' \in \{\lev{l1}, \lev{l2}, \lev{l3}\} \\
	x : (\tau, \lev{data}) \in \Gamma \rightarrow x : (\tau, \lev{l1}) \in \Gamma' \\
	x : (\tau, \lev{model}) \in \Gamma_{\mathbf{x}} \text{ and } \tau = \kw{real} \text{ or } \tau = \kw{real}[]...[] \rightarrow x : (\tau, \lev{l1}) \in \Gamma'
	\end{cases}$$
\end{definition}


Following convention from earlier in the paper, we use level subscripts to refer to specific subsets of $\Gamma$: in the above definition, $\Gamma'_{\mathbf{x}, \lev{l2}}$ refers to the subset of parameters $\mathbf{x}$ in $\Gamma'$, which are of level $\lev{l2}$.

\subsection{Marginalising Multiple Variables: An example} \label{sssec:elimgen_example}

To eliminate more than one discrete parameter, we apply the \ref{Elim Gen} rule repeatedly.
Here, we work through a full example, showing the different steps of this repeated \ref{Elim Gen} transformation. 

Consider an extended version of the HMM model from the beginning of this section (Program D), reformulated to include transformed parameters:

\vspace{-10pt}
\begin{multicols}{2}
\centering
\textbf{G. An extended HMM}
\begin{lstlisting}
	$S \hspace{1pt}=\hspace{2pt}$ real[2] phi ~ beta(1, 1);
			 real[2] theta ~ beta(1, 1); 
			 real theta0 = theta[0];
			 int<2> z1 ~ bernoulli(theta0);
			 real theta1 = theta[z1];
			 int<2> z2 ~ bernoulli(theta1);
			 real theta2 = theta[z2];
			 int<2> z3 ~ bernoulli(theta2);
			 real phi1 = phi[z1];
			 real phi2 = phi[z2];
			 real phi3 = phi[z3];
			 data real y1 ~ normal(phi1, 1);
			 data real y2 ~ normal(phi2, 1);
			 data real y3 ~ normal(phi3, 1);
			 real theta3 = theta[z3];
			 int genz ~ bernoulli(theta4);
\end{lstlisting}

\vspace{60pt}

\textbf{The typing environment}
\begin{align*}
	\Gamma = \{ &y_{1,2,3} : (\kw{real}, \lev{data}), \\
	& \phi : (\kw{real[2]}, \lev{model}), 	\\ 
	& \theta : (\kw{real[2]}, \lev{model}), \\
	& \theta_{0,1,2} : (\kw{real}, \lev{model}), \\ 
	& \phi_{1,2,3} : (\kw{real}, \lev{model}), \\ 
	& z_{1,2,3} : (\kw{int<2>}, \lev{model}), \\
	& \theta_3 : (\kw{real}, \lev{genquant}), \\
	& genz : (\kw{int<2>}, \lev{genquant})\}
\end{align*}
\end{multicols}
\vspace{-6pt}

The variables we are interested in transforming are $z_1, z_2$ and $z_3$: these are the $\lev{model}$-level discrete parameters of Program G. The variable $\kw{genz}$ is already at $\lev{genquant}$ level, so we can sample this with ancestral sampling (no need for automatic marginalisation).

We eliminate $z_1, z_2$ and $z_3$ one by one, in that order. The order of elimination generally has a significant impact on the complexity of the resulting program (see also \autoref{ssec:ve_relation}), but we do not focus on how to choose an ordering here. The problem of finding an optimal ordering is 
well-studied \cite{Arnborg87, kjaerulff1990triangulation, amir2010approximation} and is orthogonal to the focus of our work. 

\subsubsection{Eliminating $z_1$}

Eliminating a single variable happens in three steps, as shown in \autoref{fig:elim_z1}: standard shredding into $S_D, S_M$ and $S_Q$, conditional independence shredding of $S_M$ into $S_1, S_2$ and $S_3$, and combining everything based on \ref{Elim Gen}.

\begin{figure}[!b]
\vspace{-6pt}
\begin{multicols}{2}
\centering
\textbf{(1) Standard shredding of $S$}
\begin{lstlisting}[numbers=left,numbersep=-3pt,numberstyle=\tiny\color{darkgray}]
		$S_D = $ skip;
		$S_M = $ phi ~ beta(1, 1);
			    theta ~ beta(1, 1); 
			    theta0 = theta[0];
			    z1 ~ bernoulli(theta0);
			    theta1 = theta[z1];
			    z2 ~ bernoulli(theta1);
			    theta2 = theta[z2];
			    z3 ~ bernoulli(theta2);
			    phi1 = phi[z1];
			    phi2 = phi[z2];
			    phi3 = phi[z3];
			    y1 ~ normal(phi1, 1);
			    y2 ~ normal(phi2, 1);
			    y3 ~ normal(phi3, 1);
		$S_Q  = $ theta3 = theta[z3];
			    genz ~ bernoulli(theta3);
\end{lstlisting}

\vspace{1pt}

\centering
\textbf{(2) CI shredding of $S_M$}\vspace{-1pt}
\begin{lstlisting}[numbers=left,numbersep=-3pt,numberstyle=\tiny\color{darkgray}]
		$S_1 \hspace{1pt} = \hspace{2pt}$ phi ~ beta(1, 1);
			    theta ~ beta(1, 1);
			    theta0 = theta[0];  
		$S_2 \hspace{1pt} = \hspace{2pt}$ z1 ~ bernoulli(theta0);
			    theta1 = theta[z1];
			    z2 ~ bernoulli(theta1);
			    phi1 = phi[z1];
			    y1 ~ normal(phi1, 1);
		$S_3 \hspace{1pt} = \hspace{2pt}$ theta2 = theta[z2];
			    z3 ~ bernoulli(theta2);
			    phi2 = phi[z2];
			    phi3 = phi[z3];
			    y2 ~ normal(phi2, 1);
			    y3 ~ normal(phi3, 1);
\end{lstlisting}

\centering
\textbf{(3) Applying \ref{Elim Gen}: Program G-1}
\begin{lstlisting}[numbers=left,numbersep=-3pt,numberstyle=\tiny\color{darkgray}]
		phi ~ beta(1, 1);
		theta ~ beta(1, 1);
		theta0 = theta[0];
		
		f1 = $\phi$([int<2> z2]){
			elim(int<2> z1){
				z1 ~ bernoulli(theta0);
				theta1 = theta[z1];
				z2 ~ bernoulli(theta1);
				phi1 = phi[z1];
				y1 ~ normal(phi1, 1);
			}}
		factor(f1[z2]);
		
		theta2 = theta[z2];
		z3 ~ bernoulli(theta2);
		phi2 = phi[z2];
		phi3 = phi[z3];
		y2 ~ normal(phi2, 1);
		y3 ~ normal(phi3, 1);
		
		gen(int z1){
			z1 ~ bernoulli(theta0);
			theta1 = theta[z1]
			z2 ~ bernoulli(theta1);
			phi1 = phi[z1];
			y1 ~ normal(phi1, 1);
		}
		theta1 = theta[z1];
		phi1 = phi[z1];
		
		theta3 = theta[z3];
		genz ~ bernoulli(theta3);
\end{lstlisting}
\end{multicols}
\vspace{-8pt}
\caption{Step-by-step elimination of $z_1$ in Program G.}
\label{fig:elim_z1}
\end{figure}

\begin{enumerate}
	\item \textit{Standard shredding:} $S \shred S_{D}, S_{M}, S_{Q}$.
	Firstly, we separate out the parts of the program that depend on discrete parameters \textit{generatively}. That is any part of the program that would be in generated quantities with respect to the original program. In our case, this includes the last two lines in $S$. This would also include the \kw{gen} parts of the transform program, that draw discrete parameters as generated quantities. 
	Thus, $S \shred S_{D}, S_{M}, S_{Q}$, where $S_{D}$ is empty,
	$S_{Q} = \left(\kw{theta3} = \kw{theta[z3];} \kw{genz} \sim \kw{bernoulli}(\kw{theta3})\right)$, and $S_M$ is the rest of the program (see \autoref{fig:elim_z1}, (1)).
	
	\item \textit{Conditional independence shredding:} $S_M \shred[\Gamma_M] S_1, S_2, S_3$.
	In the next step, we want to establish a conditional independence relationship $z_1 \mid A \bigCI \mathbf{y}, \phi_0, \theta_0, B$, where $z_1$ is some discrete parameter and  $A, B$ is a partitioning of the rest of the discrete parameters in the model: $\{z_2, z_3\}$. 
	We derive a new, $\typingspecial$ typing environment $\Gamma_M$, using $\Gamma \xrightarrow{z} \Gamma_M$:
	\begin{align*} 
		\Gamma_M = \{ & y_{1,2,3} : (\kw{real}, \lev{l1}), \phi : (\kw{real[2]}, \lev{l1}), \theta: (\kw{real[2]}, \lev{l1}), \\
		& z_{1} : (\kw{int<2>}, \lev{l2}), z_2 : (\kw{int<2>}, \lev{?}), z_3 : (\kw{int<2>}, \lev{?}) \\
		& \theta_{0} : (\kw{real}, \lev{l1}), \theta_{1,2} : (\kw{real}, \lev{?}), \phi_{1,2,3} : (\kw{real}, \lev{?})\}
	\end{align*}
	
	Here, we use the notation $\lev{?}$ for a type placeholder, which will be inferred using type inference.  
	
	The optimal $\Gamma_M$ under the type inference soft constraint $\lev{l3} \prec \lev{l1} \prec \lev{l2}$ such that $\Gamma_M \typingspecial S_M : \lev{l1}$ is such that the levels of $\theta_{1}$ and $\phi_{1}$ are \lev{l2}, $z_2$ is \lev{l1} and $\theta_2, \phi_2$ and $\phi_3$ are \lev{l3}.
	Shredding then gives us
	$S_M \shred[\Gamma_M] S_1, S_2, S_3 $, as in \autoref{fig:elim_z1}, (2).
	
	\item \textit{Combining based on \ref{Elim Gen}}.
	Having rearranged the program into suitable sub-statements, we use \ref{Elim Gen} to get Program G-1 (\autoref{fig:elim_z1}, (3)) and:
	\begin{align*}
		\Gamma' = \{ &y_{1,2,3} : (\kw{real}, \lev{data}), \phi : (\kw{real}, \lev{model}), \\
		& \theta_1 : (\kw{real}, \lev{genquant}), \theta_{0,2} : (\kw{real}, \lev{model}), \\
		& \phi_1 : (\kw{real}, \lev{genquant}), \phi_{2,3} : (\kw{real}, \lev{model}), \\
		& z_{1} : (\kw{int}, \lev{genquant}), z_{2,3} : (\kw{int<2>}, \lev{model}), \\
		& \theta_3 : (\kw{real}, \lev{genquant}), genz : (\kw{int<2>}, \lev{genquant})\}
	\end{align*}
	
\end{enumerate}

\paragraph{Eliminating $z_2$} 

We apply the same procedure to eliminate the next variable, $z_2$, from the updated Program G-1. 
The variable $z_1$ is no longer a \lev{model}-level parameter, thus the only neighbouring parameter of $z_2$ is $z_3$. Note also that the computation of the factor $f_1$ does not include any free discrete parameters (both $z_1$ and $z_2$ are local to the computation due to $\kw{elim}$ and $\phi$). Thus, we do not need to include the computation of this factor anywhere else in the program (it does not get nested into other computations).
We obtain a new program, Program G-2:
\begin{multicols}{2}
\centering
\textbf{Program G-2}
\begin{lstlisting}[numbers=left,numbersep=-3pt,numberstyle=\tiny\color{darkgray}]
		phi ~ beta(1, 1);
		theta ~ beta(1, 1);
		theta0 = theta[0];
		
		f1 = $\phi$([int<2> z2]){ elim(int<2> z1){
				z1 ~ bernoulli(theta0);
				theta1 = theta[z1];
				z2 ~ bernoulli(theta1);
				phi1 = phi[z1];
				y1 ~ normal(phi1, 1);
			}}			
		f2 = $\phi$([int<2> z3]){ elim(int<2> z2){
				factor(f1[z2]);
				theta2 = theta[z2];
				z3 ~ bernoulli(theta2);
				phi2 = phi[z2];
				y2 ~ normal(phi2, 1);
			}}
			
		factor(f2[z3]);
		phi3 = phi[z3];
		y3 ~ normal(phi3, 1);
	
		gen(int z2){
			factor(f1[z2]);
			theta2 = theta[z2];
			z3 ~ bernoulli(theta2);
			phi2 = phi[z2];
			y2 ~ normal(phi2, 1);
		}
		theta2 = theta[z2];
		phi2 = phi[z2];
				
		gen(int z1){
			z1 ~ bernoulli(theta0);
			theta1 = theta[z1];
			z2 ~ bernoulli(theta1);
			phi1 = phi[z1];
			y1 ~ normal(phi1, 1);
		}
		theta1 = theta[z1];
		phi1 = phi[z1];
		
		theta3 = theta0[z3];
		genz ~ bernoulli(theta3);
\end{lstlisting}
\vspace{-18pt}
\textcolor{white}{.}
\end{multicols}

\paragraph{Eliminating $z_3$}

Finally, we eliminate $z_3$, which is the only discrete \lev{model}-level parameter left in the program. Thus, $z_3$ has no neighbours and $f_3$ is of arity 0: it is a real number instead of a vector.  

The final program generated by our implementation is Program G-3:

\vspace{-6pt}
\begin{multicols}{2}
\centering
\textbf{Program G-3}
\begin{lstlisting}[numbers=left,numbersep=-3pt,numberstyle=\tiny\color{darkgray}]
		phi0 ~ beta(1, 1);
		theta0 ~ beta(1, 1);
		
		f1 = $\phi$(int<2> z2){	elim(int<2> z1){
				z1 ~ bernoulli(theta0);
				theta1 = theta[z1];
				z2 ~ bernoulli(theta1);
				phi1 = phi[z1];
				y1 ~ normal(phi1, 1);
			}}
			
		f2 = $\phi$(int<2> z3){elim(int<2> z2){
				factor(f1[z2]);
				theta2 = theta[z2];
				z3 ~ bernoulli(theta2);
				phi2 = phi[z2];
				y2 ~ normal(phi2, 1);
			}}
			
		f3 = $\phi$(){elim(int<2> z3){
				factor(f2[z3]);
				phi3 = phi[z3];
				y3 ~ normal(phi3, 1);
			}}
		
		factor(f3);	
		gen(int z3){
			factor(f2[z3]);
			phi3 = phi[z3];;
			y3 ~ normal(phi3, 1);
		}
		phi3 = phi[z3];
		gen(int z2){
			factor(f1[z2]);
			theta2 = theta[z2];;
			z3 ~ bernoulli(theta2);
			phi2 = phi[z2];;
			y2 ~ normal(phi2, 1);
		}
		theta2 = theta[z2];
		phi2 = phi[z2];
		gen(int z1){
			z1 ~ bernoulli(theta0);
			theta1 = theta[z1];
			z2 ~ bernoulli(theta1);
			phi1 = phi[z1];
			y1 ~ normal(phi1, 1);
		}
		theta1 = theta[z1];
		phi1 = phi[z1];
		
		gen3 = theta[z3];
		genz ~ bernoulli(theta3);
\end{lstlisting}
\end{multicols}\vspace{-8pt}

\subsection{Relating to Variable Elimination and Complexity Analysis} \label{ssec:ve_relation}

Assume $\data, \params_d$, and $\params_c$ are the data, discrete model-level parameters, and continuous model-level parameters, respectively.
As $S_2$ is a single-level statement of level $\lev{l2}$, the density semantics of $S_2$ is of the form $\psi(\mathbf{x}_{\lev{l1}}, \mathbf{x}_{\lev{l2}}) = \psi(\data, \params_c, \params_{d, \lev{l1}}, z)$ (\autoref{lem:single-lev-prop2}). 

As $\kw{elim}(\kw{int}\langle K \rangle z)~\_$ binds the variable $z$
and  $\phi(\Gamma_{\mathrm{ne}}) \{\_\}$ 
binds the variables in  $\dom(\Gamma_{\mathrm{ne}})$,
the expression $\phi(\Gamma_{\mathrm{ne}}) \{\kw{elim}(\kw{int}\langle K \rangle z)~S_2$ depends only on continuous parameters and data, and it contains no free mentions of any discrete variables. 
This means that the expression will be of level $\lev{l1}$ and shredded into $S_1$ during the marginalisation of any subsequent discrete variable $z'$. The substatement $S_2$ will always be some sub-statement of the original program (prior to any transformations), up to potentially several constant factors of the form $\kw{factor}(f[\dom(\Gamma_{\mathrm{ne}})])$.

This observation makes it easy to reason about how repeated application of the \ref{Elim Gen} transform changes the complexity of the program. If the complexity of a SlicStan program with $N$ discrete parameters of support $1, \dots, K$, is $\mathcal{O}(S)$, then the complexity of a program where we naively marginalised out the discrete variables (Program E) will be $\mathcal{O}(S \times K^N)$.
In contrast, transforming with \ref{Elim Gen} gives us a program of complexity at most $\mathcal{O}(N \times S \times K^{M+1})$
where $M$ is the largest number of direct neighbours in the factor graph induced by the program. Further, the complexity could be smaller depending on the elimination ordering of choice. 
This result is not surprising, as we conjecture that repeated application of \ref{Elim Gen} is equivalent to variable elimination (though we do not formally prove this equivalence), which is of the same complexity. 

It is clear from this complexity observation that VE is not always efficient. When the dependency graph is dense, $M$ will be close to $N$, thus inference will be infeasible for large $N$. Fortunately, in many practical cases (such as those discussed in \autoref{sec:examples}), this graph is sparse ($M \ll N$) and our approach is suitable and efficient.
We note that this is a general limitation of exact inference of discrete parameters, and it is not a limitation of our approach in particular. 

\subsection{Semantic Preservation of the Discrete Variable Transformation}

The result we are interested in is the semantic preservation of the transformation rule $\xrightarrow{z}$.

\begin{theorem}[Semantic preservation of $\xrightarrow{z}$] \label{th:sempreservation}$ $ \\
	For SlicStan programs $\Gamma, S$ and $\Gamma', S'$, and a discrete parameter $z$: 
	$\Gamma, S \xrightarrow{z} \Gamma', S'$ implies $\sem{S} = \sem{S'}$.	
\end{theorem}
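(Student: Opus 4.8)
The plan is to peel the transformation apart using the two shredding lemmas and the compositionality of the semantics, thereby reducing the global claim $\sem{S} = \sem{S'}$ to a purely local identity about the fragment of $S_M$ that touches $z$. First I would invoke semantic preservation of standard shredding (Lemma~\ref{lem:shred}) to rewrite $\sem{S}$ as $\sem{S_D; S_M; S_Q}$, and semantic preservation of $\typingspecial$-shredding (Lemma~\ref{lem:shred2}) to rewrite $\sem{S_M}$ as $\sem{S_1; S_2; S_3}$. Since $S' = S_D; S_M'; S_Q$ leaves $S_D$, $S_Q$, and the prefix $S_1$ of $S_M'$ untouched, compositionality (Lemma~\ref{lem:sem_properties}) cancels the common context, so it suffices to prove
$$\sem{S_2; S_3} = \sem{\,f = \phi(\Gamma_{\mathrm{ne}})\{\kw{elim}(\kw{int}\langle K\rangle z)\,S_2\};\ \kw{factor}(f[\dom(\Gamma_{\mathrm{ne}})]);\ S_3;\ \kw{gen}(z)\,S_2;\ \store(S_2)\,}$$
as functions on conforming states.

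Next I would record the structural facts that make the reordering legitimate. By Lemma~\ref{lem:shredisleveled2}, $S_2$ is single-level $\lev{l2}$ and $S_3$ single-level $\lev{l3}$, so Lemma~\ref{lem:single-lev-prop2} gives $\semp{S_2}$ the form $\psi(\sigma_{\lev{l1}}, \sigma_{\lev{l2}})(\mathbf{x}_{\lev{l1}}, \mathbf{x}_{\lev{l2}})$ with $\mathbf{x}_{\lev{l2}} = \{z\}$ and the only discrete components of $\mathbf{x}_{\lev{l1}}$ being the neighbours $\Gamma_{\mathrm{ne}}$, while $\sems{S_2}$ writes only $\sigma_{\lev{l2}}$; dually $\semp{S_3}$ and $\sems{S_3}$ read and write only the $\lev{l1},\lev{l3}$ portions of the store. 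This noninterference (Lemma~\ref{lem:noninterf2-reform}) is exactly what licenses moving $S_3$ across the $z$-fragment: $S_3$ never reads what $S_2$ writes, so $\semp{S_3}(\sems{S_2}(\sigma)(\mathbf{x}))(\mathbf{x}) = \semp{S_3}(\sigma)(\mathbf{x})$, and on the state side the two orderings agree because $S_3$ and $\store(S_2)$ update disjoint store regions.

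The heart of the argument is the density computation for the transformed fragment, which I would carry out using the derived-form lemmas (Lemma~\ref{lem:s2} for $\kw{elim}$ and $\phi$, Lemma~\ref{lem:sgen} for $\kw{gen}$). The assignment to $f$ and the statement $\store(S_2)$ each contribute weight $1$; $\kw{factor}(f[\dom(\Gamma_{\mathrm{ne}})])$ contributes $\sum_{z=1}^{K}\psi(\cdot, z)$ evaluated at the current neighbour values; $S_3$ contributes $\semp{S_3}$; and $\kw{gen}(z)\,S_2$ contributes the categorical mass $\psi(\cdot, z)/\sum_{z'=1}^{K}\psi(\cdot, z')$ at the given value of $z$. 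Multiplying, the marginal $\sum_{z'}\psi$ cancels the categorical denominator, leaving $\psi(\cdot, z)\cdot\semp{S_3} = \semp{S_2}(\sigma)(\mathbf{x})\cdot\semp{S_3}(\sigma)(\mathbf{x})$, which by Lemma~\ref{lem:sem_properties} and the noninterference rewrite above is exactly $\semp{S_2; S_3}(\sigma)(\mathbf{x})$.

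For the state component, $S_3$ sets the $\lev{l3}$ deterministic variables identically in both programs (it receives the same $\lev{l1},\lev{l3}$ inputs), and $\store(S_2)$ reproduces precisely the $\lev{l2}$ deterministic writes of $S_2$, since by Definition~\ref{def:store} it deletes only the $\kw{factor}$ and $\sim$ statements and so $\sems{\store(S_2)} = \sems{S_2}$; the statement $\kw{gen}(z)\,S_2$, being a sample statement, leaves the store unchanged by \ref{Eval Sample}. Thus the final stores agree except for the fresh auxiliary $f$, which I would discharge by taking $f \notin \dom(\Gamma)$ and noting that no statement reads $f$, so it is inert on the original variables. The main obstacle I anticipate is precisely this exact cancellation together with the justification that $f[\dom(\Gamma_{\mathrm{ne}})]$ computes the correct marginal: I must use that the Markov-blanket construction (via $\Gamma \xrightarrow{z} \Gamma_M$ and $\mathrm{ne}$) forces $\psi$ to depend on the discrete parameters only through $z$ and $\Gamma_{\mathrm{ne}}$, so that the $\phi$-array indexed by the actual neighbour values equals $\sum_{z}\psi$, and that this partial sum, being constant in $z$, commutes out of the categorical normalisation. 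Reassembling with Lemma~\ref{lem:sem_properties} then yields $\sem{S_M} = \sem{S_M'}$ and hence $\sem{S} = \sem{S'}$.
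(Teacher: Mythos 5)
Your proposal is correct, and it reaches the result by a genuinely different route from the paper's own proof. Both arguments open identically (Lemma~\ref{lem:shred}, Lemma~\ref{lem:shred2}, compositionality via Lemma~\ref{lem:sem_properties}) and both rely on the single-level properties (Lemma~\ref{lem:single-lev-prop2}) and the derived-form lemmas (Lemma~\ref{lem:s2}, Lemma~\ref{lem:sgen}); the difference lies in how the density identity is closed. The paper tracks the \emph{entire} program through the diagram of \autoref{fig:proof} and then argues at the level of normalised probability densities: it uses \autoref{th:shred_gen} to identify $\semp{S_Q}$ with $p(Q \mid \data, \params)$ and the factors $\phi_i$ with the joint, marginalises over $z$, and --- crucially --- invokes the conditional-independence theorem (\autoref{th:ci}) to replace $p(z \mid \data, \params_1)$ by $p(z \mid \data, \params_1, \params_3)$ so that the chain rule can reassemble $p(\data, \params, Q)$. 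You instead cancel the common context $S_D; S_1; \cdot\,; S_Q$ and prove a purely local identity for the middle fragment in terms of \emph{unnormalised} weights: the $\kw{elim}$/$\phi$ factor contributes $\sum_{z'} \psi(\cdot, z')$, the $\kw{gen}$ statement contributes $\psi(\cdot, z) / \sum_{z'} \psi(\cdot, z')$, and these cancel exactly, leaving $\semp{S_2} \cdot \semp{S_3}$; noninterference and the single-level lemmas are all that is needed to check that the two occurrences of $\sum_{z'} \psi$ are evaluated at the same arguments and that the store updates of $S_3$ and $\store(S_2)$ commute. Your route is more elementary and self-contained --- it needs neither \autoref{th:ci} nor \autoref{th:shred_gen} as black boxes, since their content is implicit in the single-level structure of $S_2$ and $S_3$ --- whereas the paper's version makes explicit that the conditional-independence theorem is the conceptual engine of the transformation, which is the connection the paper wants to showcase. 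One small remark: the Markov-blanket property you flag as your ``main obstacle'' is not actually needed for the pointwise identity (Lemma~\ref{lem:s2} holds regardless of which variables $\psi$ mentions, because the $\phi$-array is indexed back at the ambient neighbour values); it matters for well-typedness of $S'$ and for the complexity claims, not for semantic preservation. Both proofs also share the same slight informality of asserting $\sem{S} = \sem{S'}$ only up to the fresh auxiliary variable $f$.
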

\begin{proof}
Note that shredding preserves semantics with respect to both $\vdash$ and $\typingspecial$ (\autoref{lem:shred} and \ref{lem:shred2}), 
examine the meaning of derived forms (\autoref{lem:s2} and~\ref{lem:sgen}),
note properties of single-level statements (\autoref{lem:single-lev-prop2}),
and apply the results on factorisation of shredding (\autoref{th:shred_gen}) and conditional independence (\autoref{th:ci}).
We present the full proof in \autoref{ap:proofs}.
\end{proof}

In addition, we also show that it is always possible to find a program derivable with \ref{Elim Gen}, such that a \lev{model}-level variable $z$ is transformed to a \lev{genquant}-level variable.  

\begin{lemma}[Existence of \lev{model} to \lev{genquant} transformation] \label{lem:exists}
	For any SlicStan program $\Gamma, S$ such that $\Gamma \vdash S : \lev{l1}$, 
	and a variable $z \in \dom(\Gamma)$ such that $\Gamma(z) = (\kw{int}\langle K \rangle, \lev{model})$,
	there exists a SlicStan program $\Gamma', S'$, such that:
	$$\Gamma, S \xrightarrow{z} \Gamma', S' 
	\quad \text{and} \quad 
	\Gamma'(z) = (\kw{int}\langle K \rangle, \lev{genquant})$$
\end{lemma}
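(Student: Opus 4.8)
The plan is to prove the lemma by showing that every premise of the \ref{Elim Gen} rule can be discharged for an arbitrary $\lev{model}$-level discrete parameter $z$, and that the resulting environment $\Gamma'$ assigns $z$ level $\lev{genquant}$. Reading off the rule, the premises $S \shred S_D, S_M, S_Q$, $S_M \shred[\Gamma_M] S_1, S_2, S_3$, $\Gamma_{\mathrm{ne}} = \mathrm{ne}(\Gamma, \Gamma_M, z)$, and the two equations defining $S_M'$ and $S'$ are either total functions of their inputs or mere definitions, so they hold automatically once the typing environments exist. Likewise $\Gamma(z) = (\kw{int}\langle K\rangle, \lev{model})$ and $\Gamma \vdash S : \lev{data}$ are hypotheses (reading the stated hypothesis $\Gamma \vdash S : \lev{l1}$ as $\Gamma \vdash S : \lev{data}$, the level expected by \ref{Elim Gen}). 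Thus the real content is the existence of two typing environments: a $\typingspecial$-environment $\Gamma_M$ with $\Gamma \xrightarrow{z} \Gamma_M$ and $\Gamma_M \typingspecial S_M : \lev{l1}$, and a $\vdash$-environment $\Gamma'$ with $\Gamma' \vdash S' : \lev{data}$ and $\Gamma'(z) = (\kw{int}\langle K\rangle, \lev{genquant})$.

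For $\Gamma_M$, I would give an explicit construction that deliberately uses only the two levels $\lev{l1}$ and $\lev{l2}$. Following \autoref{def:gammas} I set every $\lev{data}$ variable and every continuous ($\kw{real}$/$\kw{real}[]\cdots[]$) $\lev{model}$ parameter to $\lev{l1}$ and $z$ to $\lev{l2}$; every remaining discrete parameter I set to $\lev{l1}$ (permitted, since only $z$ need be the $\lev{l2}$ parameter); and every deterministically assigned variable of $S_M$ I set to $\lev{l2}$ exactly when its defining expression transitively reads $z$, and to $\lev{l1}$ otherwise. The key observation is that $\lev{l1} \le \lev{l2}$ is a chain, so the incomparability of $\lev{l2}$ and $\lev{l3}$ — the only way \ref{PrimCall2} (and hence \ref{Sample2}, \ref{Factor2}) can fail to apply — never arises. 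On this fragment $\typingspecial$ collapses to a standard two-point information-flow system, and $\Gamma_M \typingspecial S_M : \lev{l1}$ reduces to checking that information flows only upward, which holds by the dependency-driven choice of levels, together with the $\shreddable$ side-condition of \ref{Seq2}. The latter I would inherit from the $\vdash$-well-typedness of $S$: since $S_M$ is single-level at $\lev{model}$ (\autoref{lem:shredisleveled}) and our assignment refines $\lev{model}$ into $\lev{l1} \le \lev{l2}$ monotonically in the $z$-dependency, any would-be $\shreddable$ violation under $\Gamma_M$ pulls back to one under $\Gamma$.

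For $\Gamma'$, I would build it from $\Gamma$ by leaving the variables of $S_D$ and $S_Q$ untouched, promoting $z$ from $\lev{model}$ to $\lev{genquant}$, adding the fresh factor variable $f$ at level $\lev{model}$ (its defining expression $\phi(\Gamma_{\mathrm{ne}})\{\kw{elim}(\kw{int}\langle K\rangle z)\,S_2\}$ binds both $z$ and the neighbours in $\dom(\Gamma_{\mathrm{ne}})$, so it has no free discrete reads and depends only on data and continuous parameters), and promoting to $\lev{genquant}$ exactly those deterministic variables recomputed in $\store(S_2)$ after $z$ is resampled. I would then verify $\Gamma' \vdash S' : \lev{data}$ piece by piece along $S' = S_D; S_M'; S_Q$: the leading fragment $S_1; f = \ldots; \kw{factor}(f[\dom(\Gamma_{\mathrm{ne}})]); S_3$ stays at $\lev{model}$ (or below), while the trailing $\kw{gen}(z)\,S_2; \store(S_2)$ is the generative tail. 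The statement $\kw{gen}(z)\,S_2$ unfolds to $z \sim \kw{categorical}([\kw{target}(S_2)\mid z~\kw{in}~1:K])$, whose parameter reads only data and $\lev{model}$ variables; by \ref{Sample} with $\Gamma'(z) = (\_, \lev{genquant})$ we get statement level $\lev{genquant} \sqcup \lev{model} = \lev{genquant}$, so $z$ can indeed sit at $\lev{genquant}$, as required by the conclusion; and the variables recomputed by $\store(S_2)$, now reading the $\lev{genquant}$ value of $z$, are well-typed once promoted to $\lev{genquant}$.

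The main obstacle I anticipate is this final verification $\Gamma' \vdash S' : \lev{data}$ — specifically the side-conditions of \ref{Seq}. I must check $\shreddable$ (no $\lev{model}$-level read in the head is clobbered by a lower-level write later) and, more delicately, $\generative$: the rule forbids a variable being $\sim$-ed and then reassigned or re-$\sim$-ed at a non-$\lev{model}$ level. Here I need that $z$ is sampled exactly once in the output — the original $z \sim \ldots$ inside $S_2$ is bound by the comprehension variable of $\kw{elim}$ and of $\kw{target}$ inside $\kw{gen}$, so it is local, and the only genuine sampling of $z$ is the single $\lev{genquant}$ categorical draw — and that the deterministic recomputations in $\store(S_2)$ (which contains no $\sim$- or $\kw{factor}$-statements, by \autoref{def:store}) introduce no density contributions and so cannot break $\generative$. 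Establishing these freshness and single-assignment facts about the derived forms, and confirming they compose into a $\generative$-respecting and $\shreddable$-respecting sequence, is where the bulk of the careful work lies.
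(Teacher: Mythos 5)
Your overall route is the same as the paper's: discharge the premises of \ref{Elim Gen}, build $\Gamma'$ from $\Gamma$ by promoting to \lev{genquant} exactly the \lev{model}-variables that are \lev{l2} in $\Gamma_M$ (plus the fresh $f$), and verify $\Gamma' \vdash S' : \lev{data}$ piecewise along $S_D; S_M'; S_Q$ using the typing rules for the derived forms. Where you go beyond the paper is in trying to \emph{construct} $\Gamma_M$ explicitly---the paper's proof simply takes an environment with $\Gamma \xrightarrow{z} \Gamma_M$ and $\Gamma_M \typingspecial S_M : \lev{l1}$ as given---and this is exactly where your argument has a genuine gap.

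Your construction assigns a deterministic variable of $S_M$ level \lev{l2} exactly when its defining expression transitively reads $z$, and \lev{l1} otherwise, justified by the claim that any $\shreddable$ violation under $\Gamma_M$ pulls back to one under $\Gamma$. Both parts fail. Take $\Gamma = \{y \mapsto (\kw{real},\lev{model}),\; z \mapsto (\kw{int}\langle 2\rangle,\lev{model})\}$ and $S = S_M = (y = 0.5;\; z \sim \kw{bern}(y);\; y = 0.9)$. This is $\vdash$-well-typed: everything sits at \lev{model}, so the $\shreddable$ and $\generative$ side conditions are vacuous. Your $\Gamma_M$ puts $y$ at \lev{l1}, since neither $0.5$ nor $0.9$ reads $z$. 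But then for the split $(z \sim \kw{bern}(y)); (y = 0.9)$ we get $\readset_{\Gamma_M \vdash \lev{l2}}(z \sim \kw{bern}(y)) = \{z,y\}$ and $\assset_{\Gamma_M \vdash \lev{l1}}(y = 0.9) = \{y\}$ with $\lev{l1} < \lev{l2}$, so the $\shreddable$ condition of \ref{Seq2} fails and $S_M$ does not typecheck under your $\Gamma_M$ at \emph{any} level (the side condition does not depend on the level chosen in the derivation, so subsumption cannot rescue it). Nor does the violation pull back to $\Gamma$: under $\Gamma$ both sets live at the single level \lev{model}, and $\shreddable$ only constrains strictly comparable pairs of levels. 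Refining \lev{model} into $\lev{l1} \leq \lev{l2}$ \emph{creates} new obligations with no counterpart under $\Gamma$, so your monotonicity argument runs in the wrong direction.

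The gap is repairable, and the repair explains why the paper can afford to leave $\Gamma_M$ implicit: assign \emph{every} deterministic variable of $S_M$ (not only the $z$-dependent ones) to \lev{l2}, keep $z$ at \lev{l2}, and put all other parameters and all data at \lev{l1}; this is permitted by \autoref{def:gammas}, since its constraint that $z$ be the only \lev{l2} entry concerns parameters, not deterministically assigned variables. Then every assignment in $S_M$ targets an \lev{l2} variable, and since no level lies strictly above \lev{l2} in the semi-lattice, $\shreddable$ holds vacuously for every sequential split; each atomic statement types at \lev{l2} (reads at \lev{l1} are lifted by subsumption), and subsumption then gives $\Gamma_M \typingspecial S_M : \lev{l1}$. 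With that $\Gamma_M$ in hand, the rest of your argument---the definition of $\Gamma'$, the piecewise typing of $S'$, and your (correct) observations that $\kw{gen}(z)\,S_2$ contains the unique top-level $\sim$ on $z$ while $\store(S_2)$ contributes no density---goes through and coincides with the paper's proof.
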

\begin{proof}
By inspecting the level types of variables in each part of a program derivable using \ref{Elim Gen}. We include the full proof in \autoref{ap:proofs}.
\end{proof}

The practical usefulness of~\autoref{th:sempreservation} stems from the fact that it allows us to separate inference for discrete and continuous parameters. After applying \ref{Elim Gen} to each discrete $\lev{model}$-level parameter, we are left with a program that only has $\lev{genquant}$-level discrete parameters (\autoref{lem:exists}). We can then slice the program into $S_{\textsc{hmc}}$ and $S_{\textsc{gen}}$ and infer continuous parameters by using HMC (or other algorithms) on $S_{\textsc{hmc}}$ and, next, 
draw the discrete parameters using ancestral sampling by running forward $S_{\textsc{gen}}$.
\autoref{th:sempreservation} tells us that this is a correct inference strategy.

When used in the context of a model with only discrete parameters, our approach corresponds to exact inference through VE. In the presence of discrete and continuous parameters, our transformation gives an analytical sub-solution for the discrete parameters in the model.

A limitation of our method is that, due to its density-based nature, it can only be applied to models of fixed size. It cannot, in its current form, support models where the number of random variables changes during inference, such as Dirichlet Processes. However, this is a typical constraint adopted in Bayesian inference for efficiency. 
Another limitation is that discrete variables need to have finite (and fixed) support. For example, the method cannot be applied to transform a Poisson-distributed variable.
In some but not all applications, truncating unbounded discrete parameters at a realistic upper bound would suffice to make the method applicable.

An advantage of our method is that it can be combined with any inference algorithm that requires a function proportional to the joint density of variables. This includes gradient-based algorithms, such as HMC and variational inference, but it can also be used with methods that allow for (e.g. unbounded) discrete variables as an analytical sub-solution that can optimise inference. For example, consider a Poisson variable $n \sim \mathrm{Poisson}(\lambda)$ and a Binomial variable $k \sim \mathrm{Binomial}(n, p)$. While $n$ is of infinite support, and we cannot directly sum over all of its possible values, analytically marginalising out $n$ gives us $k \sim \mathrm{Poisson}(\lambda p)$. Future work can utilise such analytical results in place of explicit summation where possible.


\subsection{Scope and limitations of \ref{Elim Gen}}

\begin{figure}[!h]
\begin{multicols}{3}
\centering
\textbf{Program H}
\begin{lstlisting}
data int K;
real[K][K] phi;
real[K] mu;
int<K> z1 ~ 
			categorical(phi[0]);
int<K> z2;
int<K> z3;

if (z1 > K/2) {
	z2 ~ categorical(phi[z1]);
	z3 ~ categorical(phi[z2]);
}
else {
	z2 ~ categorical(phi[z1]);
	z3 ~ categorical(phi[z1]);    
}

data real y1 ~ 
			normal(mu[z1],1);
data real y2 ~ 
			normal(mu[z2],1);
data real y3 ~ 
			normal(mu[z3],1);    
\end{lstlisting}
\vspace{40pt}

\centering
\textbf{Program H-A: Optimal transformation}

$\dots$
\begin{lstlisting}
factor(elim(int<2> z1) {
 z1 ~ categorical(phi[0]);
 y1 ~ normal(mu[z1], 1));
 
 if (z1 > K/2) {
  elim(int<2> z2) {
   elim(int<2> z3) {
    z2 ~ categorical(phi[z1]);
    z3 ~ categorical(phi[z2]);
    y2 ~ normal(mu[z2],1);
    y3 ~ normal(mu[z3],1);
 }}} 
 
 else {
  elim(int<2> z2) {
   z2 ~ categorical(phi[z1]));
   y2 ~ normal(mu[z2],1);
  }
  elim(int<2> z3) {
   z3 ~ categorical(phi[z1]));
   y3 ~ normal(mu[z3],1);
}}});
\end{lstlisting}
\vspace{30pt}

\centering
\textbf{Program H-B: Our transformation}

$\dots$
\begin{lstlisting}
f1 = $\phi$(int<2> z2, int<2> z3){
	elim(int<2> z1){
	 z1 ~ categorical(phi[0]);
	 if(z1 > K/2){
	   z2 ~ categorical(phi[z1]);
	   z3 ~ categorical(phi[z2]);
	 }
	 else{
	   z2 ~ categorical(phi[z1]);
	   z3 ~ categorical(phi[z1]);
	 }
	 y1 ~ normal(mu[z1], 1);
	 y2 ~ normal(mu[z2], 1);
	 y3 ~ normal(mu[z3], 1);
}}
f2 = $\phi$(int<2> z3){
		elim(int<2>z2) 
			factor(f1[z2,z3]);}
f3 = $\phi$(){
		elim(int<2> z3) 
			factor(f2[z3]);}
factor(f3);
\end{lstlisting}
\end{multicols}
\caption{A program with different conditional dependencies depending on control flow.}
\end{figure}

Previously, we discussed the scope of the conditional independence result of the paper (\autoref{ssec:scope}). Similarly, here we demonstrate with an example, a situation where our approach of eliminating variables one-by-one using \ref{Elim Gen} is not optimal.

Consider the simple control-flow Program H below. In this example $z_2$ and $z_3$ are \textit{not} conditionally independent given $z_1 = 1$, but they are conditionally independent given $z_1 > K/2$. 
This independence is also referred to as context-specific independence \cite{Gates, CSI}.
We can use different elimination strategy depending on which \lstinline{if}-branch of the program we find ourselves. Program H-A demonstrates this: its complexity is $\mathcal{O}(\frac{K}{2} \times K^2 + \frac{K}{2} \times 2 \times K) = \mathcal{O}(\frac{1}{2}K^3 + K^2)$.

The typing relation $\typingspecial$ can only detect overall (in)dependencies, where sets of variables are conditionally independent given some $X$, regardless of what value $X$ takes. 
Thus, our static analysis is not 
able to detect that $z_2 \bigCI z_3 \mid z_1 = 0$. This results in Program H-B, which has complexity $\mathcal{O}(K^3 + K^2 + K)$: the same complexity as the optimal Program H-A, but with a bigger constant. 

Even if we extend our approach to detect that $z_2$ and $z_3$ are independent in one branch, it is unclear how to incorporate this new information. Our strategy is based on computing intermediate factors that allow re-using already computed information: eliminating $z_1$ requires computing a new factor $f_1$ that no longer depends on $z_1$. We represented $f_1$ with a multidimensional array indexed by $z_2$ and $z_3$, and we need to define each element of that array, thus we cannot decouple them for particular values of $z_1$. 

Runtime systems that compute intermediate factors in a similar way, such as Pyro \cite{Pyro}, face that same limitation. Birch \cite{Birch}, on the other hand, will be able to detect the conditional independence in the case $z_1 > K/2$, but it will not marginalise $z_1$, as it cannot (analytically) marginalise over branches. Instead, it uses Sequential Monte Carlo (SMC) to repeatedly sample $z_1$ and proceed according to its value.

\section{Implementation and empirical evaluation} \label{sec:examples}

The transformation we introduce can be useful for variety of models, and it can be adapted to PPLs to increase efficiency of inference and usability. 
Most notably, it can be used to extend Stan to allow for direct treatment of discrete variables, where previously that was not possible. 

In this section, we present a brief overview of such a discrete parameter extension for SlicStan 
(\autoref{ssec:impl}). 
To evaluate the practicality of \ref{Elim Gen}, we build a partial NumPyro \cite{NumPyro} backend for SlicStan, and compare our static approach to variable elimination for discrete parameters to the dynamic approach of NumPyro (\autoref{ssec:eval}). 
We find that our static transformation strategy speeds up inference compared to the dynamic approach, but that for models with a large number of discrete parameters performance gains could be diminished by the exponentially growing compilation time (\autoref{ssec:results}).

In addition to demonstrating the practicality of our contribution through empirical evaluation, we also discuss the usefulness of our contribution through examples, in \autoref{ap:examples}.

\subsection{Implementation} \label{ssec:impl}

We update the original SlicStan\footnote{Available at \url{https://github.com/mgorinova/SlicStan}.} according to the modification described in \autoref{sec:background}, and extend it to support automatic variable elimination through the scheme outlined in \autoref{sec:application}. As with the first version of SlicStan, the transformation produces a new SlicStan program that is then translated to Stan.

The variable elimination transformation procedure works by applying \ref{Elim Gen} iteratively, once for each discrete variable, as we show in \autoref{sssec:elimgen_example}.
The level types \lev{l1}, \lev{l2} and \lev{l3} are not exposed to the user, and are inferred automatically. Using bidirectional type-checking, we are able to synthesise a set of hard constraints that the levels must satisfy. These hard constraints will typically be satisfied by more than one assignment of variables to levels. We search for the optimal types with respect to the soft constraints $\lev{l3} \prec \lev{l1} \prec \lev{l2}$, using the theorem prover Z3 \cite{Z3}.

\subsection{Empirical evaluation} \label{ssec:eval}


To evaluate the practicality of our approach, we compare to the prior work most closely related to ours: that of \citet{PyroDiscrete}, who implement efficient variable-elimination for plated factor graphs in Pyro \cite{Pyro}. Their approach uses effect-handlers and dynamically marginalises discrete variables, so that gradient-based inference schemes can be used for the continuous parameters. This VE strategy has also been implemented in NumPyro \cite{NumPyro}.

As both ours and Pyro's strategies correspond to VE, we do not expect to see differences in complexity of the resulting programs.
However, as in our case the VE algorithm is determined and set up at compile time, while in the case of Pyro/NumPyro, this is done at run time.
The main question we aim to address is whether setting up the variable elimination logistics at compile time results in a practical runtime speed-up.

To allow for this comparison, we built a partial NumPyro backend for SlicStan. For each model we choose, we compare the runtime performance of three NumPyro programs:
\begin{enumerate}
	\item The NumPyro program obtained by translating a SlicStan program with discrete parameters to NumPyro directly (labelled `NumPyro'). This is the baseline: we leave the discrete parameter elimination to NumPyro. 
	\item The NumPyro program obtained by translating a transformed SlicStan program, where all discrete parameters have been eliminated according to \ref{Elim Gen} (labelled `SlicStan'). The variable elimination set-up is done at compile time; NumPyro does not do any marginalisation.
	\item A hand-optimised NumPyro program, which uses the \lstinline{plate} and \lstinline{markov} program constructs to specify some of the conditional independencies in the program (labelled `NumPyro-Opt'). 
\end{enumerate}

In each case, we measure the time (in seconds) for sampling a single chain consisting of $2 500$ warm-up samples, and $10 000$ samples using NUTS \cite{NUTS}.

In addition, we report three compilation times: 
\begin{enumerate}
	\item The compilation time of the NumPyro program obtained by translating a SlicStan program with discrete parameters to NumPyro directly (labelled `NumPyro').
	\item The compilation time of the NumPyro program obtained by translating a transformed SlicStan program, where all discrete parameters have been eliminated (labelled `SlicStan').
	\item The time taken for the original SlicStan program to be transformed using \ref{Elim Gen} and translated to NumPyro code (labelled `SlicStan-to-NumPyro').
\end{enumerate}
  
We consider different numbers of discrete parameters for each model, up to $25$ discrete parameters. 
We do not consider more than $25$ parameters due to constraints of the NumPyro baseline, which we discuss in more detail in \autoref{ssec:results}.
We run experiments on two classes of model often seen in practice: hidden Markov models (\autoref{sssec:hmms_eval}) and mixture models (\autoref{sssec:mixture_eval}).
To ensure a fair comparison, the same elimination ordering was used across experiments. 
Experiments were run on a dual-core 2.30GHz Intel Xeon CPU and a Tesla T4 GPU (when applicable).
All SlicStan models used in the experiments are available at the SlicStan repo.

\subsubsection{Hidden Markov models} \label{sssec:hmms_eval}
We showed several examples of simple HMMs throughout the paper (Program A, Program D, Program G) and worked through a complete example of VE in an HMM (\ref{sssec:elimgen_example}).
We evaluate our approach on both the simple first-order HMM seen previously, and on two additional ones: second-order HMM and factorial HMM. 

\paragraph{First-order HMM}

The first-order HMM is a simple chain of $N$ discrete variables, each taking a value from 1 to $K$ according to a categorical distribution. The event probabilities for the distribution of $z_n$ are given by $\boldsymbol{\theta}_{z_{n-1}}$, where $\boldsymbol{\theta}$ is some given $K \times K$ matrix. Each data point $\mathbf{y}$ is modelled as coming from a Gaussian distribution with mean $\mu_{z_n}$ and standard deviation $1$, where $\boldsymbol{\mu}$ is a $K-$dimensional continuous parameter of the model.
%
\begin{gather*}
	\mu_k \sim \mathcal{N}(0, 1) \quad\text{for } k \in 1, \dots, K \\
	z_1 \sim \mathrm{categorical}(\boldsymbol{\theta}_{1}) \\
	z_n \sim \mathrm{categorical}(\boldsymbol{\theta}_{z_{n-1}}) \quad\text{for } n \in 2, \dots, N \\
	y_n \sim \mathcal{N}(\mu_{z_{n}}, 1) \quad\text{for } n \in 1, \dots, N \\
\end{gather*}

\vspace{-6pt}
We measure the compilation time and the time taken to sample $1$ chain with each of the 3 NumPyro programs corresponding to this model. We use $K = 3$ and different values for $N$, ranging from $N=3$ to $N=25$. 
\autoref{fig:hmm-1} shows a summary of the results. We see that both on CPU and GPU, the program transformed using SlicStan outperforms the automatically generated NumPyro and also the manually optimised NumPyro-Opt. 
Each of the three programs has compilation time exponentially increasing with the number of variables, however SlicStan's compilation time increases the fastest. We discuss this drawback in more detail in \autoref{ssec:results}, highlighting the importance of an extended loop-level analysis being considered in future work. 

\begin{figure}[!p]
	\begin{subfigure}{0.32\textwidth}
	\includegraphics[width=\textwidth]{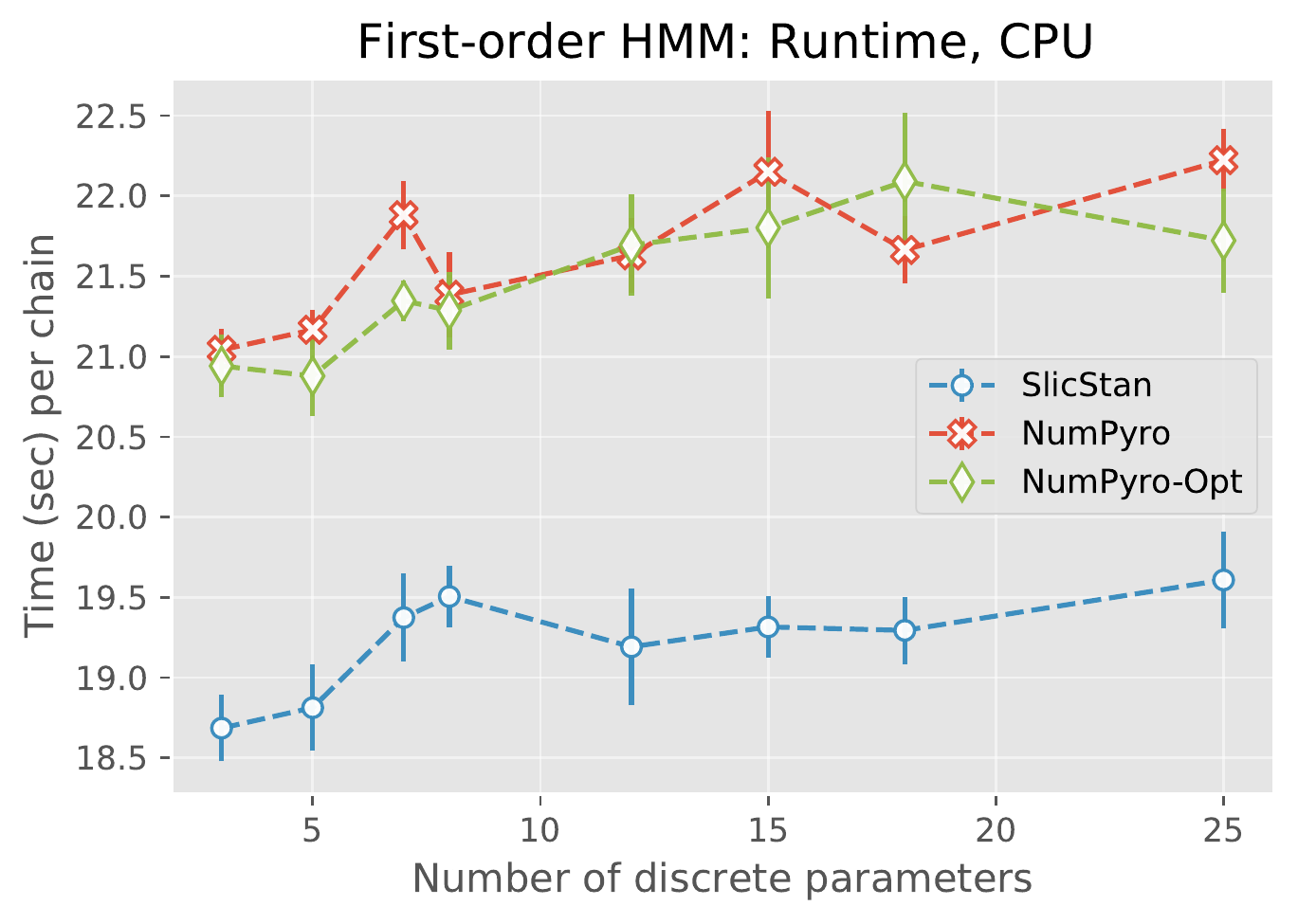}
	\end{subfigure}
	\begin{subfigure}{0.32\textwidth}
		\includegraphics[width=\textwidth]{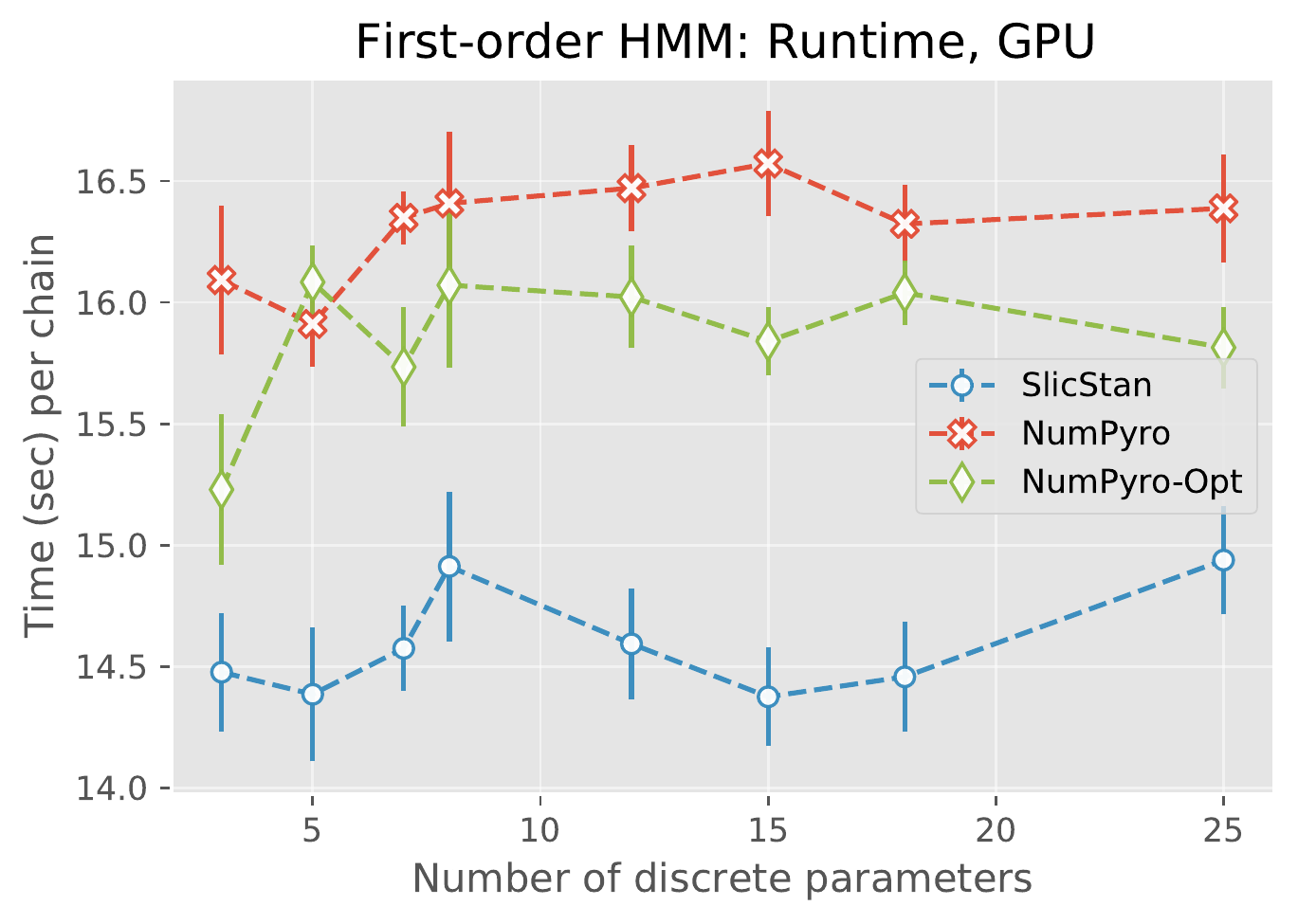}
	\end{subfigure}
	\begin{subfigure}{0.32\textwidth}
	\includegraphics[width=\textwidth]{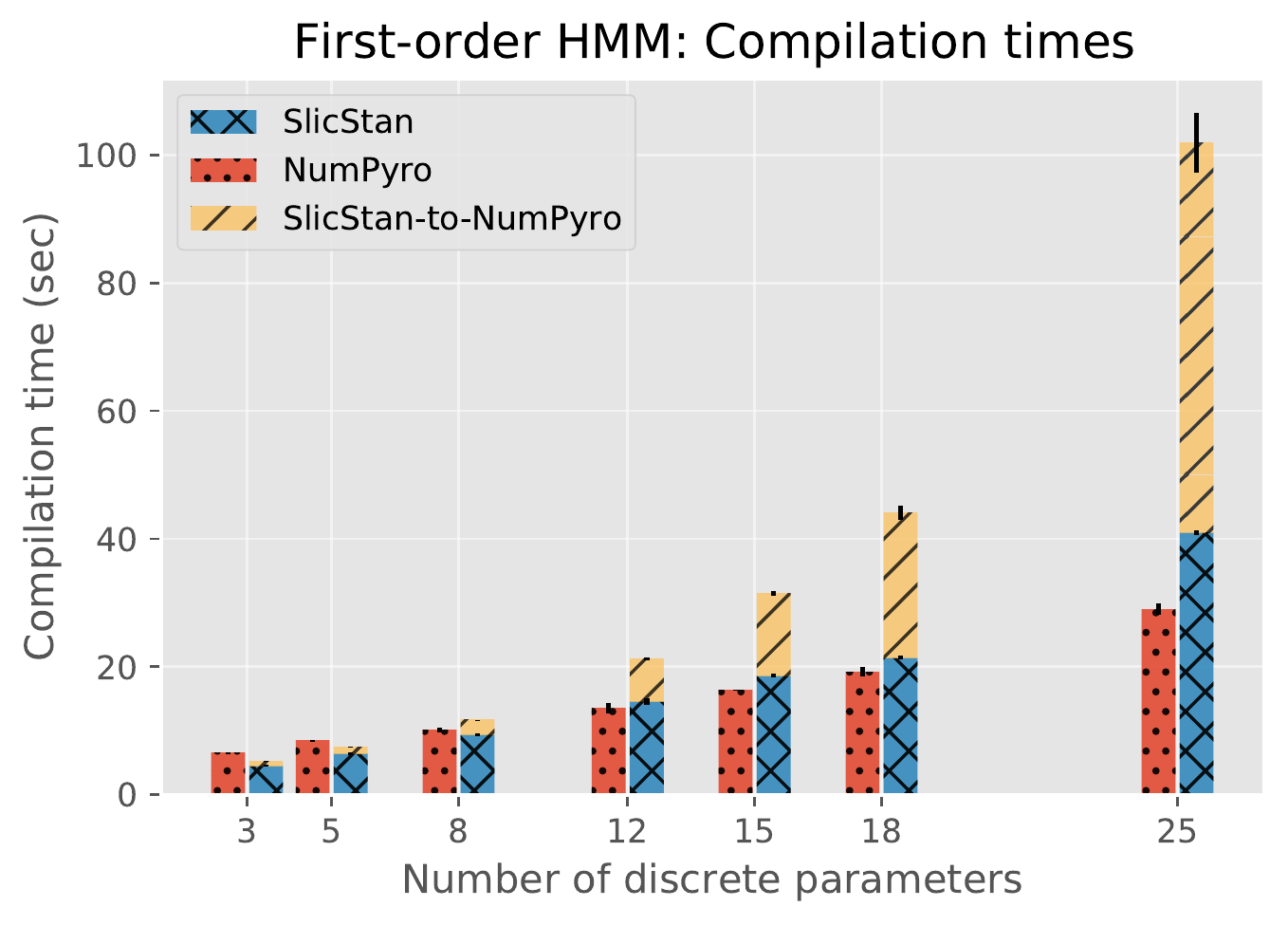}
	\end{subfigure}
	\vspace{-8pt}
	\caption{HMM results}
	\label{fig:hmm-1}
\end{figure}

\paragraph{Second-order HMM}

The second-order HMM is very similar to the first-order HMM, but the discrete variables depend on the previous $2$ variables, in this case taking the maximum of the two.
%
\begin{gather*}
	\mu_k \sim \mathcal{N}(0, 1) \quad\text{for } k \in 1, \dots, K \\
	z_1 \sim \mathrm{categorical}(\theta_{1}), \quad
	z_2 \sim \mathrm{categorical}(\theta_{z_1}) \\
	z_n \sim \mathrm{categorical}(\theta_{\mathrm{max}(z_{n-2}, z_{n-1})}) \quad\text{for } n \in 3, \dots, N \\
	y_n \sim \mathcal{N}(\mu_{z_{n}}, 1) \quad\text{for } n \in 1, \dots, N \\
\end{gather*}

\vspace{-8pt}
Similarly to before, we run the experiment for $K = 3$ and different values for $N$, ranging from $N=3$ to $N=25$. 
We show the results in \autoref{fig:hmm-2}, which once again shows SlicStan outperforming NumPyro and NumPyro-Opt in terms of runtime, but having slower compilation time for a larger number of discrete parameters.

\begin{figure}[!p]
	\begin{subfigure}{0.32\textwidth}
		\includegraphics[width=\textwidth]{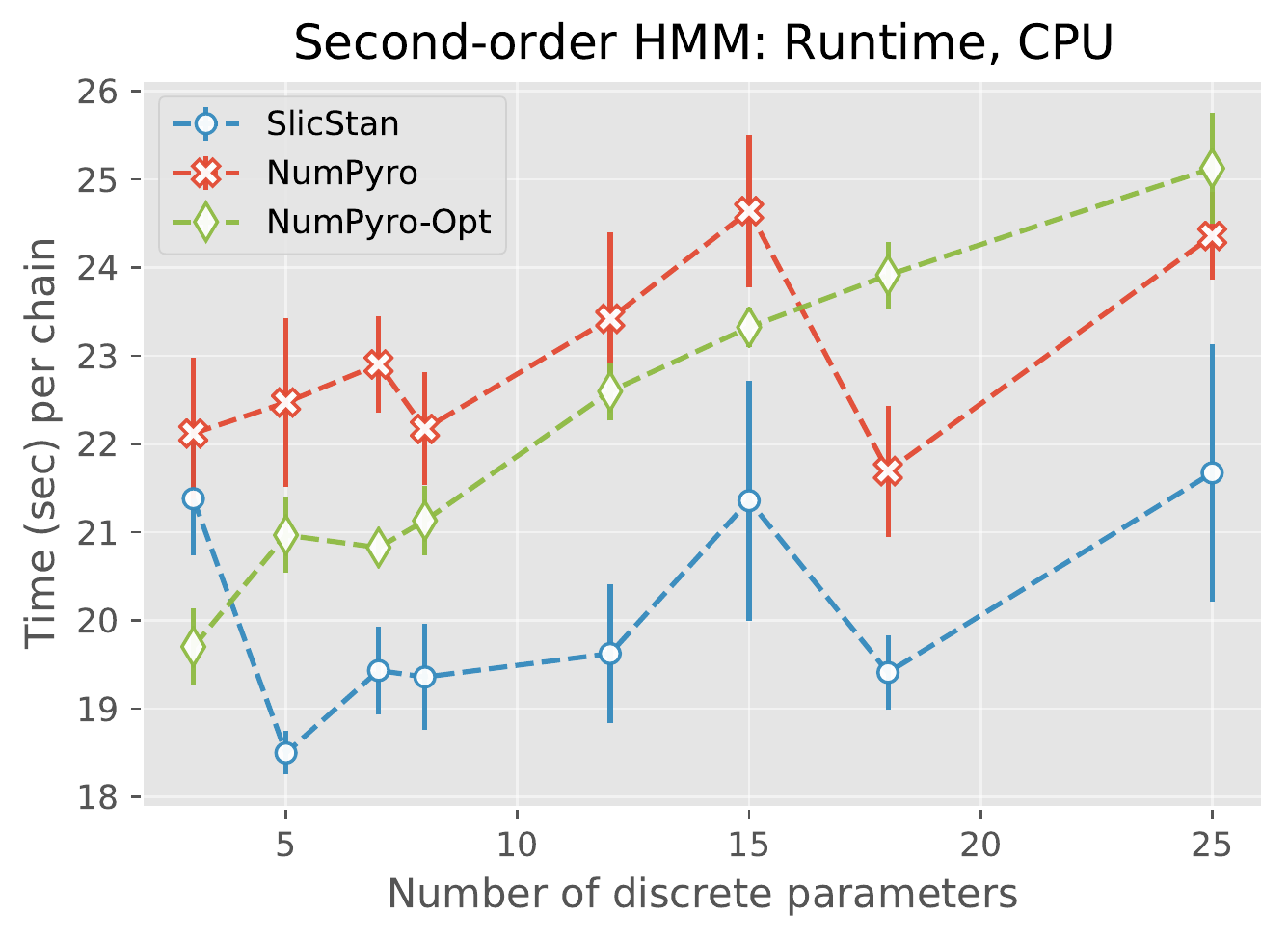}
	\end{subfigure}
	\begin{subfigure}{0.32\textwidth}
		\includegraphics[width=\textwidth]{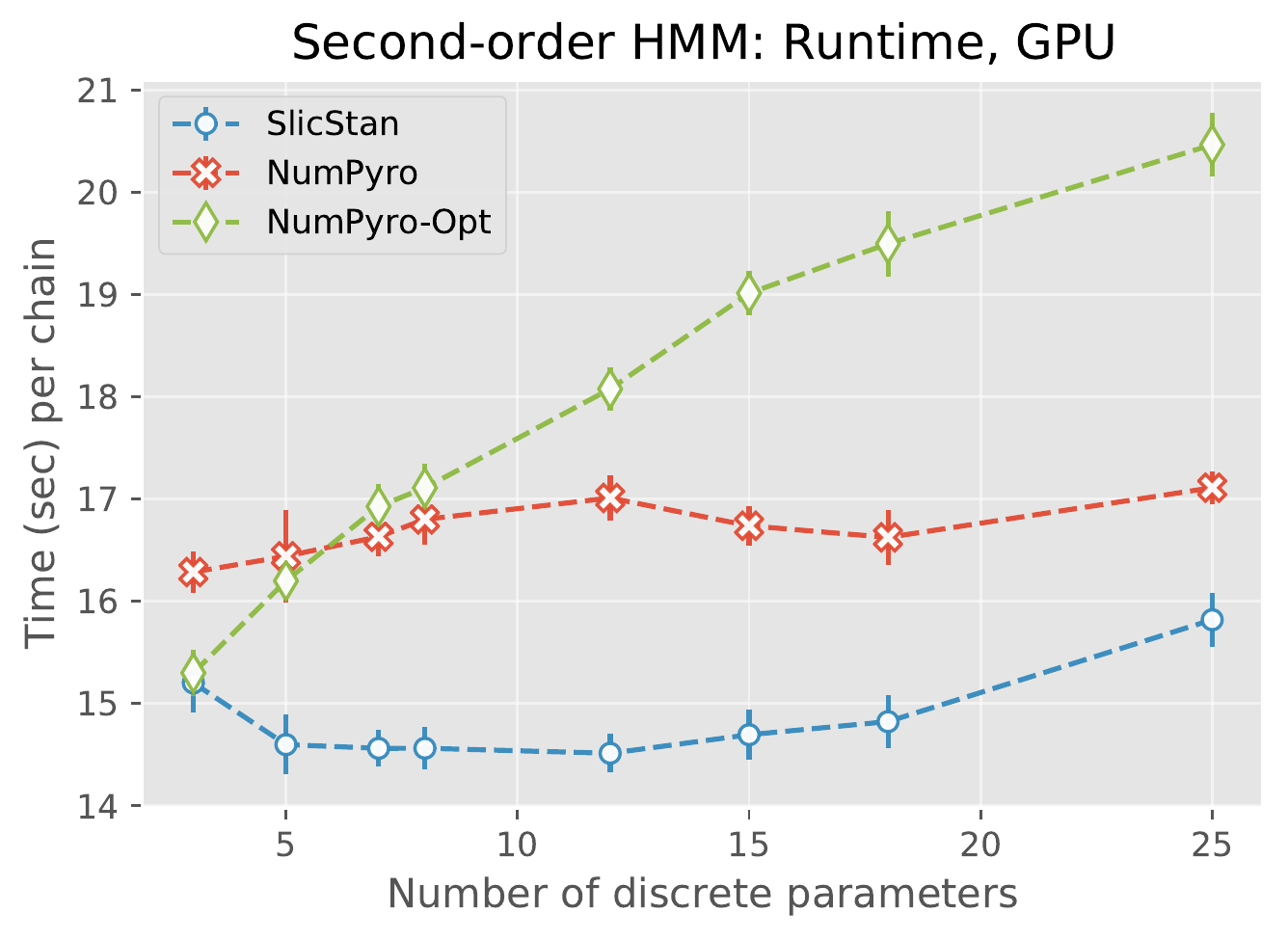}
	\end{subfigure}
	\begin{subfigure}{0.32\textwidth}
		\includegraphics[width=\textwidth]{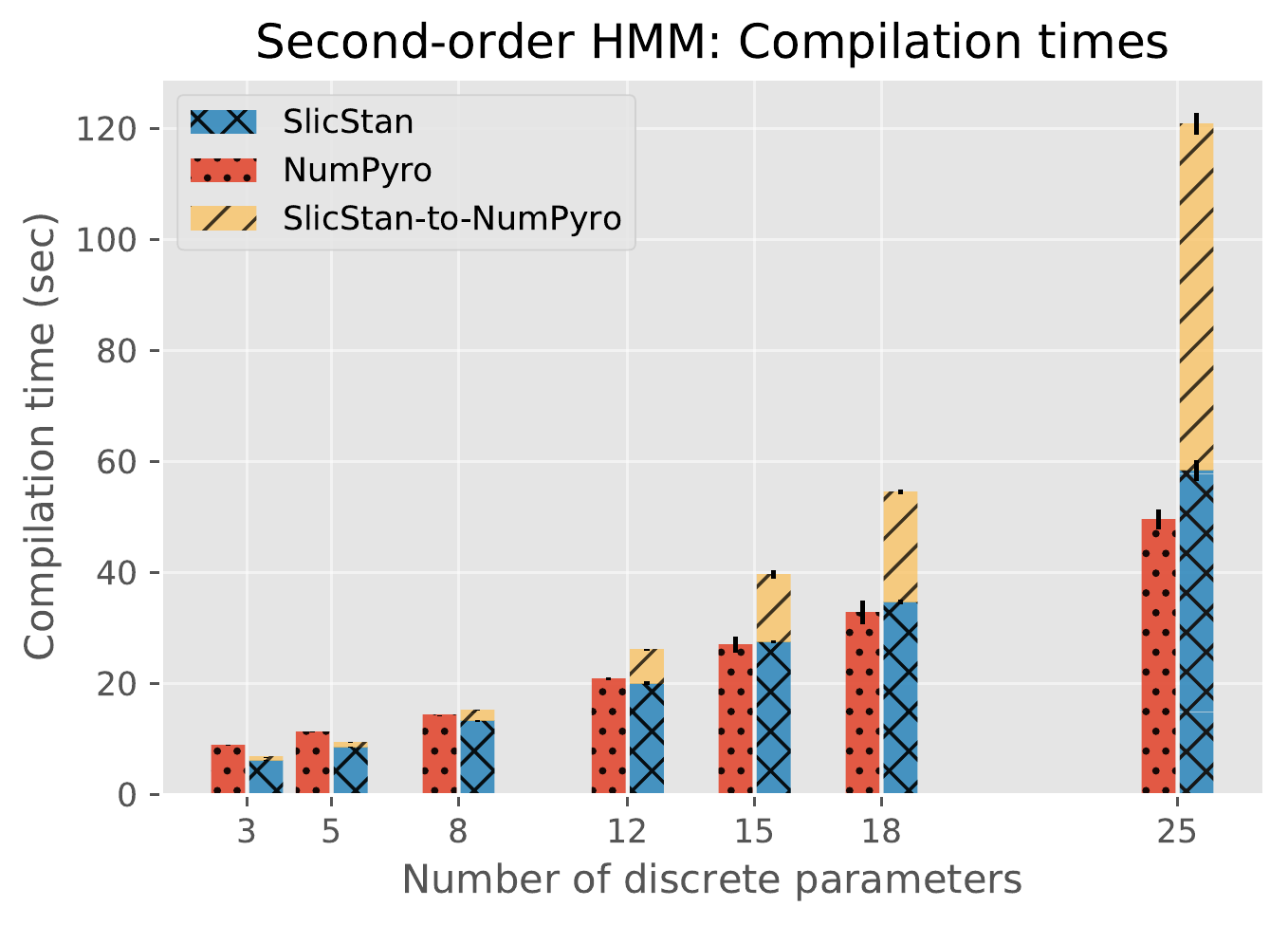}
	\end{subfigure}
	\vspace{-8pt}
	\caption{Second-order HMM results}
	\label{fig:hmm-2}
\end{figure}

\begin{figure}[!p]
	\begin{subfigure}{0.32\textwidth}
		\includegraphics[width=\textwidth]{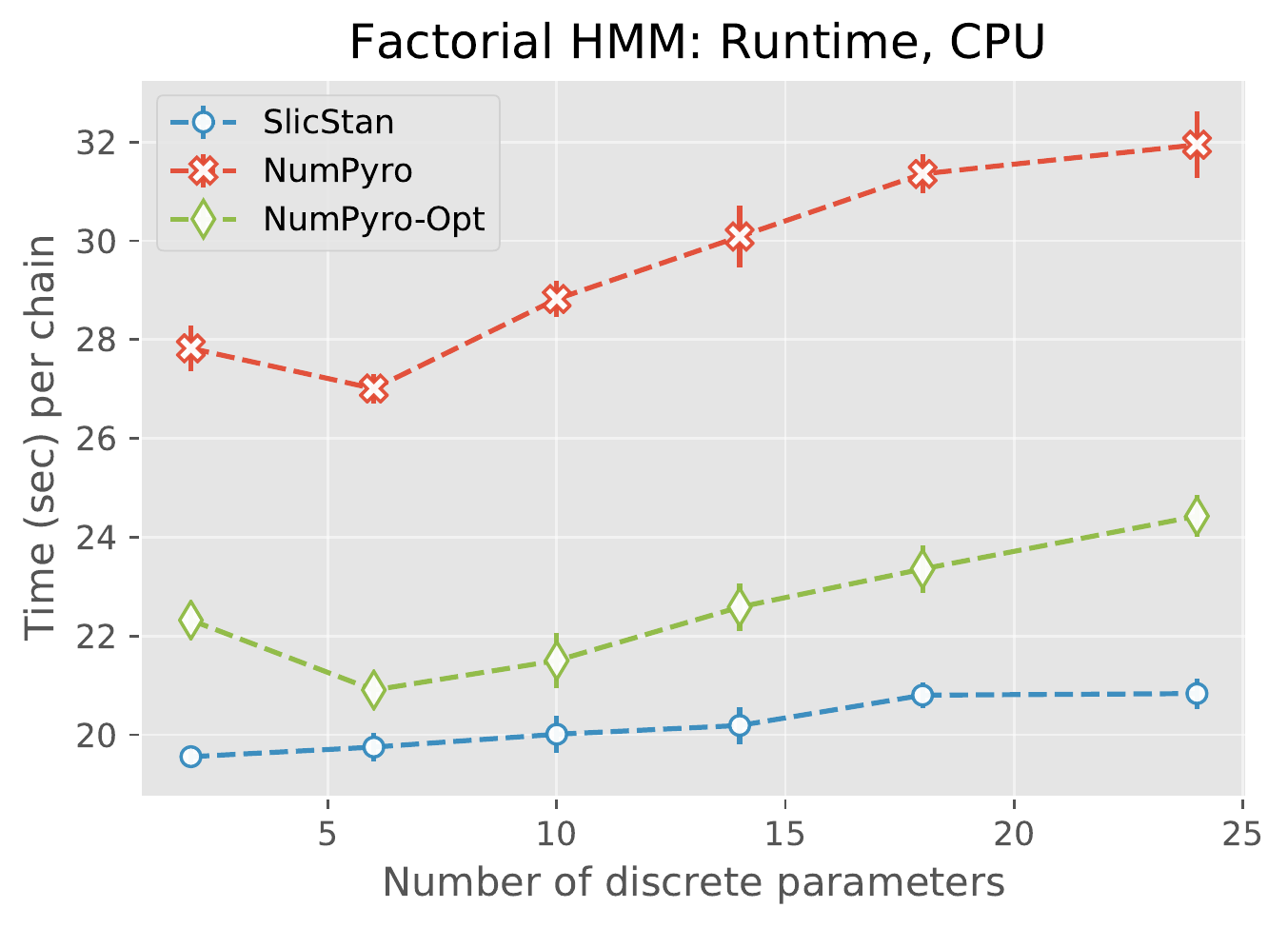}
	\end{subfigure}
	\begin{subfigure}{0.32\textwidth}
		\includegraphics[width=\textwidth]{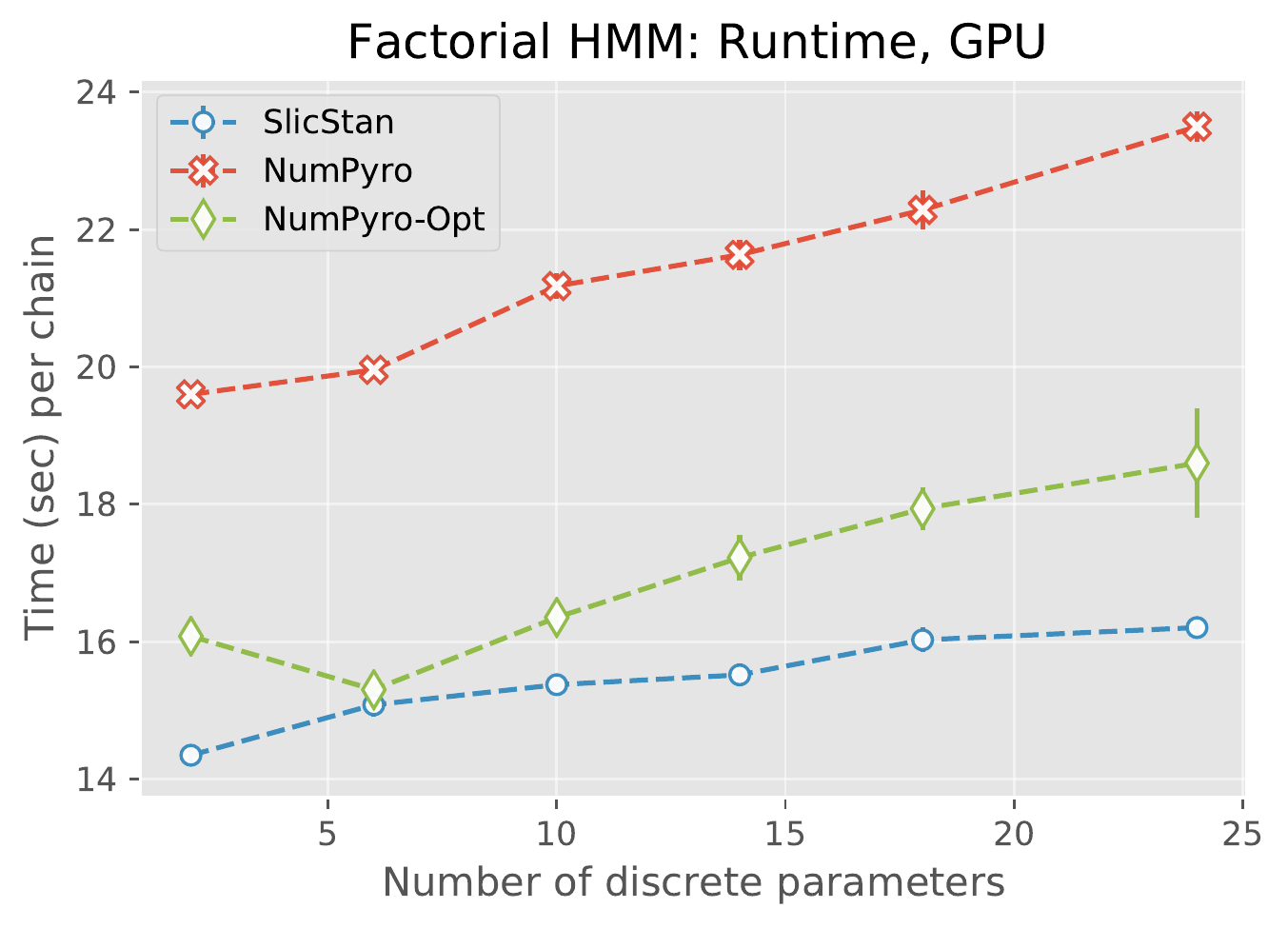}
	\end{subfigure}
	\begin{subfigure}{0.32\textwidth}
		\includegraphics[width=\textwidth]{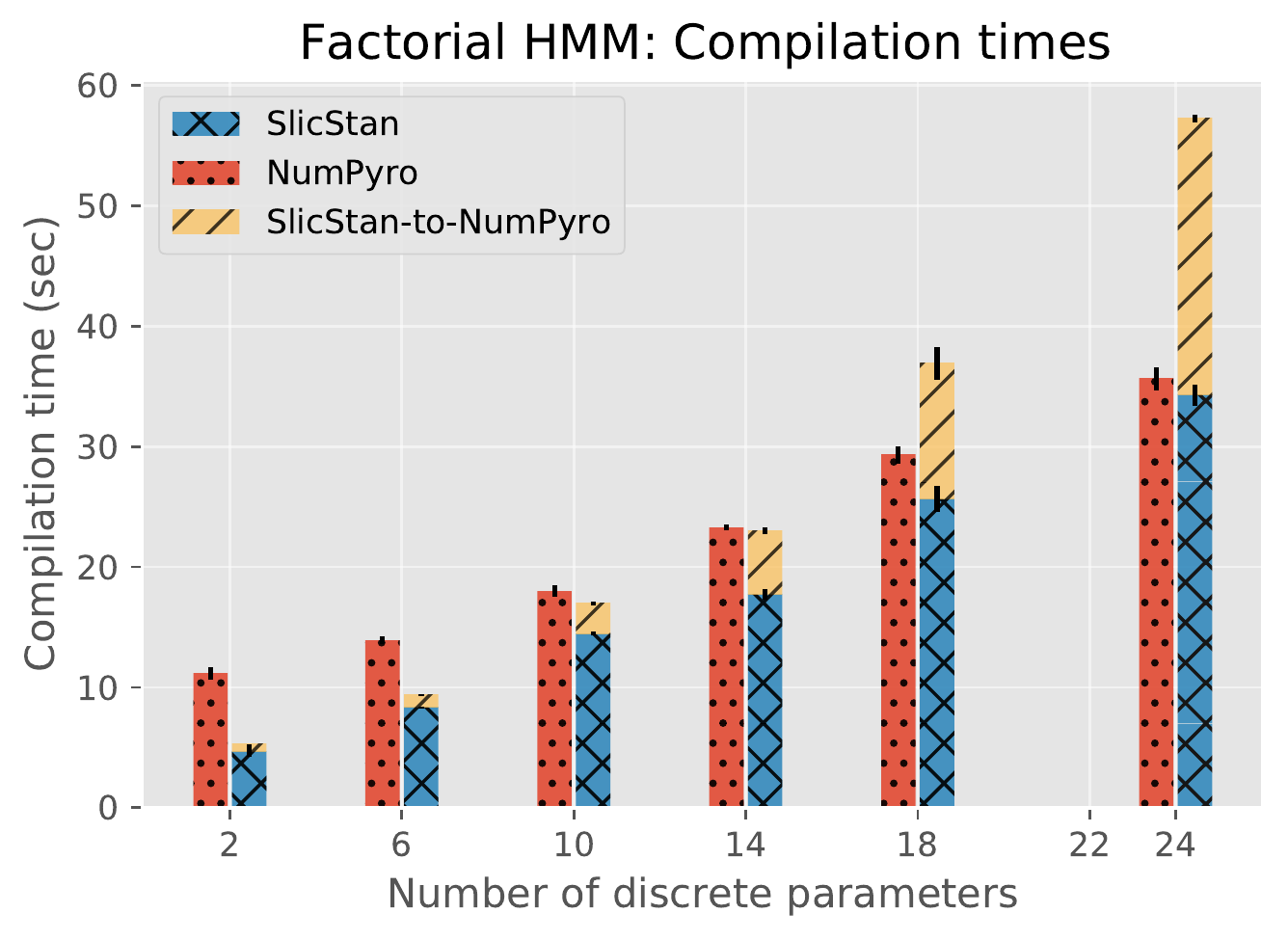}
	\end{subfigure}
	\vspace{-8pt}
	\caption{Factorial HMM results}
	\label{fig:hmm-factorial}
\end{figure}

\begin{figure}[!p]
	\begin{subfigure}{0.32\textwidth}
		\includegraphics[width=\textwidth]{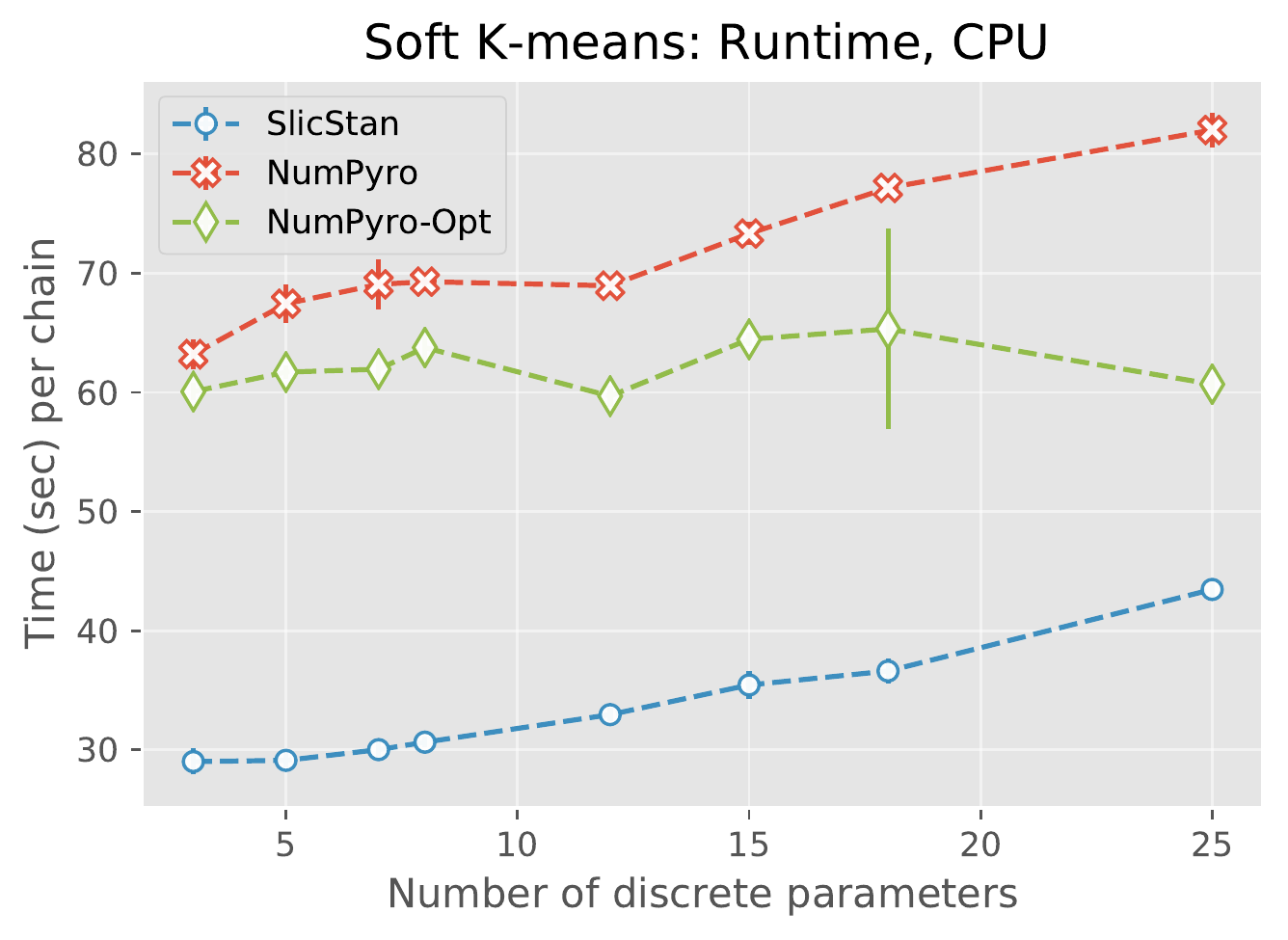}
	\end{subfigure}
	\begin{subfigure}{0.32\textwidth}
		\includegraphics[width=\textwidth]{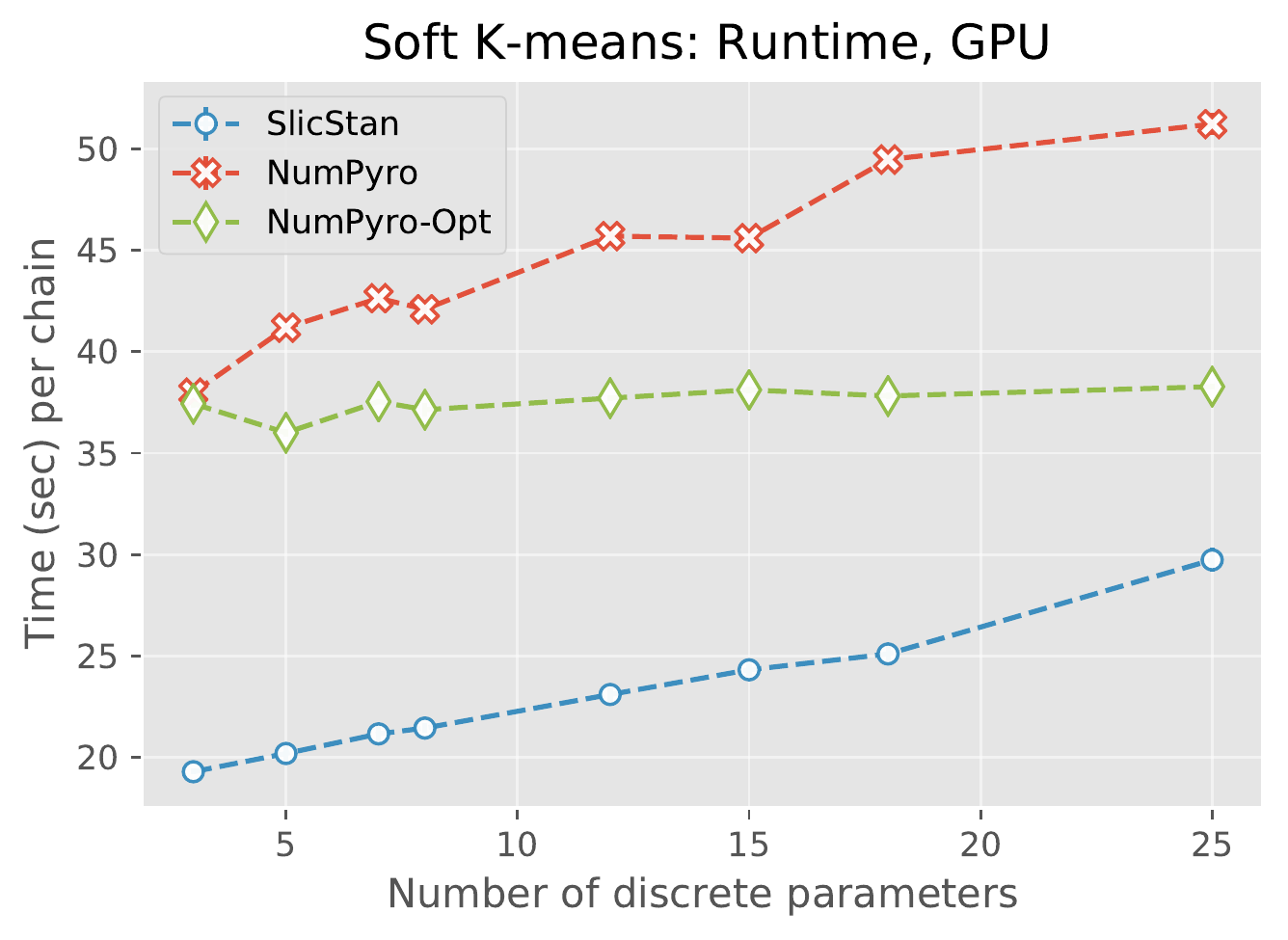}
	\end{subfigure}
	\begin{subfigure}{0.32\textwidth}
		\includegraphics[width=\textwidth]{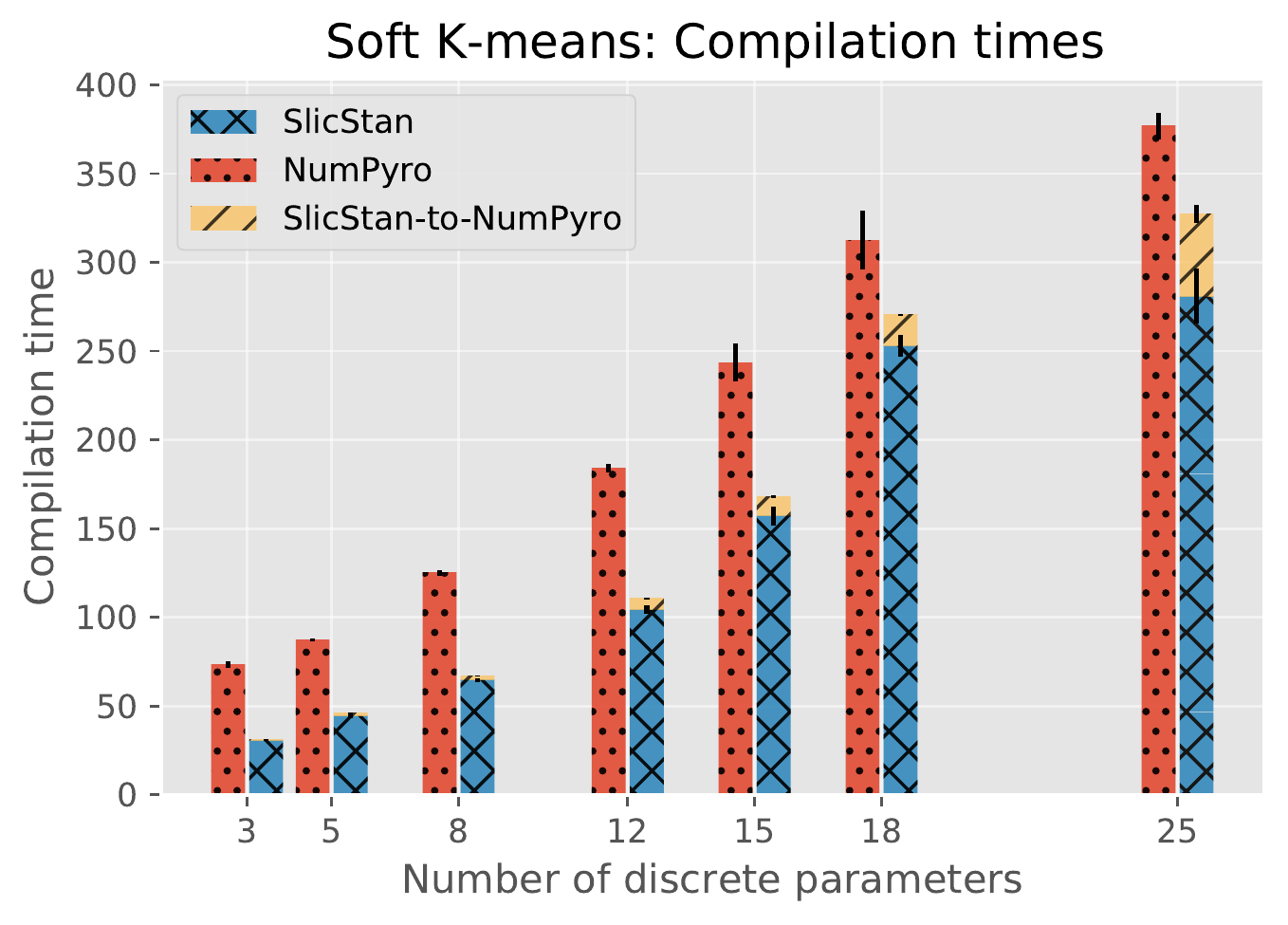}
	\end{subfigure}
	\vspace{-8pt}
	\caption{Soft K-means results}
	\label{fig:mixture}
\end{figure}

\begin{figure}[!p]
	\begin{subfigure}{0.32\textwidth}
		\includegraphics[width=\textwidth]{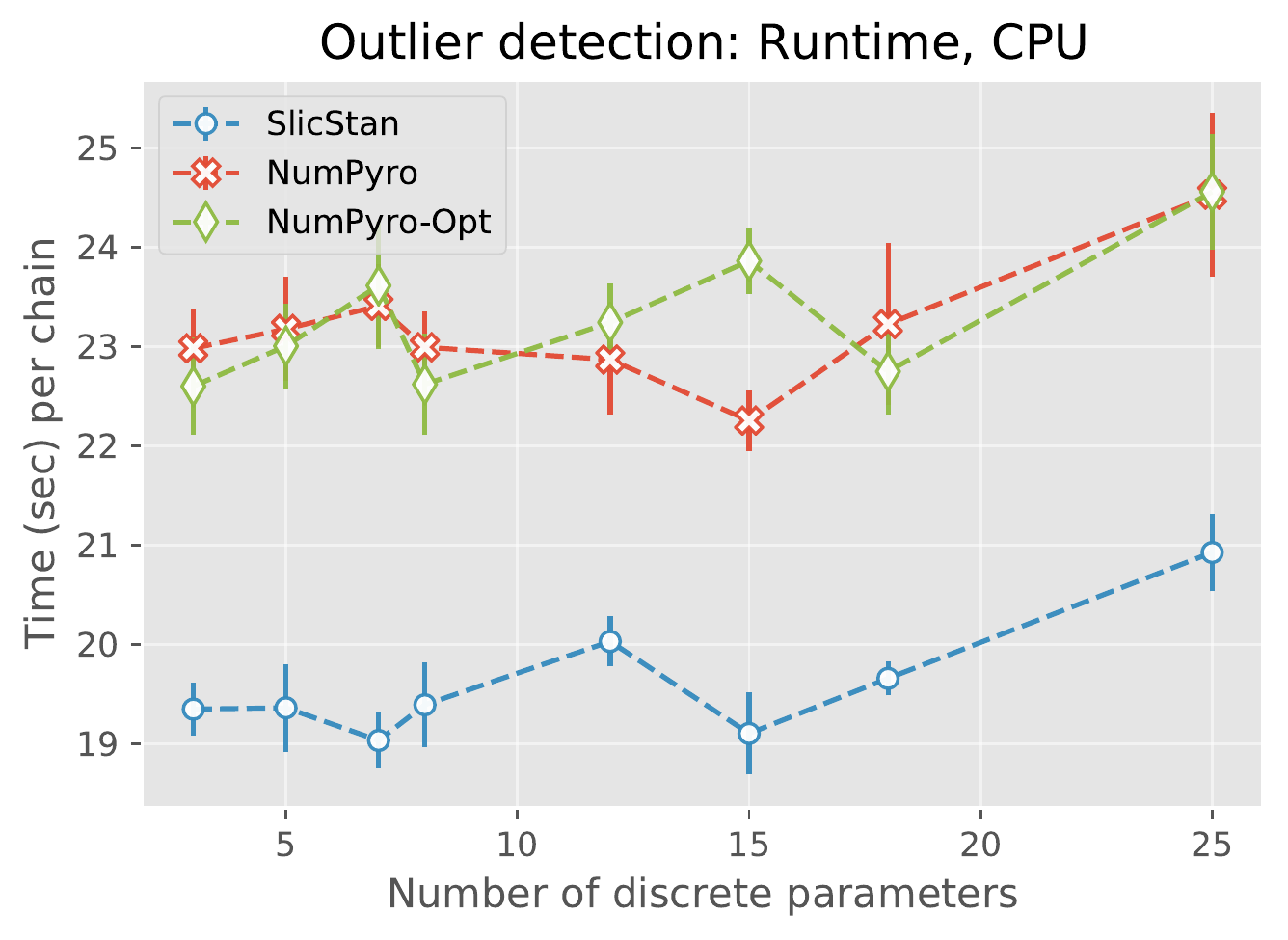}
	\end{subfigure}
	\begin{subfigure}{0.32\textwidth}
		\includegraphics[width=\textwidth]{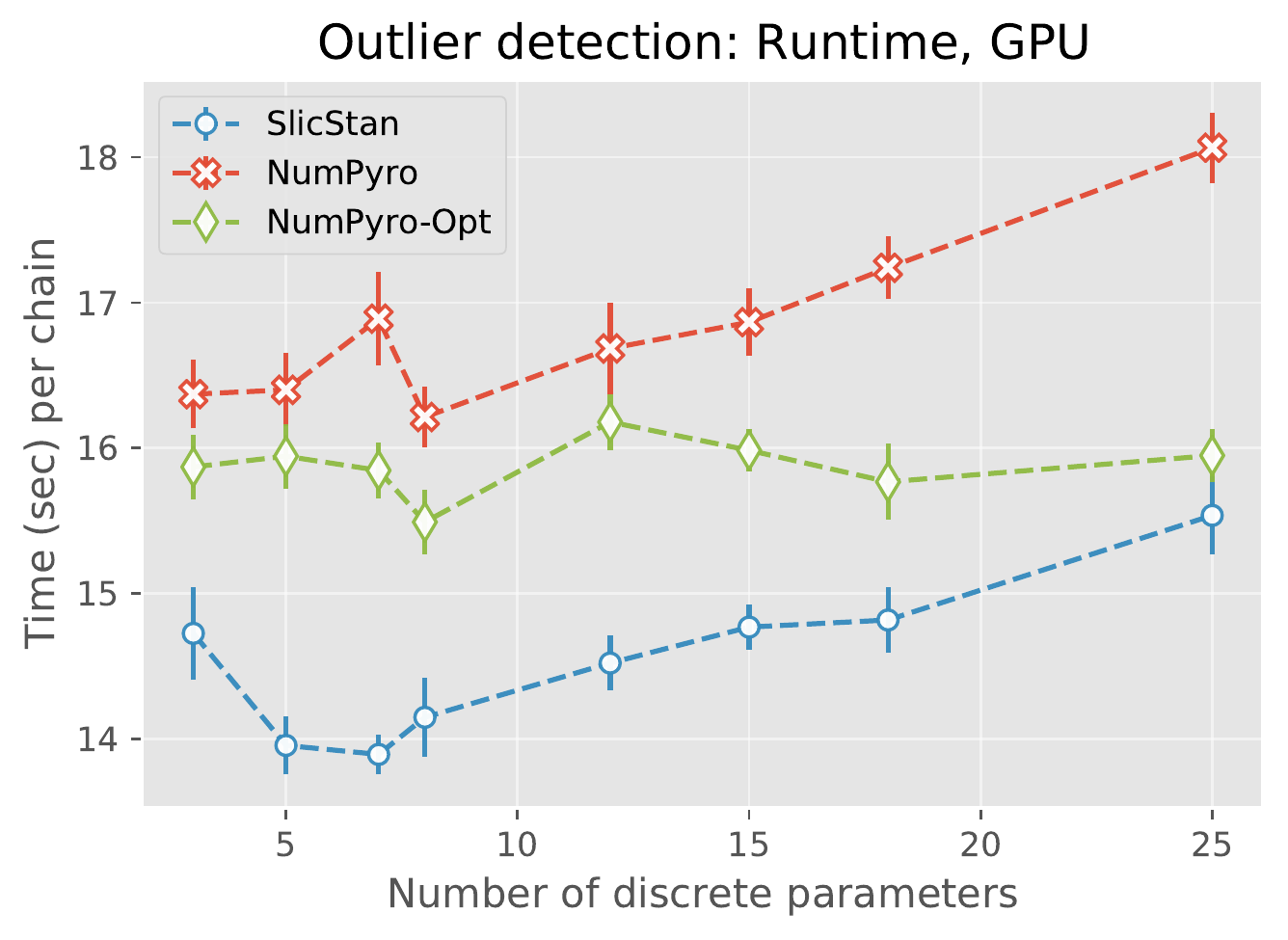}
	\end{subfigure}
	\begin{subfigure}{0.32\textwidth}
		\includegraphics[width=\textwidth]{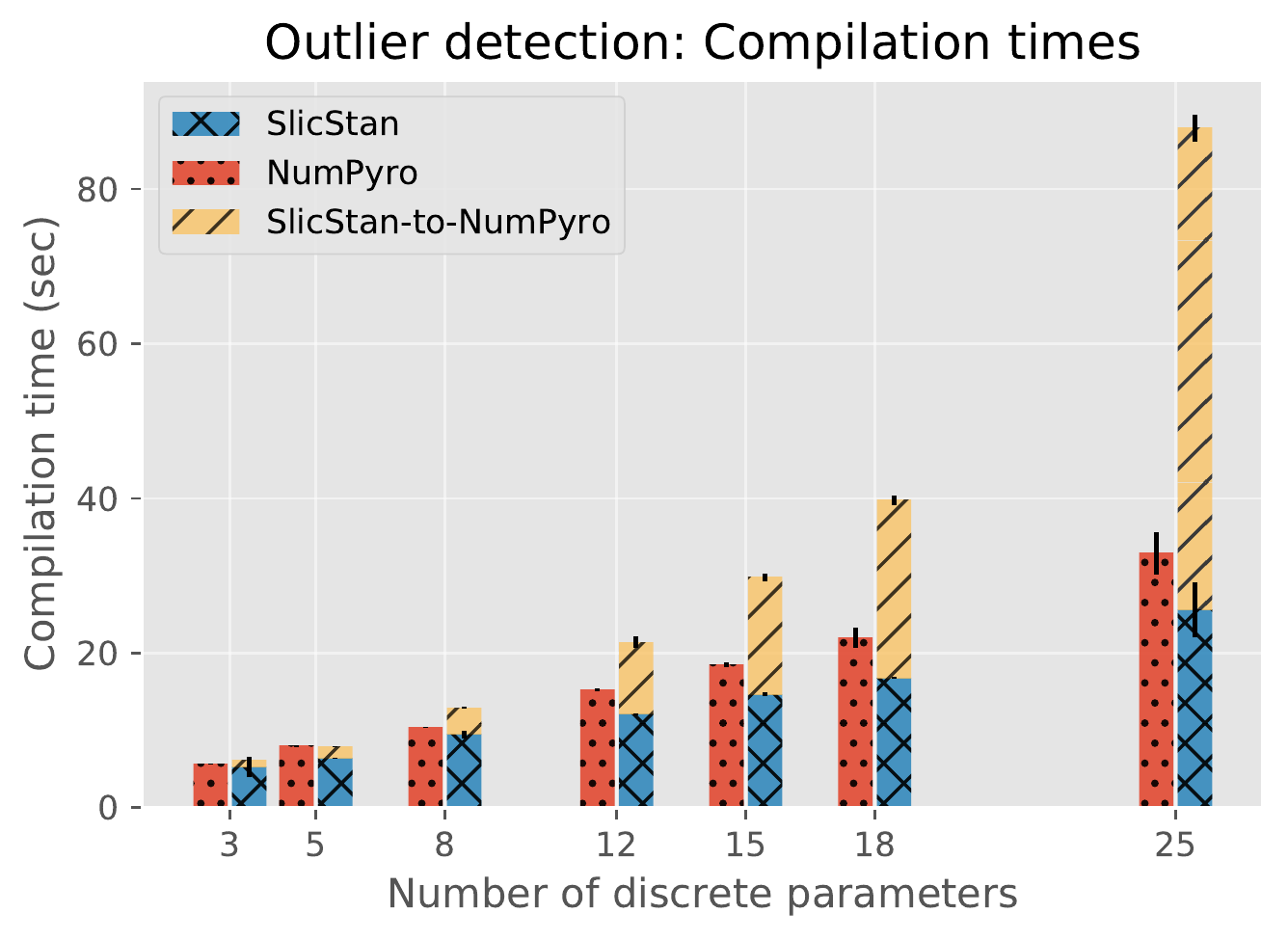}
	\end{subfigure}
	\vspace{-8pt}
	\caption{Outliers results}
	\label{fig:outliers}
\end{figure}

\paragraph{Factorial HMM}

In a factorial HMM, each data point $y_n$ is generated using two independent hidden states $z_n$ and $h_n$, each depending on the previous hidden states $z_{n-1}$ and $h_{n-1}$. 
\begin{gather*}
	\mu_k \sim \mathcal{N}(0, 1) \quad\text{for } k \in 1, \dots, K^2 \\
	z_1 \sim \mathrm{categorical}(\theta_{1}), \quad h_1 \sim \mathrm{categorical}(\theta_{1}) \\
	z_n \sim \mathrm{categorical}(\theta_{z_{n-1}}), \quad
	h_n \sim \mathrm{categorical}(\theta_{h_{n-1}}) \quad\text{for } n \in 2, \dots, N \\
	y_n \sim \mathcal{N}(\mu_{z_{n} * h_{n}}, 1) \quad\text{for } n \in 1, \dots, N \\
\end{gather*}

\vspace{-2pt}
We run the experiment for $K = 3$ and different length of the chain $N$, ranging from $N=1$ (2 discrete parameters) to $N=12$ (24 discrete parameters). 
We show the results in 
\autoref{fig:hmm-factorial}: similarly to before, SlicStan outperforms both NumPyro and NumPyro-Opt in terms of runtime. 
We also observe that, in the case of SlicStan, the time taken to sample a single chain increases more slowly as we increase the number of discrete variables.

\subsubsection{Mixture models} \label{sssec:mixture_eval}
Another useful application of mixed discrete and continuous variable models is found in mixture models.
We run experiments on two models: soft $K$-means clustering and linear regression with outlier detection.

\paragraph{Soft K-means}

The Gaussian mixture model underlines the celebrated soft $K$-means algorithm.
Here, we are interested in modelling some $D$-dimensional data that belongs to one of $K$ (unknown) Gaussian clusters. Each cluster $k$ is specified by a $D$-dimensional mean $\mu_{., k}$. Each data point $y_{.,n}$ is associated with a cluster $z_n$. 
\begin{gather*}
	\mu_{d, k} \sim \mathcal{N}(0, 1) \quad \text{for } d \in 1, \dots, D \text{ and } k \in 1, \dots, K\\
	z_n \sim \mathrm{categorical}(\pi) \quad \text{for } n \in 1, \dots, N \\
	y_{d, n} \sim \mathcal{N}(\mu_{d, z_n}, 1) \quad \text{for } d \in 1, \dots, D \text{ and } n \in 1, \dots, N
\end{gather*}

\vspace{-2pt}
We run the experiments for $K = 3$, $D = 10$, and $N = 3, \dots, 25$ and show the results in 
\autoref{fig:mixture}.
 
We observe a clear linear trend of the runtime growing with N, with SlicStan performing better and its runtime growing more slowly than that of NumPyro. While the SlicStan-translated code runs faster than NumPyro-Opt for $N \leq 25$, we observe that the SlicStan runtime grows faster than that of the manually optimised NumPyro-Opt code. 

\paragraph{Outlier detection} 

The final model we consider is a Bayesian linear regression that allows for outlier detection. The model considers data points $(x_n, y_n)$, where $y$ lies on the line $\alpha x + \beta$ with some added noise. The noise $\sigma_{z_n}$ depends on a Bernoulli parameter $z_n$, which corresponds to whether or not the point $(x_n, y_n)$ is an outlier or not. The noise for outliers ($\sigma_1$) and the noise for non-outliers ($\sigma_2$) are given as hyperparameters.
\begin{gather*}
	\alpha \sim \mathcal{N}(0, 10), \quad \beta \sim \mathcal{N}(0, 10)\\
	\pi^{(raw)}_1 \sim \mathcal{N}(0, 1), \quad \pi^{(raw)}_2 \sim \mathcal{N}(0, 1), \quad \pi = \frac{\exp{\pi^{(raw)}_1}}{\exp{\pi^{(raw)}_1} + \exp{\pi^{(raw)}_2}} \\
	z_n \sim \mathrm{bernoulli}(\pi) \quad \text{for } n \in 1, \dots, N \\
	y_{n} \sim \mathcal{N}(\alpha x_n + \beta, \sigma_{z_n}) \quad \text{for } n \in 1, \dots, N
\end{gather*}

Similarly to the earlier HMM models, SlicStan has the smallest runtime per chain, but at the expense of fast growing compile time (\autoref{fig:outliers}).

\subsection{Analysis and discussion} \label{ssec:results}


Our method can be applied to general models containing a fixed and known number of finite-support discrete parameters, which significantly reduces the amount of manual effort that was previously required for such models in languages like Stan \cite{StanHMM}.
In addition, as shown in Figures \ref{fig:hmm-1}--\ref{fig:outliers}, SlicStan outperforms both the NumPyro baseline and the hand-optimised NumPyro-Opt, in terms of runtime. 
This suggests that a static-time discrete variable optimisation, like the one introduced in this paper, is indeed beneficial and speeds up inference. 

One limitation of our experimental analysis is the relatively small number of discrete parameters we consider. 
Due to the array dimension limit imposed by PyTorch / NumPy,
Pyro cannot have more than 25 discrete variables (64 for CPU) unless the dependence between them is specified using \lstinline{markov} or \lstinline{plate} (as with NumPyro-Opt). For NumPyro this hardcoded limit is 32. Thus, it would not be possible to compare to the NumPyro baseline for a larger number of variables, though comparing to the hand-optimised NumPyro-Opt would still be possible.

Perhaps the biggest limitation of the discrete parameters version  of SlicStan is the exponentially growing compilation time.  
Using a semi-lattice instead of a lattice in the $\typingspecial$ level type analysis breaks the requirement of the bidirectional type system that ensures efficiency of type inference.
The constraints generated by the type system can no longer be resolved by SlicStan's original linear-time algorithm.  
While polynomial-time constraint-solving strategy may still exist, we choose to employ Z3 to automatically resolve the type inference constraints, and leave the consideration for more efficient type inference algorithm for future work. 
%

This also highlights the importance of a future SlicStan version that considers \emph{arrays} of discrete parameters. 
Our algorithm currently supports only individual discrete parameters.
In the cases where the size of an array of discrete parameters is statically known, the \ref{Elim Gen} procedure can be applied to a program where such arrays have been `flattened' into a collection of individual discrete variables, which is the strategy we adopt for the experiments in this section. 
But to be applicable more widely, the \ref{Elim Gen} rule needs to be generalised based on array element level dependence analysis, for example by incorporating ideas from the polyhedral model \cite{feautrier1992some}.
As the array level dependence analysis that would be required in most practical use-cases is very straightforward, we believe this would be a useful and feasible applied extension of our work. 
In addition, this would significantly decrease the number of program variables for which we need to infer a level type during the \ref{Elim Gen} transformation, thus making compilation practical for larger or arbitrary numbers of discrete parameters.

\section{Related work}

This paper provides a type system that induces conditional independence relationships, and it discusses one practical application of such type system: an automatic marginalisation procedure for discrete parameters of finite support.

\paragraph{Conditional independence}
The theoretical aim of our paper is similar to that of \citet{barthe2019probabilistic}, who discuss a separation logic for reasoning about independence, and the follow-up work of \citet{bao2021bunched}, who extend the logic to capture conditional independence. One advantage of our method is that the verification of conditional independence is automated by type inference, while it would rely on manual reasoning in the works of  \citet{barthe2019probabilistic} and \citet{bao2021bunched}. On the other hand, the logic approach can be applied to a wider variety of verification tasks.
\citet{amtoft2020theory} show a correspondence between variable independence and slicing a discrete-variables-only probabilistic program. 
The biggest difference to our work is that their work considers only conditional independence of variables given the observed data: that is CI relationships of the form $\mathbf{x}_1 \bigCI \mathbf{x}_2 \mid \data$ for some subsets of variables $\mathbf{x}_1$ and $\mathbf{x}_2$ and data $\data$.
The language of \citet{amtoft2020theory}  requires observed data to be specified \emph{syntactically} using an \lstinline{observe} statement. Conditional independencies are determined only given this observed data, and the method for determining how to slice a program is tied to the \lstinline{observe} statements. From the \citet{amtoft2020theory} paper:
``A basic intuition behind our approach is that an \lstinline{observe} statement can be removed if it does not depend on something on which the returned variable $x$ also depends.''
In contrast, we are able to find CI relationships given any variables we are interested in ($\mathbf{x_1} \bigCI \mathbf{x_2} \mid \mathbf{x}_3$ for some $\mathbf{x}_1$, $\mathbf{x}_2$, and $\mathbf{x}_3$), and type inference constitutes of a straightforward algorithm for finding such relationships. 
On the other hand, \citet{amtoft2020theory} permit unbounded number of variables (e.g. \lstinline{while (y > 0) $\,$ y ~ bernoulli(0.2))}, while it is not clear how to extend SlicStan/Stan to support this. While not in a probabilist programming setting, \citet{lobo2020programming} use taint analysis to find independencies between variables in a program, in order to facilitate easy trade off between privacy and accuracy in differential privacy context.

\paragraph{Automatic marginalisation}

The most closely related previous work, in terms of the automatic marginalisation procedure, is that of \citet{PyroDiscrete} and that of \citet{RaoBlackPPL}.
\citet{PyroDiscrete} implement efficient variable-elimination for plated factor graphs in Pyro \cite{Pyro}. Their approach uses effect-handlers and can be implemented in other effect-handling based PPLs, such as Edward2 \cite{Edward2}. 
\citet{RaoBlackPPL} introduce a `delayed sampling' procedure in Birch \cite{Birch}, which optimises the program via partial analytical solutions to sub-programs. Their method corresponds to automatic variable elimination and, more generally, automatic Rao–Blackwellization.
While we focus on discrete variable elimination only, our conditional independence type system can be directly used for more general analysis. The method from \autoref{sec:application} can be extended to marginalise out and sample continuous variables whenever they are part of an analytically-tractable sub-program, similarly to delayed sampling in Birch. 
One key difference of our approach is that the program re-writes are guided by the type system and happen at compile time, before inference is run. In contrast, both Pyro and Birch maintain a dynamic graph that guides the analysis at runtime. 

\paragraph{Symbolic inference}

Where a full analytical solution is possible, several probabilistic programming languages can derive it via symbolic manipulation, including Hakaru \cite{Hakaru} and PSI \cite{PSI, LambdaPSI}, while Dice \cite{holtzen2020dice}  performs exact inference for models with discrete parameters only, by analysing the program structure. 
In contrast, we focus on re-writing the program, and decomposing it into parts to be used with fast and more general asymptotically exact or approximate inference algorithms, like HMC, variational inference or others.

\paragraph{Extending HMC to support discrete parameters}

The idea of modifying HMC to handle discrete variables and discontinuities has been previously explored \cite{Zhou20, DHMCCharles, Pakman2013auxiliary, DHMC}.
More recently, \citet{DHMC-PPL} introduced the probabilistic programming language LF-PPL, which is designed specifically to be used with the Discontinuous Hamiltonian Monte Carlo (DHMC) algorithm \cite{DHMC}. The algorithm, and their framework can also be extended to support discrete parameters.
LF-PPL provides support for an HMC version that itself works with discontinuities. Our approach is to statically rewrite the program to match the constraints of Stan, vanilla HMC, and its several well-optimised extensions, such as NUTS \cite{NUTS}.

\paragraph{Composable and programmable inference}

Recent years have seen a growing number of techniques that allow for tailored-to-the-program compilation to an inference algorithm. For example, 
Gen \cite{Gen} can statically analyse the model structure to compile to a more efficient inference strategy. 
In addition, languages like Gen and Turing \cite{Turing} facilitate composable and programmable inference \cite{ProgrammableInference}, where the user  is provided with inference building blocks to implement their own model-specific algorithm. 
Our method can be understood as an automatic composition between two inference algorithms: variable elimination and HMC or any other inference algorithm that can be used to sample continuous variables. 

\section{Conclusion}

This paper introduces an information flow type system that can be used to check and infer conditional independence relationships in a probabilistic programs, through type checking and inference, respectively. We present a practical application of this type system: a semantics-preserving transformation that makes it possible to use, and to efficiently and automatically infer discrete parameters in SlicStan, Stan, and other density-based probabilistic programming languages. 
The transformed program can be seen as a hybrid inference algorithm on the original program, where continuous parameters can be drawn using efficient gradient-based inference methods, like HMC, while the discrete parameters are drawn using variable elimination. 

While the variable elimination transformation uses results on conditional independence of discrete parameters, our type system is not restricted to this usage. Conditional independence relationships can be of interest in many context in probabilistic modelling, including more general use of variable elimination, message-passing algorithms, Rao-Blackwellization, and factorising a program for a composed-inference approach. We believe conditional independence by typing can enable interesting future work that automates the implementation of such methods. 

\begin{acks} 
We thank Vikash Mansinghka for suggesting the outlier detection example, which we used for evaluation, 
as well as Lawrence Murry for clarifying the behaviour of Birch, and
anonymous reviewers whose helpful suggestions improved the paper. 
Maria Gorinova was supported by the EPSRC Centre for Doctoral Training
in Data Science, funded by the UK Engineering and Physical Sciences Research Council (grant EP/L016427/1) and the University of Edinburgh.
Matthijs V\'ak\'ar was funded by the European Union’s Horizon 2020 research and innovation
programme under the Marie\linebreak Skłodowska-Curie grant agreement No. 895827.

\end{acks}

\bibliography{bibfile}

\appendix
\clearpage

\newcommand{\allp}{\data, \params, Q}

\section{Definitions and Proofs} \label{ap:proofs}

\subsection{Definitions}

\begin{definition}[Assigns-to set $\assset(S)$] \label{def:assset}
	\assset(S) is the set that contains the names of global variables that have been assigned to within the statement S. It is defined recursively as follows:	\vspace{-10pt}
	\begin{multicols}{2}\noindent
		$\assset(x[E_1]\dots[E_n] = E) = \{x\}$ \\
		$\assset(S_1; S_2) = \assset(S_1) \cup \assset(S_2) $ \\
		$\assset(\kw{if}(E)\; S_1 \;\kw{else}\; S_2) = \assset(S_1)\cup \assset(S_2) $\\
		$\assset(\kw{for}(x\;\kw{in}\;E_1:E_2)\;S) = \assset(S) \setminus \{x\}$\\
		$\assset(\kw{skip}) = \emptyset $ \\
		$\assset(\kw{factor}(E))=\emptyset$\\
		$\assset(L \sim d(E_1,\ldots, E_n))=\emptyset $
	\end{multicols}\vspace{-12pt}
\end{definition} 

\begin{definition}[Reads set $\readset(S)$] \label{def:readset}
	\readset(S) is the set that contains the names of global variables that have been read within the statement S. It is defined recursively as follows:	\vspace{-10pt}
	\begin{multicols}{2}\noindent
		$\readset(x) = \{x\}$ \\
		$\readset(c) = \emptyset$ \\
		$\readset([E_1,\dots,E_n]) = \bigcup_{i=1}^n\readset(E_i)$ \\
		$\readset(E_1[E_2]) = \readset(E_1) \cup \readset(E_2) $ \\
		$\readset(f(E_1,\dots,E_n)) = \bigcup_{i=1}^n\readset(E_i)$\\
		$\readset([E| x\;\kw{in}\;E_1:E_2])=\readset(E)\cup\readset(E_1)\cup\readset(E_2)$\\
		$\readset(\kw{target}(S))=\readset(S)$\\
		$\readset(x[E_1]\dots[E_n] = E) = \bigcup_{i=1}^n\readset(E_i) \cup \readset(E)$ \\
		$\readset(S_1; S_2) = \readset(S_1) \cup \readset(S_2) $ \\
		$\readset(\kw{if}(E)\; S_1 \;\kw{else}\; S_2) = \readset(E)\cup\readset(S_1)\cup \readset(S_2) $\\
		$\readset(\kw{for}(x\;\kw{in}\;E_1:E_2)\;S) = \readset(E_1) \cup \readset(E_2) \cup \readset(S) \setminus \{x\}$\\
		$\readset(\kw{skip}) = \emptyset $ \\	
		$\readset(\kw{factor}(E)) = \readset(E)$\\
		$\readset(L\sim d(E_1,\ldots, E_n))= \readset(L)\cup \readset(E_1)\cup\cdots\cup \readset(E_n)$
\end{multicols}\vspace{-12pt}
\end{definition}

\begin{definition}[Samples-to set $\tildeset(S)$] \label{def:tildeset}
	$\tildeset(S)$ is the set that contains the names of global variables that have been sampled 
	within the statement S. It is defined recursively as follows:	\vspace{-10pt}
	\begin{multicols}{2}\noindent
		$\tildeset(L= E) = \emptyset$ \\
		$\tildeset(S_1; S_2) = \tildeset(S_1) \cup \tildeset(S_2) $ \\
		$\tildeset(\kw{if}(E)\; S_1 \;\kw{else}\; S_2) = \tildeset(S_1)\cup \tildeset(S_2) $\\
		$\tildeset(\kw{for}(x\;\kw{in}\;E_1:E_2)\;S) = \tildeset(S) \setminus \{x\}$\\
		$\tildeset(\kw{skip}) = \emptyset $ \\
		$\tildeset(\kw{factor}(E))=\emptyset$\\
		$\tildeset(x[E_1]\dots[E_n] \sim d(E_1,\ldots, E_n))=\{x\} $
	\end{multicols}\vspace{-12pt}
\end{definition} 

\begin{definition}[Free variables $\freevars(S)$]
	$\freevars(S)$ is the set that contains the free variables that are used in a
	statement $S$. It is recursively defined as follows:
\begin{multicols}{2}\noindent 
$\freevars(x)=\{x\}$\\
$\freevars(c)=\emptyset$ \\
$\freevars([E_1,\dots,E_n]) = \bigcup_{i=1}^n\freevars(E_i)$ \\
$\freevars(E_1[E_2]) = \freevars(E_1) \cup \freevars(E_2) $ \\
$\freevars(f(E_1,\dots,E_n)) = \bigcup_{i=1}^n\freevars(E_i)$\\
$\freevars([E| x\;\kw{in}\;E_1:E_2])=\freevars(E)\cup\freevars(E_1)\cup\freevars(E_2)$\\
$\freevars(\kw{target}(S))=\freevars(S)$\\
$\freevars(x[E_1]\dots[E_n] = E) = \bigcup_{i=1}^n\freevars(E_i) \cup \freevars(E)$ \\
$\freevars(S_1; S_2) = \freevars(S_1) \cup \freevars(S_2) $ \\
$\freevars(\kw{if}(E)\; S_1 \;\kw{else}\; S_2) = \freevars(E)\cup\freevars(S_1)\cup \freevars(S_2) $\\
$\freevars(\kw{for}(x\;\kw{in}\;E_1:E_2)\;S) = \freevars(E_1) \cup \freevars(E_2) \cup \freevars(S) \setminus \{x\}$\\
$\freevars(\kw{skip}) = \emptyset $ \\	
$\freevars(\kw{factor}(E)) = \freevars(E)$\\
$\freevars(L\sim d(E_1,\ldots, E_n))= \freevars(L)\cup \freevars(E_1)\cup\cdots\cup\\ \quad\freevars(E_n)$
\end{multicols}
\end{definition}

\begin{definition} \label{def:gammaE} 
	We overload the notation $\Gamma(L)$ that looks up the type of an L-value in $\Gamma$. When applied to a more general expression $E$, $\Gamma(E)$ looks up the \textit{type level} of $E$ in $\Gamma$:  
	\begin{multicols}{2}\noindent
	$\Gamma(x) = \ell, \text{ where } \ell \text{ is the level of } x \text{ in } \Gamma$ \\
	$\Gamma(c) = \lev{data}$ \\
	$\Gamma([E_1,\dots,E_n]) = \bigsqcup_{i=1}^n\Gamma(E_i)$ \\
	$\Gamma(E_1[E_2]) = \Gamma(E_1) \sqcup \Gamma(E_2) $ \\
	$\Gamma(f(E_1,\dots,E_n)) = \bigsqcup_{i=1}^n\Gamma(E_i)$\\
	$\Gamma([E| x\;\kw{in}\;E_1:E_2])=\Gamma(E)\sqcup\Gamma(E_1)\sqcup\Gamma(E_2)$
	\end{multicols}\vspace{-12pt}
\end{definition}

\begin{definition} \label{def:gammaEs}
	$\Gamma(E_1, \dots, E_n) \equiv \Gamma(E_1) \sqcup \dots \sqcup \Gamma(E_n)$.
\end{definition}

\begin{definition}[$\readset_{\Gamma \vdash \ell}(S)$]\label{def:read_level_set}
	$\readset_{\Gamma \vdash \ell}(S)$ is the set that contains the names of global variables that have been read at level $\ell$ within the statement $S$. It is defined recursively
	 as follows:\\	
		$\readset_{\Gamma \vdash \ell}(x[E_1]\dots[E_n] = E) =
		\left\{
		\begin{array}{lll}	
		\bigcup_{i=1}^n\readset(E_i) \cup \readset(E) & \textnormal{if} & \Gamma(x)=(\_,\ell)\\
		\emptyset & \textnormal{otherwise}
		\end{array}\right.$ \\
		$\readset_{\Gamma \vdash \ell}(S_1; S_2) = \readset_{\Gamma \vdash \ell}(S_1) \cup \readset_{\Gamma \vdash \ell}(S_2) $ \\
		$\readset_{\Gamma \vdash \ell}(\kw{if}(E)\; S_1 \;\kw{else}\; S_2) = \readset_{\Gamma \vdash \ell}(E)\cup\readset_{\Gamma \vdash \ell}(S_1)\cup \readset_{\Gamma \vdash \ell}(S_2) $\\
		$\readset_{\Gamma \vdash \ell}(\kw{for}(x\;\kw{in}\;E_1:E_2)\;S) = \readset_{\Gamma \vdash \ell}(E_1) \cup \readset_{\Gamma \vdash \ell}(E_2) \cup \readset_{\Gamma \vdash \ell}(S) \setminus \{x\}$\\
		$\readset_{\Gamma \vdash \ell}(\kw{skip}) = \emptyset $ \\	
		$\readset_{\Gamma \vdash \ell}(\kw{factor}(E)) =
		\left\{\begin{array}{lll}
			\readset(E) & \textnormal{if}&\ell=\lev{model}\\
		\emptyset & \textnormal{else}\end{array}\right.$\\
		$\readset_{\Gamma \vdash \ell}(L\sim d(E_1,\ldots, E_n))=
		\left\{
		\begin{array}{ll}
			\readset(L\sim d(E_1,\ldots, E_n)) & \textnormal{if}
			\quad \ell = \bigsqcup \{\ell'\mid 
			\exists x\in \freevars(L\sim d(E_1,\ldots,E_n)\\
			\quad\quad\exists \tau. \Gamma(x)=(\tau,\ell')\}\\
			\emptyset & \textnormal{otherwise.} 
		\end{array}\right.$
\end{definition}


\begin{definition}[$\assset_{\Gamma \vdash \ell}(S)$]\label{def:write_level_set}
	$\assset_{\Gamma \vdash \ell}(S) \deq \{x \in \assset(S) \mid \Gamma(x) = (\tau, \ell) \text{ for some } \tau\}$
	\end{definition}
	\begin{definition}[$\tildeset_{\Gamma \vdash \ell}(S)$]\label{def:sample_level_set}
		$\tildeset_{\Gamma \vdash \ell}(S) \deq \{x \in \tildeset(S) \mid \Gamma(x) = (\tau, \ell) \text{ for some } \tau\}$
		\end{definition}	


\begin{definition} 
	Given a statement $S$, we define the statement $\store(S)$  by structural 
	induction on $S$:\\
	\begin{tabular}{l}
		$\store(x[E_1]\dots[E_n] = E) =x[E_1]\dots[E_n] = E$ \\
		$\store(S_1; S_2) = \store(S_1) ; \store(S_2) $ \\
		$\store(\kw{if}(E)\; S_1 \;\kw{else}\; S_2) = \kw{if}(E)\; \store(S_1) \;\kw{else}\; \store(S_2)) $\\
		$\store(\kw{for}(x\;\kw{in}\;E_1:E_2)\;S) = \kw{for}(x\;\kw{in}\;E_1:E_2)\;\store(S)$\\
		$\store(\kw{skip}) = \kw{skip} $ \\
		$\store(\kw{factor}(E))=\kw{skip} $\\
		$\store(L \sim d(E_1,\ldots, E_n))=\kw{skip}  $
	\end{tabular}
\end{definition}

\begin{definition}[Neighbours of $z$, $\mathrm{ne}(\Gamma, \Gamma', z)$]  ~ \\
	For a $\vdash$ typing environment $\Gamma$, a $\typingspecial$ typing environment $\Gamma' = \Gamma_{\sigma}', \Gamma_{\mathbf{x}}'$ and a variable $z \in \dom(\Gamma_{\mathbf{x}}')$, the neighbours of $z$ are defined as: 
	$$\mathrm{ne}(\Gamma, \Gamma', z) \deq \{x : (\tau, \ell) \in \Gamma_{\mathbf{x}}' \mid \ell = \lev{l1} \text{ and } \Gamma(x) = (\kw{int}\langle K\rangle, \lev{model}) \text{ for some } K \}$$
\end{definition}

\subsection{Proofs}

\begin{restate}{Lemma~\ref{lem:noninterf} (Noninterference of $\vdash$)} 
Suppose $s_1 \models \Gamma$, $s_2 \models \Gamma$, and $s_1 \approx_{\ell} s_2$ for some $\ell$. Then for SlicStan statement $S$ and expression $E$:
\begin{enumerate}
	\item If $~\Gamma \vdash E:(\tau,\ell)$ and $(s_1, E) \Downarrow V_1$ and $(s_2, E) \Downarrow V_2$ then $V_1 = V_2$. 
	\item If $~\Gamma \vdash S:\ell$ and $(s_1, S) \Downarrow s_1', w_1$ and $(s_2, S) \Downarrow s_2', w_2$ then $s_1' \approx_{\ell} s_2'$.
\end{enumerate}
\end{restate}
\begin{proof}
(1)~follows by rule induction on the derivation $\Gamma \vdash E:(\tau, \ell)$, and using that if $\Gamma \vdash E:(\tau, \ell)$, $x \in \readset(E)$ and $\Gamma(x) = (\tau', \ell')$, then $\ell' \leq \ell$. (2)~follows by rule induction on the derivation $\Gamma \vdash S:\ell$ and using (1).

Most cases follow trivially from the inductive hypothesis. An exception is the \ref{Target} case, which we show below.\\
\begin{tabular}{ll}\ref{Target} & \parbox{0.87\linewidth}{ We use the premise $\forall \ell' > \ell. R_{\Gamma \vdash \ell'}(S) = \emptyset$, together with a lemma that for $S$, $s_1$ and $s_2$ such that $s_1, S \Downarrow s_1', w_1$, and $s_2, S \Downarrow s_2', w_2$, and $\forall x \in R(S). s_1(x) = s_2(x)$, we have that $w_1 = w_2$. (This lemma follows by structural induction on $S$.) In the case of \ref{Target}, $s_1, \mathrm{target}(S) \Downarrow w_1$, and $s_2, \mathrm{target}(S) \Downarrow w_2$ and $R(S) = \bigcup_{\ell'} R_{\Gamma \vdash \ell'}(S) = \left(\bigcup_{\ell' \leq \ell} R_{\Gamma \vdash \ell'}(S) \right) \cup \left(\bigcup_{\ell' > \ell} R_{\Gamma \vdash \ell'}(S) \right) = \bigcup_{\ell' \leq \ell} R_{\Gamma \vdash \ell'}(S)$.
Then, for any $x \in R(S)$, $x \in R_{\Gamma \vdash \ell'}(S)$ for some $\ell' \leq \ell$ , so $\Gamma(x) = (\tau, \ell_x)$ such that $\ell_x \leq \ell' \leq \ell$. And thus, by definition of $\approx_\ell$, $s_1(x) = s_2(x)$ for any $x \in R(S)$. By applying the lemma above, we then get $w_1 = w_2$, as required.}
\end{tabular}

\end{proof}

\begin{restate}{Lemma~\ref{lem:shredisleveled} (Shredding produces single-level statements)} 
	$$ S \shred[\Gamma] \shredded \implies \singlelevelS{\lev{data}}{S_D} \wedge \singlelevelS{\lev{model}}{S_M} \wedge \singlelevelS{\lev{genquant}}{S_Q}$$
\end{restate}
\begin{proof}
By rule induction on the derivation of $S \shred S_D, S_M, S_Q$.
\end{proof}

\begin{restate}{Lemma~\ref{lem:single-lev-prop} (Property of single-level statements)} ~\\
	Let $~\Gamma_{\sigma}, \Gamma_{\mathbf{x}}\vdash S$ be SlicStan program, such that $S$ is single-level statement of level $\ell$, $\Gamma \vdash \ell(S)$. Then there exist unique functions $f$ and $\phi$, such that for any $\sigma, \mathbf{x} \models \Gamma_{\sigma}, \Gamma_{\mathbf{x}}$: 
	$$  \sem{S}(\sigma)(x) = f(\sigma_{\leq \ell}, \mathbf{x}_{\leq \ell})\cup \sigma_{> \ell} , \hquad \phi(\sigma_{\leq \ell})(\mathbf{x}_{\leq \ell}), $$
	where we write $\sigma_{\leq \ell}=\{(x\mapsto V)\in \sigma\mid \Gamma_{\sigma}(x)=(\_,\ell)\}$ and $\sigma_{>\ell}=\sigma\setminus \sigma_{\leq \ell}$.
\end{restate}
\begin{proof}	
	This property follows from noninterference (\autoref{lem:noninterf}), if we understand factor and sample statements as assignments to a reserved weight variables of different levels. 
	Let $\Gamma, S$ be a SlicStan program and suppose we obtain $S'$ by:
	\begin{itemize}
		\item Substituting every $\kw{factor}(E)$ statement with $w_{\ell} = w_{\ell} * E$, where $\Gamma(E) = \kw{real}, \ell$ and $w_{\lev{data}}$, $w_{\lev{model}}$ and $w_{\lev{qenquant}}$ are write-only, distinct and reserved variables in the program.
		\item Substituting every $L \sim d(E_1, \dots, E_n)$ statement with $w_{\ell} = w_{\ell} * d_{\mathrm{pdf}}(L \mid E_1, \dots, E_n)$, where $\Gamma(d_{\mathrm{pdf}}(L \mid E_1, \dots, E_n)) = \kw{real}, \ell$.
	\end{itemize} 
	
	Then for all $\sigma, \mathbf{x} \models \Gamma$, we have $\semp{S}(\sigma)(\mathbf{x}) = \prod_{\ell}\sigma'(w_{\ell})$, where $\sigma' = \sems{S'}(\sigma, \forall \ell. w_{\ell} \mapsto 1)(\mathbf{x})$. By non-interference (\autoref{lem:noninterf}), for any level $\ell$ and store $\sigma_2 \approx_{\ell} \sigma$, if $\sigma_2' = \sems{S'}(\sigma_2, \forall \ell. w_{\ell} \mapsto 1)(\mathbf{x})$, then $\sigma_2' \approx_{\ell} \sigma'$. 
	Thus $\sigma_2'(w_{\ell'}) = \sigma_2(w_{\ell'})$ for $\ell' \leq \ell$, and therefore, when $S$ is a single-level statement of level $\ell$, $\sems{S'}(\sigma, \forall \ell. w_{\ell} \mapsto 1)(\mathbf{x}) = f(\sigma_{\leq \ell}, \mathbf{x}_{\leq{\ell}}), \sigma_{>\ell}, w_{\leq \ell} \mapsto \phi(\sigma_{\leq \ell}, \mathbf{x}_{\leq{\ell}}), w_{> \ell} \mapsto 1$ , for some functions $f$ and $\phi$. 
	Finally, this gives us $\sems{S}(\sigma, \mathbf{x}) = (f(\sigma_{\leq \ell}, \mathbf{x}_{\leq{\ell}}), \sigma_{>\ell})$, $\semp{S}(\sigma, \mathbf{x}) = \phi(\sigma_{\leq \ell}, \mathbf{x}_{\leq{\ell}})$.	
\end{proof}

\begin{restate}{Lemma~\ref{lem:shred} (Semantic Preservation of $\shred$)}~ \\
	If $~\Gamma \vdash S:\lev{data} $ and $ S \shred[\Gamma] \shredded $ then $\sem{S} = \sem{S_D; S_M; S_Q}$.
\end{restate}
\begin{proof}
	Follows by adapting proof from \cite{SlicStanPOPL}.
\end{proof}

\begin{restate}{Lemma~\ref{lem:shred2} (Semantic Preservation of $\shred$ 2)}~ \\
	If $~\Gamma \typingspecial S:\lev{l1} $ and $ S \shred[\Gamma] S_1, S_2, S_3 $ then $\sem{S} = \sem{S_1; S_2; S_3}$.
\end{restate}
\begin{proof}
	Follows by adapting proof from \cite{SlicStanPOPL}.
\end{proof}

\begin{lemma} \label{lem:expreads}
For a SlicStan expression $E$ and a function $\phi(x, y) = V$, where $V$ is a value such that $(\sigma, x, y), E \Downarrow V$ for every $x$ and $y$ and some $\sigma$, if $x \notin \readset(E)$, then:
$$\exists \phi' \text{ such that } \phi(x, y) = \phi'(y) \text{ for all } x, y$$
\end{lemma}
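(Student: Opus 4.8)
The plan is to strengthen the statement into a ``reads-determinism'' property and prove that by structural induction on $E$. Concretely, fix $\sigma$ and regard $\phi$ as a function of the value stored at $x$ together with the values recorded in $y$. Saying that $\phi$ factors through $y$ alone is the same as saying: for any two states $s_1, s_2$ that agree on every variable in $\readset(E)$, if $(s_1, E)\Downarrow V_1$ and $(s_2, E)\Downarrow V_2$ then $V_1 = V_2$. The lemma follows immediately from this: if $x\notin\readset(E)$, then two states differing only in the value assigned to $x$ agree on all of $\readset(E)$, so they evaluate $E$ to the same value; hence $\phi(v_1, y) = \phi(v_2, y)$ for all $x$-values $v_1, v_2$, and we may define $\phi'(y)$ to be this common value.

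First I would prove the reads-determinism property by induction on the structure of $E$, following the clauses of \autoref{def:readset} and the corresponding expression-evaluation rules. The base cases are immediate: for a constant $c$, evaluation yields $c$ regardless of the state; for a variable $v$, $\readset(v)=\{v\}$, so $s_1$ and $s_2$ agree on $v$ and both evaluate to $s_1(v)=s_2(v)$. For the compound forms $[E_1,\dots,E_n]$, $E_1[E_2]$, and $f(E_1,\dots,E_n)$, the read set is the union of the read sets of the subexpressions, so $s_1$ and $s_2$ agree on each $\readset(E_i)$; the induction hypothesis gives equality of every subexpression value, and since each of the corresponding evaluation rules is a deterministic function of those values, the results coincide. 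The array-comprehension case $[E'\mid x'~\kw{in}~E_1:E_2]$ needs slight care with the bound variable: the bounds $n,m$ are computed from $E_1,E_2$ and agree by the induction hypothesis, and for each index $i$ the body $E'$ is evaluated in the state extended by $x'\mapsto i$; since the two extended states still agree on $\readset(E')$, the induction hypothesis applied to $E'$ gives the same $V_i$ on both sides, and hence the same array.

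The one genuinely non-routine case is $\kw{target}(S)$, where $\readset(\kw{target}(S))=\readset(S)$ and evaluation returns the weight produced by running $S$. Here I would invoke the statement-level reads lemma already used in the proof of \autoref{lem:noninterf} --- namely that if $s_1, S\Downarrow s_1', w_1$ and $s_2, S\Downarrow s_2', w_2$ and $s_1, s_2$ agree on every variable in $\readset(S)$, then $w_1 = w_2$ (proved by structural induction on $S$). Applying it with states that agree on $\readset(S)=\readset(\kw{target}(S))$ yields equal weights, closing the case. The main obstacle is therefore not the expression induction itself but ensuring this companion statement lemma is available; since it is exactly the sub-lemma stated (and proved by induction on $S$) inside \autoref{lem:noninterf}, the argument goes through, and the only remaining subtlety within the expression induction is the bound-variable bookkeeping in the comprehension case.
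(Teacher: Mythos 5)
Your proof is correct and follows essentially the same route as the paper, whose entire argument is ``by induction on the structure of $E$'': you simply spell out the induction the paper leaves implicit, including the strengthened reads-determinism formulation, the bound-variable bookkeeping in the comprehension case, and the appeal to the statement-level weight lemma from \autoref{lem:noninterf} for the $\kw{target}(S)$ case. The only point worth noting is that the expression-level and statement-level reads lemmas are formally a mutual induction (statements contain expressions and $\kw{target}(S)$ contains statements), but this is well-founded on size and is glossed over identically by the paper.
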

\begin{proof}
By induction on the structure of $E$.
\end{proof}

\begin{restate}{Theorem~\ref{th:shred_gen}~(Shredding induces a factorisation of the density).}  ~ \\
	Suppose $\Gamma \vdash S : \lev{data}$ and $~S \shred[\Gamma] S_D, S_M, S_Q$
	and $\Gamma = \Gamma_{\sigma} \cup \Gamma_{\data} \cup \Gamma_{\params} \cup \Gamma_{\quants}$.
	%
	For all $\sigma$, $\data$, $\params$, and $\quants$:
	if $\sigma, \data, \params, \quants \models \Gamma_{\sigma}, \Gamma_{\data}, \Gamma_{\params}, \Gamma_{\quants}$,
	and $\semp{S}(\sigma)(\data, \params, \quants) \propto p(\data, \params, Q)$
	and  $\tildeset(S_Q)=\dom(\Gamma_Q)$ then:
	\begin{enumerate}
		\item $\semp{S_M}(\sigma_D)(\data, \params, \quants) \propto p(\params, \data)$
		\item $\semp{S_Q}(\sigma_M)(\data, \params, \quants) = p(Q \mid \params, \data)$
	\end{enumerate}
	where $\sigma_D = \sems{S_D}(\sigma)(\data, \params, \quants)$
	and $\sigma_M = \sems{S_M}(\sigma_D)(\data, \params, \quants)$. 
\end{restate}

\begin{proof}
We prove this by establishing a more general result:

For $\sigma, \data, \params, \quants \models \Gamma_{\sigma}, \Gamma_{\data}, \Gamma_{\params}, \Gamma_{\quants}$, $A = \tildeset(S_Q) \subseteq Q$ and some $B \subseteq Q \setminus A$, if $\semp{S}(\sigma)(\data, \params, \quants) \propto p(\data, \params, A \mid B)$ then:
\begin{enumerate}
	\item $\semp{S_D}(\sigma)(\data, \params, \quants) = 1$
	\item $\semp{S_M}(\sigma_D)(\data, \params, \quants) = p(\params, \data)$
	\item $\semp{S_Q}(\sigma_M)(\data, \params, \quants) = p(A \mid \params, \data, B)$
\end{enumerate}

Note that in the case where $\tildeset(S_Q) = Q$, we have $A = Q$ and $B = \emptyset$, and the original statement of the theorem, $\semp{S_Q}(\sigma_M)(\data, \params, \quants) = p(Q \mid \params, \data)$, holds. 

We prove the extended formulation above by induction on the structure of $S$ and use of \autoref{lem:sem_properties}, \autoref{lem:shredisleveled} and \autoref{lem:single-lev-prop}, \autoref{lem:shred}.

Take any  $\sigma, \data, \params, \quants \models \Gamma_{\sigma}, \Gamma_{\data}, \Gamma_{\params}, \Gamma_{\quants}$ and let 
\begin{lstlisting}
	$\Phi(S, S_D, S_M, S_Q) \deq $
		$\Gamma \vdash S: \lev{data} \wedge S \shred S_D, S_M, S_Q \wedge A = \tildeset(S_Q)$ 
		$\implies \exists B \subseteq Q \setminus A. \forall \sigma_D, \sigma_M. \left( \right.$
			$\semp{S}(\sigma)(\data, \params, Q) \propto p(\data, \params, A \mid B) \wedge  \sem{S_D}(\sigma)(\data, \params, Q) = \sigma_D \wedge  \sem{S_M}(\sigma_D)(\data, \params, Q) = \sigma_M $
									$\implies \semp{S_D}(\sigma)(\data) = 1$
										 $\left.\wedge \; \semp{S_M}(\sigma_D)(\data, \params) = p(\params, \data)\right.$
										 $\left.\wedge \; \exists B \subseteq Q \setminus \tildeset(S_Q). \semp{S_Q}(\sigma_M)(\data, \theta, Q) = p(A \mid \params, \data, B) \right)$
\end{lstlisting}

Take any $\Gamma, S, S_D, S_M, S_Q$ such that $S \shred S_D, S_M, S_Q$, $A = \tildeset(S_Q)$, and take any $\sigma, \data, \params, \quants \models \Gamma_{\sigma}, \Gamma_{\data}, \Gamma_{\params}, \Gamma_{\quants}$, an unnormalised density $p$ and $B \subseteq Q \setminus A$, such that $\semp{S}(\sigma)(\allp) \propto p(\data, \params, A \mid B)$. We prove by rule induction on the derivation of $S \shred S_D, S_M, S_Q$ that $\Phi(S, S_D, S_M, S_Q)$.

\ref{Shred Seq}\qquad Let $S = S_1; S_2$ and $S_1 \shred S_{1D}, S_{1M}, S_{1Q}$ and $S_2 \shred S_{2D}, S_{2M}, S_{2Q}$. Thus $S \shred (S_{1D}; S_{2D}), (S_{1M}; S_{2M}), (S_{1Q}; S_{2Q})$. 
	
	Assume $\Phi(S_1, S_{1D}, S_{1M}, S_{1Q})$ and $\Phi(S_2, S_{2D}, S_{2M}, S_{2Q})$. 
	
	Let:
	\begin{itemize}
		\item $A_1 = \tildeset(S_{1Q})$ and $B_1 \subseteq Q \setminus A_1$ is such that $\semp{S_{1Q}}(\sigma_M)(\allp) = p_1(A_1 \mid \data, \params, B_1)$.
		
		\item $\sem{S_1}(\sigma)(\allp) = \sigma'$.
		
		\item $\semp{S_1}(\sigma)(\allp) \propto p_1(\data, \params, A_1 \mid B_1)$.
		
		\item $A_2 = \tildeset(S_{2Q})$ and $B_2 \subseteq Q \setminus A_2$ is such that $\semp{S_{2Q}}(\sigma_M)(\allp) = p_2(A_2 \mid \data, \params, B_2)$.

		\item $\semp{S_2}(\sigma')(\allp) \propto p_2(\data, \params, A_2 \mid B_2)$.
	\end{itemize}
	
	Thus, by Lemma~\ref{lem:sem_properties}, $\semp{S} = \semp{S_1; S_2} = \semp{S_1} \times \semp{S_2}$, so $p(\data, \params, A \mid B) \propto p_1(\data, \params, A_1 \mid B_1)p_2(\data, \params, A_2 \mid B_2)$.

	For (1), we have $\forall \sigma \models \Gamma_\sigma. \semp{S_{1D}}(\sigma)(\allp) = \semp{S_{2D}}(\sigma)(\allp) = 1$. Thus, by Lemma~\ref{lem:sem_properties}, $\semp{S_{1D}; S_{2D}} = \semp{S_{1D}} \times \semp{S_{2D}} = 1$.
	
	From $\Phi(S_1, S_{1D}, S_{1M}, S_{1Q})$ and $\Phi(S_2, S_{2D}, S_{2M}, S_{2Q})$ we also have:
	\begin{itemize}
	\item $\semp{S_{1Q}}(\sigma_M)(\allp) = p(A_1 \mid \params, \data, B_1)$
	\item $\semp{S_{2Q}}(\sigma_M')(\allp) = p(A_2 \mid \params, \data, B_2)$
	\end{itemize}
	
	$$A = \tildeset(S_Q) = \tildeset(S_{1Q}; S_{2Q}) = \tildeset(S_{1Q}) \cup \tildeset(S_{2Q}) = A_1 \cup A_2$$
	
	From $S$ well typed, it must be the case that $A_1 \cap A_2 = \emptyset$. Thus, we write $A = A_1, A_2$.
	
	We will prove that the property holds for $B = B_1 \cup B_2 \setminus A_1 \setminus A_2$.

	By semantic preservation of $\shred$ (Lemma \ref{lem:shred}), $\semp{S_1} = \semp{S_{1D}; S_{1M}; S_{1Q}} = \semp{S_{1D}} \times \semp{S_{1M}} \times \semp{S_{1Q}} \propto 1 \times p_1(\params, \data) \times p_1(A_1 \mid \params, \data, B_1)$. Similarly, $\semp{S_2} \propto 1 \times p_2(\params, \data) \times p_2(A_2 \mid \params, \data, B_2) = p_2(\params, \data) p_2(A_2 \mid \params, \data, A_1, B_1)$.
	
	But $p(\params, \data, A \mid B) \propto p_1(\params, \data, A_1 \mid B_1)p_2(\params, \data, A_2 \mid B_2)$, so:
	$$p(\params, \data, A \mid B)  
	\propto p_1(\params, \data) p_1(A_1 \mid \params, \data, B_1) p_2(\params, \data) p_2(A_2 \mid \params, \data, A_1, B_1)$$
	
	So, 
	\begin{align*}
	p(\params, \data) &= \int p(\data, \params, A \mid B) p(B) dA dB \\
	&\propto \int p_1(\params, \data) p_1(A_1 \mid \params, \data, B_1) p_2(\params, \data) p_2(A_2 \mid \params, \data, A_1, B_1) p(B) dA_1 dA_2 dB  \\
	&\propto p_1(\params, \data)p_2(\params, \data) \int p(B) p_1(A_1 \mid \params, \data, B_1) p_2(A_2 \mid \params, \data, A_1, B_1) dA_1 dA_2 dB\\
	&= p_1(\params, \data)p_2(\params, \data) \int p(B) \left( \int p_1(A_1 \mid \params, \data, B_1) \left( \int p_2(A_2 \mid \params, \data, A_1, B_1) dA_2 \right) dA_1 \right) dB \\
	&= p_1(\params, \data)p_2(\params, \data)\\
	&\propto p_1(\params, \data)p_2(\params, \data) 
	\end{align*}
$$\text{Thus } \semp{S_M} = \semp{S_{1M}; S_{2M}} \propto p_1(\params, \data)p_2(\params, \data) \propto p(\params, \data)$$

Finally, for last property on $S$, we use the chain rule of probability, semantics property of sequencing, and the result from above to get:
\begin{align*}
p(A \mid \data, \params, B) &= \frac{p(\data, \params, A \mid B)}{p(\data, \params \mid B)} \\
&\propto \frac{p_1(\data, \params)p_2(\data, \params)p_1(A_1 \mid \data, \params, B_1)p_2(A_2 \mid \data, \params, B_2)}{p(\data, \params)} \times \frac{p(B)}{p(B \mid \data, \params)} \\ 
&\propto p_1(A_1 \mid \data, \params, B_1)p_2(A_2 \mid \data, \params, B_2) \\
&= \semp{S_{1Q}} \semp{S_{2Q}} = \semp{S_Q}   
\end{align*}

Thus:
\begin{align*}
p(A \mid \data, \params, B) &= \frac{ p_1(A_1 \mid \data, \params, B_1)p_2(A_2 \mid \data, \params, B_2)}{Z} 
\end{align*}

Where: 
\begin{align*}
Z &= \int  p_1(A_1 \mid \data, \params, B_1)p_2(A_2 \mid \data, \params, B_2) dA \\
&= \int  p_1(A_1 \mid \data, \params, B_1)\left( \int p_2(A_2 \mid \data, \params, B_2) dA_2 \right) dA_1 \\
&= 1
\end{align*}

So $Z = 1$, and $p(A \mid \data, \params, B) = p_1(A_1 \mid \data, \params, B_1)p_2(A_2 \mid \data, \params, B_2) = \semp{S_Q}$.
	
Thus:
\begin{itemize}
\item $\semp{S_D} = \semp{S_{1D}; S_{2D}} = 1$
\item $\semp{S_M} = \semp{S_{1M}; S_{2M}} \propto p_1(\params, \data)p_2(\params, \data) = p(\params, \data)$
\item $\semp{S_Q} = \semp{S_{1Q}; S_{2Q}} = p_1(A_1 \mid \params, \data, B_1)p_2(A_2 \mid \params, \data, A_1, B_1) = p(A_1, A_2 \mid \params, \data, B)$
\end{itemize} 

$\Phi((S_1;S_2), (S_{1D}; S_{2D}), (S_{1M}; S_{2M}), (S_{1Q}; S_{2Q}))$ from here.
\end{proof}

\begin{restate}{Lemma~\ref{lem:shredisleveled2} (Shredding produces single-level statements 2)} 
	$$ S \shred[\Gamma] S_1, S_2, S_3 \implies \singlelevelS{\lev{l1}}{S_1} \wedge \singlelevelS{\lev{l2}}{S_2} \wedge \singlelevelS{\lev{l3}}{S_3}$$
\end{restate}
\begin{proof}
	By rule induction on the derivation of $S \shred S_1, S_2, S_3$.
\end{proof}

\begin{restate}{Lemma~\ref{lem:shred2} (Semantic preservation of $\shred$, $\typingspecial$)} ~ \\
	If $~\Gamma \typingspecial S:\lev{l1} $ and $ S \shred[\Gamma] S_1, S_2, S_3 $ then $\sem{S} = \sem{S_1; S_2; S_3}$.
\end{restate}
\begin{proof}
\end{proof}

\begin{restate}{Lemma~\ref{lem:single-lev-prop2} (Property of single-level statements 2)} ~\\
	Let $~\Gamma_{\sigma}, \Gamma_{\mathbf{x}}, S$ be a SlicStan program, and $\Gamma \typingspecial S : \lev{l1}$, and $S$ is single-level statement of level $\ell$, $\Gamma \typingspecial \ell(S)$. Then there exist unique functions $f$ and $\phi$, such that for any $\sigma, \mathbf{x} \models \Gamma_{\sigma}, \Gamma_{\mathbf{x}}$: 
	\begin{enumerate}
		\item If $\ell = \lev{l1}$, then $\sem{S}(\sigma)(x) \hquad = \hquad \left(f(\sigma_{\lev{l1}}, \mathbf{x}_{\lev{l1}}), \sigma_{\lev{l2}}, \sigma_{\lev{l3}} \right), \qquad \qquad \; \phi(\sigma_{\lev{l1}})(\mathbf{x}_{\lev{l1}})$
		\item If $\ell = \lev{l2}$, then $\sem{S}(\sigma)(x) \hquad = \hquad \left(\sigma_{\lev{l1}}, f(\sigma_{\lev{l1}}, \sigma_{\lev{l2}}, \mathbf{x}_{\lev{l1}}, \mathbf{x}_{\lev{l2}}), \sigma_{\lev{l3}} \right), \hquad \phi(\sigma_{\lev{l1}}, \sigma_{\lev{l2}})(\mathbf{x}_{\lev{l1}}, \mathbf{x}_{\lev{l2}})$
		\item If $\ell = \lev{l3}$, then $\sem{S}(\sigma)(x) \hquad = \hquad \left(\sigma_{\lev{l1}}, \sigma_{\lev{l2}}, f(\sigma_{\lev{l1}}, \sigma_{\lev{l3}}, \mathbf{x}_{\lev{l1}}, \mathbf{x}_{\lev{l3}}) \right), \hquad \phi(\sigma_{\lev{l1}}, \sigma_{\lev{l3}})(\mathbf{x}_{\lev{l1}}, \mathbf{x}_{\lev{l3}})$ 
	\end{enumerate}	
\end{restate}
\begin{proof}
	By understanding factor and sample statements as assignment to a reserved weight variables of different levels (similarly to \autoref{lem:single-lev-prop}) and noninterference (\autoref{lem:noninterf2}).
\end{proof}

\begin{restate}{Lemma~\ref{lem:exists} (Existence of \lev{model} to \lev{genquant} transformation)}
	For any SlicStan program $\Gamma, S$ such that $\Gamma \vdash S : \lev{l1}$, 
	and a variable $z \in \dom(\Gamma)$ such that $\Gamma(z) = (\kw{int}\langle K \rangle, \lev{model})$,
	there exists a SlicStan program $\Gamma', S'$, such that,  
	$$\Gamma, S \xrightarrow{z} \Gamma', S' 
	\quad \text{and} \quad 
	\Gamma'(z) = (\kw{int}\langle K \rangle, \lev{genquant})$$
\end{restate}
\begin{proof}
	Take a SlicStan program $\Gamma, S$, a typing environment $\Gamma_M$, a variable $z$, and statements $S_D, S_M$ and $S_Q$, such that:
	$$\Gamma(z) = (\kw{int}\langle K \rangle, \lev{model}) \quad
	\Gamma \vdash S : \lev{data} \quad 
	\Gamma \xrightarrow{z} \Gamma_M \quad 
	S \shred S_{D}, S_{M}, S_{Q} \quad
	\Gamma_M \typingspecial S_M : \lev{l1}
	$$
	
	Take also statements $S_1, S_2, S_3$, and $S_M'$, and a typing environment $\Gamma_{\mathrm{ne}}$ such that 
	$$S_{M} \shred[\Gamma_M] S_1, S_2, S_3 \quad 
	\Gamma_{\mathrm{ne}} = \mathrm{ne}(\Gamma, \Gamma_M, z)$$
	$$S_M' = 
	S_1; 
	f = \phi(\Gamma_{\mathrm{ne}}) \{\kw{elim}(\kw{int}\langle K \rangle z)~S_2 \}; 
	\kw{factor}(f[\dom(\Gamma_{\mathrm{ne}})]); 
	S_3; \kw{gen}(z) S_2; \store(S_2)$$
	
	Let $\Gamma'$ is such that $\dom(\Gamma') = \dom(\Gamma) \cup \{f\}$ and for all $x : \tau, \ell \in \Gamma$:
	$$
	\Gamma'(x) =
	\begin{cases}
	(\tau, \ell) & \text{if } \ell \neq \lev{model} \\
	(\tau, \ell) & \text{if } \ell = \lev{model} \text{ and } \Gamma_M(x) \neq (\tau, \lev{l2}) \\
	(\tau, \lev{genquant}) & \text{if } \ell = \lev{model} \text{ and } \Gamma_M(x) = (\tau, \lev{l2}) \\
	\end{cases}
	$$
	
	By semantic preservation of shredding (\autoref{lem:shred}, \autoref{lem:shred2}) and type preservation of the operational semantics (\cite{SlicStanPOPL}), $\Gamma \vdash S_D; S_1; S_2; S_3; S_Q : \lev{data}$, and thus, by \ref{Seq}, $\Gamma \vdash S_D : \lev{data}$, $\Gamma \vdash S_1 : \lev{data}, \dots, \Gamma \vdash S_Q : \lev{data}$.
	
	By definition of $\Gamma'$, $\Gamma'_{\lev{data}} \subset \Gamma_{\lev{data}}$. $S_D$ is single-level of level $\lev{data}$ and $\Gamma \vdash S_D : \lev{data}$, so $\Gamma_{\lev{data}} \vdash S_D : \lev{data}$ and thus $\Gamma' \vdash S_D : \lev{data}$. Similarly, $\Gamma \vdash S_1 : \data$ and $\Gamma \vdash S_3 : \data$.
	
	$\Gamma \vdash S_2 : \lev{data}$, so using \ref{Phi}, \ref{Elim} and \ref{Factor}, and noting that by definition $\dom(\Gamma_{\mathrm{ne}}) \subset \dom(\Gamma_{M, \lev{l1}})$, so $\Gamma_{\mathrm{ne}} \subset \Gamma$, we can derive:
	$$\Gamma' \vdash f = \phi(\Gamma_{\mathrm{ne}}) \{\kw{elim}(\kw{int}\langle K \rangle z)~S_2 \}; 
	\kw{factor}(f[\dom(\Gamma_{\mathrm{ne}})]) : \lev{data}$$
	
	By $\Gamma \vdash S_2 : \lev{data}$ and the definition of $\Gamma'$, and using \ref{Gen} and definition of $\store$, we also derive:
	$$\Gamma' \vdash \kw{gen}(z)~ S_2; \store(S_2) : \lev{genquant}$$
	
	Finally, $S_Q$ is a single-level statement of level $\lev{genquant}$ and for all $x : \tau, \ell \in \Gamma$, $x : \tau, \ell' \in \Gamma$, where $\ell \leq \ell'$. Therefore, $\Gamma \vdash S_Q: \lev{data}$ implies $\Gamma' \vdash S_Q : \lev{data}$.
	
	Altogether, this gives us $\Gamma' \vdash S_D; S_M'; S_Q$, and so by \ref{Elim Gen}, $\Gamma, S \xrightarrow{z} \Gamma', S_D; S_M', S_Q$.
	
\end{proof}

\begin{lemma} \label{lem:s2}
Let $\Gamma, S$ be a SlicStan program, such that $\sigma, \mathbf{x} \models \Gamma$, $\sems{S}(\sigma)(\mathbf{x}) = \sigma'$ and $\semp{S}(\sigma)(\mathbf{x}) = \psi(\mathbf{x})$ for some function $\psi$. If  $f \notin \dom(\Gamma)$ is a fresh variable, $z, z_1, \dots z_n \in \dom(\Gamma_{\mathbf{x}})$ are discrete variables of base types $\kw{int}\langle K \rangle, \kw{int}\langle K_1 \rangle, \dots, \kw{int}\langle K_n \rangle$ respectively, and $S'$ is a statement such that
$$
S' = \quad f = \phi(\kw{int}\langle K_1 \rangle z_1, \dots \kw{int}\langle K_n \rangle z_n) \{\kw{elim}(\kw{int}\langle K \rangle z)~S \}; \quad \kw{factor}(f[z_1, \dots, z_n]);
$$

then  $\sems{S'}(\sigma)(\mathbf{x}) = \sigma''$  with $\sigma''[-f] = \sigma'$ and $\semp{S'}(\sigma)(\mathbf{x}) = \sum_{z=1}^{K}\psi(\mathbf{x})$.
\end{lemma}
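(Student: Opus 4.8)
The plan is to reduce $S'$ to its underlying core-language form by unfolding the two derived forms, and then to read both semantics off the big-step rules, using compositionality of $\sem{\cdot}$ (\autoref{lem:sem_properties}) together with the hypothesis $\semp{S}(\sigma)(\mathbf{x}) = \psi(\mathbf{x})$ applied pointwise. Unfolding $\kw{elim}$ and $\phi$, the statement $S'$ is a sequence of an assignment $f = E_\phi$ followed by $\kw{factor}(f[z_1,\dots,z_n])$, where
$$E_\phi = [\,\cdots[\,\kw{target}(\kw{factor}(\kw{sum}([\kw{target}(S)\mid z~\kw{in}~1\!:\!K])))\mid z_1~\kw{in}~1\!:\!K_1]\cdots\mid z_n~\kw{in}~1\!:\!K_n\,].$$
Because $S'$ is a sequence, \autoref{lem:sem_properties} lets me handle the two statements in turn: the assignment contributes weight $1$ and extends the store by $f$ (an expression cannot mutate the store, so $\sigma,\mathbf{x}$ are untouched), while the subsequent factor contributes the weight $f[z_1,\dots,z_n]$ and leaves the store fixed.

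The heart of the argument is evaluating $E_\phi$. By \ref{Eval ArrComp} ($n$ nested applications), the value of $E_\phi$ is the $n$-dimensional array whose entry at $(j_1,\dots,j_n)$ is obtained by evaluating the body with $z_1,\dots,z_n$ substituted by $j_1,\dots,j_n$. For each entry, \ref{Eval Target} and \ref{Eval Factor} reduce the body to the value of $\kw{sum}([\kw{target}(S)\mid z~\kw{in}~1\!:\!K])$, which by \ref{Eval ArrComp} and \ref{Eval PrimCall} is $\sum_{k=1}^{K}$ of the weights returned by $\kw{target}(S)$ evaluated with $z$ substituted by $k$. A routine substitution lemma --- evaluating $E[c/x]$ in state $s$ agrees with evaluating $E$ in $s[x\mapsto c]$, valid since $z,z_1,\dots,z_n \in \dom(\Gamma_{\mathbf{x}})$ are never assigned --- converts these substitutions into store updates, so that the weight returned by $\kw{target}(S)$ at the valuation $z=k,\ z_\ell=j_\ell$ is exactly $\semp{S}$ there, i.e. $\psi$ evaluated at that point. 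Hence the $(j_1,\dots,j_n)$-entry of $f$ is $\sum_{k=1}^{K}\psi$ with $z=k$ and $z_\ell=j_\ell$. Crucially, since $\kw{target}(\cdot)$ returns only a weight and discards the intermediate store (\ref{Eval Target}), no leaf evaluation mutates $\sigma$ or $\mathbf{x}$, so the assignment leaves every pre-existing variable unchanged and merely adds $f$; deleting $f$ then recovers the store half of the claim. Finally \ref{Eval Factor} reads $f$ at the actual $\mathbf{x}$-values of $z_1,\dots,z_n$, multiplying the weight by $f[z_1,\dots,z_n] = \sum_{k=1}^{K}\psi(\mathbf{x})$ (the sum ranging over the $z$-coordinate, all other coordinates fixed at their $\mathbf{x}$-values) and leaving the store fixed; multiplying by the assignment's weight $1$ gives $\semp{S'}(\sigma)(\mathbf{x}) = \sum_{z=1}^{K}\psi(\mathbf{x})$.

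The main obstacle is bookkeeping rather than conceptual content: making precise the pointwise use of $\semp{S}=\psi$ across the nested comprehension bindings. Since $z,z_1,\dots,z_n$ all live in $\dom(\Gamma_{\mathbf{x}})$, I need the substitution lemma to turn the operational substitutions $[c/x]$ of \ref{Eval ArrComp} into store updates and to verify these commute as expected (the inner $z$-binding shadows the ambient value while the outer bindings pin the $z_\ell$), so each leaf evaluation of $\kw{target}(S)$ corresponds to $\psi$ at a single well-defined valuation. The second delicate point is confirming that neither $\kw{target}$ nor the array-comprehension rules thread any state outward --- this is what forces $S'$ to extend the store only by $f$ --- which is immediate from \ref{Eval Target} and \ref{Eval ArrComp} but is essential to the store portion of the statement.
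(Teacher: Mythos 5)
Your proposal is correct in substance and takes exactly the route the paper intends: the paper's entire proof of this lemma is ``by examining the operational semantics of assignment, $\kw{factor}$, and the derived forms $\kw{elim}$ and $\phi$,'' which is precisely the unfolding you carry out, and your treatment of the weight --- nested \ref{Eval ArrComp}, then \ref{Eval Target} and \ref{Eval Factor} at each entry, plus a substitution lemma converting the comprehension bindings into store updates at variables in $\dom(\Gamma_{\mathbf{x}})$ that are never assigned --- is the right elaboration, including the shadowing discipline for the inner $z$ against the outer $z_\ell$.

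One caveat on the store half. What your argument actually establishes is $\sigma''[-f]=\sigma$: since every evaluation of $S$ inside $S'$ happens under $\kw{target}$, all of $S$'s store effects are discarded, so $S'$ cannot reproduce them. The lemma asserts $\sigma''[-f]=\sigma'$ where $\sigma'=\sems{S}(\sigma)(\mathbf{x})$, and the two coincide only when $S$ has no net store effect ($\sigma'=\sigma$); for an $S$ containing a deterministic assignment they differ, and no proof can close that gap. This looseness sits in the paper's statement rather than in your reasoning --- the paper itself applies the lemma in \autoref{fig:proof} with the store unchanged apart from $f$, and the \ref{Elim Gen} rule appends $\store(S_2)$ precisely because the $\kw{elim}$ and $\kw{gen}$ forms discard $S_2$'s store updates --- but your write-up should say explicitly that you have proved $\sigma''[-f]=\sigma$ and that identifying this with $\sigma'$ requires the additional hypothesis $\sems{S}(\sigma)(\mathbf{x})=\sigma$, rather than asserting that deleting $f$ ``recovers the store half of the claim.''
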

\begin{proof}
By examining the operational semantics of assignment, $\kw{factor}$, and the derived forms $\kw{elim}$ and $\phi$.
\end{proof}

\begin{lemma} \label{lem:sgen}
Let $\Gamma, S$ be a SlicStan program, such that $\sigma, \mathbf{x} \models \Gamma$, $\sems{S}(\sigma)(\mathbf{x}) = \sigma'$ and $\semp{S}(\sigma)(\mathbf{x}) = \psi(\mathbf{x})$ for some function $\psi$. If $z \in \dom(\Gamma_{\mathbf{x}})$ is a discrete variable of base type $\kw{int}\langle K \rangle$, and $S'$ is a statement such that
$$
S' = \quad \kw{gen}(z)~ S; \quad \store(S);
$$

then  $\sems{S'}(\sigma)(\mathbf{x}) = \sigma'$, $\psi(\mathbf{x})$ is normalisable with respect to $z$ with $\psi(\mathbf{x}) \propto p(z \mid \mathbf{x} \setminus \{z\})$, and $\semp{S'}(\sigma)(\mathbf{x}) = p(z \mid \mathbf{x} \setminus \{z\})$.
\end{lemma}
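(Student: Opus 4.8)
The plan is to unfold the two derived forms $\kw{gen}(z)~S$ and $\store(S)$, using their definitions together with the already-established meaning of $\kw{target}$, $\kw{factor}$, and sampling, and to track the store and the weight separately through the sequence $\kw{gen}(z)~S; \store(S)$.

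\textbf{Setup.} Recall $\kw{gen}(z)~S \deq z \sim \kw{categorical}([\kw{target}(S) \mid z~\kw{in}~1:K])$, so its semantics is to evaluate $\kw{target}(S)$ for each $z \in 1..K$, forming the vector $[\psi(\mathbf{x})|_{z=1}, \dots, \psi(\mathbf{x})|_{z=K}]$, and then contribute the categorical density of the current value of $z$ given that (unnormalised) probability vector. By the semantics of $\sim$ (rule \ref{Eval Sample}), a sample statement does \emph{not} bind anything to the store; it only multiplies the weight by $d(V \mid V_1, \dots, V_n)$. Here $d$ is $\kw{categorical}$, which internally normalises, so the weight contribution is exactly $\psi(\mathbf{x})\big/\sum_{z=1}^{K}\psi(\mathbf{x})$, evaluated at the current value of $z$. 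Separately, $\store(S)$ is $S$ with every $\kw{factor}$ and $\sim$ replaced by $\kw{skip}$ (\autoref{def:store}), so by construction $\sems{\store(S)} = \sems{S}$ and $\semp{\store(S)} = 1$ — this is precisely the property asserted immediately after \autoref{def:store}, which I would invoke (or re-derive by a trivial structural induction).

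\textbf{Main argument.} First I would compose the semantics using \autoref{lem:sem_properties}. For the \emph{store}: $\kw{gen}(z)~S$ is a single $\sim$-statement, which leaves the store unchanged except it reads/uses $z$ but does not rebind the variables assigned in $S$; however the store we want at the end is $\sigma' = \sems{S}(\sigma)(\mathbf{x})$, and this is supplied by the trailing $\store(S)$, since $\sems{\store(S)} = \sems{S}$. So $\sems{S'}(\sigma)(\mathbf{x}) = \sems{\store(S)}(\sems{\kw{gen}(z)~S}(\sigma)(\mathbf{x}))(\mathbf{x}) = \sems{S}(\cdot)(\mathbf{x}) = \sigma'$, using that $\kw{gen}(z)~S$ does not alter the store in a way that affects the deterministic assignments recomputed by $\store(S)$. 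For the \emph{weight}: $\semp{S'} = \semp{\kw{gen}(z)~S} \times \semp{\store(S)} = \left(\psi(\mathbf{x})/\sum_{z}\psi(\mathbf{x})\right) \times 1$. Then I would invoke the normalisation reasoning from \autoref{ssec:bs}: since $\psi(\mathbf{x}) = \semp{S}(\sigma)(\mathbf{x}) \propto p(\mathbf{x})$ (up to the constant absorbed elsewhere) and $p(\mathbf{x}) = p(\mathbf{x}\setminus\{z\})\,p(z \mid \mathbf{x}\setminus\{z\})$, normalising $\psi$ over $z$ kills the $z$-independent factor $p(\mathbf{x}\setminus\{z\})$, yielding exactly $p(z \mid \mathbf{x}\setminus\{z\})$. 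This also establishes the normalisability claim: $\sum_{z=1}^K \psi(\mathbf{x})$ is finite because $z$ has finite support $\kw{int}\langle K\rangle$, so $\psi$ is normalisable with respect to $z$ with $\psi(\mathbf{x}) \propto p(z\mid \mathbf{x}\setminus\{z\})$.

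\textbf{Main obstacle.} The subtle point — and the part I would be most careful about — is the store claim $\sems{S'}(\sigma)(\mathbf{x}) = \sigma'$. The statement $\kw{gen}(z)~S$ executes $\kw{target}(S)$ inside an array comprehension binding $z$; by the semantics of $\kw{target}$ (rule \ref{Eval Target}) and array comprehension, these inner evaluations of $S$ are run in a scratch context whose store side-effects are discarded (only the weight is returned), and the comprehension variable $z$ is locally bound and removed afterward. Hence $\kw{gen}(z)~S$ genuinely leaves the ambient store essentially untouched, and it is the final $\store(S)$ that reinstates the deterministic assignments of $S$, giving $\sigma'$. I would need to state explicitly that this is why $\store(S)$ appears in the construction — it is exactly the role flagged in the prose around \autoref{def:store}: ``$\kw{gen}(z)S_2$ discards any store computed by $S_2$, thus only contributing to the weight.'' Making the discarding-of-store behaviour of $\kw{target}$ inside the comprehension precise, and checking it interacts correctly with \autoref{lem:sem_properties}, is the one place the proof requires genuine care rather than routine unfolding.
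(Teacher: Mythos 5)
Your proposal is correct and takes essentially the same route as the paper: the paper's own (one-sentence) proof is precisely ``examine the operational semantics of $\sim$ and $\kw{target}$, and prove $\sems{\store(S)} = \sems{S}$, $\semp{\store(S)} = 1$ by structural induction on $S$,'' which is what you carry out in detail, including the key observations that \ref{Eval Sample} leaves the store untouched, that \ref{Eval Target} discards the scratch store of the inner runs of $S$, and that the trailing $\store(S)$ is what reinstates $\sigma'$. Your reading that $\kw{categorical}$ normalises its (unnormalised) weight vector is also the one the lemma's statement requires, matching the explicit normalisation in \autoref{ssec:bs}.
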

\begin{proof}
By examining the operational semantics of $\sim$ and $\kw{target}$, and by induction on the structure of $S$ to prove $\sems{\store(S)} = \sems{S}$ and $\semp{\store(S)} = 1$.
\end{proof}


\begin{display}{Typing Rules for Derived Forms:}
	\quad
	\staterule{Elim}
	{  \Gamma' \vdash S : \lev{data} \quad \readset_{\Gamma \vdash \lev{genquant}}(S) = \emptyset \quad \Gamma' = \Gamma[z \mapsto \kw{int}\langle K \rangle, \lev{model}]}
	{ \Gamma \vdash \kw{elim}(\kw{int}\langle K \rangle z)~ S : \lev{model}}\quad 
	
	\\[\GAP]\quad
	\staterule{Gen}
	{ \Gamma(z) = (\kw{int}, \lev{genquant}) \quad \Gamma \vdash S : \lev{data} \quad }
	{ \Gamma \vdash \kw{gen}(\kw{int}\langle K \rangle\,z)~S : \lev{genquant} }\qquad 
	
	\\[\GAP]\quad
	\staterule{Phi}
	{\Gamma' \vdash S : \lev{data} \quad \forall \ell' > \ell. \readset_{\Gamma \vdash \ell'}(S) = \emptyset \quad \Gamma' = \Gamma[z_1 \mapsto (\kw{int}\langle K_1 \rangle, \ell), \dots, z_N \mapsto (\kw{int}\langle K_N \rangle, \ell)]}
	{\Gamma \vdash \phi(\kw{int}\langle K_1 \rangle~z_1, \dots, \kw{int}\langle K_N \rangle~z_N)~S : \kw{real}, \ell}\qquad
\end{display}

\begin{restate}{Theorem~\ref{th:sempreservation} (Semantic preservation of $\xrightarrow{z}$)} $ $ \\
	For SlicStan programs $\Gamma, S$ and $\Gamma', S'$, and a discrete parameter $z$: 
	$\Gamma, S \xrightarrow{z} \Gamma', S' \rightarrow \sem{S} = \sem{S'}$.	
\end{restate}
\begin{proof} $ $ \\
	
	Let $\Gamma, S$ and $\Gamma', S'$ be SlicStan programs, and $z$ be a discrete parameter, such that 
	$\Gamma, S \xrightarrow{z} \Gamma', S'$.
	Let $S \shred S_D, S_M, S_Q$, $S \shred[\Gamma'] S_D', S_M', S_Q'$, and $S_M \shred[\Gamma''] S_1, S_2, S_3$ for $\Gamma''$ such that $\Gamma \xrightarrow{z} \Gamma''$ and $\Gamma'' \typingspecial S_M : \lev{l1}$.  
	
	Let $\Gamma = \Gamma_{\sigma}, \Gamma_{\lev{data}}, \Gamma_{\lev{model}}, \Gamma_{\lev{genquant}}$,
	$\Gamma' = \Gamma_{\sigma}', \Gamma_{\lev{data}}', \Gamma_{\lev{model}}', \Gamma_{\lev{genquant}}'$
	and\\ $\Gamma'' = \Gamma_{\sigma}'', \Gamma_{\lev{l1}}'', \Gamma_{\lev{l2}}'', \Gamma_{\lev{l3}}''$
	be the usual partitioning of each of the typing environments. 

	Let $z$ be a store such that $z \models \{z : \Gamma(z)\}$.
	
	Let $\data, \params$ and $\quants$ be stores such that $\data \models \Gamma_{\lev{data}}$, $z, \params \models \Gamma_{\lev{model}}$, and $\quants \models \Gamma_{\lev{genquant}}$.
	
	Let $\params_1, \params_2$ and $\params_3$ be a partitioning of $\params$, such that $\data, \params_1 \models \Gamma_{\lev{l1}}''$, $z, \params_2 \models \Gamma_{\lev{l2}}''$, and  $\params_3 \models \Gamma_{\lev{l3}}''$.
	
	Then, by definition of $\Gamma \xrightarrow{z} \Gamma''$, $\params_2 = z$. 

	By \autoref{th:shred_gen}: 
	\begin{itemize}
		\item $\semp{S_D}(\sigma)(\data, z, \params, \quants) = 1$
		\item $\semp{S_M}(\sigma_D)(\data, z, \params, \quants) \propto p(z, \params, \data)$
		\item $\semp{S_Q}(\sigma_M)(\data, z, \params, \quants) = p(\quants \mid z, \params, \data)$
	\end{itemize}
		
	
	$\Gamma,  S \xrightarrow{d} \Gamma', S'$, thus $S'$ must be of the form 
	$$S' = S_D; 
	\hquad S_{1}; 
	\hquad f = \phi(\Gamma_{\lev{l1}''}) \{\kw{elim}(\kw{int}\langle K \rangle z)~S_{2} \}; 
	\hquad \kw{factor}(f[\dom(\Gamma_{\lev{l1}}'')]); 
	\hquad S_{3}; 
	\hquad \kw{gen}(z) S_2; 
	\hquad \store(S_2);
	\hquad S_Q $$ 
	where 
	$ \Gamma \vdash S : \lev{data}, \hquad
	S \shred S_{D}, S_{M}, S_{Q}, \hquad
	\Gamma \xrightarrow{z} \Gamma'', \hquad
	\Gamma \typingspecial S_M : \lev{l1}, \hquad$ and
	$S_M \shred[\Gamma''] S_{1}, S_{2}, S_{3}$.
	
	The relation $\shred$ is semantics-preserving for well-typed programs with respect to both $\vdash$ and $\typingspecial$ (\autoref{lem:shred} and \autoref{lem:shred2}). Thus $\sem{S} = \sem{S_D; S_1; S_2; S_3; S_Q}$.

	We present a diagrammatic derivation of the change on store and density that each sub-part in the original and transformed program makes in \autoref{fig:proof}.
	
	\begin{figure*}[]
	\centering
	\resizebox{\linewidth}{!}{
	\begin{tikzpicture}
	\tikzstyle{block} = [rectangle, fill=blue!0,
	text centered, minimum height=1em]
	
	\node (sigma) {\footnotesize$\sigma$};
	\node [below=of sigma] (sigmad) {\footnotesize$\sigma^{(D)}, 1$};
	\node [below=of sigmad] (sigmad1) {\begin{varwidth}{5em}\footnotesize\centering		
		$\sigma^{(D1)}$,\\
		$\phi_1(\sigma^{(D1)}_{\lev{l1}})(\data, \params_1)$
		\end{varwidth}};
	\coordinate [below=of sigmad1] (mid);
	\node [block, below left=of mid, xshift=-1cm] (sigmad2) {\begin{varwidth}{15em}\footnotesize\centering		
		$\left(\sigma^{(D1)}_{\lev{l1}}, 
			   f_2(\sigma^{(D1)}_{\lev{l1},\lev{l2}}), 
			   \sigma^{(D1)}_{\lev{l3}}\right)$, \\
			   $\phi_2(\sigma^{(D1)}_{\lev{l1}},f_2(\sigma^{(D1)}_{\lev{l1},\lev{l2}}))(\data, \params_1, z)$
			   \end{varwidth}};
	\node [below right=of mid, xshift=1cm] (sigmad2p) {\begin{varwidth}{15em}\footnotesize\centering		
		$\left(\sigma^{(D1)}_{\lev{l1}}, 
		\sigma^{(D1)}_{\lev{l2}}, f \mapsto v, 
		\sigma^{(D1)}_{\lev{l3}}\right)$, \\
		$\sum_{z}\phi_2(\sigma^{(D1)}_{\lev{l1}},f_2(\sigma^{(D1)}_{\lev{l1},\lev{l2}}))(\data, \params_1, z)$
		\end{varwidth}};
	\node [block, below=of sigmad2, yshift=-2.25cm] (sigmam) {\begin{varwidth}{15em}\footnotesize\centering		
		$\left(\sigma^{(D1)}_{\lev{l1}}, 
		f_2(\sigma^{(D1)}_{\lev{l1},\lev{l2}}), 
		f_3(\sigma^{(D1)}_{\lev{l1},\lev{l3}})\right)$, \\
		$\phi_3(\sigma^{(D1)}_{\lev{l1}},f_3(\sigma^{(D1)}_{\lev{l1},\lev{l3}}))(\data, \params_1, \params_3)$
		\end{varwidth}};
	\node [below=of sigmad2p] (sigmamp) {\begin{varwidth}{15em}\footnotesize\centering		
		$\left(\sigma^{(D1)}_{\lev{l1}}, 
		\sigma^{(D1)}_{\lev{l2}}, f \mapsto v,
		f_3(\sigma^{(D1)}_{\lev{l1},\lev{l3}})\right)$, \\
		$\phi_3(\sigma^{(D1)}_{\lev{l1}},f_3(\sigma^{(D1)}_{\lev{l1},\lev{l3}}))(\data, \params_1, \params_3)$
		\end{varwidth}};
	\node [below=of sigmamp] (sigmagen) {\begin{varwidth}{15em}\footnotesize\centering		
		$\left(\sigma^{(D1)}_{\lev{l1}}, 
		f_2(\sigma^{(D1)}_{\lev{l1},\lev{l2}}), f \mapsto v,
		f_3(\sigma^{(D1)}_{\lev{l1},\lev{l3}})\right)$, \\
		$p(z \mid \data, \params_1)$
		\end{varwidth}};
	\node [below=of sigmagen, yshift=-0.19cm] (sigmapp) {\begin{varwidth}{15em}\footnotesize\centering		
		$f_g\left(\sigma^{(D1)}_{\lev{l1}}, 
		f_2(\sigma^{(D1)}_{\lev{l1},\lev{l2}}), 
		f_3(\sigma^{(D1)}_{\lev{l1},\lev{l3}})\right), f \mapsto v$, \\
		$\phi_{g}(\sigma^{(D1)}_{\lev{l1}}, 
		f_2(\sigma^{(D1)}_{\lev{l1},\lev{l2}}), 
		f_3(\sigma^{(D1)}_{\lev{l1},\lev{l3}}))(\data, \params, Q)$ \\
		$ = p(Q \mid \data, \params)$
		\end{varwidth}};
	\node [below=of sigmam] (sigmap) {\begin{varwidth}{15em}\footnotesize\centering		
		$f_g\left(\sigma^{(D1)}_{\lev{l1}}, 
		f_2(\sigma^{(D1)}_{\lev{l1},\lev{l2}}), 
		f_3(\sigma^{(D1)}_{\lev{l1},\lev{l3}})\right)$,  \\
		$\phi_{g}(\sigma^{(D1)}_{\lev{l1}}, 
		f_2(\sigma^{(D1)}_{\lev{l1},\lev{l2}}), 
		f_3(\sigma^{(D1)}_{\lev{l1},\lev{l3}}))(\data, \params, Q)$ \\
		$ = p(Q \mid \data, \params)$
		\end{varwidth}};
	\draw [->] (sigma) -- (sigmad) node[midway, left, seabornblue] {$S_D$};
	\draw [->] (sigmad) -- (sigmad1) node[midway, left, seabornblue] {$S_1$};
	\draw [->] (sigmad1) -- (sigmad2) node[midway, below right, seabornblue] {$S_2$} node[midway, above left] {by \autoref{lem:single-lev-prop2}};
	\draw [->] (sigmad1) -- (sigmad2p) node[midway, below left, seabornblue] {$S_2'$} node[midway, above right] {by \autoref{lem:s2}};
	\draw [->] (sigmad2) -- (sigmam) node[midway, right, seabornblue] {$S_3$} node[midway, left] {by \autoref{lem:single-lev-prop2}};
	\draw [->] (sigmad2p) -- (sigmamp) node[midway, left, seabornblue] {$S_3$} node[midway, right] {by \autoref{lem:single-lev-prop2} and $f$ fresh};
	\draw [->] (sigmamp) -- (sigmagen) node[midway, left, seabornblue] {$\mathrm{\textbf{gen}}(z) S_2$} node[midway, right] {by \autoref{lem:sgen}};
	\draw [->] (sigmam) -- (sigmap) node[midway, right, seabornblue] {$S_Q$} node[midway, left] {by \autoref{th:shred_gen}};
	\draw [->] (sigmagen) -- (sigmapp) node[midway, left, seabornblue] {$S_Q$} node[midway, right] {by \autoref{th:shred_gen} and $f$ fresh};
	%
	\node[below=of sigmap] (extending) {$ $};
	\node[below=of extending] (extending2) {$ $};
	\end{tikzpicture}}
	\caption{Diagrammatic proof of semantic preservation of $\xrightarrow{z}$}
	\vspace{30pt}
	\label{fig:proof}
	\end{figure*}

	Combining all of these results gives that:
	$$\sems{S'}(\sigma)(\data, \params, Q) = \sigma'' = \sigma'[f \mapsto v] = \sems{S}(\sigma)((\data, \params, Q))[f \mapsto v]$$
	In other words, the transformation $\xrightarrow{z}$ preserves store semantics (up to creating of one new fresh variable f).
	
	For the density, we get:
	\begin{align*}
	\semp{S'}&(\sigma)(\data, \params, Q) \\ 
	&= \phi_1(\data, \params_1) \left[ \sum_z \phi_2(\data, \params_1, z) \right] \phi_3(\data, \params_1, \params_3) p(z \mid \data, \params_1) p(Q \mid \data, \params) && \text{from \autoref{fig:proof}}  \\
	&= \left[\sum_z \phi_1(\data, \params_1)  \phi_2(\data, \params_1, z) \phi_3(\data, \params_1, \params_3)\right] p(z \mid \data, \params_1) p(Q \mid \data, \params) && \!\!\!\begin{array}{l}\text{by the distributive}\\ \text{law}\end{array} \\
	&\propto \left[\sum_z p(\data, \params_1, z, \params_2)\right] p(z \mid \data, \params_1) p(Q \mid \data, \params) && \!\!\!\begin{array}{l}\text{by \autoref{th:shred_gen}} \\\text{and \autoref{lem:shred2}}\end{array}\\
	&= p(\data, \params_1, \params_2) p(z \mid \data, \params_1) p(Q \mid \data, \params) && \text{marginalisation of } z \\
	&= p(\data, \params_1, \params_2) p(z \mid \data, \params_1, \params_3) p(Q \mid \data, \params) && \!\!\!\begin{array}{l}\text{by } z \bigCI \params_3 \mid \params_1\\ \text{ (\autoref{th:ci})}\end{array} \\
	&= p(\data, \params, Q) && \!\!\!\begin{array}{l}\text{by the chain rule} \\ \text{for probability}\end{array} \\
	&\propto \semp{S}(\sigma)(\data, \params, Q)	
	\end{align*}
	
	Together, this gives us $\sem{S} = \sem{S'}$ (up to $S'$ creating one new fresh variable f).
	
\end{proof}

\newpage
\section{Examples} \label{ap:examples}

\subsection{Sprinkler} 
Often, beginners are introduced to probabilistic modelling through simple, discrete variable examples, as they are more intuitive to reason about, and often have analytical solutions. Unfortunately, one cannot express such examples directly in PPLs that do not support discrete parameters. 
One well-known discrete variable example, often used in tutorials on probabilistic modelling, is the `Sprinkler' example. It models the relationship between cloudy weather, whether it rains, whether the garden sprinkler is on, and the wetness of the grass. In \autoref{fig:sprinkler}, we show a version of the sprinkler model written in SlicStan with discrete parameters (left) and the marginalisation part of its corresponding transformed version (right).

As $\mathrm{cloudy} \bigCI \mathrm{wet} \mid \mathrm{sprinkler}, \mathrm{rain}$, we do not need to include $\mathrm{wet}$ in the elimination of $\mathrm{cloudy}$, and the new factor is computed for different values of only $\mathrm{sprinkler}$ and $\mathrm{rain}$ (lines 2--6). The rest of the variables are eliminated one-by-one, involving all remaining variables (lines 7--15).

The snippet of the SlicStan code generated by our transformation is an exact implementation of the variable elimination algorithm for this model. This not only facilitates a platform for learning probabilistic programming using standard introductory models, but it can also be a useful tool for learning concepts such as marginalisation, conditional independence, and exact inference methods.  

\begin{figure*}
\centering 
\begin{multicols}{2} 
\textbf{Graphical model}	

\begin{center}
\resizebox{!}{2.4cm}{\hspace{-15pt}
	\begin{tikzpicture}
	\node[latent, text width=1cm, align=center] (cloudy) {\small cloudy};
	\coordinate[below=of cloudy, yshift=15pt] (mid) ;
	\node[latent, left=of mid, xshift=-15pt, text width=0.9cm, align=center] (sprinkler) {\small sprin- kler};
	\node[latent, right=of mid, xshift=15pt, text width=0.9cm, align=center] (rain) {\small rain};
	\node[latent, below=of mid, yshift=15pt, text width=0.9cm, align=center] (wet) {\small wet};
	\factoredge [] {} {cloudy} {sprinkler};
	\factoredge [] {} {cloudy} {rain};
	\factoredge [] {} {rain} {wet};
	\factoredge [] {} {sprinkler} {wet};
	\end{tikzpicture}
}\end{center}

\vspace{3pt}	
\textbf{SlicStan + discrete parameters support}
\vspace{-3pt}
\begin{lstlisting}[numbers=left,numbersep=\numbdist,numberstyle=\tiny\color{\numbcolor}]
	data real[2] p_rain, p_sprinkler;
	data real[2][2] p_wet;
	real p ~ beta(1, 1);
	int<2> cloudy ~ bern(p);
	int<2> sprinkler ~ bern(p_sprinkler[cloudy]);
	int<2> rain ~ bern(p_rain[cloudy]);
	int<2> wet ~ bern(p_wet[sprinkler][rain]);
\end{lstlisting}

\vspace{2cm}
\textbf{SlicStan}
\vspace{-5pt}
\begin{lstlisting}[numbers=left,numbersep=\numbdist,numberstyle=\tiny\color{\numbcolor}]
	...
	f1 = $\phi$(int<2> rain, int<2> sprinkler){
			elim(int<2> cloudy){
				cloudy ~ bern(p);
				sprinkler ~ bern(p_sprinkler[cloudy]);
				rain ~ bern(p_rain[cloudy]); }}		
	f2 = $\phi$(int<2> rain, int<2> wet){
			elim(int<2> sprinkler){
				factor(f1[rain,sprinkler]);
				wet ~ bern(p_wet[sprinkler,rain]); }}		
	f3 = $\phi$(int<2> wet){ elim(int<2> rain){
			factor(f2[rain,wet]); }}		
	f4 = $\phi$(){ elim(int<2> wet){
			factor(f3[wet]); }}
	factor(f4); 
	... $\vspace{-80pt}$
\end{lstlisting}
\end{multicols}
\vspace{-20pt}
\caption{The `Sprinkler' example.}
\label{fig:sprinkler}
\end{figure*}

\subsection{Soft-K-means model} \label{ap:kmeans}
In \autoref{fig:softKmeans}, we present the standard soft-k-means clustering model 
as it is written in SlicStan with support for discrete model parameters (left).
The right column shows the resulting code that our program transformation 
generates.
This code consists of plain SlicStan code and no support for 
discrete model parameters is needed to perform inference on it.

The model can be used for (softly) dividing $N$ data points $\mathbf{y}$ in $D$-dimensional
Euclidean space into $K$ clusters which have means $\boldsymbol{\mu}$ and probability $\boldsymbol{\pi}$.

\vspace{8pt}
\begin{figure*}
\begin{multicols}{2} 
\textbf{SlicStan + discrete}
\vspace{-1pt}
\begin{lstlisting}
	data int D;  
	data int K; 
	data real[K] pi;
	data real N = 3;
	
	data real[D][N] y;  
	
	real[D][K] mu; 
	for(d in 1 : D) {
		for(k in 1 : K){
			mu[d][k] ~ normal(0, 1);
	}}
	
	int<K> z1 ~ categorical(pi);
	int<K> z2 ~ categorical(pi);
	int<K> z3 ~ categorical(pi);
	
	for(d in 1 : D) {
		y[d][1] ~ normal(mu[d][z1], 1);    
		y[d][2] ~ normal(mu[d][z2], 1);    
		y[d][3] ~ normal(mu[d][z3], 1);
	}
\end{lstlisting}

\vspace{3.5cm}
\textbf{SlicStan}
\vspace{-1pt}
\begin{lstlisting}
	...
	for(d in 1:D){
		for(k in 1:K){
			mu[d,k] ~ normal(0, 1);}}
	
	factor( elim(int<K> z1){
		z1 ~ categorical(pi);
		for(data int d in 1:D){
			y[d,1] ~ normal(mu[d,z1], 1);}});
	factor( elim(int<K> z2){
		z2 ~ categorical(pi);
		for(data int d in 1:D){
			y[d,2] ~ normal(mu[d,z2], 1);}});
	factor( elim(int<K> z3){
		z3 ~ categorical(pi);
		for(data int d in 1:D){
			y[d,3] ~ normal(mu[d,z3], 1);}});
	
	gen(int z3){
		z3 ~ categorical(pi);
		for(data int d in 1:D){
			y[d,3] ~ normal(mu[d,z3], 1);}}
	gen(int z2){
		z2 ~ categorical(pi);
		for(data int d in 1:D){
			y[d,2] ~ normal(mu[d,z2], 1);}}
	gen(int z1){
		z1 ~ categorical(pi);
		for(data int d in 1:D){
			y[d,1] ~ normal(mu[d,z1], 1);}}
\end{lstlisting}
\end{multicols}
\caption{Soft $K$-means.}
\label{fig:softKmeans}
\end{figure*}

\newpage
\subsection{A causal inference example} \label{ap:causal}

The question of how to adapt PPLs to causal queries, has been recently gaining popularity. One way to express interventions and reason about causality, is to assume a discrete variable specifying the direction (or absence of) causal relationship, and specify different behaviour for each case using if statements \cite{winn2012causality}. We show a simple causal inference example (\autoref{fig:causal}) written in SlicStan with direct support for discrete parameters (left) and the code that our transformation generates (right) on which we can perform inference using a combination of e.g. HMC and ancestral sampling.

This model can be read as follows.
Assume that we are in a situation where we want to answer a causal question.
We want to answer this question based on $N$ paired observations of $A$ and $B$, in some of which 
we might have intervened ($\mathrm{doB}$). 
Our model proceeds by drawing a (prior) probability that $A$ causes $B$
from a beta distribution, and then specifying $A$ and $B$ for different scenarios 
(intervention, $A$ causes $B$ and no intervention, $B$ causes $A$ and no intervention) 
using conditional statements.

\vspace{8pt}
\begin{figure*}
\begin{multicols}{2} 
\textbf{SlicStan + discrete}
\vspace{-1pt}
\begin{lstlisting}
	data real q;
	data int N;
	data int[N] A, B, doB;
	data real prob_intervention;
	
	real pAcausesB ~ beta(1, 1);
	int<2> AcausesB ~ bernoulli(pAcausesB);
	
	for (n in 1:N)
	 if(doB[n] > 0)
		B[n] ~ bernoulli(prob_intervention); 
	
	if (AcausesB > 1){
	 for (n in 1:N){
		A[n] ~ bernoulli(0.5);
		if (doB[n] < 1){
			if (A[n] > 0) { B[n] ~ bernoulli(q); } 
		else { B[n] ~ bernoulli(1 - q); }            
		}
	 }
	}
	else {
	 for (n in 1:N){
		if (doB[n] < 1){ B[n] ~ bernoulli(0.5); }        
		if (B[n] > 0){ A[n] ~ bernoulli(q); }
		else { A[n] ~ bernoulli(1 - q); }
	 }
	}
\end{lstlisting}

\textbf{SlicStan}
\vspace{-1pt}
\begin{lstlisting}
	data real q;
	data int N;
	data int[N] A, B, doB;
	data real prob_intervention;
	
	real pAcausesB ~ beta(1, 1);
	
	for(data int n in 1:N)
	 if(doB[n] > 0)
		B[n] ~ bernoulli(prob_intervention);
	
	factor(elim(int<2> AcausesB){
	 AcausesB ~ bernoulli(pAcausesB);
	 if(AcausesB > 1){
		for(data int n in 1:N){
			A[n] ~ bernoulli(0.5);
			if(doB[n] < 1){
				if(A[n] > 0){B[n] ~ bernoulli(q);}
		 else{ B[n] ~ bernoulli(1 - q); }
			}
		}
	 }    
	 else{
		for(data int n in 1:N){
			if(doB[n] < 1){ B[n] ~ bernoulli(0.5); }
			if(B[n] > 0){ A[n] ~ bernoulli(q); }
			else{ A[n] ~ bernoulli(1 - q); }
	}}});
\end{lstlisting}
\end{multicols}
\caption{A causal inference example.}
\label{fig:causal}
\end{figure*}

\end{document}